\newcommand{\bE}{\mathbb{E}}
\newcommand{\bP}{\mathbb{P}}
\begin{document}

\begin{frontmatter}
\title{The Generalized Elastic Net for least squares regression with network-aligned signal and correlated design}
\runtitle{The Generalized Elastic Net}

\begin{aug}
\author[A]{\fnms{Huy}~\snm{Tran}\ead[label=e1]{huydtran@uchicago.edu}},
\author[A]{\fnms{Sansen}~\snm{Wei}\ead[label=e2]{sansenw@uchicago.edu}}
\and
\author[A]{\fnms{Claire }~\snm{Donnat}\ead[label=e3]{cdonnat@uchicago.edu}}
\runauthor{H. Tran, S. Wei and C. Donnat}\\

\address[A]{Department of Statistics\\
The University of Chicago\\
Chicago, Illinois 60637\\
USA\\
\printead{e1}
\phantom{E-mail: }\printead*{e2}
\phantom{E-mail: }\printead*{e3}
}
\end{aug}


\begin{abstract}
We propose a novel $\ell_1+\ell_2$-penalty, which we refer to as the \textit{Generalized Elastic Net}, for regression problems where the feature vectors are indexed by vertices of a given graph and the true signal is believed to be smooth or piecewise constant with respect to this graph. Under the assumption of correlated Gaussian design, we derive upper bounds for the prediction and estimation errors, which are graph-dependent and consist of a parametric rate for the unpenalized portion of the regression vector and another term that depends on our network alignment assumption. We also provide a coordinate descent procedure based on the Lagrange dual objective to compute this estimator for large-scale problems. Finally, we compare our proposed estimator to existing regularized estimators on a number of real and synthetic datasets and discuss its potential limitations. 
\end{abstract}

\begin{keyword}[class=MSC]
\kwd[Primary ]{62J05}
\kwd{62J07}
\kwd[; secondary ]{62F12}
\end{keyword}

\begin{keyword}
\kwd{elastic net}
\kwd{Fused Lasso}
\kwd{network alignment}
\kwd{Smooth Lasso}
\kwd{sparsity}
\kwd{total variation}
\end{keyword}

\end{frontmatter}

\section{Introduction}\label{sec:intro}
\subsection{Problem formulation and the proposed penalty}
Consider the usual linear regression model
\begin{equation}\label{eq:1}
    Y = X\beta^* + \epsilon
\end{equation}
where the design matrix $X\in \mathbb{R}^{n\times p}$ is random with independent and identically distributed (i.i.d.) rows, $\beta^* \in \mathbb{R}^p$ is the unknown true parameter, and $\epsilon = (\epsilon_1, \dots, \epsilon_n)^T \in \mathbb{R}^n$ are i.i.d. zero-mean Gaussian variables with (unknown) variance $\sigma^2$ and are independent of the design matrix $X$. In addition to observing the responses $y = (y_1, \dots, y_n)^T \in \mathbb{R}^n$, we also observe an undirected simple graph $G = (V, E)$ with $p$ vertices and $m$ edges. Here, the $p$ vertices index the  entries of $\beta^*$ as well as the columns of $X$ (which we can think of as feature vectors). This situation typically entails significant correlation between feature vectors, thus leading to an ill-conditioned design matrix. For simplicity, we assume throughout this paper that the rows of $X$ are i.i.d. $N(0,\Sigma)$-distributed. In our setting, the minimum eigenvalue of $\Sigma\in \mathbb{R}^{p\times p}$ may be very small and $\beta^*$ might be nearly unidentifiable. Although the addition of an unpenalized intercept should present no difficulty, we assume no intercept in our model to simplify the theoretical analysis. 

We further assume that $\beta^*$ is structured with respect to the graph $G$ so that prediction and estimation can be done with small error, even in the high-dimensional setting where $p \gg n$. As the entries of $\beta^*$ are indexed by the vertices of $G$, a natural assumption is that $\beta^*_i$ and $\beta^*_j$ should be similar if $i$ and $j$ are adjacent vertices on the graph $G$. This assumption is related to the notion of \textit{network cohesion} as discussed in Chapter 4 of \cite{kolaczyk2009statistical}: vertices may display similar characteristics because they are connected (contagion), or they may be connected because they have similar characteristics (homophily). Note that, however, many prior works such as \cite{li2019prediction} discuss network cohesion in the context where observations (the responses $y_1, \dots, y_n$ and the rows $x_1, \dots, x_n$ of $X$) are indexed by a graph's vertices and thus may no longer be i.i.d., whereas we focus on the case where the features (the columns of $X$) are indexed by the graph's vertices. Following \cite{li2019prediction}, we also use the term \textit{network cohesion} to cover both homophily and contagion, without distinguishing the difference in causal direction between them. 

More specifically, the notion of network cohesion encourages us to assume either that the number of edges $(i,j)\in E$ where $\beta_i \neq \beta_j$ is small (sparse signal jumps), or that $\beta^*$ is smooth enough so that $\Gamma\beta^*$ lies in an $\ell_q$-ball, where $\Gamma$ is the edge-incidence matrix of the graph $G$ and $0 < q \leq 1$ (note that when $q\in (0,1)$, an $\ell_q$-ball is not convex - see Figure 7.1 of \cite{wainwright2019high} for an illustration of what this "ball" looks like). Mathematically, in our theoretical analysis we assume either
\begin{equation}\label{eq:2}
    \|\Gamma\beta^*\|_0 \leq s
\end{equation}
or 
\begin{equation}\label{eq:3}
    \sum_{j=1}^m|(\Gamma\beta^*)_j|^q \leq R_q
\end{equation}
for some $s \ll m$ or some $R_q > 0$, respectively (see Section \ref{subsection:def} for the precise definition of any mathematical symbol). Assumption \eqref{eq:2} means that the number of edges with nonzero signal jumps is small and the true signal has several piecewise constant regions on the graph, whereas Assumption \eqref{eq:3} means the signal is smooth over the graph in the $\ell_q$-sense. We use the term \textit{network alignment} to refer to either Assumption \eqref{eq:2} or Assumption \eqref{eq:3}. In our experiments, we sometimes also consider the notion that $\beta^*$ is smooth over the graph in the sense that $\|\Gamma\beta^*\|_\infty$ is small. We emphasize that we allow for the possibility that the entries $\beta^*$ are all nonzero, as long as $\beta^*$ satisfies (or can be well approximated by an oracle $\beta$ that satisfies) either Assumption \eqref{eq:2} or Assumption \eqref{eq:3}. 

Under Model \eqref{eq:1} and either Assumption \eqref{eq:2} or \eqref{eq:3}, we study the prediction and estimation errors of the following estimator 
\begin{equation}\label{eq:4}
    \hat{\beta} := \arg\min_{\beta\in\mathbb{R}^p}\frac{1}{n}\|Y-X\beta\|_2^2 + \lambda_1\|\Gamma\beta\|_1 + \lambda_2 \|\Gamma\beta\|_2^2
\end{equation}
which can also be rewritten as 
\begin{equation}
    \hat{\beta} := \arg\min_{\beta\in\mathbb{R}^p}\frac{1}{n}\|Y-X\beta\|_2^2 + \lambda_1\sum_{(i,j)\in E}|\beta_i-\beta_j| + \lambda_2 \sum_{(i,j)\in E}(\beta_i-\beta_j)^2
\end{equation}

Note that our focus is mainly on the penalty $\lambda_1 \|\Gamma\beta\|_1 + \lambda_2\|\Gamma\beta\|_2^2$ where $\Gamma$ is the incidence matrix of a general graph. Following the naming conventions in \cite{zou2005regularization} and \cite{tibshirani2011solution}, we refer to this penalty as the \textit{Generalized Elastic Net (GEN)} penalty. The estimator \eqref{eq:4} can be easily extended to the generalized linear model (GLM) setting, by replacing the term $\frac{1}{n}\|Y-X\beta\|_2^2$ with another negative log-likelihood function from an exponential family distribution (see Chapter 9 of \cite{wainwright2019high} for more examples). For instance, if we have binary responses that can be modeled with the logistic GLM, then using the logistic log-likelihood function gives
\begin{equation}\label{eq:6}
\hat{\beta}_{\text{logistic}} := \frac{1}{n}\sum_{i=1}^n\log(1+e^{\langle x_i, \beta\rangle}) - \bigg\langle\frac{1}{n}\sum_{i=1}^n y_ix_i, \beta\bigg\rangle + \lambda_1\|\Gamma\beta\|_1 + \lambda_2\|\Gamma\beta\|_2^2    
\end{equation}
where $x_1, \dots, x_n$ are the rows of $X$ and $y_1, \dots, y_n$ are the entries of $Y$ which are binary. For simplicity, we only focus on analyzing the estimator \eqref{eq:4} under Model \eqref{eq:1}, but analogous theoretical results for the GLM setting should follow by adapting the theoretical framework of Chapter 6 of \cite{buhlmann2011statistics}. 
\subsection{Motivating applications}
As network-linked features are quite common and we do not restrict our attention to any particular type of graph, our proposed penalty is potentially applicable to a wide variety of settings. We provide below a non-exhaustive list of concrete examples where our penalty may be relevant.\\

\noindent{\bf Example 1: Structural MRI analysis.} We consider the use of structural magnetic resonance images (sMRI) of the brain in diagnosing Alzheimer's disease, as in \cite{xin2014efficient}. In this case, the rows $x_1, \dots, x_n$ of $X$ might represent sMRI features of $n$ human subjects and the responses $y_1, \dots, y_n$ are binary variables indicating each subject's disease status. The estimator \eqref{eq:6} can thus be applied using a 3D grid graph representing contiguous brain voxels. In \cite{xin2014efficient}, the Generalized Fused Lasso penalty $\lambda_L \|\beta\|_1 + \lambda_1 \|\Gamma\beta\|_1$ is used, and this penalty leads to a solution that is both sparse and smooth. However, it may be more reasonable to assume only that the true signal aligns with the graph, in which case the estimator \eqref{eq:6} may fare better for the purpose of predicting Alzheimer's disease.\\

\noindent{\bf Example 2: Microarray analysis with prior information.} Following \cite{segal2003regression}, we can also consider a microarray dataset with $X = [x_{ij}]$ where $x_{ij}$ is the expression level of the $j^\text{th}$ gene for the $i^\text{th}$ test subject, and $y_i$ is an outcome measure for subject $i$ which can be continuous or discrete. Often, we have prior knowledge from previous biomedical research in the form of gene regulatory pathways which can serve as our graph $G$ (see \cite{li2008network} for specific examples). We can incorporate this prior information using our GEN penalty. In \cite{li2008network}, the penalty $\lambda_L\|\beta\|_1 + \lambda_2\beta^T \tilde{L}\beta$ is used instead, where $\tilde{L}$ is the normalized Laplacian matrix. Assuming the vast majority of genes has no effect on the outcome may make it easier to interpret the estimated parameters (in terms of which genes may be responsible for the outcome). However, if many of these genes can be grouped into clusters with small (but nonzero) baseline effects on the outcome, using our penalty may lead to better predictions. \\

\noindent{\bf Example 3: Microarray analysis without prior information.} In the previous example, without any prior information about gene regulatory pathways, we can take $G$ to be the complete graph in our GEN penalty. The penalty $\lambda_L\|\beta\|_1 + \lambda_1\|\Gamma\beta\|_1$, where $\Gamma$ is the incidence matrix of a complete graph, has been studied in \cite{she2008sparse} under the name \textit{Clustered LASSO}. \\

\noindent{\bf Example 4: Temporal data.} Given a time series $\{X_t\}_{t\in \mathbb{N}}$, we consider fitting an autoregressive model of the form $X_t = \sum_{j=1}^p\beta_jX_{t-j}+\epsilon_t$. If the time points $t$ are sampled sufficiently far apart such that our data points $(X_{t}, X_{t-1},\dots, X_{t-p})$ can be considered independent across $t$, it may be reasonable to apply our method with $G$ being a $p$-vertex chain graph. 
 
\subsection{Comparison with related works}\label{sec:related}
The standalone $\ell_1$ penalty $\lambda_1\|\Gamma\beta\|_1$, which is often known as the \textit{total variation penalty} on graphs, has been studied extensively in the context of the graph trend filtering problem where the design matrix is the identity. More precisely, given the model $Y = \beta^* + \epsilon$, the trend filtering estimator for $\beta^*$ is 
\begin{equation}\label{eq: 7}
    \hat{\beta}_{\text{tf}} := \arg\min_{\beta\in \mathbb{R}^n}\frac{1}{n}\|Y-\beta\|_2^2 + \lambda_1\|\Gamma\beta\|_1
\end{equation}

This estimator is also known as the \textit{analysis} estimator, in the terminology of \cite{elad2007analysis}; see \cite{hutter2016optimal}, \cite{wang2015trend}, \cite{ortelli2021prediction} and \cite{guntuboyina2020adaptive} for results on prediction error bounds for the estimator \eqref{eq: 7} and its constrained form when $\Gamma\beta^*$ is sparse. The graph considered in the trend filtering problem is usually a chain or grid graph due to applications such as image denoising, but results for other types of graphs such as trees and star graphs are also available in the literature. The analysis matrix $\Gamma$ in \eqref{eq: 7} can be generalized to higher order total variation operators (as defined in \cite{wang2015trend}). In comparison, we focus solely on the case where $\Gamma$ is the incidence matrix defined in \eqref{eq:Gamma}, and our design matrix $X$ is random with i.i.d. rows rather than a pre-specified matrix consisting of fixed vectors from some dictionary.

When the design matrix is general, the estimator 
\begin{equation}\label{eq:8}
    \hat{\beta}_{\text{GL}} := \arg\min_{\beta\in\mathbb{R}^p}\frac{1}{n}\|Y-X\beta\|_2^2 + \lambda_1\|\Gamma\beta\|_1
\end{equation}
has been proposed by \cite{tibshirani2011solution} (under the name \textit{Generalized LASSO} estimator) as well as \cite{land1997variable} (where the penalty is called \textit{variable fusion}). These works mainly address computational techniques for the estimator \eqref{eq:8}, rather than theoretical guarantees when $\beta^*$ aligns with the graph. The idea of working with the dual objective to derive our algorithms comes from \cite{kim2009ell_1} and \cite{tibshirani2011solution}. Our analysis of the prediction and estimation errors for the estimator \eqref{eq:4} is also applicable to \eqref{eq:8}, and to our knowledge no similar analysis with random design is available in the literature. However, the error bounds for our estimator \eqref{eq:4} are better due to the improved minimum eigenvalue in the denominator of the bounds in Theorem \ref{theorem:1}. 

Two previously introduced penalties which involve the Lasso penalty to induce sparsity but are closely related to GEN have also been studied in the context where the design matrix can be non-identity; they serve as the main benchmarks in both our theoretical results and our experiments. The Smooth Lasso penalty $\lambda_L \|\beta\|_1 + \lambda_2\|\Gamma\beta\|_2^2$ was first proposed by \cite{hebiri2011smooth}, in which the theoretical analysis assumes fixed design and thus relies on a restricted eigenvalue assumption (Assumption $B(\Theta)$ in \cite{hebiri2011smooth}) on the expanded Gram matrix $n^{-1}\tilde{X}^T\tilde{X}$ (see Section \ref{subsection:def} for definition of $\tilde{X}$). The Fused Lasso penalty $\lambda_L\|\beta\|_1 + \lambda_1\|\Gamma\beta\|_1$ was first proposed by \cite{tibshirani2005sparsity} for the chain graph. These two methods implicitly assume that the true signal is both sparse and aligned with the graph. Such an assumption can be overly restrictive, and sparsity of $\beta^*$ may not always be a natural assumption in the general graph setting. When $\|\beta^*\|_0 = p$, error bounds proven for these estimators usually involve the term $\frac{p\log p}{n}$. In comparison, our penalty only assumes network alignment and should also work well in the sparse-and-smooth case when the zero entries of $\beta^*$ form large contiguous blocks on the graph. The Fused Lasso and the Smooth Lasso should only perform better than ours when sparsity holds but the network alignment assumption is significantly violated. Empirically, when $\beta^*$ aligns with the graph but is not sparse, choosing the tuning parameters by cross-validation often results in $\lambda_L$ being set to almost zero for both the Fused Lasso and the Smooth Lasso.

In \cite{li2018graph}, the penalty $\lambda_2\|\hat{\Gamma}\beta\|_2^2 + \lambda_1\|\hat{\Gamma}\beta\|_1+\lambda_L\|\beta\|_1$ is introduced and referred to as the Graph Total Variation (GTV) method, which involves three hyperparameters that require tuning. Unlike our penalty, the incidence matrix $\hat{\Gamma}$ is obtained by first estimating $\Sigma$ with $\hat{\Sigma}$ (which can depend on the design $X$ or side information) and then treating $\hat{\Sigma}$ as the adjacency matrix of a graph $\hat{G}$ with weighted edges. Note that this is a two-step process, and the graph $\hat{G}$ here also differs from our setting in that we do not consider non-binary edge weights, since in many applications only a graph structure is provided. Computationally, since we need to use 3D grid search for hyperparameter tuning and the matrix $\hat{\Gamma}$ is very dense, the estimator introduced in \cite{li2018graph} does not scale well. Furthermore, even when we use the true covariance $\Sigma$ to form $\hat{\Gamma}$, the performance of GTV in most of our synthetic experiments does not compare favorably with that of our method, Fused Lasso or Smooth Lasso. The theoretical analysis in \cite{li2018graph} does not account for the error in estimating $\Sigma$ with $\hat{\Sigma}$, which we believe cannot be overlooked.

\subsection{Organization of the paper} In Section \ref{sec: theory}, we provide upper bounds on the prediction and estimation errors of our estimator \eqref{eq:4}. We specialize these bounds for the cases when $\Gamma$ is the incidence matrix of the $r$-dimensional grid graph, the star graph or the complete graph, which are also discussed in \cite{hutter2016optimal} and may be encountered in practice. In Section \ref{sec: computation}, we describe our coordinate descent algorithm in order to compute the estimator \eqref{eq:4} for a particular choice of $(\lambda_1, \lambda_2)$. We also provide runtime comparisons between various techniques to compute \eqref{eq:4}, such as the interior point method and the Alternating Direction Method of Multipliers (ADMM). In Section \ref{sec: experiments}, we use experiments on synthetic data to provide empirical evidence of some properties of GEN that are suggested by our theoretical results, and compare its performance with estimators from prior works that we discuss in Section \ref{sec:related}. We also illustrate the usefulness of our proposed penalty in three real data examples, namely COVID-19 trend prediction, Alzheimer's disease detection and estimation of Chicago's crime patterns. 

\subsection{Notations and definitions}\label{subsection:def}
For any positive integer $n$, we denote $[n]$ as the set $\{1,\dots, n\}$. For any matrix $A$, we denote by $A^\dagger$ the Moore-Penrose inverse of $A$. For any vector $v$, $\|v\|_0$ refers to the number of nonzero entries of $v$, and $\|v\|_p$ for $1\leq p \leq \infty$ refers to the usual $\ell_p$-norm of $v$. We write $\mathbf{1}(\cdot)$ for the indicator function. For a vector $v \in \mathbb{R}^k$ and any set $S \subseteq [k]$, we denote by $v_S \in \mathbb{R}^k$ to be the vector with the $j^\text{th}$ coordinate given by $(x_S)_j = x_j\mathbf{1}(j\in S)$. For any vector $\theta \in \mathbb{R}^m$, we write $S_{\theta}$ to refer to the support $\{j\in [m]: \theta_j \neq 0\}$ of $\theta$. We use $s$ to denote $\|\Gamma\beta^*\|_0$. For any positive semi-definite matrix $M$, let $\gmax(M)$ and $\gmin(M)$ denote its maximum and minimum eigenvalues respectively, and $\text{ker}(M)$ the null space of $M$. $I_k$ denotes the identity matrix of size $k$-by-$k$.

The notation \( \lesssim \) means that the left-hand side (LHS) is bounded by the right-hand side (RHS) multiplied by an absolute constant (not dependent on any parameter of interest) that is omitted. The notation $\gtrsim$ is similarly defined. The notation $\asymp$ means that both $\lesssim$ and $\gtrsim$ hold. The constants $C$, $c$, $c_1, c_2$ are absolute constants which are allowed to change line by line.

Throughout this paper, the graph $G = (V,E)$ we consider is undirected and has no self-loops. We identify the set of vertices $V$ with $[p]$ and the set of edges $E$ with $[m]$; note that $m \leq p^2$, and $p \lesssim m$ if the graph has no isolated vertices. We also denote the maximum degree of the graph $G$ by $d$ and the number of connected components of $G$ by $n_c$ (which is also the dimension of the null space of $\Gamma$). The edge-vertex incidence matrix of the graph $G$ is denoted by $\Gamma \in \{-1,0,1\}^{m\times p}$, which is defined as follows: each edge $e = (i,j)\in E$ is represented by a row $\Gamma_{e,\cdot}\in \{-1,0,1\}^p$ of $\Gamma$ whose $k^\text{th}$ entry is given by 
\begin{equation}\label{eq:Gamma}
\Gamma_{e,k}=\left\{
\begin{array}{rl}
1 &\text{if}\ k=\min(i, j)\\
-1 &\text{if}\ k=\max(i,j)\\
0 &\text{otherwise.}
\end{array}
\right.    
\end{equation}
The unnormalized Laplacian matrix of the graph $G$ (see \cite{chung1997spectral}) is then defined by $L := \Gamma^T \Gamma$. We denote by $\Pi \in \mathbb{R}^{p\times p}$ the projection matrix onto the kernel of $\Gamma$. Note that we will use the facts $\Pi = \Pi^T$, $\Pi^2 = \Pi$ and $\Pi + \Gamma^\dagger\Gamma = I_p$ throughout the proofs.

In our theoretical analysis, we frequently make use of some definitions and conventions from \cite{hutter2016optimal}. We denote $s_1, \dots, s_m$ to be the columns of $\Gamma^\dagger \in \mathbb{R}^{p\times m}$. The \textit{inverse scaling factor} of $\Gamma$ is defined as 
\begin{equation}
\rho(\Gamma) := \max_{j\in [m]}\|s_j\|_2  
\end{equation}
while the \textit{compatibility factor} of $\Gamma$ for a nonempty set $S\subseteq [m]$ is defined as 
\begin{equation}\label{eq:compatibility-factor}
k_S := \inf_{\beta\in \mathbb{R}^p}\frac{\sqrt{|S|}\|\beta\|_2}{\|(\Gamma\beta)_S\|_1}  
\end{equation}

Following \cite{hebiri2011smooth}, we also employ the notations
\begin{equation}
\tilde{Y} := \begin{pmatrix} Y \\ 0\end{pmatrix}, \quad \tilde{X} := \begin{pmatrix}X \\ \sqrt{\lambda_2n} \Gamma\end{pmatrix}, \quad \tilde{\epsilon} := \begin{pmatrix}\epsilon \\ -\sqrt{\lambda_2 n }\Gamma \beta^*\end{pmatrix}  
\end{equation}
Note that $\tilde{Y} = \tilde{X}\beta^* + \tilde{\epsilon}$ and we can write our estimator as 
\begin{equation}
    \hat{\beta} = \arg\min_{\beta\in\mathbb{R}^p}\frac{1}{n}\|\tilde{Y}-\tilde{X}\beta\|_2^2 + \lambda_1\|\Gamma\beta\|_1
\end{equation}
\section{Theoretical results}\label{sec: theory}
In this section, we aim to provide non-asymptotic bounds showing that the estimator \eqref{eq:4} is consistent in prediction and estimation under a network alignment assumption, even in the high-dimensional setting where $p\gg n$. We also show that the $\ell_2$ component of the penalty helps alleviate the effects of an ill-conditioned covariance matrix $\Sigma$.  Note that the tuning parameters $\lambda_1$ and $\lambda_2$ in our theoretical analysis are dependent on unobserved quantities $\beta^*$, $\Sigma$ and $\sigma$; therefore, we cannot use the theoretical values for $\lambda_1$ and $\lambda_2$ in practice and must in general rely on cross-validation. We do not attempt to optimize the constants in our bounds, as our focus is on understanding how the performance of our estimator depends on the quantities $n$, $p$, $s$ (or $R_q$), $\Sigma$ and the graph $G$. All proofs are deferred to the Appendix.

\subsection{Main theorem}
We begin by introducing bounds for the prediction and estimation errors that are applicable to all graphs. However, these bounds may not be optimal for some graphs, especially the $p$-vertex chain graph as in that case $\rho(\Gamma)=\sqrt{p}$. The proof of Theorem \ref{theorem:1} relies on the projection argument used in \cite{hutter2016optimal} to derive error bounds for the trend filtering estimator \eqref{eq: 7}. For simplicity, in the discussion of our theoretical results, we assume that $\gmax(\Sigma)$, $n_c$ and $\sigma^2$ are of constant order as $n$ goes to infinity. Recall that $n_c$ is the dimension of $\ker(\Gamma)$, $d$ is the maximum degree of all vertices of $G$, $L := \Gamma^T\Gamma$, and $k_S$ is defined in \eqref{eq:compatibility-factor}. 
\begin{theorem}[Main theorem]\label{theorem:1}
Fix $\delta > 0$ and choose $\lambda_1 = 32\sigma \rho(\Gamma)\sqrt{\frac{\gmax(\Sigma)\log p}{n}}$, $\lambda_2 \leq \frac{\lambda_1}{8\|\Gamma\beta^*\|_\infty}$. Given any set $S$ satisfying both
\begin{equation}\label{eq:regularity}
\frac{144\gmax(\Sigma)(\sqrt{n_c}+\delta)^2}{n} +\frac{36\lambda_1^2|S|k_S^{-2}}{\sigma^2} \leq \frac{1}{2}\gmin\left(\frac{1}{64}\Sigma + \lambda_2 L\right) 
\end{equation}
and 
\begin{equation}\label{eq:regularity2}
    \lambda_1\|(\Gamma\beta^*)_{-S}\|_1 \leq \frac{\sigma^2}{18}
\end{equation}
with probability at least $1-c_1\exp(-nc_2) - \frac{2}{m} - e^{-\delta^2/2}$ we have 
\begin{equation}\label{eq:pred-error-main}
\|\Sigma^{1/2}(\hat{\beta}-\beta^*)\|_2^2 \lesssim \frac{\sigma^2\gmax(\Sigma)}{\gmin\left(\frac{1}{64}\Sigma+\lambda_2L\right)}\frac{n_c+\delta^2}{n} + \frac{\lambda_1^2|S|k_S^{-2}
}{\gmin\left(\frac{1}{64}\Sigma+\lambda_2L\right)} + \lambda_1\|(\Gamma\beta^*)_{-S}\|_1
\end{equation}
\begin{equation}\label{eq:est-error-main}
\|\hat{\beta}-\beta^*\|_2^2 \lesssim \frac{\sigma^2\gmax(\Sigma)}{\gmin^2\left(\frac{1}{64}\Sigma+\lambda_2L\right)}\frac{n_c+\delta^2}{n} + \frac{\lambda_1^2|S|k_S^{-2}
}{\gmin^2\left(\frac{1}{64}\Sigma+\lambda_2L\right)} + \frac{\lambda_1\|(\Gamma\beta^*)_{-S}\|_1}{\gmin\left(\frac{1}{64}\Sigma+\lambda_2L\right)}
\end{equation}
\end{theorem}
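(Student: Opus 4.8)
The plan is to run the standard ``basic inequality'' argument in the analysis formulation of the estimator and then lower-bound the resulting empirical quadratic form by a restricted-eigenvalue quantity built from $\tfrac{1}{64}\Sigma+\lambda_2 L$. First I would use the representation $\hat\beta=\arg\min_\beta\frac1n\|\tilde Y-\tilde X\beta\|_2^2+\lambda_1\|\Gamma\beta\|_1$, set $\hat\Delta:=\hat\beta-\beta^*$, and exploit optimality of $\hat\beta$ together with $\tilde Y=\tilde X\beta^*+\tilde\epsilon$ to obtain
\[
\frac1n\|X\hat\Delta\|_2^2+\lambda_2\|\Gamma\hat\Delta\|_2^2 \le \frac2n\langle\epsilon,X\hat\Delta\rangle-2\lambda_2\langle\Gamma\beta^*,\Gamma\hat\Delta\rangle+\lambda_1\|\Gamma\beta^*\|_1-\lambda_1\|\Gamma\hat\beta\|_1,
\]
using $\frac1n\|\tilde X\hat\Delta\|_2^2=\frac1n\|X\hat\Delta\|_2^2+\lambda_2\|\Gamma\hat\Delta\|_2^2$ and $\langle\tilde\epsilon,\tilde X\hat\Delta\rangle=\langle\epsilon,X\hat\Delta\rangle-\lambda_2 n\langle\Gamma\beta^*,\Gamma\hat\Delta\rangle$. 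The cross term is absorbed via $2\lambda_2|\langle\Gamma\beta^*,\Gamma\hat\Delta\rangle|\le 2\lambda_2\|\Gamma\beta^*\|_\infty\|\Gamma\hat\Delta\|_1\le\frac{\lambda_1}{4}\|\Gamma\hat\Delta\|_1$ by the prescribed $\lambda_2\le\lambda_1/(8\|\Gamma\beta^*\|_\infty)$, and the $\ell_1$ terms are handled by writing $\Gamma\hat\beta=\Gamma\beta^*+\Gamma\hat\Delta$ and splitting across $S$ and its complement to produce $\lambda_1\|(\Gamma\hat\Delta)_S\|_1-\lambda_1\|(\Gamma\hat\Delta)_{-S}\|_1+2\lambda_1\|(\Gamma\beta^*)_{-S}\|_1$.

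Next I would control the noise $\langle\epsilon,X\hat\Delta\rangle$ through the projection decomposition $\hat\Delta=\Pi\hat\Delta+\Gamma^\dagger\Gamma\hat\Delta$ borrowed from \cite{hutter2016optimal}, which splits it into a total-variation piece $\langle(X\Gamma^\dagger)^\top\epsilon,\Gamma\hat\Delta\rangle\le\|(X\Gamma^\dagger)^\top\epsilon\|_\infty\|\Gamma\hat\Delta\|_1$ and a null-space piece $\langle\Pi X^\top\epsilon,\Pi\hat\Delta\rangle\le\|\Pi X^\top\epsilon\|_2\|\Pi\hat\Delta\|_2$. For the first piece, conditioning on $X$ makes each $\langle Xs_j,\epsilon\rangle$ a centered Gaussian with variance $\sigma^2\|Xs_j\|_2^2$, where $\|Xs_j\|_2^2\lesssim n\gmax(\Sigma)\rho(\Gamma)^2$ with high probability; a union bound over the $m\le p^2$ edges gives $\frac2n\|(X\Gamma^\dagger)^\top\epsilon\|_\infty\le\frac{\lambda_1}{2}$ for $\lambda_1=32\sigma\rho(\Gamma)\sqrt{\gmax(\Sigma)\log p/n}$ on an event of probability $\ge1-2/m$. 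For the second piece, $\|\Pi X^\top\epsilon\|_2^2=\epsilon^\top(X\Pi X^\top)\epsilon$ is a Gaussian quadratic form whose trace and operator norm are governed by the $n_c$-dimensional range of $\Pi$, yielding $\|\Pi X^\top\epsilon\|_2\lesssim\sigma\sqrt{n\gmax(\Sigma)}(\sqrt{n_c}+\delta)$ with probability $\ge1-e^{-\delta^2/2}$; this produces both the $n_c+\delta^2$ term and the contribution $e^{-\delta^2/2}$ to the failure probability.

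The hard part is the restricted-eigenvalue lower bound. On an event of probability $\ge1-c_1e^{-c_2 n}$ furnished by correlated-Gaussian-design concentration (e.g.\ Gordon's comparison), I would establish $\frac1n\|Xv\|_2^2\ge\frac1{64}\|\Sigma^{1/2}v\|_2^2$ minus error terms controlled by $\gmax(\Sigma)\frac{(\sqrt{n_c}+\delta)^2}{n}\|\Pi v\|_2^2$ and $\gmax(\Sigma)\frac{\log p}{n}\|\Gamma v\|_1^2$, so that
\[
\frac1n\|X\hat\Delta\|_2^2+\lambda_2\|\Gamma\hat\Delta\|_2^2 \ge \tfrac12\,\gmin(\tfrac1{64}\Sigma+\lambda_2 L)\,\|\hat\Delta\|_2^2.
\]
Here the compatibility factor converts $\|(\Gamma\hat\Delta)_S\|_1\le\frac{\sqrt{|S|}}{k_S}\|\hat\Delta\|_2$, the surviving left-hand term $\frac{\lambda_1}{4}\|(\Gamma\hat\Delta)_{-S}\|_1$ together with \eqref{eq:regularity2} controls the off-support total variation, and the two summands of \eqref{eq:regularity} are calibrated exactly so that both error terms are dominated by one half of $\gmin(\tfrac1{64}\Sigma+\lambda_2 L)$. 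Localizing the data-dependent $\hat\Delta$ to the cone on which this bound is valid, while simultaneously taming the low-dimensional kernel component of $\Gamma$, is where I expect the principal technical difficulty to lie, and it is precisely what forces both terms of \eqref{eq:regularity} to appear.

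Finally, collecting all bounds reduces the basic inequality to a scalar quadratic inequality $D\le A\sqrt D+B$ for $D:=\|\Sigma^{1/2}\hat\Delta\|_2^2+\lambda_2\|\Gamma\hat\Delta\|_2^2$, with $A\asymp(\frac1n\|\Pi X^\top\epsilon\|_2+\lambda_1\tfrac{\sqrt{|S|}}{k_S})/\sqrt{\gmin(\tfrac1{64}\Sigma+\lambda_2 L)}$ and $B\asymp\lambda_1\|(\Gamma\beta^*)_{-S}\|_1$, after using $\|\Pi\hat\Delta\|_2\le\|\hat\Delta\|_2$ and the elementary bound $D\ge\hat\Delta^\top(\tfrac1{64}\Sigma+\lambda_2 L)\hat\Delta\ge\gmin(\tfrac1{64}\Sigma+\lambda_2 L)\|\hat\Delta\|_2^2$ to substitute for $\|\hat\Delta\|_2$ on the right. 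Solving gives $D\lesssim A^2+B$, and inserting the noise estimate $\frac1{n^2}\|\Pi X^\top\epsilon\|_2^2\lesssim\sigma^2\gmax(\Sigma)(n_c+\delta^2)/n$ yields \eqref{eq:pred-error-main} since $\|\Sigma^{1/2}\hat\Delta\|_2^2\le D$; dividing through by $\gmin(\tfrac1{64}\Sigma+\lambda_2 L)$ produces the extra eigenvalue factor in the denominators and hence \eqref{eq:est-error-main}. The three failure events combine to the stated probability $1-c_1\exp(-nc_2)-\tfrac{2}{m}-e^{-\delta^2/2}$.
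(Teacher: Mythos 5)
Your proposal is correct and follows essentially the same route as the paper's proof: the basic inequality with the $\ell_2$ cross term absorbed via $\lambda_2\le\lambda_1/(8\|\Gamma\beta^*\|_\infty)$, the projection decomposition $I_p=\Pi+\Gamma^\dagger\Gamma$ of the noise, the compatibility factor on $S$, a restricted-eigenvalue bound with error terms in $\|v\|_2$ and $\|\Gamma v\|_1$ absorbed by conditions \eqref{eq:regularity}--\eqref{eq:regularity2}, and a final scalar quadratic inequality. The only point to note is that the cone-localization difficulty you anticipate does not arise: the paper's Lemma \ref{lem:re} holds uniformly over all $v\in\mathbb{R}^p$ (via a peeling argument), and the needed control of $\|\Gamma\hat\Delta\|_1^2$ comes from a self-bounding use of the basic inequality itself, with \eqref{eq:regularity2} serving to linearize the resulting $\|(\Gamma\beta^*)_{-S}\|_1^2$ term.
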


Note that Theorem \ref{theorem:1} is actually valid for any matrix $\Gamma$. However, if $\Gamma$ is the incidence matrix of the graph $G$, we can further bound $k_{S}^{-2}$ by applying Lemma 3 of \cite{hutter2016optimal}, which states that $k_S^{-2} \lesssim \min(d, |S|)$. 

Denote $\beta^* = \beta^*_1 + \beta^*_2$, where $\beta^*_1\in \text{ker}(\Gamma)$ and $\beta^*_2 \in \text{ker}(\Gamma)^\perp$. Note that the first term in the RHS of \eqref{eq:est-error-main} represents the error from estimating $\beta^*_1$, which is the unpenalized component of $\beta^*$. The latter two terms represent the error from estimating the penalized component $\beta_2^*$, and given a particular graph $G$ we need to further bound $\rho(\Gamma)$ and $k_{S}^{-2}$ for that graph. 

The estimation error bound \eqref{eq:est-error-main} is only different from the prediction error bound \eqref{eq:pred-error-main} by a factor of $\gmin\left(\frac{1}{64}\Sigma+\lambda_2 L\right)$ in the denominator. This means we have to make a stronger assumption about how fast $\gmin\left(\frac{1}{64}\Sigma+\lambda_2L\right)$ may decay to zero in order to ensure the estimation error, rather than just the prediction error, is also small. For example, when we specialize our bounds for the 3D grid with $p$ vertices, the prediction error bound \eqref{eq:rgrid-pred} only requires $\gmin\left(\frac{1}{64}\Sigma+\lambda_2L\right) \gg \frac{s\log p}{n}$ but the estimation error bound for this graph requires $\gmin\left(\frac{1}{64}\Sigma+\lambda_2L\right) \gg \sqrt{\frac{s\log p}{n}}$. 

The conditions \eqref{eq:regularity} and \eqref{eq:regularity2} on $S$ are the result of using Lemma \ref{lem:re}. They are equivalent to requiring that the RHS of \eqref{eq:pred-error-main} is sufficiently small (smaller than $C\sigma^2$ for some absolute constant $C > 0$). Assuming $\gmin\left(\frac{1}{64}\Sigma + \lambda_2L\right)$ is not too small, it is reasonable to expect the prediction error to converge to zero as $n$ become sufficiently large. 

Theorem \ref{eq:1} is applicable to the estimator $\hat{\beta}_{\text{GL}}$ in \eqref{eq:8} (which corresponds to setting $\lambda_2 = 0$). However, when $\gmin(\Sigma)$ is small, we may not have a meaningful error bound for $\hat{\beta}_{\text{GL}}$.  Generally, we want $\lambda_2$ to be as large as possible to improve the minimum eigenvalue term without introducing additional errors, and thus the choice $\lambda_2 = \frac{\lambda_1}{8\|\Gamma\beta^*\|_\infty}$ is appropriate. When $\|\Gamma\beta^*\|_\infty$, the maximum signal difference between adjacent vertices, is small (which is reasonable under the assumption of network cohesion on $\beta^*$) and $\gmin(\Sigma)$ is very close to zero, the improvement of the minimum eigenvalue term can be significant. In contrast, in Theorem 3 of \cite{hebiri2011smooth} and Theorem 1 of \cite{li2018graph}, similar proof ideas are used but the dependence between the $\ell_2$ and $\ell_1$ tuning parameters is such that $\lambda_2 \propto \frac{\lambda_1}{\|L\beta^*\|_\infty}$. Since $L$ is the second-order graph difference operator (see \cite{wang2015trend} for the definitions of higher-order total variation operators), the quantity $\|L\beta^*\|_\infty = \|\Gamma^T\Gamma\beta^*\|_\infty$ is not as related to Assumption \eqref{eq:2} or \eqref{eq:3} and can be much larger than $\|\Gamma\beta^*\|_\infty$ for graphs with some high-degree nodes. For example, for the star graph with $p$ nodes where the entries of $\beta^*$ are 0 at the central node and 1 at the leaves, $\|\Gamma\beta^*\|_\infty = 1$ but $\|L\beta^*\|_\infty$ is of order $p$. The choice of $\lambda_2$ in Theorem \ref{theorem:1}, however, suggests that the regularization effects of the $\ell_2$ component of the penalty may be diminished if $\|\Gamma\beta^*\|_\infty$ is large. This is consistent with what we observe in our synthetic experiments: when $\|\Gamma\beta^*\|_\infty$ is large, cross-validation often yields $\lambda_2 \approx 0$.

The proof of Theorem \ref{theorem:1} relies on the following lemma to relate the empirical quadratic form $\frac{1}{n}\|Xv\|_2^2$ to the corresponding theoretical quantity $\|\Sigma^{1/2} v\|_2^2$, uniformly for all $v \in \mathbb{R}^p$. This lemma is an extension of the main result in \cite{raskutti2010restricted} for our setting and may be of independent interest.

\begin{lemma}[Restricted eigenvalue property for random Gaussian design]
\label{lem:re}
If $X\in \mathbb{R}^{n\times p}$ has i.i.d. $N(0, \Sigma)$ rows and $m \geq 2$, $n \geq 10$, then the event 
$$\bigg\{\forall v \in \mathbb{R}^p: \frac{\|Xv\|_2}{\sqrt{n}} \geq \frac{1}{4} \|\Sigma^{1/2}v\|_2 - 3\sqrt{\frac{\gmax(\Sigma)n_c}{n}}\|v\|_2 - 6\sqrt{2}\rho(\Gamma)\sqrt{\frac{\gmax(\Sigma)\log p}{n}}\|\Gamma v\|_1\bigg\}$$
holds with probability at least $1 - c_1\exp(-nc_2)$, for some universal constants $c_1, c_2 > 0$.
\end{lemma}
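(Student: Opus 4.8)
The plan is to adapt the Gaussian comparison argument of \cite{raskutti2010restricted} to our anisotropic, graph-structured penalty. First I would write $X = Z\Sigma^{1/2}$, where $Z \in \mathbb{R}^{n\times p}$ has i.i.d. $N(0,1)$ entries, so that $Xv = Z\Sigma^{1/2}v$ and $\|Xv\|_2 = \|Z\Sigma^{1/2}v\|_2$. Every term in the claimed inequality is positively homogeneous of degree one in $v$, so it suffices to verify it on the unit sphere $\{\|v\|_2 = 1\}$ (the case $v = 0$ being trivial). On this sphere the $\|v\|_2$-term is constant, leaving $\|\Gamma v\|_1$ as the only varying quantity in the deviation; I would therefore run a peeling argument over dyadic ranges $\|\Gamma v\|_1 \in [2^{k}, 2^{k+1})$, which is legitimate because $\|\Gamma v\|_1 \le \sqrt m\,\|\Gamma\|_{\mathrm{op}}$ is bounded on the sphere.

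On each stratum $T_k := \{\|v\|_2 = 1,\ \|\Gamma v\|_1 \le 2^{k+1}\}$ I would lower-bound $\inf_{v\in T_k}\|Z\Sigma^{1/2}v\|_2$ using Gordon's Gaussian min--max comparison inequality applied to the bilinear process $(v,u)\mapsto u^{T} Z\Sigma^{1/2}v$ over $v\in T_k$ and $\|u\|_2 = 1$. Writing $\|Z\Sigma^{1/2}v\|_2 = \sup_{\|u\|_2=1}u^{T}Z\Sigma^{1/2}v$ and comparing, this yields, with $h\sim N(0,I_p)$ and $\bE\|g\|_2 \asymp \sqrt n$,
\[
\bE\inf_{v\in T_k}\|Z\Sigma^{1/2}v\|_2 \ \gtrsim\ \sqrt{n}\,\inf_{v\in T_k}\|\Sigma^{1/2}v\|_2 \ -\ \bE\sup_{v\in T_k}\langle \Sigma^{1/2}h,\,v\rangle .
\]
Since $Z\mapsto \inf_{v\in T_k}\|Z\Sigma^{1/2}v\|_2$ is $\sqrt{\gmax(\Sigma)}$-Lipschitz in the Frobenius norm, Gaussian concentration upgrades this expectation bound to a high-probability statement with $\exp(-cn)$ deviation; dividing by $\sqrt n$ and absorbing the concentration slack is what produces the factor $\tfrac14$ on $\|\Sigma^{1/2}v\|_2$, exactly as in \cite{raskutti2010restricted}.

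The crux is the width term $\bE\sup_{v\in T_k}\langle\Sigma^{1/2}h,v\rangle$, where the two summands of the lemma must emerge. Using $\Pi + \Gamma^{\dagger}\Gamma = I_p$ and $\Pi = \Pi^{T}$, I would split $\langle\Sigma^{1/2}h,v\rangle = \langle\Pi\Sigma^{1/2}h,\,v\rangle + \langle(\Gamma^{\dagger})^{T}\Sigma^{1/2}h,\,\Gamma v\rangle$. The first term is at most $\|\Pi\Sigma^{1/2}h\|_2\,\|v\|_2$, and since $\Pi$ has rank $n_c$, $\bE\|\Pi\Sigma^{1/2}h\|_2 \le \sqrt{\operatorname{tr}(\Pi\Sigma\Pi)} \le \sqrt{\gmax(\Sigma)\,n_c}$, producing the $\sqrt{\gmax(\Sigma)n_c/n}\,\|v\|_2$ term. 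The second term is at most $\|(\Gamma^{\dagger})^{T}\Sigma^{1/2}h\|_\infty\,\|\Gamma v\|_1$; its $j$-th coordinate is $\langle s_j,\Sigma^{1/2}h\rangle \sim N(0,\,s_j^{T}\Sigma s_j)$ with $s_j^{T}\Sigma s_j \le \gmax(\Sigma)\|s_j\|_2^2 \le \gmax(\Sigma)\rho(\Gamma)^2$, so a maximal inequality and union bound over the $m \le p^2$ coordinates (using $\log m \le 2\log p$) give $\|(\Gamma^{\dagger})^{T}\Sigma^{1/2}h\|_\infty \lesssim \rho(\Gamma)\sqrt{\gmax(\Sigma)\log p}$ with high probability, producing the $\rho(\Gamma)\sqrt{\gmax(\Sigma)\log p/n}\,\|\Gamma v\|_1$ term. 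On $T_k$ one has $2^{k+1} \asymp \|\Gamma v\|_1$, which is what lets the peeling close into the linear-in-$\|\Gamma v\|_1$ form.

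Finally I would union-bound over the $O(\log p)$ dyadic strata, combine the two width estimates with the per-stratum Gordon/concentration bound, and return to all $v$ by homogeneity, tracking constants to match $\tfrac14$, $3$, and $6\sqrt2$. I expect the main obstacle to be precisely this width computation together with the peeling bookkeeping: one must keep the kernel and range contributions cleanly separated so that the $n_c$-term and the $\|\Gamma v\|_1$-term do not entangle, propagate explicit constants through Gordon's inequality and Gaussian concentration, and ensure the union over strata costs only a polynomial factor that is absorbed into $\exp(-c_2 n)$. A secondary technical point is possible rank-deficiency of $\Sigma$ in the factorization $X = Z\Sigma^{1/2}$, which I would handle by restricting to $\ker(\Sigma)^{\perp}$ (on which $\|\Sigma^{1/2}v\|_2$ is already supported) or by a regularization/limiting argument.
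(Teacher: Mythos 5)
Your overall architecture matches the paper's proof of Lemma \ref{lem:re} almost exactly: Gordon's min--max comparison for the lower bound on $\inf\|Xv\|_2$, the split $\langle\Sigma^{1/2}h,v\rangle=\langle\Pi\Sigma^{1/2}h,v\rangle+\langle(\Gamma^{\dagger})^{T}\Sigma^{1/2}h,\Gamma v\rangle$ via $\Pi+\Gamma^{\dagger}\Gamma=I_p$, the bound $\bE\|\Pi\Sigma^{1/2}h\|_2\le\sqrt{\gmax(\Sigma)n_c}$ (the paper gets this via Anderson's comparison plus rotational invariance; your trace/Jensen route is equally valid), the Gaussian maximal inequality over $m\le p^2$ coordinates for the $\ell_\infty$ term, Lipschitz concentration, and peeling. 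These are all the right ingredients and in the right places.

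There is, however, one genuine gap in your normalization and peeling scheme. You restrict to $\{\|v\|_2=1\}$ and peel only over dyadic shells of $\|\Gamma v\|_1$, asserting that $\|\Gamma v\|_1$ is then ``the only varying quantity in the deviation.'' But $\|\Sigma^{1/2}v\|_2$ also varies on that sphere, and it is the quantity you need to recover \emph{pointwise} with the factor $\tfrac14$. Gordon's inequality applied to a stratum $T_k$ defined only by $\|\Gamma v\|_1\le 2^{k+1}$ yields (after taking the inf--sup) a term of the form $\inf_{v'\in T_k}\|\Sigma^{1/2}v'\|_2\cdot\bE\|g\|_2$, not $\|\Sigma^{1/2}v\|_2\cdot\bE\|g\|_2$ for each individual $v$; since $T_k$ places no constraint on $\|\Sigma^{1/2}v\|_2$, this infimum degrades to $\sqrt{\gmin(\Sigma)}$, which can be arbitrarily small precisely in the ill-conditioned regime the paper cares about. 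The paper avoids this by normalizing $\|\Sigma^{1/2}v\|_2=1$ instead — which makes the Gordon comparison produce the clean constant $-\bE\|g\|_2$ — at the price of having to peel over \emph{two} quantities, $\|v\|_2\le s$ and $\|\Gamma v\|_1\le r$, via the sets $V(r,s)$ and a doubly indexed family of events $\mathcal{E}_{ij}$. Your plan can be repaired either by adopting that normalization and adding the second peeling dimension, or by keeping $\|v\|_2=1$ and adding a third stratification over dyadic levels of $\|\Sigma^{1/2}v\|_2$ (losing a factor of $2$); as written, the single-parameter peeling does not close. A minor secondary point: in the width computation you need the \emph{expectation} bound $\bE\|(\Gamma^{\dagger})^{T}\Sigma^{1/2}h\|_\infty\lesssim\rho(\Gamma)\sqrt{\gmax(\Sigma)\log p}$ (as in the paper, via Exercise 2.12 of \cite{wainwright2019high}), not a high-probability bound, since this term sits inside $\bE\sup_{v}\langle\Sigma^{1/2}h,v\rangle$.
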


By setting $S = S_{\Gamma\beta^*}$ and applying $k_{S}^{-2}\lesssim \min(d, |S|)$, we obtain the following bounds which are applicable when $\beta^*$ is piecewise constant on the graph $G$. When $\rho(\Gamma)\gtrsim 1$, the second term in \eqref{eq:16} and \eqref{eq:17} should dominate.

\begin{corollary} If $\|\Gamma\beta^*\|_0 = s$, with probability at least $1-c_1\exp(-nc_2) - \frac{2}{m} - e^{-\delta^2/2}$ we have 
\begin{equation}\label{eq:16}
   \|\Sigma^{1/2}(\hat{\beta}-\beta^*)\|_2^2 \lesssim \frac{\sigma^2\gmax(\Sigma)}{\gmin\left(\frac{1}{64}\Sigma+\lambda_2 L\right)}\left(\frac{n_c + \delta^2}{n} + \rho^2(\Gamma)\min(d, s)\frac{s\log p}{n}\right)
\end{equation}
\begin{equation}\label{eq:17}
    \|\hat{\beta}-\beta^*\|_2^2 \lesssim \frac{\sigma^2\gmax(\Sigma)}{\gmin^2\left(\frac{1}{64}\Sigma+\lambda_2L\right)}\left(\frac{n_c + \delta^2}{n} + \rho^2(\Gamma)\min(d, s)\frac{s\log p}{n}\right)
\end{equation}
provided that the RHS of \eqref{eq:16} is smaller than $C\sigma^2$.
\end{corollary}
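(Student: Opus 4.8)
The plan is to specialize Theorem~\ref{theorem:1} to the piecewise-constant case by taking the set $S$ to be exactly the support $S_{\Gamma\beta^*} = \{j \in [m] : (\Gamma\beta^*)_j \neq 0\}$ of the signal jumps, for which $|S| = \|\Gamma\beta^*\|_0 = s$. With this choice the tail quantity $(\Gamma\beta^*)_{-S}$ vanishes identically, so $\|(\Gamma\beta^*)_{-S}\|_1 = 0$. This immediately kills the third summand in both \eqref{eq:pred-error-main} and \eqref{eq:est-error-main}, and also makes condition \eqref{eq:regularity2} hold trivially, since its left-hand side is then zero.

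First I would verify that $S = S_{\Gamma\beta^*}$ satisfies the regularity condition \eqref{eq:regularity}, the only remaining nontrivial hypothesis of Theorem~\ref{theorem:1}. Using the elementary inequality $(\sqrt{n_c}+\delta)^2 \le 2(n_c + \delta^2)$ to control the first term, and substituting the prescribed value $\lambda_1^2 = 32^2\,\sigma^2 \rho^2(\Gamma)\,\gmax(\Sigma)\log p / n$ together with the incidence-matrix bound $k_S^{-2} \lesssim \min(d,s)$ (Lemma~3 of \cite{hutter2016optimal}) into the second term, the left-hand side of \eqref{eq:regularity} is bounded, up to an absolute constant, by $\gmax(\Sigma)\big(\tfrac{n_c+\delta^2}{n} + \rho^2(\Gamma)\min(d,s)\tfrac{s\log p}{n}\big)$. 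Thus \eqref{eq:regularity} reduces to requiring that this expression be at most a constant multiple of $\gmin(\tfrac{1}{64}\Sigma + \lambda_2 L)$, which is exactly the proviso that the right-hand side of \eqref{eq:16} be smaller than $C\sigma^2$ once the prefactor $\sigma^2\gmax(\Sigma)/\gmin(\tfrac{1}{64}\Sigma+\lambda_2 L)$ is cleared. Choosing the absolute constant $C$ in the proviso small enough therefore guarantees \eqref{eq:regularity}.

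Having checked both hypotheses, I would invoke Theorem~\ref{theorem:1}, obtaining the two-term bounds on the same high-probability event (so the probability $1-c_1\exp(-nc_2) - \tfrac{2}{m} - e^{-\delta^2/2}$ is inherited verbatim). It then remains to simplify the surviving terms: plugging $k_S^{-2} \lesssim \min(d,s)$ and $\lambda_1^2 \asymp \sigma^2\rho^2(\Gamma)\gmax(\Sigma)\log p / n$ into the second summand of \eqref{eq:pred-error-main} turns it into $\tfrac{\sigma^2\gmax(\Sigma)}{\gmin(\frac{1}{64}\Sigma+\lambda_2 L)}\rho^2(\Gamma)\min(d,s)\tfrac{s\log p}{n}$, and factoring the common prefactor $\sigma^2\gmax(\Sigma)/\gmin(\tfrac{1}{64}\Sigma+\lambda_2 L)$ out of both remaining terms yields \eqref{eq:16}. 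The estimation bound \eqref{eq:17} follows identically, the only difference being the extra power of $\gmin(\tfrac{1}{64}\Sigma+\lambda_2 L)$ in the denominator inherited from \eqref{eq:est-error-main}.

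Since the computation is essentially a sequence of substitutions, the one genuine subtlety — and the step I would be most careful about — is the constant-matching in the second paragraph: one must confirm that the absolute constants hidden in $k_S^{-2} \lesssim \min(d,s)$ and in $(\sqrt{n_c}+\delta)^2 \le 2(n_c+\delta^2)$ combine so that the single proviso ``RHS of \eqref{eq:16} smaller than $C\sigma^2$'' really does imply \eqref{eq:regularity} with the explicit constants $144$ and $36$ appearing there, rather than merely a weaker order-of-magnitude statement.
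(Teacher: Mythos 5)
Your proposal is correct and matches the paper's (implicit) proof exactly: the paper likewise obtains this corollary by setting $S = S_{\Gamma\beta^*}$ in Theorem \ref{theorem:1}, noting that $\|(\Gamma\beta^*)_{-S}\|_1 = 0$, and substituting $k_S^{-2} \lesssim \min(d,|S|)$ together with the prescribed value of $\lambda_1$. Your observation that the proviso on the RHS of \eqref{eq:16} is what absorbs condition \eqref{eq:regularity} is also precisely the paper's own remark following Theorem \ref{theorem:1}.
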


On the other hand, if we set $S = \emptyset$, we obtain the following bounds that are applicable when $\|\Gamma\beta^*\|_1$ is small. When $\beta^*$ is smoothly varying over $G$ and $\|\Gamma\beta^*\|_0$ is large, these bounds are more helpful in explaining our estimator's good performance. 

\begin{corollary}
With probability at least $1-c_1\exp(-nc_2) - \frac{2}{m} - e^{-\delta^2/2}$,
\begin{equation}\label{eq:l1-sparse-pred}
\|\Sigma^{1/2}(\hat{\beta}-\beta^*)\|_2^2 \lesssim \frac{\sigma^2\gmax(\Sigma)}{\gmin\left(\frac{1}{64}\Sigma+\lambda_2 L\right)}\frac{n_c+\delta^2}{n} + \sigma\rho(\Gamma)\sqrt{\frac{\gmax(\Sigma)\log p}{n}} \|\Gamma\beta^*\|_1
\end{equation}
\begin{equation}\label{eq:l1-sparse-est}
\|\hat{\beta}-\beta^*\|_2^2 \lesssim \frac{\sigma^2\gmax(\Sigma)}{\gmin^2\left(\frac{1}{64}\Sigma+\lambda_2 L\right)}\frac{n_c+\delta^2}{n} + \frac{\sigma\rho(\Gamma)}{\gmin\left(\frac{1}{64}\Sigma+\lambda_2L\right)}\sqrt{\frac{\gmax(\Sigma)\log p}{n}} \|\Gamma\beta^*\|_1
\end{equation}
provided that the RHS of \eqref{eq:l1-sparse-pred} is smaller than $C\sigma^2$.
\end{corollary}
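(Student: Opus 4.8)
The plan is to obtain this corollary as a direct specialization of Theorem \ref{theorem:1} to the choice $S = \emptyset$. First I would substitute $S = \emptyset$ into the statement of the theorem and track how each of the three summands on the right-hand sides of \eqref{eq:pred-error-main} and \eqref{eq:est-error-main} simplifies. Since $|S| = 0$, the middle summand $\lambda_1^2 |S| k_S^{-2}$ vanishes in both bounds. This is the one place where a small amount of care is needed: the compatibility factor $k_S$ in \eqref{eq:compatibility-factor} is only defined for nonempty $S$, so I would explicitly adopt the convention that the product $|S|\,k_S^{-2}$ is zero when $S = \emptyset$, the vanishing factor $|S| = 0$ annihilating the otherwise undefined $k_S^{-2}$. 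The complement of the empty set is all of $[m]$, so that $(\Gamma\beta^*)_{-S} = \Gamma\beta^*$ and hence $\|(\Gamma\beta^*)_{-S}\|_1 = \|\Gamma\beta^*\|_1$; the third summand therefore becomes $\lambda_1\|\Gamma\beta^*\|_1$ in the prediction bound and $\lambda_1\|\Gamma\beta^*\|_1 / \gmin(\frac{1}{64}\Sigma + \lambda_2 L)$ in the estimation bound, while the first summand carries over verbatim since it does not involve $S$.

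Next I would substitute the prescribed value $\lambda_1 = 32\sigma\rho(\Gamma)\sqrt{\gmax(\Sigma)\log p / n}$ into these surviving terms. Absorbing the absolute constant $32$ into the $\lesssim$ notation, the prediction term $\lambda_1\|\Gamma\beta^*\|_1$ becomes $\sigma\rho(\Gamma)\sqrt{\gmax(\Sigma)\log p / n}\,\|\Gamma\beta^*\|_1$, which is exactly the second summand of \eqref{eq:l1-sparse-pred}; the analogous substitution in the estimation bound produces the factor $\sigma\rho(\Gamma)/\gmin(\frac{1}{64}\Sigma + \lambda_2 L)$ multiplying the same square-root term, matching the second summand of \eqref{eq:l1-sparse-est}. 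Since the probability bound $1-c_1\exp(-nc_2) - \frac{2}{m} - e^{-\delta^2/2}$ in Theorem \ref{theorem:1} is independent of $S$, it transfers unchanged.

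Finally I would verify that the two regularity hypotheses of the theorem collapse, under $S = \emptyset$, onto the single provision stated in the corollary. With $S = \emptyset$, condition \eqref{eq:regularity} loses its compatibility term and reads $144\gmax(\Sigma)(\sqrt{n_c}+\delta)^2/n \leq \frac{1}{2}\gmin(\frac{1}{64}\Sigma + \lambda_2 L)$, which (after rearranging and using $(\sqrt{n_c}+\delta)^2 \asymp n_c + \delta^2$) bounds the first summand of the right-hand side of \eqref{eq:l1-sparse-pred} by a constant multiple of $\sigma^2$; meanwhile \eqref{eq:regularity2} becomes $\lambda_1\|\Gamma\beta^*\|_1 \leq \sigma^2/18$, which likewise bounds the second summand. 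Conversely, requiring that the right-hand side of \eqref{eq:l1-sparse-pred} be at most $C\sigma^2$ recovers both conditions up to the absolute constant $C$, so the two formulations are equivalent. Beyond the empty-set convention for $k_S$ and this routine matching of the regularity conditions, the argument introduces no new estimates and is purely a bookkeeping specialization of Theorem \ref{theorem:1}; I do not expect any genuine obstacle.
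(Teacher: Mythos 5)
Your proposal is correct and is essentially the paper's own argument: the corollary is obtained by specializing Theorem \ref{theorem:1} to $S=\emptyset$ (with the convention $|S|k_S^{-2}=0$), substituting the prescribed $\lambda_1$, and identifying the two regularity hypotheses with the requirement that the right-hand side of \eqref{eq:l1-sparse-pred} be at most $C\sigma^2$, exactly as the paper indicates in its discussion of conditions \eqref{eq:regularity} and \eqref{eq:regularity2}.
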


We can also consider the notion that $\Gamma\beta^*$ is $\ell_q$-sparse ($0<q<1$), in the sense that $\sum_{j=1}^m|(\Gamma\beta^*)_j|^q \leq R_q$ (Assumption \eqref{eq:3}). This notion of weak sparsity has been considered in \cite{raskutti2011minimax} (where $\beta^*$ is assumed to lie in an $\ell_q$-ball) and \cite{cai2012optimal} (where, in the context of covariance estimation, the columns of the covariance matrix are assumed to lie in an $\ell_q$-ball). In contrast, \cite{hebiri2011smooth} defines the smoothness of the true signal using $\ell_2$-norm, in the sense that $\sum_{j=1}^m|(\Gamma\beta^*)_j|^2 \leq R_2$
for some $R_2>0$. If there exists an edge with a large signal difference, $R_2$ can be very large. For smaller values of $q$, we can more easily accommodate the occasional large signal jump with a reasonably small $R_q$, which appears in the bound \eqref{cor:lq}. 

By choosing $S$ to trade off the last two terms in the RHS of \eqref{eq:pred-error-main}, we obtain the following bound for the prediction error. The proof is routine and is thus omitted.  

\begin{corollary}\label{cor:5}
With probability at least $1-c_1\exp(-nc_2) - \frac{2}{m} - e^{-\delta^2/2}$, if Assumption \eqref{eq:3} holds for some $q \in (0,1)$, we have
\begin{equation}\label{cor:lq}
\begin{split}
    &\|\Sigma^{1/2}(\hat{\beta}-\beta^*)\|_2^2 \lesssim 
    \frac{\sigma^2\gmax(\Sigma)}{\gmin\left(\frac{1}{64}\Sigma+\lambda_2 L\right)}\frac{n_c+\delta^2}{n} \\
    &+\min\left(\frac{[\sigma\rho(\Gamma)]^{2-q}\left(\frac{\gmax(\Sigma)\log p}{n}\right)^{1-q/2}R_qd^{1-q}}{[\gmin\left(\frac{1}{64}\Sigma+\lambda_2L\right)]^{1-q}},
    \frac{[\sigma\rho(\Gamma)]^{\frac{2}{1+q}}\left(\frac{\gmax(\Sigma)\log p}{n}\right)^{\frac{1}{1+q}}R_q^{\frac{2}{1+q}}}{\left[\gmin\left(\frac{1}{64}\Sigma+\lambda_2L\right)\right]^{\frac{1-q}{1+q}}}\right)
\end{split}
\end{equation}
provided that the RHS of \eqref{cor:lq} is smaller than $C\sigma^2$.
\end{corollary}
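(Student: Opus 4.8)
The plan is to start from the general prediction bound \eqref{eq:pred-error-main} of Theorem \ref{theorem:1}, which holds for every admissible set $S$, and to choose $S$ by thresholding the vector $\Gamma\beta^*$ so as to balance its last two terms. Since $\Gamma$ is an incidence matrix, Lemma 3 of \cite{hutter2016optimal} gives $k_S^{-2} \lesssim \min(d,|S|)$, so the two terms to be balanced are $\frac{\lambda_1^2|S|\min(d,|S|)}{\gmin(\frac{1}{64}\Sigma+\lambda_2 L)}$ and $\lambda_1\|(\Gamma\beta^*)_{-S}\|_1$, while the first (parametric) term on the RHS of \eqref{eq:pred-error-main} is simply carried along. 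For a threshold $\tau>0$ I would take $S = \{j\in[m]: |(\Gamma\beta^*)_j| > \tau\}$. Assumption \eqref{eq:3} then yields the two elementary truncation estimates $|S|\le R_q\tau^{-q}$ (each retained coordinate contributes more than $\tau^q$ to $\sum_j|(\Gamma\beta^*)_j|^q\le R_q$) and $\|(\Gamma\beta^*)_{-S}\|_1 = \sum_{j:\,|(\Gamma\beta^*)_j|\le \tau}|(\Gamma\beta^*)_j| \le \tau^{1-q}\sum_j|(\Gamma\beta^*)_j|^q \le R_q\tau^{1-q}$. Substituting these leaves a scalar function of $\tau$ to minimize.

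I would carry out this minimization in the two regimes corresponding to the two branches of $\min(d,|S|)$. Bounding $k_S^{-2}\lesssim d$ gives $\frac{\lambda_1^2 d R_q\tau^{-q}}{\gmin(\cdots)} + \lambda_1 R_q\tau^{1-q}$, whose two power-law terms balance at $\tau\asymp \lambda_1 d/\gmin(\cdots)$ and produce $\frac{\lambda_1^{2-q} R_q d^{1-q}}{[\gmin(\cdots)]^{1-q}}$ (here $\gmin(\cdots)$ abbreviates $\gmin(\frac{1}{64}\Sigma+\lambda_2 L)$). Bounding instead $k_S^{-2}\lesssim |S|\le R_q\tau^{-q}$ gives $\frac{\lambda_1^2 R_q^2\tau^{-2q}}{\gmin(\cdots)} + \lambda_1 R_q\tau^{1-q}$, which balances at $\tau\asymp (\lambda_1 R_q/\gmin(\cdots))^{1/(1+q)}$ and yields $\frac{\lambda_1^{2/(1+q)} R_q^{2/(1+q)}}{[\gmin(\cdots)]^{(1-q)/(1+q)}}$. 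Since we are free to use whichever bound on $k_S^{-2}$ is smaller, the second summand of \eqref{cor:lq} is precisely the minimum of these two expressions.

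It then remains to substitute $\lambda_1 = 32\sigma\rho(\Gamma)\sqrt{\gmax(\Sigma)\log p / n}$: the exponents $2-q$ and $2/(1+q)$ turn $\lambda_1^{2-q}$ and $\lambda_1^{2/(1+q)}$ into the stated powers $[\sigma\rho(\Gamma)]^{2-q}(\frac{\gmax(\Sigma)\log p}{n})^{1-q/2}$ and $[\sigma\rho(\Gamma)]^{2/(1+q)}(\frac{\gmax(\Sigma)\log p}{n})^{1/(1+q)}$, exactly matching the two terms inside the minimum in \eqref{cor:lq}. This is pure bookkeeping once the $\tau$-optimizations above are in hand.

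The only point requiring genuine care is that the thresholded $S$ arising from the optimal $\tau$ must satisfy the admissibility conditions \eqref{eq:regularity} and \eqref{eq:regularity2} of Theorem \ref{theorem:1}. As observed immediately after the theorem, these conditions are equivalent, up to absolute constants, to requiring that the relevant terms on the RHS of \eqref{eq:pred-error-main} be at most a constant multiple of $\sigma^2$; since the quantities being optimized are precisely those that appear in \eqref{cor:lq}, the hypothesis that the RHS of \eqref{cor:lq} is smaller than $C\sigma^2$ is exactly what certifies admissibility. Thus no obstacle arises beyond this verification and the routine balancing of exponents, which is why the full computation can be omitted.
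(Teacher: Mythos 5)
Your proposal is correct and is precisely the argument the paper sketches and omits: choose $S$ by thresholding $\Gamma\beta^*$ at level $\tau$, use Assumption \eqref{eq:3} to get $|S|\le R_q\tau^{-q}$ and $\|(\Gamma\beta^*)_{-S}\|_1\le R_q\tau^{1-q}$, and balance the last two terms of \eqref{eq:pred-error-main} under each of the two bounds $k_S^{-2}\lesssim d$ and $k_S^{-2}\lesssim |S|$, which yields exactly the two branches of the minimum. Your exponent bookkeeping and your observation that the hypothesis ``RHS of \eqref{cor:lq} smaller than $C\sigma^2$'' is what certifies the admissibility conditions \eqref{eq:regularity} and \eqref{eq:regularity2} are both accurate.
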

\subsection{Discussion of the quantity \texorpdfstring{$\gmin\left(\frac{1}{64}\Sigma + \lambda_2 L\right)$}{a}}\label{theory:mineigen} When $\Gamma = I_p$, our penalty is just the original Elastic Net penalty. In that case, since $\rho(\Gamma) = 1$ and $k_S^{-2}\leq 1$, the corresponding estimator $\hat{\beta}_{\text{EN}}$ satisfies with high probability
\begin{equation}
    \|\Sigma^{1/2}(\hat{\beta}_{\text{EN}}-\beta^*)\|_2^2 \lesssim \frac{\sigma^2\gmax(\Sigma)}{\gmin\left(\frac{1}{64}\Sigma+\lambda_2I_p\right)}\frac{\|\beta^*\|_0\log p}{n}
\end{equation}
Here, it is clear the minimum eigenvalue term is bounded below by $\lambda_2$. When $\Gamma$ is an incidence matrix of a graph, however, $\Gamma$ has a nontrivial kernel and so the behavior of the quantity $\gmin\left(\frac{1}{64}\Sigma+\lambda_2L\right)$ is less clear. 

We conjecture that under reasonable assumptions about $(\Sigma, L)$, $\gmin\left(\frac{1}{64}\Sigma+\lambda_2L\right)$ is bounded below by $c\lambda_2$ for some absolute constant $c$, at least when $\lambda_2$ is in a neighborhood of zero. \textit{We emphasize that the proof of Theorem \ref{theorem:1} makes no assumption about how $\Sigma$ is related to the graph $G$ or its Laplacian $L$}.  When we only have $\gmin\left(\frac{1}{64}\Sigma+\lambda_2L\right) \geq c\lambda_2$, \eqref{eq:16} yields 
\begin{equation}\label{eq:lambda2}
    \|\Sigma^{1/2}(\hat{\beta}-\beta^*)\|_2^2 \lesssim \sigma\sqrt{\gmax(\Sigma)}\|\Gamma\beta^*\|_\infty\left(\frac{n_c+\delta^2}{\rho(\Gamma)\sqrt{n\log p}}+\rho(\Gamma)\min(d,s)s\sqrt{\frac{\log p}{n}}\right)
\end{equation}
but we fail to obtain any theoretical guarantee of consistency in estimation when $\Gamma\beta^*$ is sparse. If we can assume $\gmin\left(\frac{1}{64}\Sigma+\lambda_2L\right) \geq c\sqrt{\lambda_2}$, however, we obtain from \eqref{eq:17} that
\begin{equation}\label{eq:lambda2-est}
    \|\hat{\beta}-\beta^*\|_2^2 \lesssim \sigma\sqrt{\gmax(\Sigma)}\|\Gamma\beta^*\|_\infty\left(\frac{n_c+\delta^2}{\rho(\Gamma)\sqrt{n\log p}}+\rho(\Gamma)\min(d,s)s\sqrt{\frac{\log p}{n}}\right)
\end{equation}

The bounds \eqref{eq:lambda2} and \eqref{eq:lambda2-est} may be more applicable when $\Sigma$ is ill-conditioned and $\gmin(\Sigma)$ cannot be assumed to be bounded away from zero. Unfortunately, characterizing the spectrum of the sum of two symmetric matrices in terms of the spectra of the summands is known to be a difficult problem, and we leave as an open problem the question of identifying a reasonable assumption on $(\Sigma, L)$ (which may both have nontrivial kernels) under which $\gmin\left(\frac{1}{64}\Sigma+\lambda_2L\right) \geq c\lambda_2$ holds. In comparison to our work, Corollary 1 of \cite{hebiri2011smooth} (which assumes fixed design) assumes that its restricted eigenvalue constant $\phi_{\mu_n}$, defined with respect to the matrix $\tilde{X}^T\tilde{X}/n$, may be greater than $\mu_n$ or $\sqrt{\mu_n}$ without further justification; here $\mu_n$ plays a similar role as our $\lambda_2$. In order to prove $\gmin\left(\frac{1}{64}\Sigma+\lambda_2L\right) \geq c\lambda_2$, Lemma 1 of \cite{li2018graph} assumes that, for some absolute constant $c_l > 0$, $\min_{j\in [p]}\sum_{k=1}^p|\Sigma_{jk}| \geq c_l$ and $\max_{j\in [p]}\sum_{k=1}^p|\hat{\Sigma}_{jk}-\Sigma_{jk}| \leq c_l/4$; again, note that $\hat{\Sigma}$ acts as the adjacency matrix of the graph considered in \cite{li2018graph}. Such assumptions may be too restrictive as the same absolute constant $c_l$ is used in both assumptions.

In Section \ref{sec:min_eig}, we provide empirical evidence to show that, in many situations where the true covariance matrix $\Sigma$ reflects the structure of the graph $G$ (that is, features indexed by adjacent or nearby nodes are more correlated) and $\Sigma$ is degenerate, the improvement of the minimum eigenvalue term is significant and can be better than $c\lambda_2$ (or even $c\sqrt{\lambda}_2$).

\subsection{Error bounds for specific types of graphs}\label{sec:theory-specific} In this section, we apply our results to some specific graph structures that are also explored in \cite{hutter2016optimal}. Throughout this section, $s$ denotes $\|\Gamma\beta^*\|_0$ and $R_q$ denotes the bound on $\sum_{j=1}^m |(\Gamma\beta^*)_j|^q$. We only present prediction error bounds here as the estimation error bounds are different only by a factor of $\gmin\left(\frac{1}{64}\Sigma + \lambda_2L\right)$ in the denominator. We mainly assume $\frac{\sigma^2\gmax(\Sigma)}{\gmin(\frac{1}{64}\Sigma + \lambda_2 L}$ is of constant order, but we also specialize the bound \eqref{eq:lambda2} assuming $\gmin\left(\frac{1}{64}\Sigma + \lambda_2 L\right) \gtrsim \lambda_2$ for the case when $\Gamma\beta^*$ is sparse to illustrate the effects of the $\ell_2$ component in our penalty when $\gmin(\Sigma)$ is very small. In that situation, the bounds for the standalone $\ell_1$ penalty provide no control on the errors.\\

\noindent{\bf The 2D grid.} From Proposition 4 of \cite{hutter2016optimal} as well as our lower bound result on $\rho(\Gamma)$ for the 2D grid (proven in the Appendix), we have the following lemma. 

\begin{lemma}\label{lem:2-dim-rho}
    If $\Gamma$ is the incidence matrix of the 2D grid with $p$ vertices, then $$1\lesssim \rho(\Gamma) \lesssim \sqrt{\log p}$$
\end{lemma}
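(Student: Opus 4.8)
The upper bound $\rho(\Gamma) \lesssim \sqrt{\log p}$ is exactly Proposition 4 of \cite{hutter2016optimal}, so the only new content is the matching-order lower bound $\rho(\Gamma) \gtrsim 1$. Since $\rho(\Gamma) = \max_{e \in [m]} \|s_e\|_2$ is a maximum of $m$ nonnegative quantities, the plan is to lower bound it by the average $\frac{1}{m}\sum_{e=1}^m \|s_e\|_2^2$ and to evaluate this average exactly through a trace identity. The appeal of this route is that it avoids computing any individual column $s_e = \Gamma^\dagger e_e$ explicitly (which would require $\Gamma^\dagger$, equivalently $L^\dagger$, in closed form), and it applies verbatim to any connected graph with $m \asymp p$.

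First I would record the identity $\sum_{e=1}^m \|s_e\|_2^2 = \|\Gamma^\dagger\|_F^2 = \operatorname{tr}\!\big(\Gamma^\dagger (\Gamma^\dagger)^T\big) = \operatorname{tr}(L^\dagger)$. The last equality uses the standard Moore--Penrose fact $\Gamma^\dagger(\Gamma^\dagger)^T = (\Gamma^T\Gamma)^\dagger = L^\dagger$, which follows in one line from the singular value decomposition of $\Gamma$ (the nonzero singular values of $\Gamma^\dagger$ are the reciprocals of those of $\Gamma$). Consequently $\operatorname{tr}(L^\dagger) = \sum_{k : \mu_k > 0} \mu_k^{-1}$, the sum of reciprocals of the nonzero eigenvalues $\mu_k$ of the Laplacian $L$.

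Next I would lower bound this sum of reciprocals by the Cauchy--Schwarz (equivalently AM--HM) inequality over the $r := \operatorname{rank}(L) = p - n_c$ nonzero eigenvalues, using $\operatorname{tr}(L) = \sum_v \deg(v) = 2m$:
$$\operatorname{tr}(L^\dagger) = \sum_{k:\mu_k > 0}\frac{1}{\mu_k} \ge \frac{r^2}{\sum_{k:\mu_k>0}\mu_k} = \frac{(p - n_c)^2}{\operatorname{tr}(L)} = \frac{(p-n_c)^2}{2m}.$$
Combining with the averaging step gives the fully general estimate
$$\rho(\Gamma)^2 = \max_{e}\|s_e\|_2^2 \ge \frac{1}{m}\operatorname{tr}(L^\dagger) \ge \frac{(p-n_c)^2}{2m^2}.$$
For the $L_1 \times L_2$ grid one has $n_c = 1$ and $m = 2p - (L_1 + L_2) \le 2p$, so the right-hand side is at least $\frac{(p-1)^2}{8p^2} \ge \frac{1}{32}$ for $p \ge 2$, yielding $\rho(\Gamma) \gtrsim 1$.

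I do not expect a genuine obstacle for the bound as stated: the work is the trace identity plus a one-line spectral inequality. The point worth flagging is that this averaging argument is lossy. In fact the grid Laplacian satisfies $\operatorname{tr}(L^\dagger) \asymp p\log p$ (from the eigenvalues $\mu_{a,b} = 4\sin^2(\tfrac{\pi a}{2L_1}) + 4\sin^2(\tfrac{\pi b}{2L_2})$ and the logarithmic divergence of $\sum 1/(a^2+b^2)$ over a $\sqrt p \times \sqrt p$ box), so the same averaging would actually deliver the \emph{sharp} lower bound $\rho(\Gamma) \gtrsim \sqrt{\log p}$. Proving that, however, requires the explicit eigenvalue asymptotics of the grid rather than the crude AM--HM bound, and is unnecessary for Lemma \ref{lem:2-dim-rho} as stated, so I would stop at the constant lower bound.
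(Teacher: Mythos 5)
Your proof is correct, but it takes a genuinely different route from the paper's. The paper proves the lower bound by singling out the column $s_{1,1}^{(1)}$ of $\Gamma^\dagger$ associated with a corner edge, expanding it in the product eigenbasis of the 1D chain Laplacian, and lower-bounding the resulting double sum over low frequencies via elementary trigonometric inequalities ($\sin x \ge x/2$, $1-\cos x \le x^2/2$) and integral comparisons; this is explicit but laborious and specific to the grid. You instead bound the maximum column norm by the average, convert $\sum_e \|s_e\|_2^2 = \|\Gamma^\dagger\|_F^2 = \operatorname{tr}(L^\dagger)$ via the SVD identity $\Gamma^\dagger(\Gamma^\dagger)^T = (\Gamma^T\Gamma)^\dagger$, and apply Cauchy--Schwarz to the nonzero Laplacian eigenvalues together with $\operatorname{tr}(L) = 2m$ to get $\rho(\Gamma)^2 \ge (p-n_c)^2/(2m^2)$; each step checks out, and the grid-specific input reduces to $n_c = 1$ and $m \le 2p$. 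What your argument buys is brevity and generality: the constant lower bound holds for any connected graph with $m \lesssim p$ edges (so all $r$-dimensional grids at once, covering the lower bound in Lemma~\ref{lem:r-dim-rho} as well), with no need for explicit eigenvectors. What it gives up is only the identification of which edge nearly attains the maximum, which the lemma does not require. Your closing observation is also correct: since $\operatorname{tr}(L^\dagger) \asymp p\log p$ for the 2D grid (from $\mu_{a,b} \lesssim (a^2+b^2)/p$ and the logarithmic divergence of $\sum (a^2+b^2)^{-1}$ over the frequency box), the same averaging step would already yield the sharp $\rho(\Gamma) \gtrsim \sqrt{\log p}$, though at the cost of the eigenvalue asymptotics you rightly note are unnecessary for the statement as given.
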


We therefore obtain the following corollary for the 2D grid. 
\begin{corollary}
Let $\Gamma$ be the incidence matrix of the 2D grid with $p$ vertices. With the same choice of $\delta$, $\lambda_1$ and $\lambda_2$ as in Theorem \ref{theorem:1}, with high probability we have
\begin{equation}\label{eq:grid-pred}
\|\Sigma^{1/2}(\hat{\beta}-\beta^*)\|_2^2 \lesssim \frac{\sigma^2\gmax(\Sigma)}{\gmin\left(\frac{1}{64}\Sigma+\lambda_2 L\right)}\left(\frac{1 + \delta^2}{n} + \frac{s(\log p)^2}{n}\right)
\end{equation}
\begin{equation}\label{eq:grid-predl1}
    \|\Sigma^{1/2}(\hat{\beta}-\beta^*)\|_2^2 \lesssim \frac{\sigma^2\gmax(\Sigma)}{\gmin\left(\frac{1}{64}\Sigma+\lambda_2 L\right)}\frac{1 + \delta^2}{n} + \sigma \sqrt{\frac{\gmax(\Sigma)(\log p)^2}{n}}\|\Gamma\beta^*\|_1
\end{equation}
\begin{equation}\label{eq:grid-pred3}
    \|\Sigma^{1/2}(\hat{\beta}-\beta^*)\|_2^2 \lesssim \frac{\sigma^2\gmax(\Sigma)}{\gmin\left(\frac{1}{64}\Sigma+\lambda_2 L\right)}\frac{1 + \delta^2}{n} + \frac{\sigma^{2-q}R_q\left(\frac{\gmax(\Sigma)(\log p)^2}{n}\right)^{1-q/2}}{[\gmin\left(\frac{1}{64}\Sigma+\lambda_2L\right)]^{1-q}}
\end{equation}
provided that the RHS of the bounds above are smaller than $C\sigma^2$. If $\gmin\left(\frac{1}{64}\Sigma+\lambda_2L\right) \gtrsim\lambda_2$, we also have 
\begin{equation}\label{eq:grid-pred2}
    \|\Sigma^{1/2}(\hat{\beta}-\beta^*)\|_2^2 \lesssim \sigma\sqrt{\gmax(\Sigma)}\|\Gamma\beta^*\|_\infty\left(\frac{1+\delta^2}{\sqrt{n\log p}} + \frac{s\log p}{\sqrt{n}}\right)      
\end{equation}
\end{corollary}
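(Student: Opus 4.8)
The plan is to obtain all four bounds by specializing the general results already proved for the GEN estimator to the 2D grid, using only two structural facts about this graph together with Lemma \ref{lem:2-dim-rho}. The structural facts are: (i) the 2D grid is connected, so $n_c = 1$; and (ii) every vertex has degree at most $4$, so the maximum degree satisfies $d \leq 4$, whence $\min(d,s)\lesssim 1$ and $d^{1-q}\lesssim 1$ uniformly in $s$ and in $q\in(0,1)$. The only graph-dependent quantity that actually grows with $p$ is therefore $\rho(\Gamma)$, which by Lemma \ref{lem:2-dim-rho} satisfies $1\lesssim\rho(\Gamma)\lesssim\sqrt{\log p}$. All of the displayed bounds use the same $\delta$, $\lambda_1$ and $\lambda_2$ as Theorem \ref{theorem:1}, and in each case the requirement that the relevant right-hand side be smaller than $C\sigma^2$ is inherited verbatim from the corollary being specialized, so no new regularity condition arises.

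Each of the four bounds is then a direct substitution. For \eqref{eq:grid-pred} I start from \eqref{eq:16} with $S = S_{\Gamma\beta^*}$ and plug in $n_c=1$, $\min(d,s)\lesssim 1$ and $\rho^2(\Gamma)\lesssim\log p$, so the penalized term becomes $\rho^2(\Gamma)\min(d,s)\frac{s\log p}{n}\lesssim\frac{s(\log p)^2}{n}$. For the $\ell_1$ bound \eqref{eq:grid-predl1} I start from \eqref{eq:l1-sparse-pred} (the $S=\emptyset$ corollary), set $n_c=1$, and use $\rho(\Gamma)\lesssim\sqrt{\log p}$ to get $\sigma\rho(\Gamma)\sqrt{\frac{\gmax(\Sigma)\log p}{n}}\lesssim\sigma\sqrt{\frac{\gmax(\Sigma)(\log p)^2}{n}}$. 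For the weakly sparse bound \eqref{eq:grid-pred3} I take the first argument of the minimum in \eqref{cor:lq} and substitute $[\sigma\rho(\Gamma)]^{2-q}\lesssim\sigma^{2-q}(\log p)^{1-q/2}$ and $d^{1-q}\lesssim 1$; the factor $(\log p)^{1-q/2}$ then merges with $\left(\frac{\gmax(\Sigma)\log p}{n}\right)^{1-q/2}$ to give the single factor $\left(\frac{\gmax(\Sigma)(\log p)^2}{n}\right)^{1-q/2}$ in \eqref{eq:grid-pred3}.

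The last bound \eqref{eq:grid-pred2} holds under the additional hypothesis $\gmin\left(\frac{1}{64}\Sigma+\lambda_2 L\right)\gtrsim\lambda_2$, so I invoke \eqref{eq:lambda2}, which is derived precisely under that assumption. Setting $n_c=1$, I control the first term using only the lower bound $\rho(\Gamma)\gtrsim 1$, giving $\frac{n_c+\delta^2}{\rho(\Gamma)\sqrt{n\log p}}\lesssim\frac{1+\delta^2}{\sqrt{n\log p}}$, and the second term using $\rho(\Gamma)\lesssim\sqrt{\log p}$ together with $\min(d,s)\lesssim 1$, giving $\rho(\Gamma)\min(d,s)s\sqrt{\frac{\log p}{n}}\lesssim\frac{s\log p}{\sqrt{n}}$.

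Main obstacle: the corollary itself is essentially bookkeeping once Lemma \ref{lem:2-dim-rho} is available, and the only point requiring care is tracking the exact powers of $\log p$ when substituting $\rho(\Gamma)\asymp\sqrt{\log p}$ --- especially in the $\ell_q$ case, where $\rho(\Gamma)$ enters with exponent $2-q$ and one must verify that the $\log p$ factors combine to the stated power. The genuine analytic work lies not in this corollary but in Lemma \ref{lem:2-dim-rho}, in particular the upper bound $\rho(\Gamma)\lesssim\sqrt{\log p}$, which rests on Proposition 4 of \cite{hutter2016optimal} and the matching lower bound; that lemma is established separately and is assumed here.
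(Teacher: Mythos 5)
Your proposal is correct and is exactly the routine specialization the paper intends (the paper omits the proof as immediate from Lemma \ref{lem:2-dim-rho} and the general corollaries): the substitutions $n_c=1$, $\min(d,s)\leq 4$, $\rho(\Gamma)\lesssim\sqrt{\log p}$ all check out, including the power count $(\log p)^{1-q/2}\cdot(\log p/n)^{1-q/2}=((\log p)^2/n)^{1-q/2}$ in the $\ell_q$ case. You also correctly note that the first term of \eqref{eq:grid-pred2} needs the lower bound $\rho(\Gamma)\gtrsim 1$, which is precisely why the paper establishes that direction of Lemma \ref{lem:2-dim-rho}.
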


The rates obtained in \eqref{eq:grid-pred} and \eqref{eq:grid-pred2} are good if $s$ is of small order relative to $n$. For example, if there is a small island of size $k$-by-$k$ where $\beta^*$ attains a value distinct from its background value outside that island (this situation can correspond to finding abnormal spots on an MRI scan), then \eqref{eq:grid-pred} gives us the rate $\frac{k(\log p)^2}{n}$, provided that $\frac{\sigma^2\gmax(\Sigma)}{\gmin\left(\frac{1}{64}\Sigma + \lambda_2 L\right)}$ is of constant order. This can be compared with the rate obtained by the Lasso estimator, which is $\frac{k^2\log p}{n}$ if the background value outside the island is zero (but it fails to achieve this rate if the background value is nonzero). However, in the situation where the 2D grid can be divided in the middle into a left island and a right island and $\beta^*$ is constant on each of these islands, then $s\asymp \sqrt{p}$ and our rates are meaningful only in the $p \ll n$ setting.\\

\noindent{\bf The $r$-dimensional grid ($r\geq 3$).}
From Proposition 6 of \cite{hutter2016optimal} as well as our lower bound result on $\rho(\Gamma)$ for the $r$-dimensional grid, we can conclude that $\rho(\Gamma)$ in this case is of constant order, assuming $r$ is fixed.
\begin{lemma}\label{lem:r-dim-rho}
    If $\Gamma$ is the incidence matrix of the $r$-dimensional grid with $p$ vertices and $r\geq 3$, then 
    $$c(r) \leq \rho(\Gamma) \leq C(r) $$
    for some constants $c(r), C(r)$ that only depend on $r$. 
\end{lemma}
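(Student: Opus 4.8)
The plan is to treat the two inequalities separately: the upper bound $\rho(\Gamma) \le C(r)$ I would simply import from Proposition~6 of \cite{hutter2016optimal}, so the new work is entirely in the lower bound $\rho(\Gamma) \ge c(r)$, for which I would run a short effective-resistance argument. First I would use $\Gamma^\dagger = (\Gamma^T\Gamma)^\dagger \Gamma^T = L^\dagger \Gamma^T$ to write, for edge $j = (a,b)$, its column $s_j = L^\dagger(e_a - e_b)$, giving
\[
\|s_j\|_2^2 = (e_a - e_b)^T (L^\dagger)^2 (e_a - e_b).
\]
Since the $r$-dimensional grid is connected, $\ker(L) = \mathrm{span}(\mathbf 1)$ and $e_a - e_b \perp \mathbf 1$, so everything lives on $\mathrm{range}(L)$.

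The key reduction I would make is the operator inequality $(L^\dagger)^2 \succeq \gmax(L)^{-1} L^\dagger$ on $\mathrm{range}(L)$, which holds because every nonzero eigenvalue $\lambda$ of $L$ satisfies $\lambda^{-2} \ge \gmax(L)^{-1}\lambda^{-1}$. This turns the $(L^\dagger)^2$ quadratic form into an effective resistance:
\[
\|s_j\|_2^2 \ \ge\ \frac{(e_a - e_b)^T L^\dagger (e_a - e_b)}{\gmax(L)} \ =\ \frac{R_{\mathrm{eff}}(a,b)}{\gmax(L)},
\]
where $R_{\mathrm{eff}}(a,b) := (e_a - e_b)^T L^\dagger (e_a - e_b)$ is the effective resistance between the endpoints of edge $j$ with $G$ viewed as a unit-conductance network.

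It then remains to bound both factors by constants depending only on $r$. For the denominator I would use the standard estimate $\gmax(L) \le 2d = 4r$ (the grid has maximum degree $d = 2r$), which can also be read off the explicit eigenvalues $\sum_{i=1}^r (2 - 2\cos(\pi k_i / N))\le 4r$, with $p = N^r$. For the numerator I would lower-bound $R_{\mathrm{eff}}$ by the elementary unit-current argument: the potential $V = L^\dagger(e_a - e_b)$ is harmonic off $\{a,b\}$ with $R_{\mathrm{eff}}(a,b) = V(a) - V(b)$, and the maximum principle gives $V(x) \ge V(b)$ for every $x$, so $1 = \sum_{x \sim a}(V(a) - V(x)) \le d\,(V(a) - V(b))$, i.e. $R_{\mathrm{eff}}(a,b) \ge 1/d = 1/(2r)$. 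Combining, every edge obeys $\|s_j\|_2^2 \ge 1/(8r^2)$, whence $\rho(\Gamma) = \max_{j}\|s_j\|_2 \ge (2\sqrt 2\, r)^{-1} =: c(r)$.

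I expect the genuine obstacle to lie not in this lower bound --- which is elementary and in fact goes through for any bounded-degree graph --- but in the matching upper bound $\rho(\Gamma) \le C(r)$, which is exactly where the restriction $r \ge 3$ (transience of the lattice walk, so that $(L^\dagger)^2$ stays bounded on the edge directions) is essential and which I am content to cite from \cite{hutter2016optimal}. The one point to watch in the lower bound is uniformity in $p$: both $1/(2r)$ and $4r$ are independent of the side length $N$, so $c(r)$ does not degrade as $p = N^r \to \infty$. As a cross-check I would note the averaging identity $\sum_{j} \|s_j\|_2^2 = \mathrm{tr}(L^\dagger)$ together with $\mathrm{tr}(L^\dagger) \asymp p \asymp m$ for $r \ge 3$, which yields the same constant-order lower bound by a purely spectral route.
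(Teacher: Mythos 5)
Your proof is correct, and for the lower bound it takes a genuinely different --- and substantially shorter --- route than the paper. The paper handles the upper bound exactly as you do (citing Proposition 6 of \cite{hutter2016optimal}), but for the lower bound it singles out one column $s_{\boldsymbol{1}}^{(1)}$, expands its squared norm in the explicit eigenbasis of the chain-graph Laplacian, and works through integral comparisons in polar coordinates to show that this particular column has norm of constant order. Your effective-resistance argument replaces all of that with three short facts: $s_j = L^\dagger(e_a-e_b)$ via $\Gamma^\dagger = L^\dagger\Gamma^T$; the spectral comparison $(L^\dagger)^2 \succeq \gmax(L)^{-1}L^\dagger$ on $\mathrm{range}(L)$, which applies since $e_a-e_b\perp \mathbf{1}=\ker(L)$ for the connected grid; and the degree bounds $\gmax(L)\le 2d$ and $R_{\mathrm{eff}}(a,b)\ge 1/d$ (your maximum-principle derivation of the latter is sound, and it coincides with the Nash--Williams cut bound at the vertex $a$). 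What your route buys is generality and uniformity: you obtain $\|s_j\|_2^2 \ge 1/(2d^2)$ for \emph{every} edge of \emph{any} connected bounded-degree graph, independent of $p$, whereas the paper's computation is tied to the grid and to a single column; in particular your argument also delivers the lower bound in the paper's 2D-grid lemma for free, and you are right that the hypothesis $r\ge 3$ is irrelevant to the lower bound and is only needed for the cited upper bound. Your closing cross-check via $\sum_j\|s_j\|_2^2=\mathrm{tr}(L^\dagger)\asymp p\asymp m$ is also valid, though it only controls the maximum column norm rather than each column individually. The only thing the paper's explicit computation could offer over yours is sharper grid-specific constants, which do not matter for the statement.
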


We obtain the following corollary for the $r$-dimensional grid. 
\begin{corollary}
Let $\Gamma$ be the incidence matrix of the $r$-dimensional grid with $p$ vertices, where $r\geq 3$ is fixed. With the same choice of $\delta$, $\lambda_1$ and $\lambda_2$ as in Theorem \ref{theorem:1}, with high probability we have
\begin{equation}\label{eq:rgrid-pred}
    \|\Sigma^{1/2}(\hat{\beta}-\beta^*)\|_2^2 \lesssim \frac{\sigma^2\gmax(\Sigma)}{\gmin\left(\frac{1}{64}\Sigma+\lambda_2 L\right)}\left(\frac{1 + \delta^2}{n} + \frac{s\log p}{n}\right)
\end{equation} 
\begin{equation}
    \|\Sigma^{1/2}(\hat{\beta}-\beta^*)\|_2^2 \lesssim \frac{\sigma^2\gmax(\Sigma)}{\gmin\left(\frac{1}{64}\Sigma+\lambda_2 L\right)}\frac{1 + \delta^2}{n}+ \sigma\sqrt{\frac{\gmax(\Sigma)\log p}{n}}\|\Gamma\beta^*\|_1
\end{equation}
\begin{equation}\label{eq:3D-Rq}
    \|\Sigma^{1/2}(\hat{\beta}-\beta^*)\|_2^2 \lesssim \frac{\sigma^2\gmax(\Sigma)}{\gmin\left(\frac{1}{64}\Sigma+\lambda_2 L\right)}\frac{1 + \delta^2}{n} + \frac{\sigma^{2-q}R_q\left(\frac{\gmax(\Sigma)\log p}{n}\right)^{1-q/2}}{[\gmin\left(\frac{1}{64}\Sigma+\lambda_2L\right)]^{1-q}}
\end{equation}
provided the RHS of the bounds above are smaller than $C\sigma^2$.
If $\gmin\left(\frac{1}{64}\Sigma+\lambda_2L\right) \gtrsim\lambda_2$, we also have
\begin{equation}
        \|\Sigma^{1/2}(\hat{\beta}-\beta^*)\|_2^2 \lesssim\sigma\sqrt{\gmax(\Sigma)}\|\Gamma\beta^*\|_\infty\left(\frac{1+\delta^2}{\sqrt{n\log p}} + s\sqrt{\frac{\log p}{n}}\right)
\end{equation}
\end{corollary}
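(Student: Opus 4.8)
The plan is to derive all four displayed bounds by specializing the general corollaries of Theorem~\ref{theorem:1} to the $r$-dimensional grid. The only graph-dependent quantities entering those corollaries are the number of connected components $n_c$, the maximum degree $d$ (which enters through the bound $k_S^{-2} \lesssim \min(d,|S|)$ from Lemma~3 of \cite{hutter2016optimal}), and the inverse scaling factor $\rho(\Gamma)$. Once these three quantities are pinned down for the grid, each inequality in the statement reduces to a direct substitution into the corresponding general bound, with all graph-dependent constants absorbed into the $\lesssim$ notation.

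First I would record that the $r$-dimensional grid is connected, so $n_c = 1$, and that every vertex has degree at most $2r$, so $d = 2r$ is constant in the fixed dimension $r$. The essential input is Lemma~\ref{lem:r-dim-rho}, which asserts $c(r) \leq \rho(\Gamma) \leq C(r)$ for $r \geq 3$; granting it, I may treat $\rho(\Gamma)$, $\rho^2(\Gamma)$ and $1/\rho(\Gamma)$ as absolute constants depending only on $r$. Consequently the composite factors $\rho^2(\Gamma)\min(d,s)$, $\rho(\Gamma)$, $d^{1-q}$ and $\min(d,s)$ appearing in the general bounds are all of order $1$.

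With these substitutions in hand, bound \eqref{eq:rgrid-pred} follows from \eqref{eq:16}: replacing $n_c$ by $1$ and using $\rho^2(\Gamma)\min(d,s) \lesssim 1$ collapses the bracketed factor to $\frac{1+\delta^2}{n} + \frac{s\log p}{n}$. The second displayed bound follows from the $\ell_1$-type bound \eqref{eq:l1-sparse-pred} with $n_c = 1$ and $\rho(\Gamma) \lesssim 1$. Bound \eqref{eq:3D-Rq} follows from the weak-sparsity bound \eqref{cor:lq}: I would retain the first term inside its minimum (legitimate since $\min(a,b) \le a$) and absorb the $O(1)$ factors $[\rho(\Gamma)]^{2-q}$ and $d^{1-q}$. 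Finally, under the extra hypothesis $\gmin(\frac{1}{64}\Sigma + \lambda_2 L) \gtrsim \lambda_2$, the last bound follows from \eqref{eq:lambda2}, again using $n_c = 1$, $\rho(\Gamma) \asymp 1$ and $\min(d,s) \lesssim 1$; here the lower bound on $\rho(\Gamma)$ is what lets me absorb the $1/\rho(\Gamma)$ factor. Each bound inherits the proviso that its right-hand side be at most $C\sigma^2$ from the regularity conditions \eqref{eq:regularity}--\eqref{eq:regularity2}.

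The main obstacle is contained entirely in Lemma~\ref{lem:r-dim-rho}: the lower bound $\rho(\Gamma) \gtrsim 1$ is what licenses treating $1/\rho(\Gamma)$ as a constant in \eqref{eq:lambda2}, while the matching upper bound $\rho(\Gamma) \lesssim 1$ is precisely what distinguishes the $r \geq 3$ regime from the 2D grid, where $\rho(\Gamma)$ may grow like $\sqrt{\log p}$ and hence produces the extra $(\log p)^2$ factors seen in \eqref{eq:grid-pred}. Granting that lemma, no analytic work remains and the corollary is a routine specialization; all of the substance of the $r \geq 3$ case lies in establishing $\rho(\Gamma) \asymp 1$.
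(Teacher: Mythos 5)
Your proposal is correct and matches the paper's (implicit) argument exactly: the corollary is a routine specialization of the general bounds \eqref{eq:16}, \eqref{eq:l1-sparse-pred}, \eqref{cor:lq} and \eqref{eq:lambda2} obtained by substituting $n_c=1$, $d=2r$, and $\rho(\Gamma)\asymp 1$ from Lemma~\ref{lem:r-dim-rho}. You also correctly identify that the lower bound $\rho(\Gamma)\gtrsim 1$ is what is needed to absorb the $1/\rho(\Gamma)$ factor in the last displayed bound, which is the only non-obvious point.
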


If we consider $r = 3$ and there is a small island of size $k$-by-$k$-by-$k$ where $\beta^*$ attains a value distinct from its background value outside that island, then \eqref{eq:rgrid-pred} gives us the rate $\frac{k^2\log p}{n}$, whereas the Lasso gives us the rate $\frac{k^3\log p}{n}$ if we further assume the background value is zero. This suggests that if the signal is both sparse and smooth over the graph, in some situations using our estimator is preferable to using the Lasso. More generally, if the island is not cubic but rather has an arbitrary shape, $\|\Gamma\beta^*\|_0$ should be the island's surface area, whereas $\|\beta^*\|_0$ should be the island's volume. \\

\noindent{\bf The complete graph.} As previously mentioned, we can consider regularization with the complete graph when there is no prior structural information available. 

\begin{lemma}
If $\Gamma$ is the incidence matrix of the complete graph with $p$ vertices, $\rho(\Gamma)\asymp \frac{1}{p}$.
\end{lemma}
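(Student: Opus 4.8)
The plan is to compute $\rho(\Gamma)$ exactly by exploiting the highly symmetric spectral structure of the complete-graph Laplacian; no optimization over $S$ or delicate estimates are needed here, in contrast to the grid cases. First I would recall the standard identity $\Gamma^\dagger = L^\dagger \Gamma^T$, which follows from the singular value decomposition of $\Gamma$ together with $L = \Gamma^T\Gamma$. This shows that the $e$-th column of $\Gamma^\dagger$ is $s_e = L^\dagger \delta_e$, where $\delta_e := \Gamma_{e,\cdot}^T = \mathbf{e}_{\min(i,j)} - \mathbf{e}_{\max(i,j)}$ is the (transposed) row of $\Gamma$ corresponding to the edge $e = (i,j)$, as defined in \eqref{eq:Gamma}.

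Next I would use the explicit form of the Laplacian of the complete graph $K_p$, namely $L = pI_p - J_p$, where $J_p = \mathbf{1}\mathbf{1}^T$ is the all-ones matrix. Its spectrum is completely transparent: $L$ annihilates $\mathbf{1}$ and acts as multiplication by $p$ on the orthogonal complement $\mathbf{1}^\perp$ (these are the eigenvalues $0$ and $p$ with multiplicities $1$ and $p-1$). Consequently the Moore–Penrose inverse $L^\dagger$ vanishes on $\mathbf{1}$ and equals $\frac{1}{p}$ times the identity on $\mathbf{1}^\perp$.

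The key observation is that each $\delta_e = \mathbf{e}_i - \mathbf{e}_j$ has zero coordinate sum and hence lies in $\mathbf{1}^\perp$. Therefore $s_e = L^\dagger \delta_e = \frac{1}{p}\,\delta_e$, so that $\|s_e\|_2 = \frac{1}{p}\|\mathbf{e}_i - \mathbf{e}_j\|_2 = \frac{\sqrt{2}}{p}$ for \emph{every} edge $e \in [m]$. Taking the maximum over edges in the definition of the inverse scaling factor then gives $\rho(\Gamma) = \max_{e\in[m]}\|s_e\|_2 = \frac{\sqrt{2}}{p}$ exactly, which in particular yields $\rho(\Gamma) \asymp \frac{1}{p}$.

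Because all columns $s_e$ share the same norm, there is essentially no obstacle: the upper and lower bounds collapse to a single exact value. The only points requiring any care are the bookkeeping in the identity $\Gamma^\dagger = L^\dagger \Gamma^T$ and the elementary fact that $\delta_e \perp \mathbf{1}$, both of which are immediate. Unlike Lemmas \ref{lem:2-dim-rho} and \ref{lem:r-dim-rho}, where matching bounds on $\rho(\Gamma)$ demand genuine work, here the exact value falls directly out of the eigenstructure of $pI_p - J_p$.
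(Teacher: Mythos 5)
Your proof is correct and rests on the same underlying computation as the paper's: the paper simply cites Proposition 10 of \cite{hutter2016optimal} (which bounds the column norms of $\Gamma^\dagger$ for the complete graph) and observes that the inequalities there hold with equality, whereas you carry out that computation explicitly via $\Gamma^\dagger = L^\dagger\Gamma^T$ and $L = pI_p - J_p$, arriving at the exact value $\rho(\Gamma) = \sqrt{2}/p$. Your self-contained version is a fine substitute and verifies the claim directly.
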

\begin{proof}
    In Proposition 10 of \cite{hutter2016optimal}, replace any `$\leq$' sign with `$=$'.
\end{proof}

If we replace the term $\min(d,s)$ by $p$ (since $d\asymp p$), we obtain the following corollary:
\begin{corollary}
Let $\Gamma$ be the incidence matrix of the complete graph with $p$ vertices. With the same choice of $\delta$, $\lambda_1$ and $\lambda_2$ as in Theorem \ref{theorem:1}, with high probability we have
\begin{equation}\label{eq:21}
\|\Sigma^{1/2}(\hat{\beta}-\beta^*)\|_2^2 \lesssim \frac{\sigma^2\gmax(\Sigma)}{\gmin\left(\frac{1}{64}\Sigma+\lambda_2 L\right)}\left(\frac{1 + \delta^2}{n} + \frac{s\log p}{pn}\right)
\end{equation}
\begin{equation}
    \|\Sigma^{1/2}(\hat{\beta}-\beta^*)\|_2^2 \lesssim \frac{\sigma^2\gmax(\Sigma)}{\gmin\left(\frac{1}{64}\Sigma+\lambda_2 L\right)}\frac{1 + \delta^2}{n} + \frac{\sigma}{p}\sqrt{\frac{\gmax(\Sigma)\log p}{n}}\|\Gamma\beta^*\|_1
\end{equation}
\begin{equation}\label{eq:Rq-complete}
    \|\Sigma^{1/2}(\hat{\beta}-\beta^*)\|_2^2 \lesssim \frac{\sigma^2\gmax(\Sigma)}{\gmin\left(\frac{1}{64}\Sigma+\lambda_2 L\right)}\frac{1 + \delta^2}{n} + \frac{\sigma^{2-q}\frac{R_q}{p}\left(\frac{\gmax(\Sigma)\log p}{n}\right)^{1-q/2}}{[\gmin\left(\frac{1}{64}\Sigma+\lambda_2L\right)]^{1-q}}
\end{equation}
provided that the RHS of the above bounds are smaller than $C\sigma^2$. If $\gmin\left(\frac{1}{64}\Sigma+\lambda_2L\right) \gtrsim\lambda_2$, we also have 
\begin{equation}\label{eq:23}
    \|\Sigma^{1/2}(\hat{\beta}-\beta^*)\|_2^2 \lesssim \sigma\sqrt{\gmax(\Sigma)}\|\Gamma\beta^*\|_\infty\left(\frac{p(1+\delta^2)}{\sqrt{n\log p}} + s\sqrt{\frac{\log p}{n}}\right)
\end{equation}
\end{corollary}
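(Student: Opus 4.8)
The plan is to derive all four displayed bounds by specializing the corresponding general-graph results to the complete graph; the entire content of the proof is the substitution of the three graph-dependent quantities $n_c$, $d$, and $\rho(\Gamma)$ into bounds already established. The complete graph on $p$ vertices is connected, so $n_c = \dim\ker(\Gamma) = 1$; its maximum degree is $d = p-1 \asymp p$; and the preceding lemma gives $\rho(\Gamma)\asymp 1/p$, hence $\rho^2(\Gamma)\asymp 1/p^2$. Since $\min(d,s)\le d \asymp p$, I may replace $\min(d,s)$ by $p$ throughout, as the surrounding text already signals.

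First I would obtain \eqref{eq:21} from \eqref{eq:16} by setting $n_c = 1$ and computing $\rho^2(\Gamma)\min(d,s) \asymp p^{-2}\cdot p = p^{-1}$, so that the second term collapses to $\tfrac{s\log p}{pn}$. The second (unlabeled) $\ell_1$ bound follows from \eqref{eq:l1-sparse-pred} by setting $n_c=1$ and replacing the prefactor $\rho(\Gamma)$ multiplying $\|\Gamma\beta^*\|_1$ by $1/p$. The $\ell_\infty$-type bound \eqref{eq:23}, valid under the additional hypothesis $\gmin\!\left(\tfrac{1}{64}\Sigma+\lambda_2 L\right)\gtrsim\lambda_2$, comes from \eqref{eq:lambda2}: there $1/\rho(\Gamma)\asymp p$ turns the first term into $\tfrac{p(1+\delta^2)}{\sqrt{n\log p}}$, while $\rho(\Gamma)\min(d,s)\asymp p^{-1}\cdot p = 1$ turns the second into $s\sqrt{\tfrac{\log p}{n}}$.

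Finally, the weak-sparsity bound \eqref{eq:Rq-complete} specializes Corollary \ref{cor:5}: taking the first branch of the minimum in \eqref{cor:lq} and substituting $\rho(\Gamma)\asymp 1/p$ and $d\asymp p$, the $p$-dependent factor is $[\rho(\Gamma)]^{2-q}\,d^{1-q}\asymp p^{-(2-q)}\,p^{1-q}=p^{-1}$, producing the claimed factor $R_q/p$. This last substitution is the only step requiring any care, since the exponents $2-q$ and $1-q$ must be combined correctly to yield the single power $p^{-1}$; everywhere else the algebra is immediate. No genuine obstacle arises, because the probabilistic guarantee and the regularity conditions \eqref{eq:regularity} and \eqref{eq:regularity2} are inherited verbatim from Theorem \ref{theorem:1} and its corollaries, so nothing beyond the three scalar substitutions needs to be re-verified.
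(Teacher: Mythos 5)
Your proposal is correct and follows exactly the route the paper intends: the corollary is obtained by substituting $n_c=1$, $\rho(\Gamma)\asymp 1/p$ (from the preceding lemma), and $\min(d,s)$ replaced by $p$ into \eqref{eq:16}, \eqref{eq:l1-sparse-pred}, \eqref{cor:lq}, and \eqref{eq:lambda2}, and your exponent bookkeeping $[\rho(\Gamma)]^{2-q}d^{1-q}\asymp p^{-1}$ matches the stated \eqref{eq:Rq-complete}. The paper gives no further proof beyond this substitution, so nothing is missing.
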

In the case when the signal takes $k \ll p$ different values, with $k-1$ of those attained on small islands of size $l \ll p$, $s$ is of order $klp$, and \eqref{eq:21} yields the rate $\frac{kl\log p}{n}$, provided that $\frac{\sigma^2\gmax(\Sigma)}{\gmin\left(\frac{1}{64}\Sigma+\lambda_2 L\right)}$ is of constant order. This is the same as the rate we obtain for the Lasso if the complement of the small islands has value zero. However, if there are two large components with two different values, $s$ is of order $p^2$ and so \eqref{eq:21} only guarantees some control when $p\ll n$. If $\gmin\left(\frac{1}{64}\Sigma + \lambda_2 L\right)$ is of order $\lambda_2$, then \eqref{eq:23} 
only gives us a meaningful bound when $p\ll \sqrt{n}$, provided that $\|\Gamma\beta^*\|_\infty$ is of constant order. \\

\noindent{\bf The star graph.} Here, we consider the graph with $p$ nodes and with one center node connected to $p-1$ leaves. A similar penalty has been considered by \cite{ollier2017regression} to model stratified data, and this penalty is useful particularly when most outer nodes share the same value as the central node. 

\begin{lemma}
    If $\Gamma$ is the incidence matrix of the star graph with $p$ vertices, then $\rho(\Gamma) \asymp 1$. 
\end{lemma}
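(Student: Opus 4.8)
The plan is to compute $\rho(\Gamma)$ exactly rather than to chase separate upper and lower bounds, because for the star graph every column of $\Gamma^\dagger$ turns out to have the same norm. First I would reduce to a convenient labeling. Since $\rho(\Gamma)$ is the largest $\ell_2$-norm among the columns $s_1,\dots,s_m$ of $\Gamma^\dagger$, it is unaffected by relabeling the vertices or reorienting the edges: reorienting edge $j$ negates row $j$ of $\Gamma$, and for a full-row-rank $\Gamma$ this merely negates the corresponding column of $\Gamma^\dagger=\Gamma^T(\Gamma\Gamma^T)^{-1}$, leaving all column norms unchanged. Hence I may assume the center is vertex $1$ and that each edge $e_j=(1,j+1)$, $j\in[m]$ with $m=p-1$, contributes $+1$ at the center and $-1$ at its leaf.

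Next I would use that the star is connected, so $\Gamma$ has full row rank $m$ and $\Gamma^\dagger=\Gamma^T(\Gamma\Gamma^T)^{-1}$. The key observation is the explicit form of the edge Gram matrix
\[
\Gamma\Gamma^T = I_{m} + \mathbf 1\mathbf 1^T ,
\]
where $\mathbf 1\in\mathbb R^m$ is the all-ones vector: each diagonal entry equals $2$ (an edge shares both endpoints with itself), while any two distinct edges overlap only at the center, producing off-diagonal entries $+1$ once the orientations above are fixed. This is a rank-one update of the identity, so the Sherman--Morrison formula gives the closed form
\[
(\Gamma\Gamma^T)^{-1} = I_{m} - \tfrac1p\,\mathbf 1\mathbf 1^T ,
\]
using $\mathbf 1^T\mathbf 1 = m = p-1$ so that the denominator is $1+(p-1)=p$.

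Finally I would compute the column norms directly. Writing $s_j=\Gamma^\dagger e_j=\Gamma^T v_j$ with $v_j:=(\Gamma\Gamma^T)^{-1}e_j=e_j-\tfrac1p\mathbf 1$, I obtain
\[
\|s_j\|_2^2 = v_j^T\,\Gamma\Gamma^T\, v_j = \|v_j\|_2^2 + (\mathbf 1^T v_j)^2 .
\]
Since $\|v_j\|_2^2 = 1-\tfrac2p+\tfrac{p-1}{p^2}$ and $\mathbf 1^T v_j=\tfrac1p$, this collapses to $\|s_j\|_2^2 = 1-\tfrac1p$ for \emph{every} $j$. Therefore $\rho(\Gamma)=\sqrt{1-1/p}$, which lies in $[1/\sqrt2,\,1)$ for all $p\ge 2$, and so $\rho(\Gamma)\asymp 1$.

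I do not anticipate a genuine obstacle here: the only point requiring care is the reorientation/relabeling reduction, which is what guarantees that the off-diagonal entries of $\Gamma\Gamma^T$ are all $+1$ and lets Sherman--Morrison apply cleanly; everything downstream is an exact elementary calculation yielding the constant column norm $\sqrt{1-1/p}$.
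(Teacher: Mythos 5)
Your proof is correct, and it arrives at exactly the value the paper uses: every column of $\Gamma^\dagger$ has squared norm $1-\tfrac1p$, hence $\rho(\Gamma)=\sqrt{1-1/p}\asymp 1$. The difference is one of self-containment rather than substance: the paper simply cites Proposition 12 of H\"utter and Rigollet for the identity $\|s_j\|_2^2=1-\tfrac1p$, whereas you derive it from scratch by observing that for the star graph the edge Gram matrix is the rank-one perturbation $\Gamma\Gamma^T=I_m+\mathbf 1\mathbf 1^T$ and applying Sherman--Morrison to get $(\Gamma\Gamma^T)^{-1}=I_m-\tfrac1p\mathbf 1\mathbf 1^T$. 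Your preliminary reduction (relabeling so the center is vertex $1$, and noting that reorienting an edge only flips the sign of the corresponding column of $\Gamma^\dagger$ and so leaves all column norms intact) is the right way to ensure the off-diagonal entries of $\Gamma\Gamma^T$ are all $+1$; with the paper's sign convention \eqref{eq:Gamma} and the center labeled $1$ this orientation in fact holds automatically. The elementary computation $\|s_j\|_2^2=v_j^T(I_m+\mathbf 1\mathbf 1^T)v_j=\|v_j\|_2^2+(\mathbf 1^Tv_j)^2$ with $v_j=e_j-\tfrac1p\mathbf 1$ checks out exactly. What your approach buys is a fully self-contained, two-line-algebra proof that also exposes why the answer is independent of $j$ (complete symmetry of the leaves); what the paper's citation buys is brevity and consistency with its treatment of the other graph families, all of which lean on the same reference.
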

\begin{proof}
From Proposition 12 in \cite{hutter2016optimal}, any column $s_j$ of $\Gamma^\dagger$ has $\|s_j\|_2^2 = 1-\frac{1}{p}$.
\end{proof} 

\begin{corollary}
Let $\Gamma$ be the incidence matrix of the star graph with $p$ vertices. With the same choice of $\delta$, $\lambda_1$ and $\lambda_2$ as in Theorem \ref{theorem:1}, with high probability we have
\begin{equation}\label{eq:24}
    \|\Sigma^{1/2}(\hat{\beta}-\beta^*)\|_2^2 \lesssim \frac{\sigma^2\gmax(\Sigma)}{\gmin\left(\frac{1}{64}\Sigma+\lambda_2 L\right)}\left(\frac{1 + \delta^2}{n} + \frac{s^2\log p}{n}\right)
\end{equation}\label{eq:25}
\begin{equation}
    \|\Sigma^{1/2}(\hat{\beta}-\beta^*)\|_2^2 \lesssim \frac{\sigma^2\gmax(\Sigma)}{\gmin\left(\frac{1}{64}\Sigma+\lambda_2 L\right)}\frac{1 + \delta^2}{n} + \sigma\sqrt{\frac{\gmax(\Sigma)\log p}{n}}\|\Gamma\beta^*\|_1
\end{equation}
\begin{equation}\label{eq:Rq-star}
    \|\Sigma^{1/2}(\hat{\beta}-\beta^*)\|_2^2 \lesssim \frac{\sigma^2\gmax(\Sigma)}{\gmin\left(\frac{1}{64}\Sigma+\lambda_2 L\right)}\frac{1 + \delta^2}{n} + \frac{\left(\frac{\sigma^2\gmax(\Sigma)R_q^2\log p}{n}\right)^{\frac{1}{1+q}}}{[\gmin\left(\frac{1}{64}\Sigma+\lambda_2L\right)]^{\frac{1-q}{1+q}}}
\end{equation}
provided that the RHS of the above bounds are smaller than $C\sigma^2$. If $\gmin\left(\frac{1}{64}\Sigma+\lambda_2L\right) \gtrsim\lambda_2$, we also have 
\begin{equation}\label{eq:26}
    \|\Sigma^{1/2}(\hat{\beta}-\beta^*)\|_2^2 \lesssim \sigma\sqrt{\gmax(\Sigma)}\|\Gamma\beta^*\|_\infty\left(\frac{1+\delta^2}{\sqrt{n\log p}} + s^2\sqrt{\frac{\log p}{n}}\right)
\end{equation} 
\end{corollary}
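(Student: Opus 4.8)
The plan is to derive all four displayed bounds by specializing the general-graph corollaries that follow Theorem \ref{theorem:1} to the star graph, using the structural facts peculiar to this graph. First I would record the relevant quantities: the star graph on $p$ vertices is connected, so $n_c = 1$; its maximum degree is $d = p-1$, attained at the center; and it has exactly $m = p-1$ edges. The preceding lemma already supplies $\rho(\Gamma) \asymp 1$, so the inverse scaling factor contributes only absolute constants throughout.

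The one genuine observation, rather than pure substitution, is that $\min(d,s) = s$ for the star graph. This holds because $s = \|\Gamma\beta^*\|_0 \leq m = p-1 = d$, so the number of signal jumps can never exceed the maximum degree. Consequently, in the piecewise-constant bound \eqref{eq:16} the factor $\rho^2(\Gamma)\min(d,s)\frac{s\log p}{n}$ collapses to $\asymp \frac{s^2\log p}{n}$, which together with $n_c = 1$ and $\rho(\Gamma)\asymp 1$ yields \eqref{eq:24}. The same simplification applied to \eqref{eq:lambda2}, now under the additional assumption $\gmin(\tfrac{1}{64}\Sigma+\lambda_2 L)\gtrsim\lambda_2$, replaces the factor $\min(d,s)\,s$ by $s^2$ and delivers \eqref{eq:26}.

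The two remaining bounds require no new ideas. Setting $S=\emptyset$ in Theorem \ref{theorem:1} gives the $\ell_1$-sparse bound \eqref{eq:l1-sparse-pred}; substituting $n_c=1$ and $\rho(\Gamma)\asymp 1$ produces \eqref{eq:25} at once. For \eqref{eq:Rq-star} I would start from Corollary \ref{cor:5}, whose right-hand side \eqref{cor:lq} is already a minimum of two terms, and simply retain the second term, which is a valid upper bound since $\min(a,b)\leq b$. This is the natural choice here: the first term carries a factor $d^{1-q}\asymp p^{1-q}$ that is large for the star graph, whereas the second term is $d$-free. After substituting $\rho(\Gamma)\asymp 1$ and $n_c=1$ and consolidating the constants $\sigma^{2/(1+q)}\gmax(\Sigma)^{1/(1+q)}R_q^{2/(1+q)}(\log p/n)^{1/(1+q)}$ into the single form $(\sigma^2\gmax(\Sigma)R_q^2\log p/n)^{1/(1+q)}$, the stated bound \eqref{eq:Rq-star} follows.

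Since every step is a substitution into an already-established corollary, there is no substantive obstacle; the only points demanding care are the identity $\min(d,s)=s$ (which rests on $m=d$ for the star graph) and the correct selection of the $d$-free branch of the minimum in the $\ell_q$ case, both of which reflect the fact that the star graph's single high-degree hub drives its combinatorial behavior.
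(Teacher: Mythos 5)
Your proposal is correct and follows exactly the route the paper intends (the paper omits the proof as a routine specialization): substituting $n_c=1$, $\rho(\Gamma)\asymp 1$, and $\min(d,s)=s$ (valid since $s\le m=p-1=d$) into \eqref{eq:16}, \eqref{eq:l1-sparse-pred}, and \eqref{eq:lambda2}, and retaining the $d$-free second branch of the minimum in \eqref{cor:lq} for the $\ell_q$ bound.
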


For the star graph, we obtain meaningful bounds for the prediction error, even in the high-dimensional setting where $p\gg n$.\\

\noindent{\bf The chain graph.} When $\Gamma$ is the $p$-vertex chain graph (1D grid graph), $\rho(\Gamma) = \sqrt{p}$ and Theorem \ref{theorem:1} does not yield an error bound that is meaningful in the $p \ll n$ setting. We modify the proof of Theorem \ref{theorem:1} using an idea in Theorem 6 of \cite{wang2015trend} to obtain the following bound when $\|\Gamma\beta^*\|_1$ is small. 

\begin{theorem}\label{theorem:chain}
Let $\Gamma$ be the incidence matrix of the $p$-vertex chain graph, and fix $\delta > 0$. With an appropriate choice of $\lambda_1$ and $\lambda_2 \leq \frac{\lambda_1}{8\|\Gamma\beta^*\|_\infty}$, with high probability we have 
\begin{equation}\label{eq:chain-pred}
    \|\Sigma^{1/2}(\hat{\beta}-\beta^*)\|_2^2 \lesssim \frac{\sigma^2\gmax(\Sigma)}{\gmin(\frac{1}{64}\Sigma+\lambda_2L)}\frac{1+\delta^2}{n} + \frac{(\sigma^2\gmax(\Sigma)\|\Gamma\beta^*\|_1)^{2/3}}{\gmin^{1/3}\left(\frac{1}{64}\Sigma+\lambda_2L\right)}\sqrt[3]{\frac{p\log p}{n^2}}
\end{equation}
provided that the bound above is smaller than $C\sigma^2$.    
\end{theorem}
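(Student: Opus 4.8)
The plan is to re-run the projection argument behind Theorem \ref{theorem:1}, but to replace the two places where that proof pays a factor $\rho(\Gamma)\|\Gamma v\|_1$: for the chain $\rho(\Gamma)=\sqrt p$, and this linear total-variation penalty enters both the noise cross-term and the remainder of the restricted-eigenvalue comparison, which is precisely what spoils the rate. The idea borrowed from Theorem~6 of \cite{wang2015trend} is that, for the chain, these two quantities are suprema of empirical processes over a \emph{one-dimensional total-variation ball}, and chaining over that ball is far sharper than the global duality bound used in Lemma \ref{lem:re}.

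I would start exactly as in the main proof. With $v=\hat\beta-\beta^*$, the augmented notation $\tilde X,\tilde Y,\tilde\epsilon$ and optimality of $\hat\beta$ give the basic inequality
\[
\frac1n\|Xv\|_2^2+\lambda_2\|\Gamma v\|_2^2 \;\le\; \frac2n\langle\epsilon,Xv\rangle -2\lambda_2\langle\Gamma\beta^*,\Gamma v\rangle+\lambda_1\bigl(\|\Gamma\beta^*\|_1-\|\Gamma\hat\beta\|_1\bigr),
\]
and the cross term is absorbed via $2\lambda_2|\langle\Gamma\beta^*,\Gamma v\rangle|\le\frac{\lambda_1}{4}\|\Gamma v\|_1$ using $\lambda_2\le\lambda_1/(8\|\Gamma\beta^*\|_\infty)$. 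To turn the left-hand side into the population prediction error I use the deterministic bound $\tfrac1{64}\|\Sigma^{1/2}v\|_2^2+\lambda_2\|\Gamma v\|_2^2\ge\gmin\!\left(\tfrac1{64}\Sigma+\lambda_2L\right)\|v\|_2^2$, so the only remaining random ingredients are (i) the noise term $\frac1n\langle\epsilon,Xv\rangle$ and (ii) the gap between $\frac1n\|Xv\|_2^2$ and $\tfrac1{64}\|\Sigma^{1/2}v\|_2^2$. The kernel part $\frac1n\langle\epsilon,X\Pi v\rangle$ is treated as in Theorem \ref{theorem:1} and yields the first summand $\frac{\sigma^2\gmax(\Sigma)}{\gmin(\frac1{64}\Sigma+\lambda_2L)}\frac{1+\delta^2}{n}$ (here $n_c=1$).

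The technical heart, and the step I expect to be hardest, is to control (i) and (ii) \emph{without} the $\rho(\Gamma)=\sqrt p$ loss. Conditionally on $X$, $v\mapsto\langle\epsilon,Xv\rangle$ is sub-Gaussian with metric $\sigma\|X(v-v')\|_2\asymp\sigma\sqrt n\,\|\Sigma^{1/2}(v-v')\|_2$, so Dudley's entropy integral over $\{\|\Gamma v\|_1\le T,\ \|v\|_2\le\tilde r\}$ reduces to the metric entropy of the discrete one-dimensional total-variation ball, $\log N(u,\{\|\Gamma v\|_1\le T\},\|\cdot\|_2)\lesssim T\sqrt p/u$; here the $\sqrt p$ is the chain-specific gain that \emph{replaces} $\rho(\Gamma)=\sqrt p$ entering linearly. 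Combined with $\|\Sigma^{1/2}\cdot\|_2\le\sqrt{\gmax(\Sigma)}\,\|\cdot\|_2$, the integral evaluates to an interpolation bound, holding uniformly in $v$ on a high-probability event, of the form
\[
\frac1n\bigl|\langle\epsilon,Xv\rangle\bigr| \;\lesssim\; \sigma\sqrt{\tfrac{\gmax(\Sigma)}{n}}\,(p\log p)^{1/4}\,\|\Gamma v\|_1^{1/2}\,\|v\|_2^{1/2},
\]
and an analogous chaining bound controls the quadratic deviation in (ii) relative to $\|\Sigma^{1/2}v\|_2$. Making this rigorous — upgrading Dudley's expectation bound to a uniform high-probability statement, peeling over the localization radius $\tilde r$, computing the chain total-variation entropy, and handling the fluctuation of $\|X(v-v')\|_2$ around $\sqrt n\,\|\Sigma^{1/2}(v-v')\|_2$ uniformly over the ball via the random-design concentration underlying Lemma \ref{lem:re} — is where the bulk of the work, and the extra $\log p$ factor, will come from.

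Finally I would substitute $\|v\|_2\le\gmin\!\left(\tfrac1{64}\Sigma+\lambda_2L\right)^{-1/2}\bigl(\tfrac1n\|Xv\|_2^2+\lambda_2\|\Gamma v\|_2^2\bigr)^{1/2}$ into the interpolation bound and apply Young's inequality ($xy\le \tfrac14 x^4+\tfrac34 y^{4/3}$, with $x$ proportional to the fourth root of the prediction error) to absorb a quarter of the prediction error into the left-hand side. This leaves a residual proportional to $\frac{(\sigma^2\gmax(\Sigma))^{2/3}}{\gmin(\frac1{64}\Sigma+\lambda_2L)^{1/3}}\bigl(\tfrac{p\log p}{n^2}\bigr)^{1/3}\|\Gamma v\|_1^{2/3}$. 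Bounding $\|\Gamma v\|_1\le\|\Gamma\hat\beta\|_1+\|\Gamma\beta^*\|_1$, using $(a+b)^{2/3}\le a^{2/3}+b^{2/3}$, absorbing the $\|\Gamma\hat\beta\|_1$ contribution against the penalty term $-\lambda_1\|\Gamma\hat\beta\|_1$ by a second Young step (which produces a $\lambda_1^{-2}$ entropy residual), and then optimizing the free parameter $\lambda_1$ to balance $\lambda_1\|\Gamma\beta^*\|_1$ against that residual, yields the claimed term $\frac{(\sigma^2\gmax(\Sigma)\|\Gamma\beta^*\|_1)^{2/3}}{\gmin(\frac1{64}\Sigma+\lambda_2L)^{1/3}}\bigl(\tfrac{p\log p}{n^2}\bigr)^{1/3}$. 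The ``appropriate choice of $\lambda_1$'' in the statement is exactly this optimizer, of order $\bigl(\tfrac{(\sigma^2\gmax(\Sigma))^2\,p}{\gmin(\frac1{64}\Sigma+\lambda_2L)\,n^2\,\|\Gamma\beta^*\|_1}\bigr)^{1/3}$.
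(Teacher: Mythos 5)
Your strategy is sound and arrives at the correct rate and even the same optimal $\lambda_1 \asymp \bigl(\tfrac{(\sigma^2\gmax(\Sigma))^2\,p\log p}{\gmin(\frac1{64}\Sigma+\lambda_2L)\,n^2\,\|\Gamma\beta^*\|_1}\bigr)^{1/3}$ as the paper, but it takes a genuinely different route. The paper does not use metric entropy at all: it keeps the main theorem's projection argument intact and refines only the split $I_p=\Pi+\Gamma^\dagger\Gamma$, writing $\Gamma^\dagger\Gamma = P_{[k]}\Gamma^\dagger\Gamma + (I_p-P_{[k]})\Gamma^\dagger\Gamma$ with $P_{[k]}$ the projection onto the top $k$ right singular vectors of $\Gamma$. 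The low-frequency part is absorbed into the ``parametric'' term exactly as the kernel part is (contributing $k/n$ in place of $n_c/n$), while for the high-frequency part the chain's explicit singular values $\xi_i^2=4\sin^2(\pi i/2p)$ and the bound $\|u_i\|_\infty\le\sqrt{2/p}$ give $\max_j\|s_j'\|_2\lesssim\sqrt{p/k}$, so the effective inverse scaling factor drops from $\sqrt p$ to $\sqrt{p/k}$; optimizing $k$ then yields \eqref{eq:chain-pred}. Your chaining route replaces this with a Dudley bound over the discrete TV ball, which is the other classical path to the $n^{-2/3}$ rate in trend filtering, and your bookkeeping (the interpolation bound, the two Young steps, the final balance) is consistent. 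What the paper's route buys is economy: the only new probabilistic estimates needed are an $\ell_2$ bound on $\|P_{[k]}X^T\epsilon\|_2$ and an $\ell_\infty$ bound on $\|(\Gamma^\dagger)^T(I_p-P_{[k]})X^T\epsilon\|_\infty$, both of which follow from the already-proved Lemmas \ref{lem:projection} and \ref{lem:lambda1} with $\Gamma^\dagger$ replaced by its truncation. What your route would buy is generality (it does not need closed-form singular vectors), but at the cost of the heaviest step you yourself flag: upgrading the entropy bounds to uniform high-probability statements for both the multiplier process and, more delicately, the quadratic process $\frac1n\|Xv\|_2^2-\frac1{64}\|\Sigma^{1/2}v\|_2^2$. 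For the multiplier process the paper's trick $X^T\epsilon\overset{d}{=}\|\epsilon\|_2\cdot g$ with $g\sim N(0,\Sigma)$ reduces your bound to a genuine Gaussian supremum and sidesteps any operator-norm control of $X$; for the quadratic process you would need to rerun the Gordon-inequality argument of Lemma \ref{lem:re} with the Dudley bound on $\bE\sup_{v}h^T\Sigma^{1/2}v$ over the TV ball in place of the $\ell_1$--$\ell_\infty$ duality step, since the comparison $\|X(v-v')\|_2\asymp\sqrt n\|\Sigma^{1/2}(v-v')\|_2$ is not uniformly available when $p\gg n$. Neither obstacle is fatal, but they are where your sketch is thinnest and where the paper's truncation argument is materially simpler.
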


The bound above is meaningful when $n \gg \sqrt{p \log p}$ and thus sufficient to justify the use of our estimator when $\Gamma$ is the chain graph. Optimal error bounds under the assumption of hard sparsity on $\Gamma\beta^*$ are available in the literature if $X$ is identity (see for example \cite{ortelli2021prediction} and \cite{guntuboyina2020adaptive}). However, such bounds are often derived under a ``minimum length'' condition, which requires that the distances between jumps for the true signal are roughly of the same order. The bound \eqref{eq:chain-pred}, on the other hand, requires minimal assumptions. We leave open for future work the analysis of our estimator \eqref{eq:4} under the assumption of hard sparsity on $\Gamma\beta^*$.
\section{Computation}\label{sec: computation}
In this section, we describe our coordinate descent procedure to compute the estimator \eqref{eq:4}. For convenience, we will work with the following definition of $\hat{\beta}$ where we replace the loss $\frac{1}{n}\|Y-X\beta\|_2^2$ in \eqref{eq:4} by $\frac{1}{2}\|Y-X\beta\|_2^2$. Note that this simply corresponds to a different scaling of $\lambda_1$ and $\lambda_2$.
\begin{equation}\label{eq:28}
    \hat{\beta} := \arg\min_{\beta\in\mathbb{R}^p}\frac{1}{2}\|Y-X\beta\|_2^2 + \lambda_1\|\Gamma\beta\|_1+\lambda_2\|\Gamma\beta\|_2^2
\end{equation}

Again, let $\tilde{Y}:=\begin{pmatrix} Y \\ 0\end{pmatrix} \in \mathbb{R}^{n+m}$, $\tilde{X} := \begin{pmatrix}X \\ \sqrt{2\lambda_2}\Gamma\end{pmatrix}\in \mathbb{R}^{(n+m)\times p}$ so that we can write 
\begin{equation}\label{eq:29}
    \hat{\beta}:= \arg\min_{\beta\in\mathbb{R}^p}\frac{1}{2}\|\tilde{Y}-\tilde{X}\beta\|_2^2 + \lambda_1\|\Gamma\beta\|_1
\end{equation}
 
If we fix $\lambda_2$, the solution path in terms of $\lambda_1$ for \eqref{eq:29} as well as its dual objective is piecewise linear, and a path-finding algorithm for the dual objective yielding the entire solution path in terms of $\lambda_1$ has been proposed in \cite{tibshirani2011solution}. However, for the purpose of selecting tuning parameters, this is of limited usefulness since $\lambda_2$ needs to be fixed. The solution path in terms of $\lambda_2$ is not piecewise linear, and so we cannot use a LARS-like algorithm to get the entire path in terms of both $(\lambda_1, \lambda_2)$. Also, as mentioned in \cite{tibshirani2011solution}, the set of knots in the solution path becomes very large as the problem size increases, and at each knot we must solve a large least squares problem (especially at the regularized end of the path, which is typically the region of interest in this paper) in order to compute the whole path. If we want to compute \eqref{eq:29} for a small set of candidate $(\lambda_1, \lambda_2)$ values, then the path algorithm in \cite{tibshirani2011solution} is unlikely to be the most efficient. 

\subsection{Coordinate descent on the dual objective} Our coordinate descent algorithm builds upon the dual problem derived in \cite{tibshirani2011solution} for the Generalized Lasso. 
\cite{tibshirani2011solution} suggests, without explicit derivations, that we can use coordinate descent on the dual problem to compute the solution of \eqref{eq:29} for a fixed value of $(\lambda_1, \lambda_2)$. Coordinate descent cannot be directly applied to the primal objective \eqref{eq:29} as the $\ell_1$-penalty here is not separable in terms of $\beta$; in such a situation, coordinate descent does not necessarily converge. However, the dual objective \eqref{eq:dual} has a non-smooth component that is separable, and thus convergence is guaranteed (since conditions (A1), (B1)-(B3) and (C2) from \cite{tseng2001convergence} hold). For completeness, we fully derive this coordinate descent algorithm on the dual and provide experiments to convince the reader that our estimator can be efficiently computed. 

Define $\check{Y} := \tilde{X}\tilde{X}^\dagger\tilde{Y}\in \mathbb{R}^{m+n}$, $\check{\Gamma} := \Gamma\tilde{X}^\dagger \in \mathbb{R}^{m\times (m+n)}$. From Equation (36) of \cite{tibshirani2011solution}, the dual problem is:
\begin{equation}\label{eq:dual_objective}
    \hat{u} = \arg\min_{u\in\mathbb{R}^m}\frac{1}{2}\|\check{Y}-\check{\Gamma}^Tu\|_2^2 \quad\text{subject to } \|u\|_\infty \leq \lambda_1, \Gamma^Tu\in \text{row}(\tilde{X})
\end{equation}
and the primal-dual relation, as in Equation (37) of \cite{tibshirani2011solution}, is:
\begin{equation}
    \hat{\beta} = \tilde{X}^\dagger(\check{Y} - \check{\Gamma}^T\hat{u}) + z
\end{equation}
where $z \in \text{ker}(\tilde{X})$. In most situations, the augmented matrix $\tilde{X} := \begin{pmatrix}X \\ \sqrt{2\lambda_2}\Gamma\end{pmatrix}$ has a trivial kernel, in which case $\text{row}(\tilde{X}) = \mathbb{R}^p$ and we can ignore $z$ as well as the constraint $\Gamma^Tu\in \text{row}(\tilde{X})$. Now if we let $Q := \check{\Gamma}\check{\Gamma}^T \in \mathbb{R}^{m\times m}$ and $b := \check{\Gamma}\check{Y}\in \mathbb{R}^m$, then we can write the dual objective as: 
\begin{equation}\label{eq:dual}
    \hat{u} = \arg\min_{u\in \mathbb{R}^m}\frac{1}{2}u^TQu - b^Tu \quad \text{subject to } \|u\|_\infty \leq \lambda_1
\end{equation}

We denote the projection map from $\mathbb{R}$ onto $[-\lambda, \lambda]$ by $T_\lambda (\cdot)$: 
\begin{equation}
T_{\lambda}(x):=\left\{
\begin{array}{rl}
\lambda &\text{if}\ x > \lambda\\
x &\text{if}\ -\lambda \leq x \leq \lambda\\
-\lambda &\text{if}\ x < -\lambda
\end{array}
\right.    
\end{equation}

Our coordinate descent algorithm is presented below. \\
\begin{algorithm}[H]\label{alg:1}
\DontPrintSemicolon
  \KwInput{$\lambda_1, \lambda_2, \Gamma, Y, X$, tolerance $\epsilon$}
  \KwOutput{$\hat{\beta}$ as defined in \eqref{eq:28}}
  Compute $ Q = \check{\Gamma}\check{\Gamma}^T = (\Gamma\tilde{X}^\dagger)(\Gamma\tilde{X}^\dagger)^T$ and  $b = \check{\Gamma}\check{Y} = \Gamma\tilde{X}^\dagger\tilde{Y}$\\
  Initialize $\hat{u}_i^{(0)} \leftarrow 0 $ for all $i\in [m]$ \Comment{0 is an arbitrary choice in $[-\lambda_1, \lambda_1]$}\\
  \While{$\|\hat{u}^{(k)} - \hat{u}^{(k-1)}\|_2 > \epsilon$ }
   {
   		$\hat{u}_i^{(k+1)} \leftarrow T_{\lambda_1}\left(\frac{b_i-\sum_{j<i}Q_{ij}\hat{u}_j^{(k+1)} - \sum_{j>i}Q_{ij}\hat{u}_j^{(k)}}{Q_{ii}}\right)$
   }
   Compute $\hat{\beta} \leftarrow \tilde{X}^\dagger(\check{Y} - \check{\Gamma}^T \hat{u})$\\
   Return $\hat{\beta}$
   
\caption{Coordinate descent on the dual objective}
\end{algorithm}

For general GLM loss functions, we can also derive the dual problem with a separable non-smooth constraint; however, we may not be able to write the coordinate descent updates in closed form (we can only do so in Algorithm \ref{alg:1} because the dual objective \eqref{eq:dual} is quadratic). In this case, we can use \textit{coordinate proximal gradient descent}, in which we apply the projection operator to the gradient descent update for each coordinate.

In the Appendix, we also provide an alternative algorithm to compute \eqref{eq:28}, based on the interior point method applied to the dual objective (as in \cite{kim2009ell_1}). This algorithm will be denoted as IP in the following section.

\subsection{Runtime comparisons}\label{subsec:runtime experiments}
We compare the runtimes for computing the estimator \eqref{eq:4} using  Algorithm \ref{alg:1} (CD), IP, ADMM, and the Embedded Conic Solver (ECOS) from \cite{domahidi2013ecos} applied to the primal objective. ECOS is a generic solver for second-order cone programs (SOCP) that performs well for small or medium-sized problems. We use the highly optimized ECOS implementation in the Python package CVXPY to serve as a benchmark for comparing the runtimes of our algorithms. Figure~\ref{fig:runtime} shows the growth of empirical runtimes as $n$ or $p$ increases for signals over the chain graph (where $m = p-1$) with $\|\Gamma\beta^*\|_\infty = 0.3$ fixed; here, the hyperparameters $\lambda_1, \lambda_2$ are chosen according to our theory so as to satisfy $\lambda_2 = \frac{\lambda_1}{8\|\Gamma\beta^*\|_\infty}$. 

As we can see from Figure \ref{fig:runtime}, our coordinate descent algorithm scales well as $n$ and $p$ increase, and its runtime does not exceed 10 seconds if $n$ and $p$ are both smaller than 1,000. More generally, when $\lambda_2$ is not too close to zero, the matrix $Q = \check{\Gamma}\check{\Gamma}^T$ is not ill-conditioned and our coordinate descent algorithm performs quite well. We note that this is the setting where our estimator \eqref{eq:4} should be preferred over the Generalized Lasso estimator $\hat{\beta}_{\text{GL}}$ in \eqref{eq:8}, whose accuracy is impeded by the ill-conditioned nature of the matrix $Q$ when $\lambda_2$ is equal to zero. While our estimator requires a two-dimensional grid search to choose $(\lambda_1, \lambda_2)$, Algorithm \ref{alg:1} can significantly reduce the time it takes to perform hyperparameter tuning, even for large-scale problems where $p$ and $n$ are both in the thousands. Note that when both $n$ and $p$ are not too large, the generic SOCP solver ECOS can also be competitive.

As for our interior point method, the main computational bottleneck is the cost of solving a linear equation involving the Hessian matrix; in other words, we need to solve the problem $Ax=b$ for each iteration, where $A$ is an $m$-by-$m$ matrix. Solving it requires $O(m^3)$ operations, and thus IP can do well only if the number of iterations required is small. Figure \ref{fig:runtime}(b) shows that in the case of the chain graph, when we fix $n$ and increase $p$, IP still performs better than the generic solver ECOS and scales well with $p$. 

\begin{figure}[H]
    \centering
\begin{subfigure}{0.49\textwidth}
    \centering
\includegraphics[width = \textwidth]{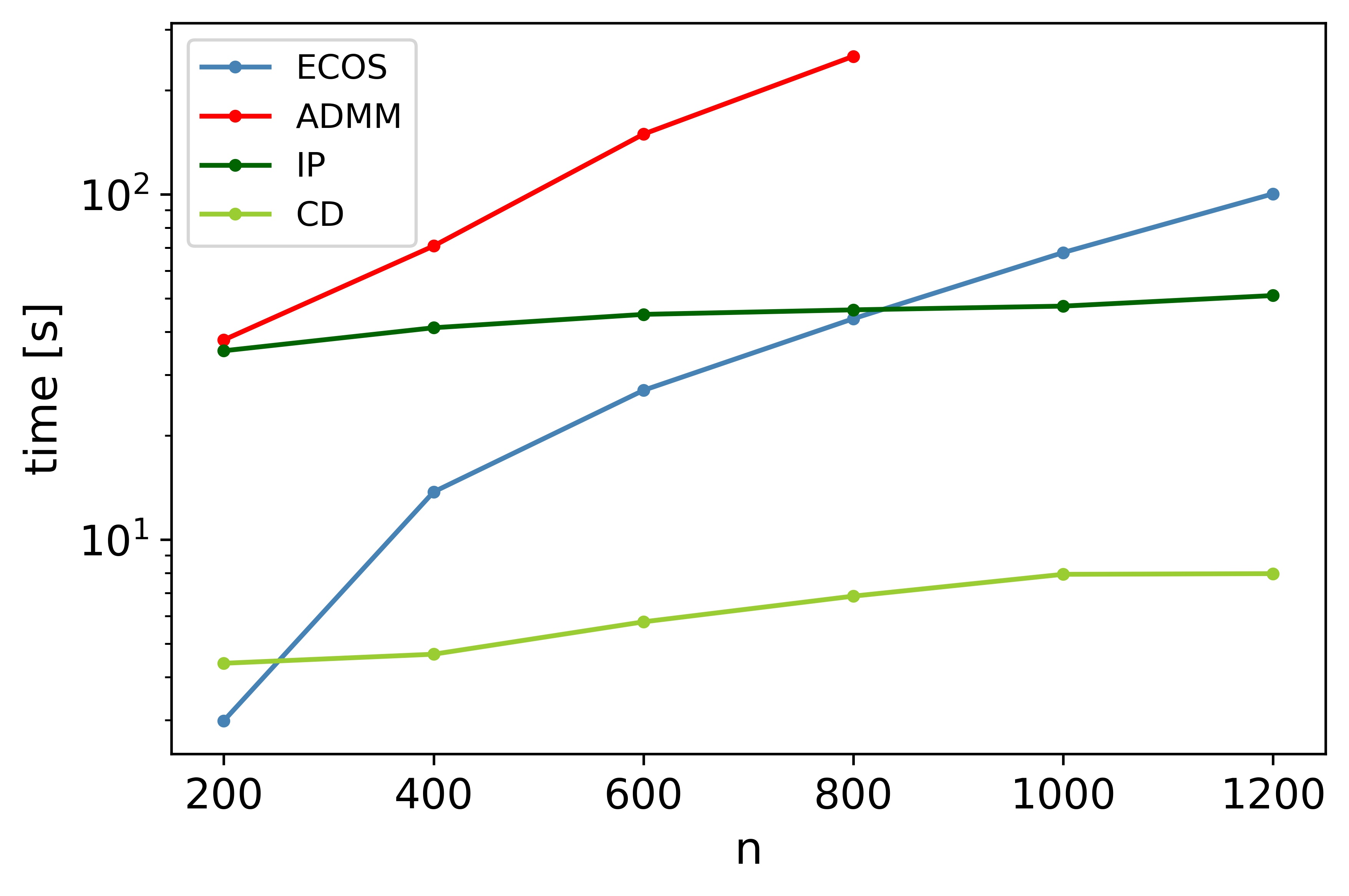} 
\caption{$p = 2000$}\label{fig:runtime1}
\end{subfigure}
    \hfill
\begin{subfigure}{0.49\textwidth}
    \centering
\includegraphics[width = \textwidth]{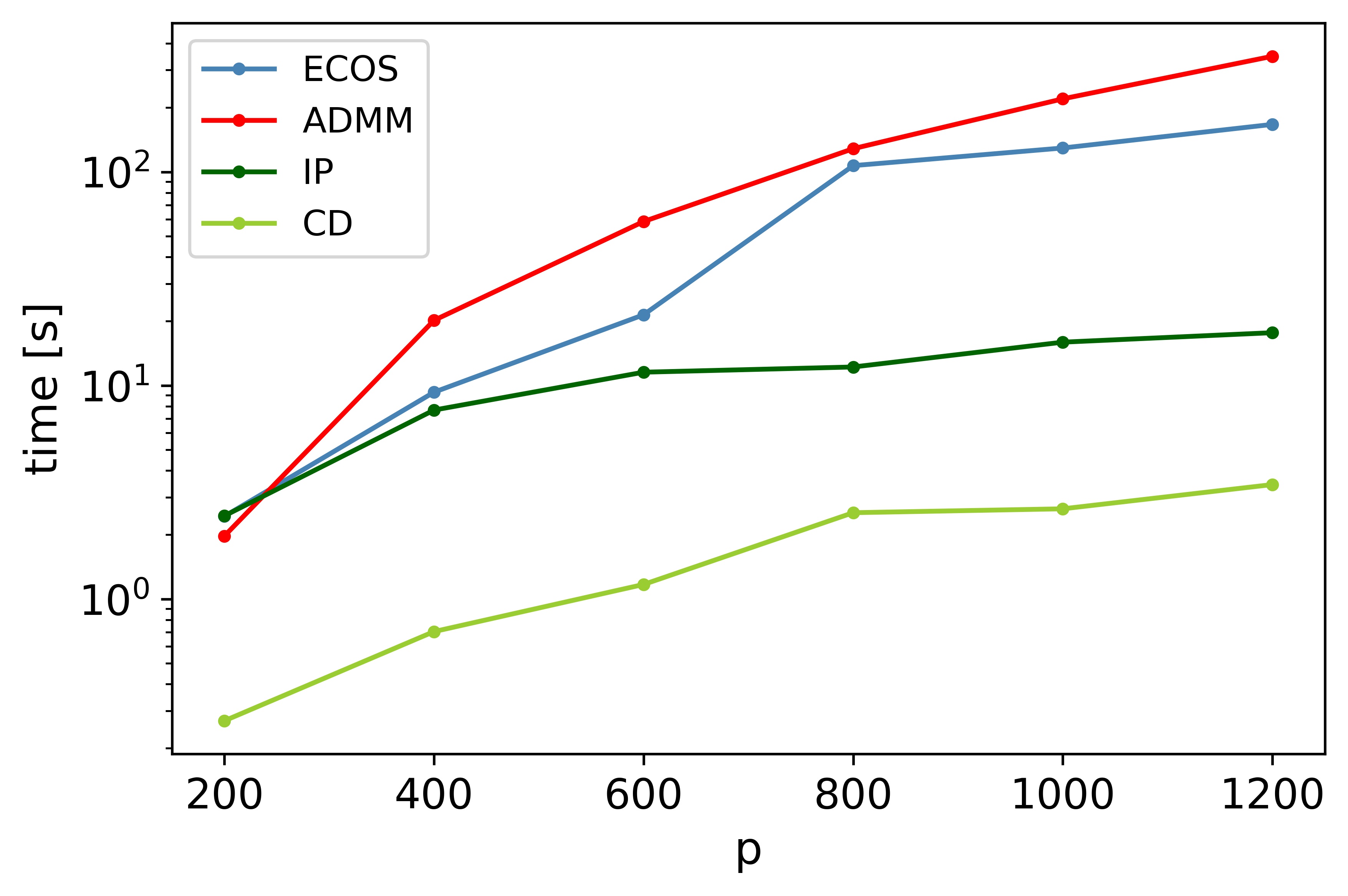} 
\caption{$n = 2000$}\label{fig:runtime2}
\end{subfigure}
\caption{\textit{Runtimes of different algorithms (reported on the log scale) when (a) $p$ is fixed but $n$ increases, or (b) $n$ is fixed but $p$ increases. The tolerance levels for IP, CD, and ECOS are set at $10^{-4}$. The tolerance level for ADMM is $10^{-3}$. Signals are defined on a 1D chain graph with $p$ vertices. In both situations, CD has the best runtime scaling, and IP scales better than ECOS.}}
    \label{fig:runtime}
\end{figure}
We also examine the runtimes for the 2D grid as well as the star graph when $n$ is fixed but $p$ increases. For these graphs, IP no longer scales well with $p$ whereas CD still has the best scaling, and ECOS is also competitive for small problem sizes. 

\begin{figure}[H]
    \centering
\begin{subfigure}{0.47\textwidth}
    \centering
\includegraphics[width = \textwidth]{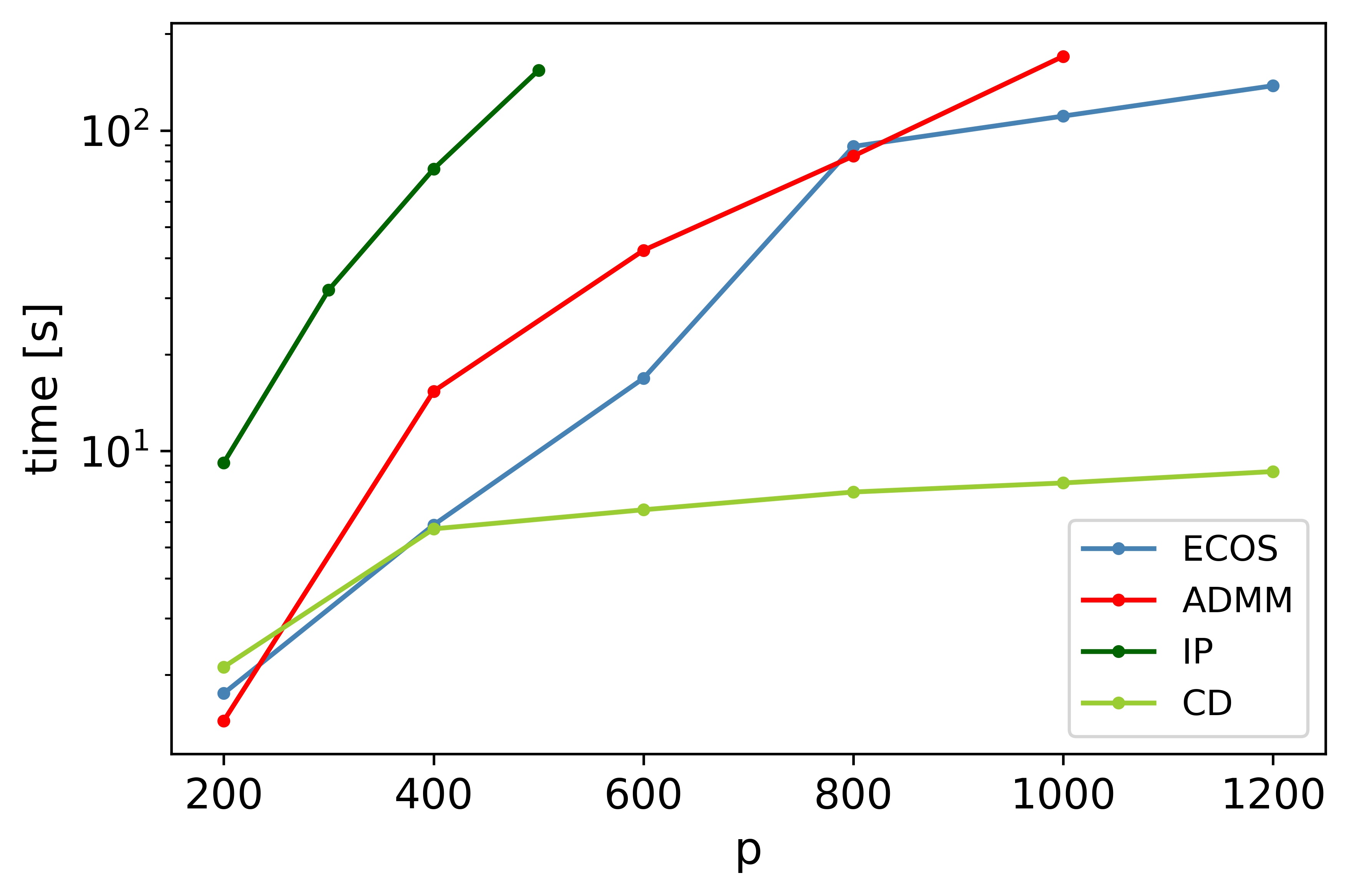} 
\caption{$n = 2000$, 2D grid}\label{fig:2D_runtime}
\end{subfigure}
    \hfill
\begin{subfigure}{0.49\textwidth}
    \centering
\includegraphics[width = \textwidth]{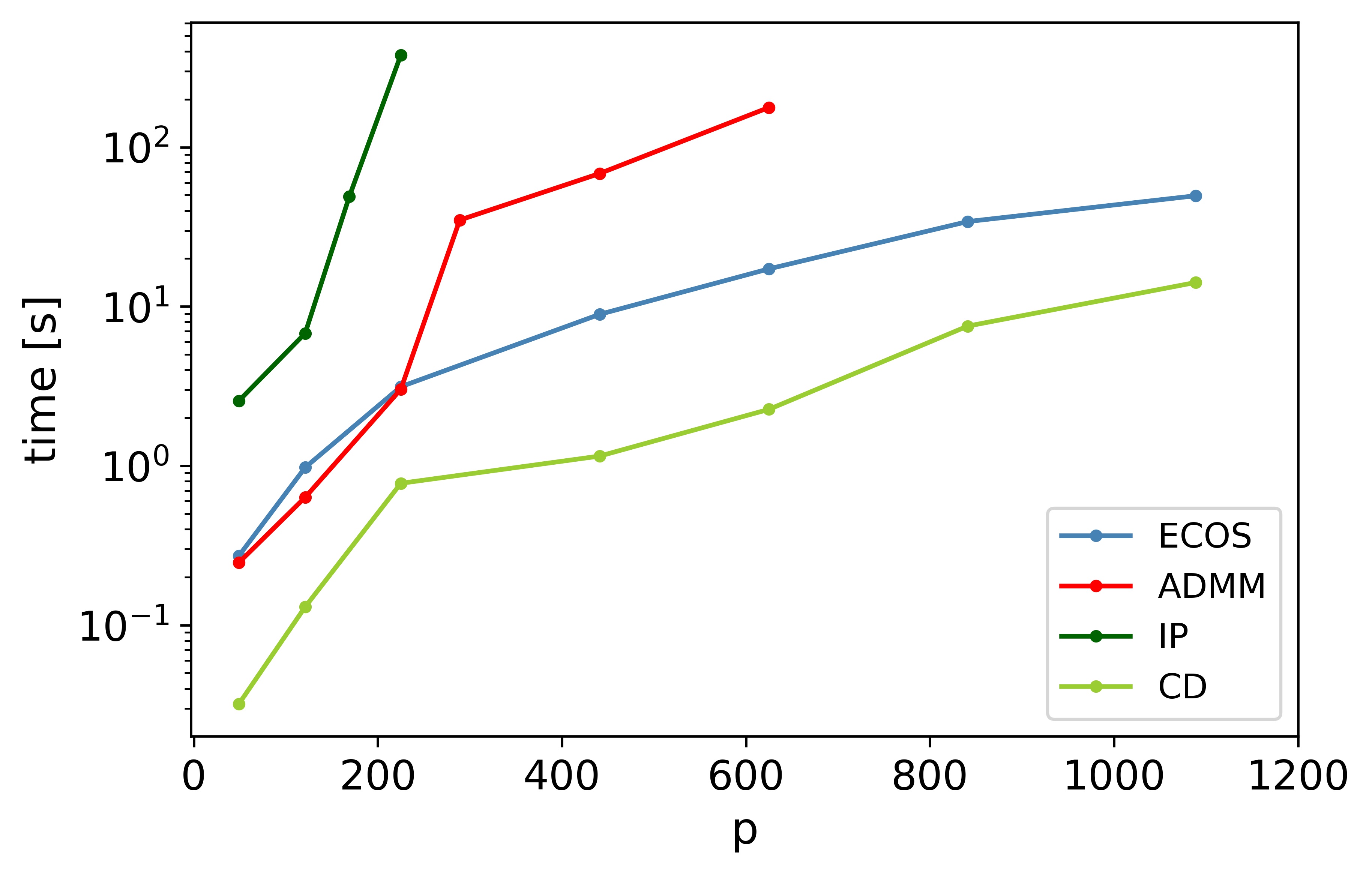} 
\caption{$n = 2000$, star graph}\label{fig:star_runtime}
\end{subfigure}
\caption{\textit{Runtimes of different algorithms (reported on the log scale) when $n$ is fixed but $p$ increases. (a) Signals are defined on a $p$-vertex 2D grid graph ($m = 2p - 2\sqrt{p}$) with $\|\Gamma\beta^*\|_\infty = 0.66$. (b) Signals are defined on a $p$-vertex star graph ($m = p-1$) with $\|\Gamma\beta^*\|_\infty = 0.5$. The tolerance levels for IP, CD, and ECOS are set at $10^{-4}$. The tolerance level for ADMM is $10^{-3}$. As before, $(\lambda_1, \lambda_2)$ are chosen according to theory. In both situations, CD has the best runtime scaling.}}
    \label{fig:other_graph_runtime}
\end{figure}
\section{Experiments}
\label{sec: experiments}

In this section, we present the empirical performance of our penalty $\lambda_1\|\Gamma\beta\|_1 + \lambda_2\|\Gamma\beta\|_2^2$ and compare with some existing penalized M-estimators in the literature, under several synthetic settings where we vary the true signal structure and graph topology. Particularly, we focus on the case where $\beta^*$ is not sparse but aligns with the graph $G$. The design matrix is also allowed to be correlated in a way such that two vertices have more correlated feature vectors if they are adjacent or nearby on the graph $G$. Such a covariance structure is natural for node-indexed feature vectors and is in line with the notion of \textit{network cohesion} discussed in Section \ref{sec:intro}. We list the methods to which we compare our estimator below.\\

\noindent{\bf Graph-independent methods} that do not take into account the graph provided. These methods usually do not perform well in the setting we describe above, and they mainly serve as benchmarks for comparison.
\begin{enumerate}
    \item The {\it ordinary least squares} (OLS) estimator, which is a standard method in the setting when $p < n$ and the underlying signal is dense. It often does not perform well when we are in the high-dimensional setting ($p > n$) or the design is highly correlated and $\gmin(\Sigma)$ is close to zero. 
    
    \item The {\it Lasso} (L) penalty  $\lambda_L\|\beta\|_1$ from \cite{tibshirani1996regression}, which can perform well in the $p\gg n$ setting if the true signal is known to be sparse. In the $p>n$ case, however, it has been shown to select at most $n$ variables before it saturates. As discussed in \cite{zou2005regularization}, the Lasso lacks the ability to select groups of correlated variables, and it is empirically observed to suffer from unstable selections in the presence of high correlation between features.
    \item The {\it Elastic Net} (EN) penalty $\lambda_L\|\beta\|_1 + \lambda_E\|\beta\|_2^2$, which was developed in \cite{zou2005regularization} to deal with highly correlated predictors. The Elastic Net tends to encourage strongly correlated predictors to be in or out of the model together while also preserving sparsity of representation like the Lasso. It is a suitable candidate in our setting due to our assumption of highly correlated design. 
    \end{enumerate}
\noindent{\bf  Graph-based methods} that utilizes information from the given graph $G$ (except for possibly the GTV method). We have described these methods in Section \ref{sec:related}. 
\begin{enumerate}
 \setcounter{enumi}{3} 
\item The {\it Fused Lasso} (FL) penalty $\lambda_1\|\Gamma\beta\|_1 + \lambda_L\|\beta\|_1$ proposed in \cite{tibshirani2005sparsity} encourages the resulting estimate to be both sparse and piecewise constant with respect to $G$. This penalty may be suitable if we believe the true signal is sparse and also forms clusters on $G$ (that is, in each cluster the true signal attains a single value). When the true signal is not sparse, the tuning parameter $\lambda_L$ is often set to zero if we use cross-validation (CV) for hyperparameter selection, and FL degenerates into our GEN penalty with $\lambda_2 = 0$. 

\item The \textit{Smooth Lasso} (SL) penalty $ \lambda_2\|\Gamma\beta\|_2^2 + \lambda_L\|\beta\|_1 $ in  \cite{hebiri2011smooth} results in an estimate that is smooth, in the sense that $\|\Gamma\hat{\beta}_{\text{SL}}\|_\infty$ is small. It is useful when $\beta^*$ is sparse and we also believe $\|\Gamma\beta^*\|_\infty$ is small. When the true signal is not sparse and we use CV for hyperparameter selection, $\lambda_L$ for SL is often set to zero, in which case SL also degenerates into our GEN penalty with $\lambda_1 = 0$. 

\item The {\it Graph Total Variation} (GTV) penalty $\lambda_1\|\hat{\Gamma}\beta\|_1 + \lambda_2\|\hat{\Gamma}\beta\|_2^2 + \lambda_L\|\beta\|_1$ in \cite{li2018graph} estimates $\Sigma$ with some covariance estimator $\hat{\Sigma}$ and then treats $\hat{\Sigma}$ as the weighted adjacency matrix of some graph $\hat{G}$ with incidence matrix $\hat{\Gamma}$. In our experiments, as suggested by \cite{li2018graph}, the estimator $\hat{\Sigma}$ is obtained by hard-thresholding the sample covariance matrix (see \cite{bickel2008covariance} for details). This choice of $\hat{\Sigma}$ means that we also need to tune the covariance threshold in addition to the 3 hyperparameters that appear in the GTV penalty. In general, however, $\hat{\Sigma}$ can be any covariance estimator and can also incorporate side information such as the graph $G$ provided in our setting. 

\item We also denote by \textit{GTV-oracle} the GTV penalty based on using the unobserved covariance matrix $\Sigma$ (rather than $\hat{\Sigma}$) to construct the corresponding incidence matrix $\hat{\Gamma}_{\text{oracle}}$. Using the true covariance matrix should eliminate any error from covariance estimation. However, if all entries of $\Sigma$ are nonzero, computation of the GTV-oracle estimator can be especially challenging, since the graph used in the GTV penalty here is a weighted complete graph. 
\end{enumerate}

\subsection{Experiments on synthetic data}\label{sec:synthesis}

We repeatedly generate training and testing data from the model $y = X\beta^* + \epsilon$, where the rows of $X$ are generated i.i.d. from $N(0,\Sigma)$ and independent of $\epsilon$ which is generated from $N(0, \sigma^2I_n)$. Hyperparameter selection via CV is performed using a separate validation data set. We report the estimation error $\|\hat{\beta}-\beta^*\|_2$, as well as the prediction error $\frac{1}{n}\|X_{\text{test}}(\hat{\beta} - \beta^*)\|_2^2$ computed using the testing data. 

\subsubsection{Choices of $\Sigma$ and the graph $G$} \label{sec:params_choice}

We consider the chain graph, the 2D grid graph and the barbell graph in our experiments. The first two graphs allow for easier visualization of the true and estimated signals defined on them. The barbell graph, which consists of two non-overlapping cliques connected by a single path that has an endpoint in each clique, allows us to test the performance of our method on a denser graph with a less homogenous degree distribution. 

As previously mentioned, $\Sigma$ is constructed so that nearby nodes have more correlated feature vectors. For the chain graph, we use the Toeplitz covariance structure with $\Sigma_{ij} = \rho^{|i-j|}$ where, if not stated otherwise, we typically choose $\rho = 0.5$ (moderate correlation). For the 2D grid and barbell graph, we construct $\Sigma$ by inverting the matrix $L + 0.5I_p$ (recall that $L$ denotes the Laplacian of the graph $G$) and then normalize $\Sigma$ so that all covariates have unit variance. The resulting covariance matrices obtained from this process are illustrated in Figure \ref{fig:covariance}.

\begin{figure}[H]
    \centering
\begin{subfigure}[b]{0.45\textwidth}
    \centering
 \includegraphics[width=\textwidth]{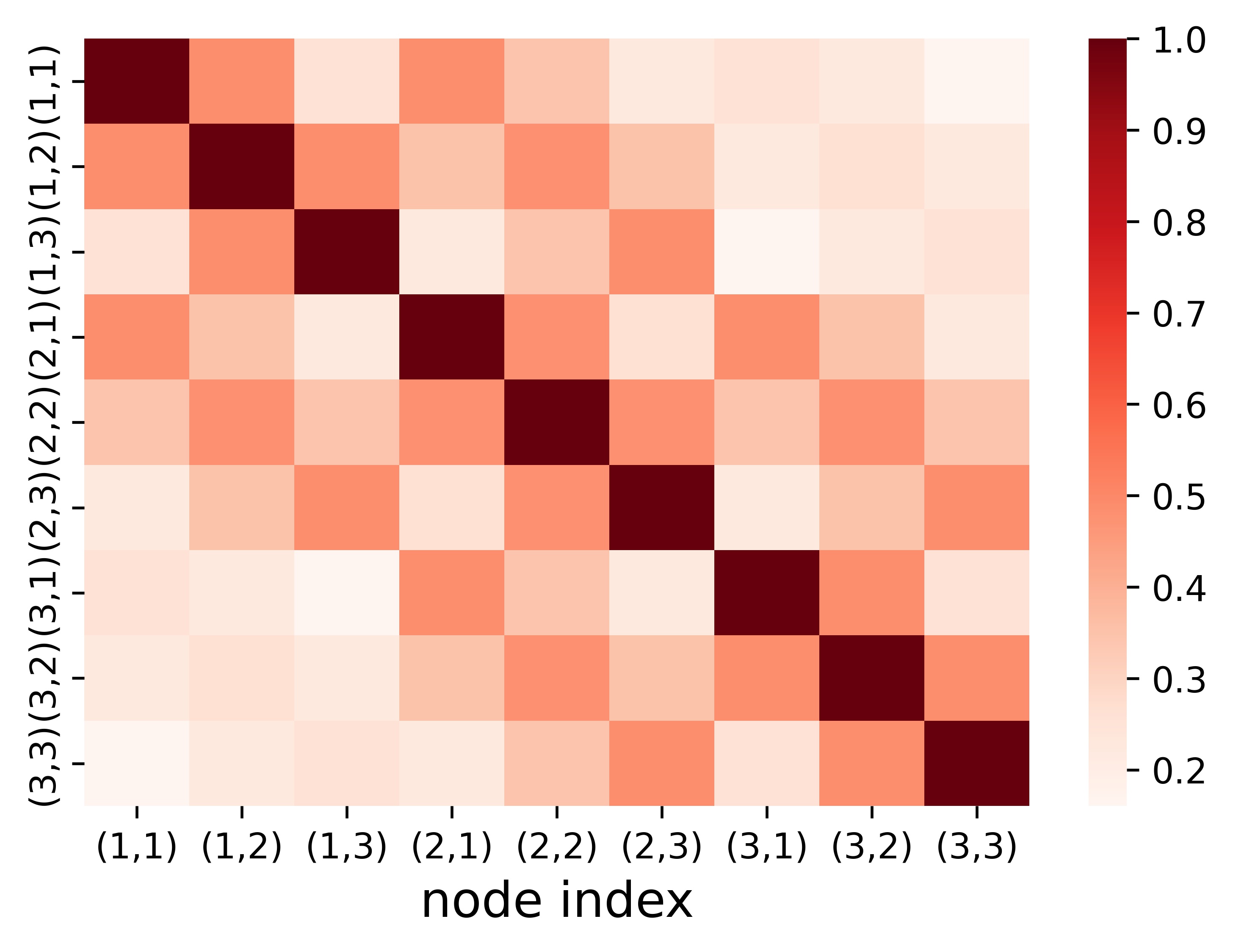}   
\end{subfigure}
   \hfill
\begin{subfigure}[b]{0.45\textwidth}
    \centering
 \includegraphics[width=\textwidth]{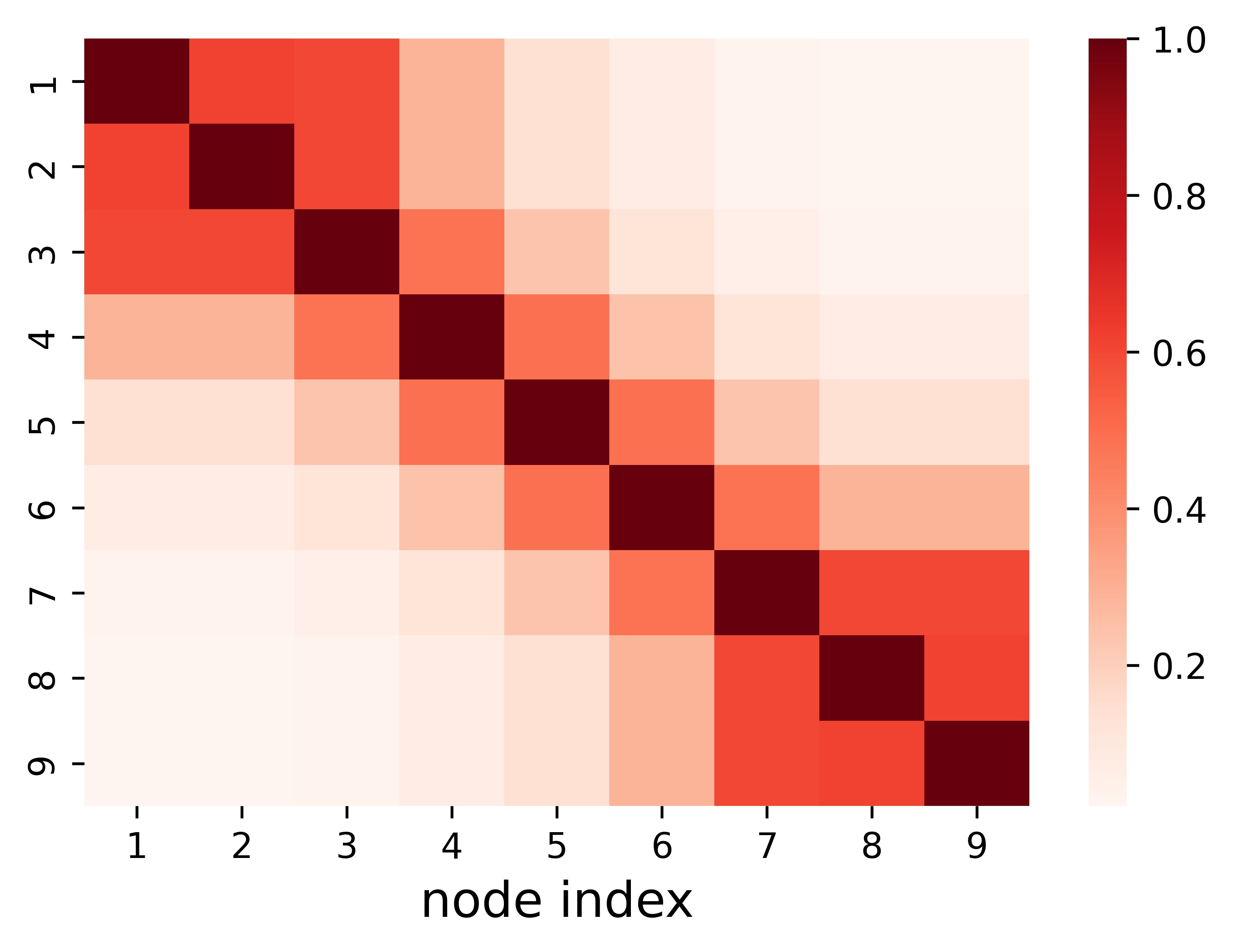}  
\end{subfigure}
    \caption{\textit{Left: the covariance matrix obtained for a 2D grid graph with $p = 3 \times 3$ vertices. Right: the covariance matrix obtained for a barbell graph with two cliques $\{1,2,3\}$ and $\{7,8,9\}$ connected by the path $\{3,4,5,6,7\}$. Note that correlation is higher for adjacent or nearby vertices.}}
    \label{fig:covariance}
\end{figure}

\subsubsection{Hyperparameter selection and tuning time} \label{sec: tuning}

We select hyperparameters based on 5-fold CV using a fine grid search, where each hyperparameter is chosen from a list of at least 20 values. The scorer for CV is the negative mean squared error (MSE) $-\frac{1}{n}\|Y-X\hat{\beta}\|_2^2$, which tends to select for hyperparameters with better prediction performance. 

Hyperparameter tuning for GEN is computationally manageable. When $G$ is a chain graph, the tuning time for GEN is comparable to that of other methods with two hyperparameters, namely the Elastic Net, the Fused Lasso and the Smooth Lasso. If we disregard the covariance thresholding parameter, the GTV penalty still involves three hyperparameters, and the graph $\hat{\Gamma}$ used in the GTV penalty, computed using the covariance estimate $\hat{\Sigma}$, has more nonzero weighted edges compared to the given graph $\Gamma$. These factors contribute to longer tuning time for the GTV penalty. The tuning time is much worse for the GTV-oracle penalty since the true covariance matrix $\Sigma$ is denser than $\hat{\Sigma}$, and so $\hat{\Gamma}_{\text{oracle}}$ has many more nonzero weighted edges than $\hat{\Gamma}$ does. We present the tuning times for a toy example where each hyperparameter is selected from a small grid search in Table \ref{tab:cv_runtime}. Here, since $n$ and $p$ are both not too large, we use the SOCP solver ECOS (see Section \ref{subsec:runtime experiments}) for all methods.

\begin{table}[H]
    \caption{Tuning times with ECOS when $G$ is the chain graph. $p = 110, m=109, n = 210, \sigma = 1$, and $\Sigma$ is constructed as in Section \ref{sec:params_choice}. The GTV penalty is based on $\hat{\Gamma}$ which has around 200 nonzero weighted edges. The GTV-oracle penalty is based on $\hat{\Gamma}_{\text{oracle}}$ which has almost 6000 nonzero weighted edges. 5-fold CV is performed for each method on a small grid with 5 candidate values $[0, 0.1, 1, 10, 100]$ for each hyperparameter.}
    \centering
    \begin{tabularx}{\textwidth}{c X X X X X X X}
    \hline
         & \textbf{L} & \textbf{EN} & \textbf{FL} & \textbf{SL} & \textbf{GTV}& \textbf{GTV-oracle} & \textbf{GEN}\\[1mm]
    \hline \hline  
    \textbf{\# hyperparameters} & 1 & 2 & 2& 2 & 3 & 3 & 2\\[1mm]
    \hline
    \textbf{time [seconds]} & 0.45 & 3.22& 2.85& 2.30 & 14.31 & 134.08 & 2.46\\[1mm]
    \hline
\end{tabularx}
    \label{tab:cv_runtime}
\end{table}

When the graph $G$ contains more edges, we can expect the tuning time for GEN to increase relative to other two-hypterparameter methods, as both the $\ell_1$ and $\ell_2$ components of the GEN penalty depend on $\Gamma$. Table \ref{tab:cv_runtime_barbell} repeats the above experiment but with $G$ being the barbell graph and $\Sigma$ reflecting the structure of this graph. The tuning time with ECOS for GEN is roughly double that of FL or SL, whose penalties contain only one component depending on $\Gamma$.

\begin{table}[H]
    \caption{Tuning times with ECOS when $G$ is the barbell graph. $p = 110, m = 2461, n = 210, \sigma = 1$, and $\Sigma$ is constructed as in Section \ref{sec:params_choice}. The GTV penalty is based on $\hat{\Gamma}$ which has around 2500 nonzero weighted edges. As $\Sigma$ for the barbell graph is denser than $\Sigma$ in Table \ref{tab:cv_runtime}, $\hat{\Sigma}$ here is also denser than $\hat{\Sigma}$ in Table \ref{tab:cv_runtime}.}
    \centering
    \begin{tabularx}{\textwidth}{c X X X X X X X}
    \hline
         & \textbf{L} & \textbf{EN} & \textbf{FL} & \textbf{SL} & \textbf{GTV}& \textbf{GTV-oracle} & \textbf{GEN}\\[1mm]
    \hline \hline  
    \textbf{\# hyperparameters} & 1 & 2 & 2& 2 & 3 & 3 & 2\\[1mm]
    \hline
    \textbf{time [seconds]} & 0.44 & 2.51& 4.62& 4.57 & 52.47& 131.91 & 9.68\\[1mm]
    \hline
\end{tabularx}
    \label{tab:cv_runtime_barbell}
\end{table}

\subsubsection{Comparisons between GEN, FL and SL when $\beta^*$ is dense but aligns with $G$} \label{sec:adaptivity}

In this section, we focus on the case when $\beta^*$ is not sparse but $\Gamma\beta^*$ is sparse or $\beta^*$ is otherwise smooth with respect to $G$. As all parts of the true signals constructed in this section are far from zero, the component $\lambda_L\|\beta\|_1$ in the FL and SL penalties is of little use, and setting $\lambda_L > 0$ worsens both prediction and estimation errors in this setting. Consequently, CV yields $\lambda_L$ values that are almost identically zero for both FL and SL. Essentially, in this section, FL refers to the standalone $\lambda_1\|\Gamma\beta\|_1$ penalty, whereas SL refers to the standalone $\lambda_2\|\Gamma\beta\|_2^2$ penalty.   

We observe that FL performs well when $\beta^*$ has few signal jumps on $G$, regardless of whether there exists large jumps (i.e. $\|\Gamma\beta^*\|_\infty$ is large). SL, on the other hand, tends to perform well when the signal is smooth with respect to $G$, in the sense that $\|\Gamma\beta^*\|_\infty$ is small, even if the number of signal jumps $\|\Gamma\beta^*\|_0$ might be large. To demonstrate these observations, we construct signals with varying smoothness ($\|\Gamma\beta^*\|_\infty$) and numbers of jumps ($\|\Gamma\beta^*\|_0$). Figure \ref{fig:1D_signals} illustrates the true signals on the 1D chain graph, whereas Figure \ref{fig:2D_signals} illustrates the true signals on the 2D grid graph (note that $p$ is fixed for these graphs). For the barbell graph, we let the signal values be constant (at 5 and 20 respectively) on each clique. The lengths of the path connecting the two cliques are chosen from $\{1,4,7,10,13,16\}$ (and so $p$ has to vary), and we let the signal decrease from 20 to 5 gradually on the connecting path, so that $\|\Gamma\beta^*\|_\infty$ decreases from 15 to 1.46 while $\|\Gamma\beta^*\|_0$ increases from 1 to 16.

\begin{figure}[H]
    \centering
\begin{subfigure}{0.3\textwidth}
    \centering
 \includegraphics[width=\textwidth]{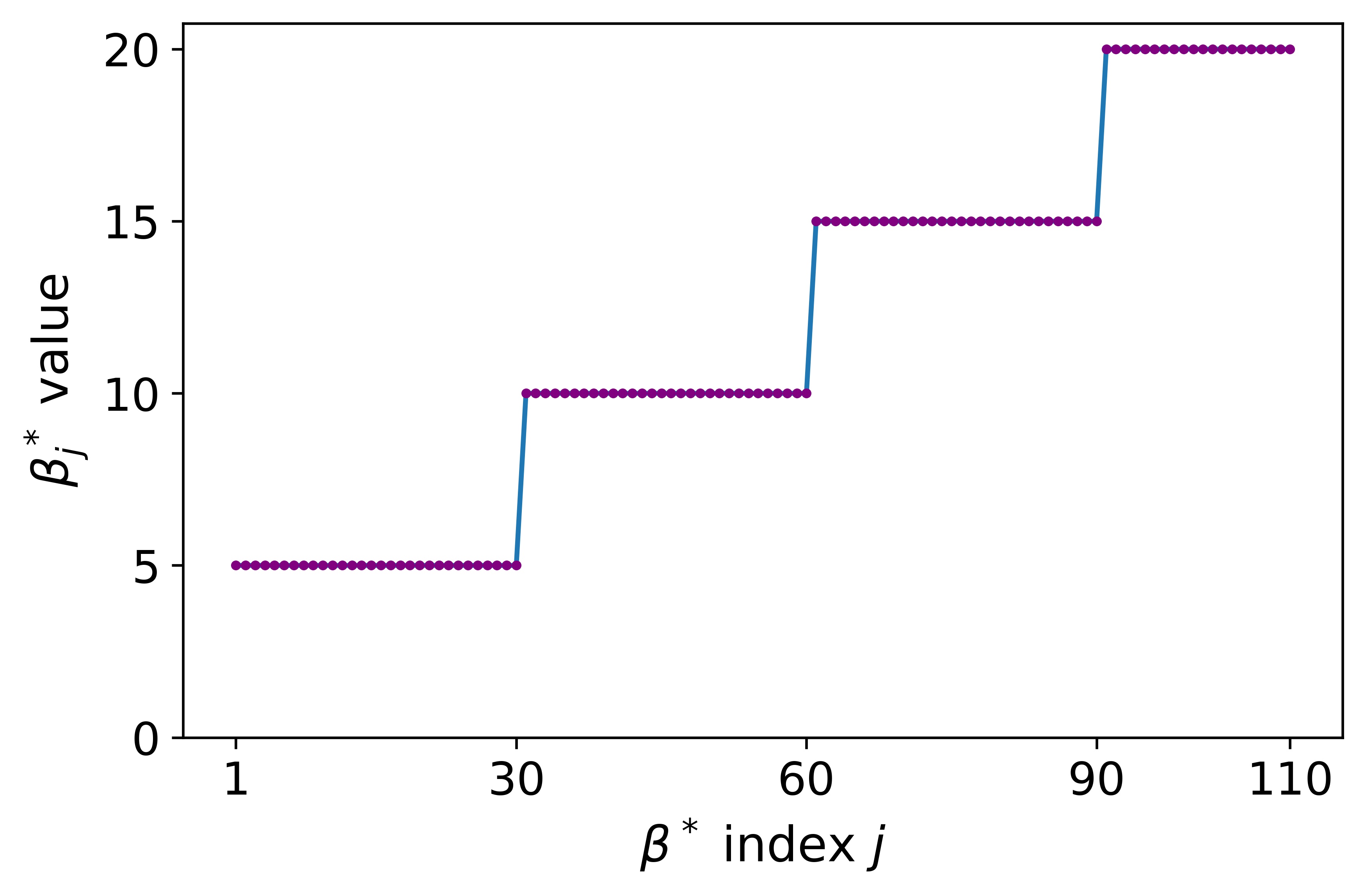}   
\end{subfigure}
   \hfill
\begin{subfigure}{0.3\textwidth}
    \centering
 \includegraphics[width=\textwidth]{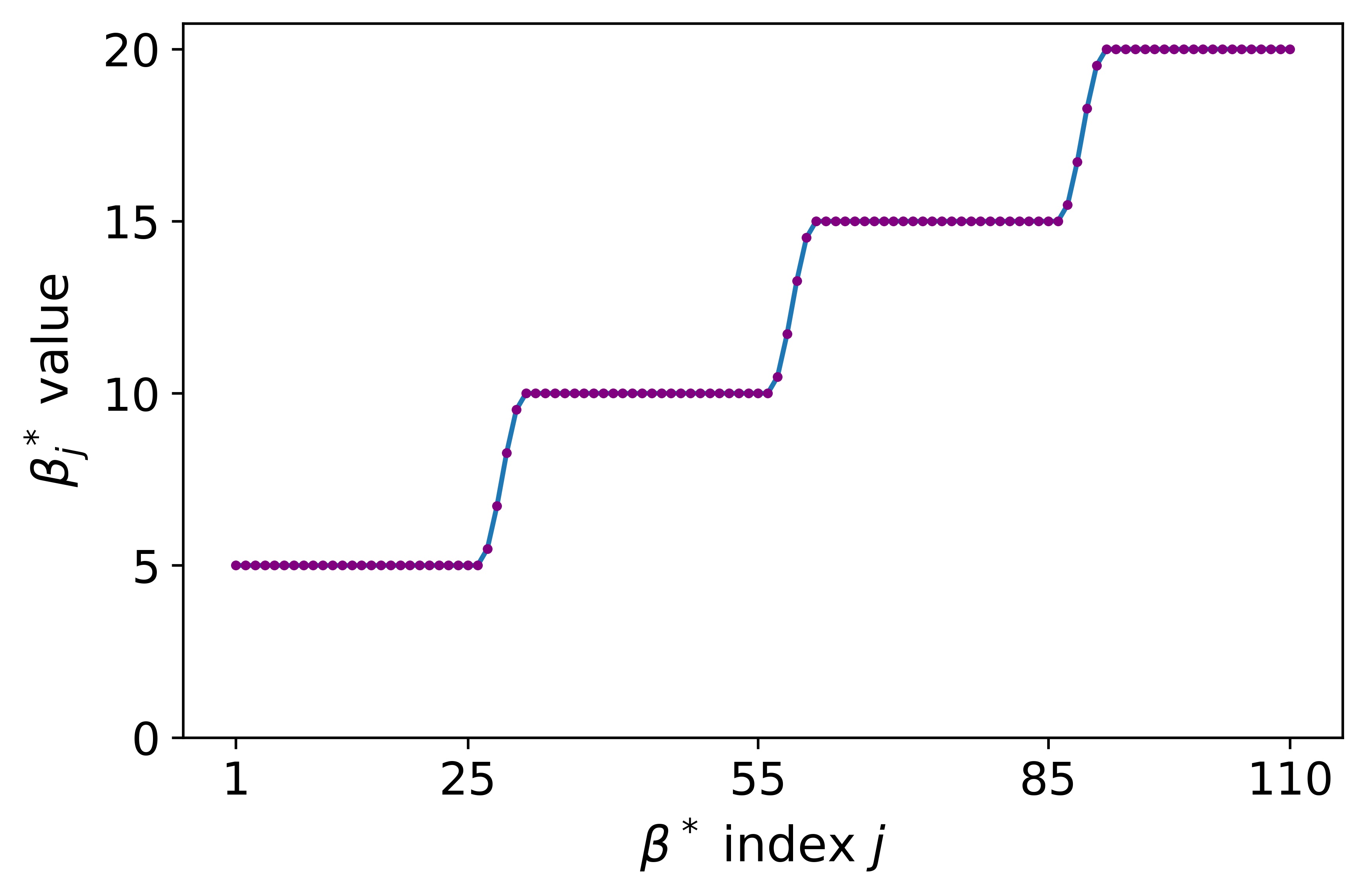}  
\end{subfigure}
    \hfill
\begin{subfigure}{0.3\textwidth}
    \centering
 \includegraphics[width=\textwidth]{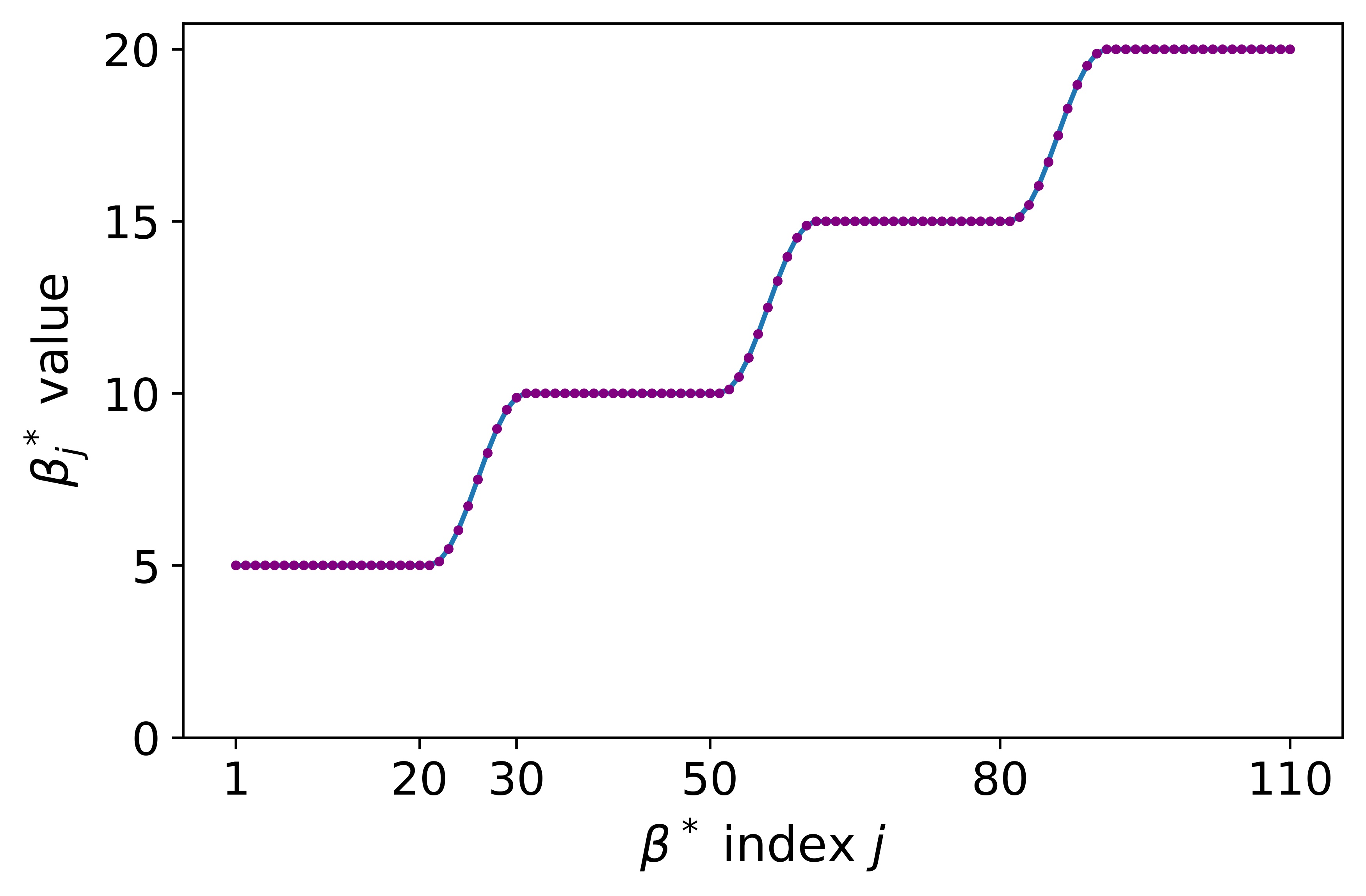}  
\end{subfigure}

\begin{subfigure}{0.3\textwidth}
    \centering
 \includegraphics[width=\textwidth]{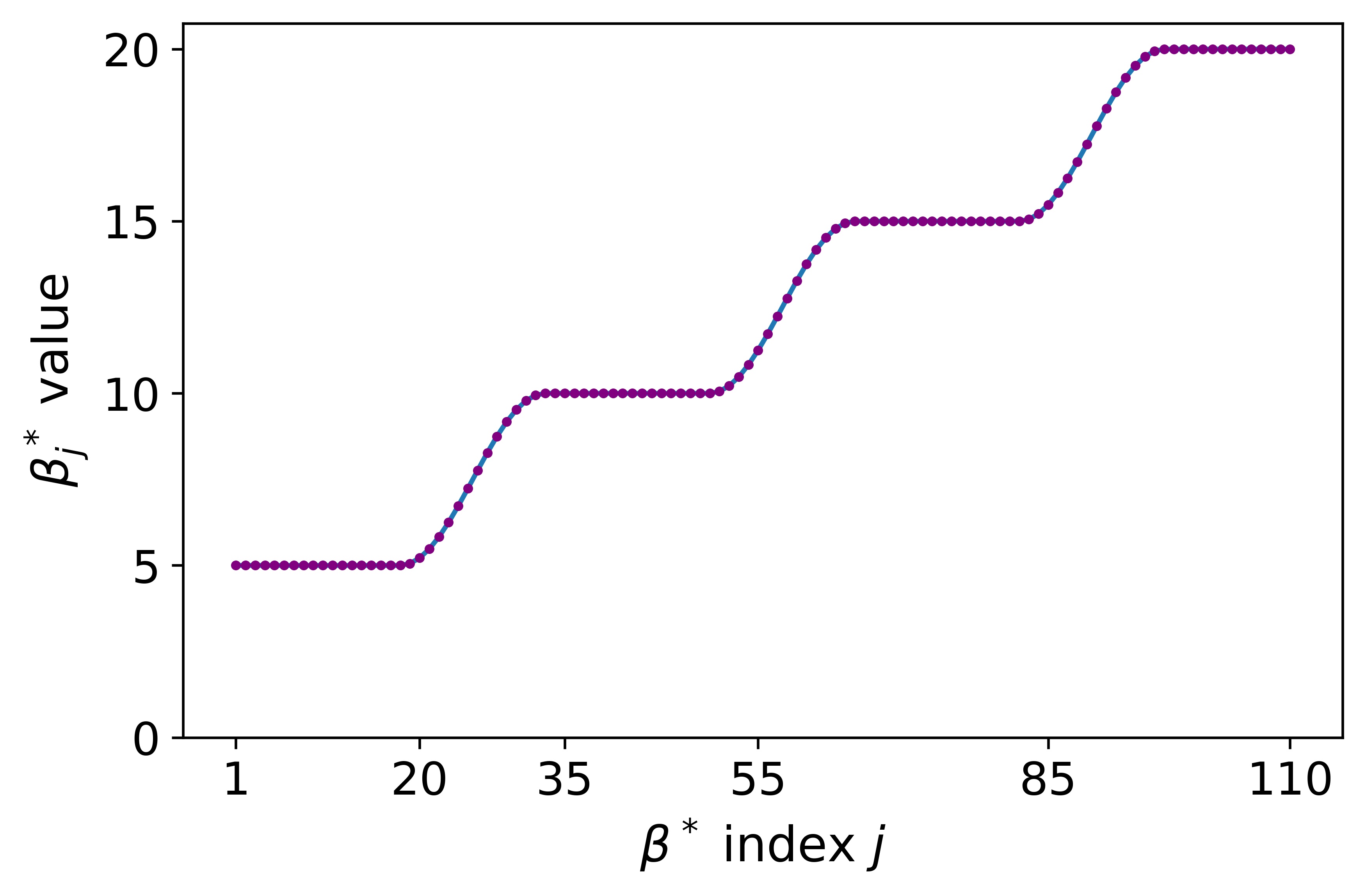}  
\end{subfigure}
    \hfill
\begin{subfigure}{0.3\textwidth}
    \centering
 \includegraphics[width=\textwidth]{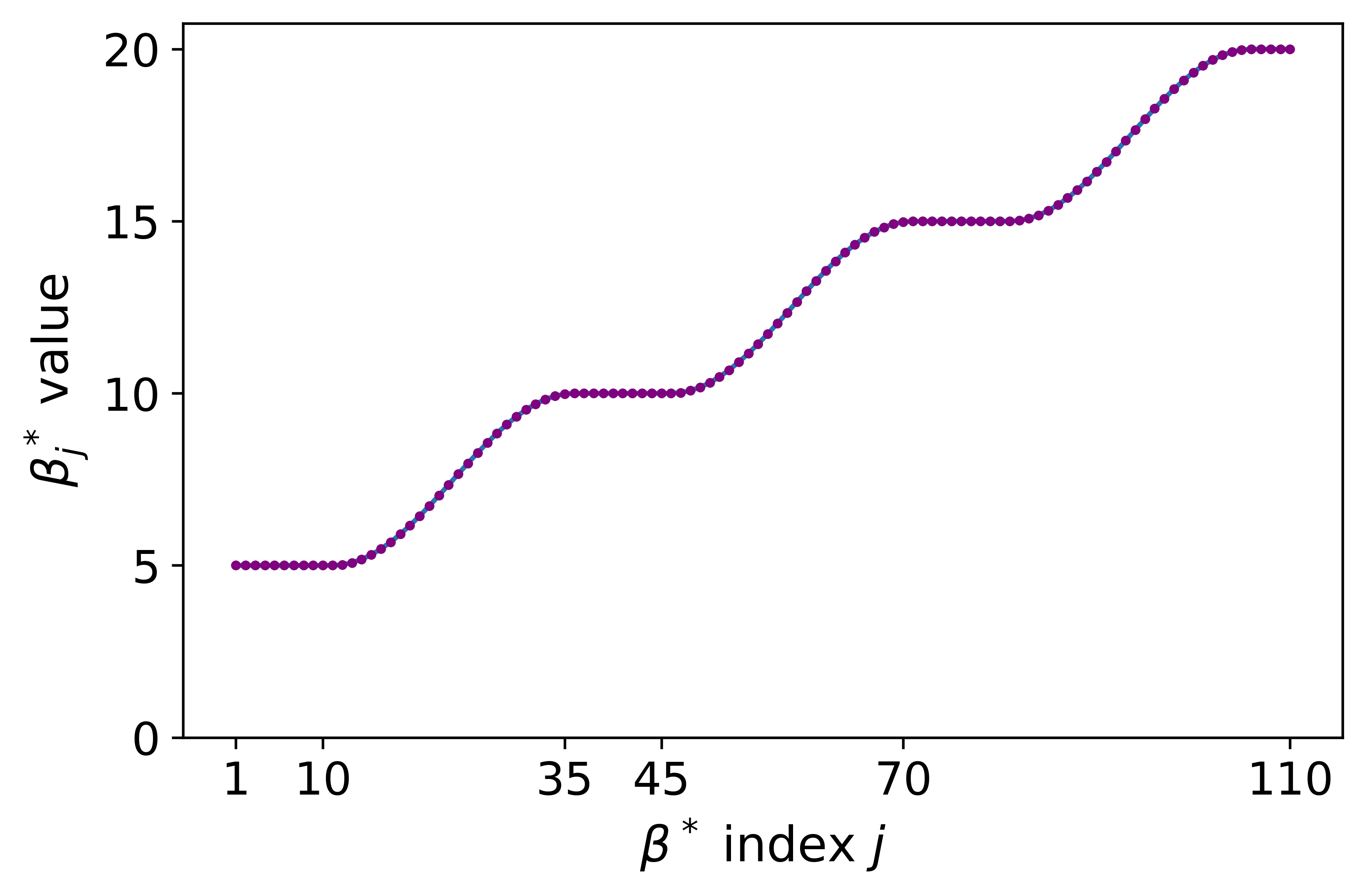}  
\end{subfigure}
    \hfill
\begin{subfigure}{0.3\textwidth}
    \centering
 \includegraphics[width=\textwidth]{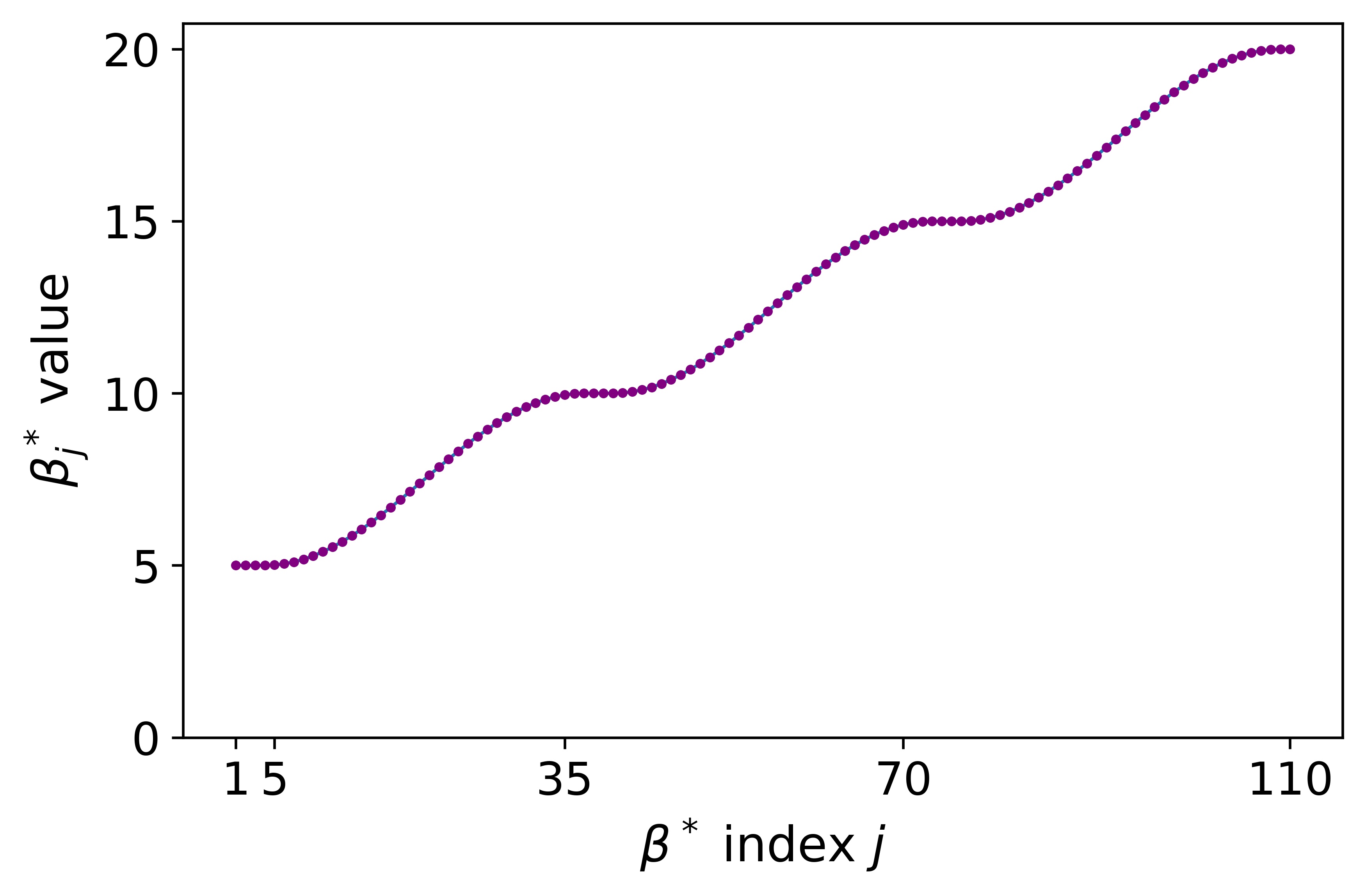}  
\end{subfigure}

    \caption{\textit{True signals defined on the chain graph with $p = 110$. The top left signal is piecewise constant and has the smallest $\|\Gamma\beta^*\|_0 = 3$ but the largest $\|\Gamma\beta^*\|_\infty = 5$. The bottom right signal is the smoothest with the largest $\|\Gamma\beta^*\|_0 = 99$ and the smallest $\|\Gamma\beta^*\|_\infty = 0.24$. The intermediate signals are constructed such that $\|\Gamma\beta^*\|_0$ decreases but $\|\Gamma\beta^*\|_\infty$ increases gradually. All 6 signals have $\|\Gamma\beta^*\|_1 = 15$.}}
    \label{fig:1D_signals}
\end{figure}

\begin{figure}[H]
    \centering
\begin{subfigure}{0.31\textwidth}
    \centering
 \includegraphics[width=\textwidth]{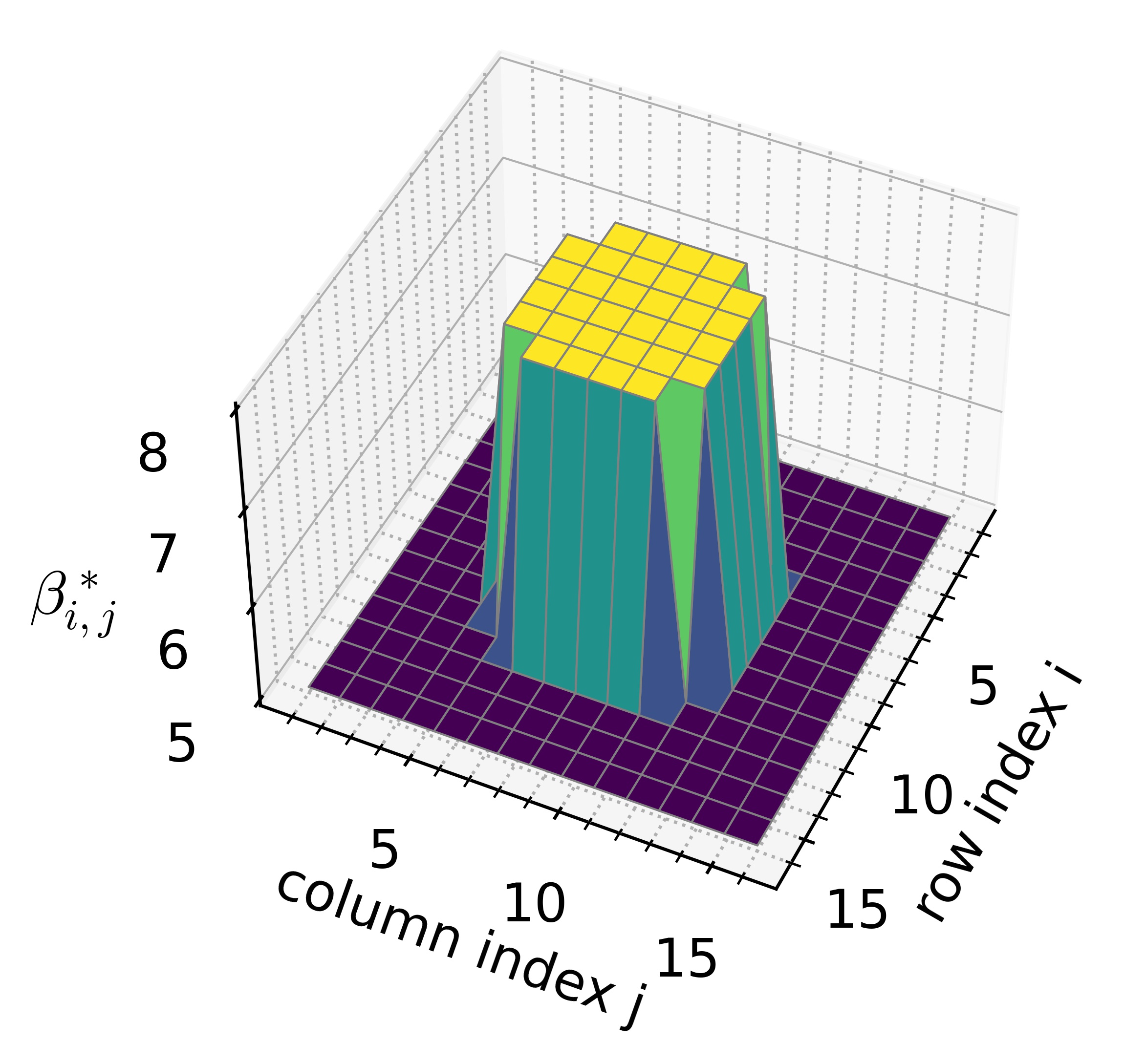}   
\end{subfigure}
   \hfill
\begin{subfigure}{0.31\textwidth}
    \centering
 \includegraphics[width=\textwidth]{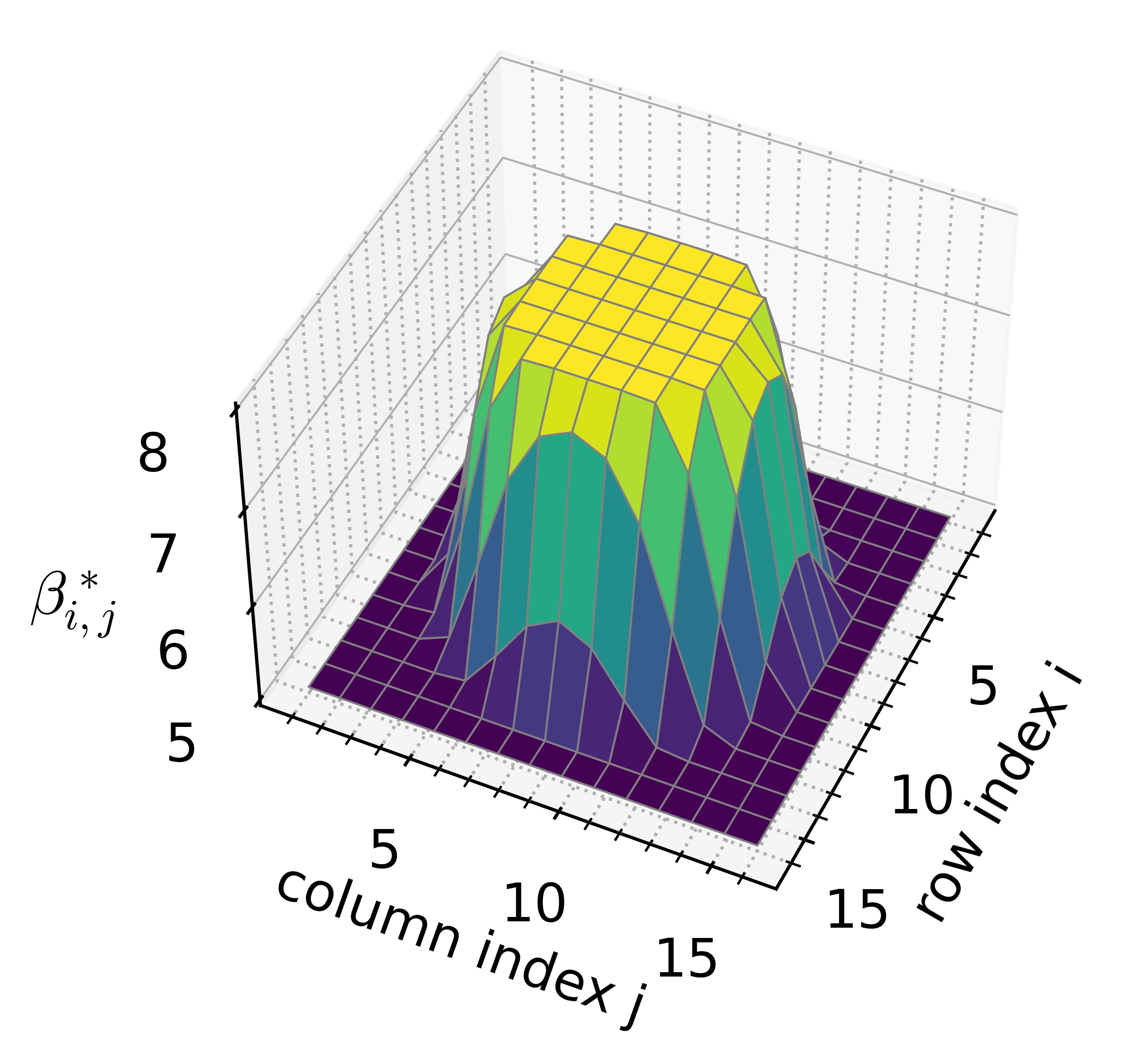}  
\end{subfigure}
    \hfill
\begin{subfigure}{0.31\textwidth}
    \centering
 \includegraphics[width=\textwidth]{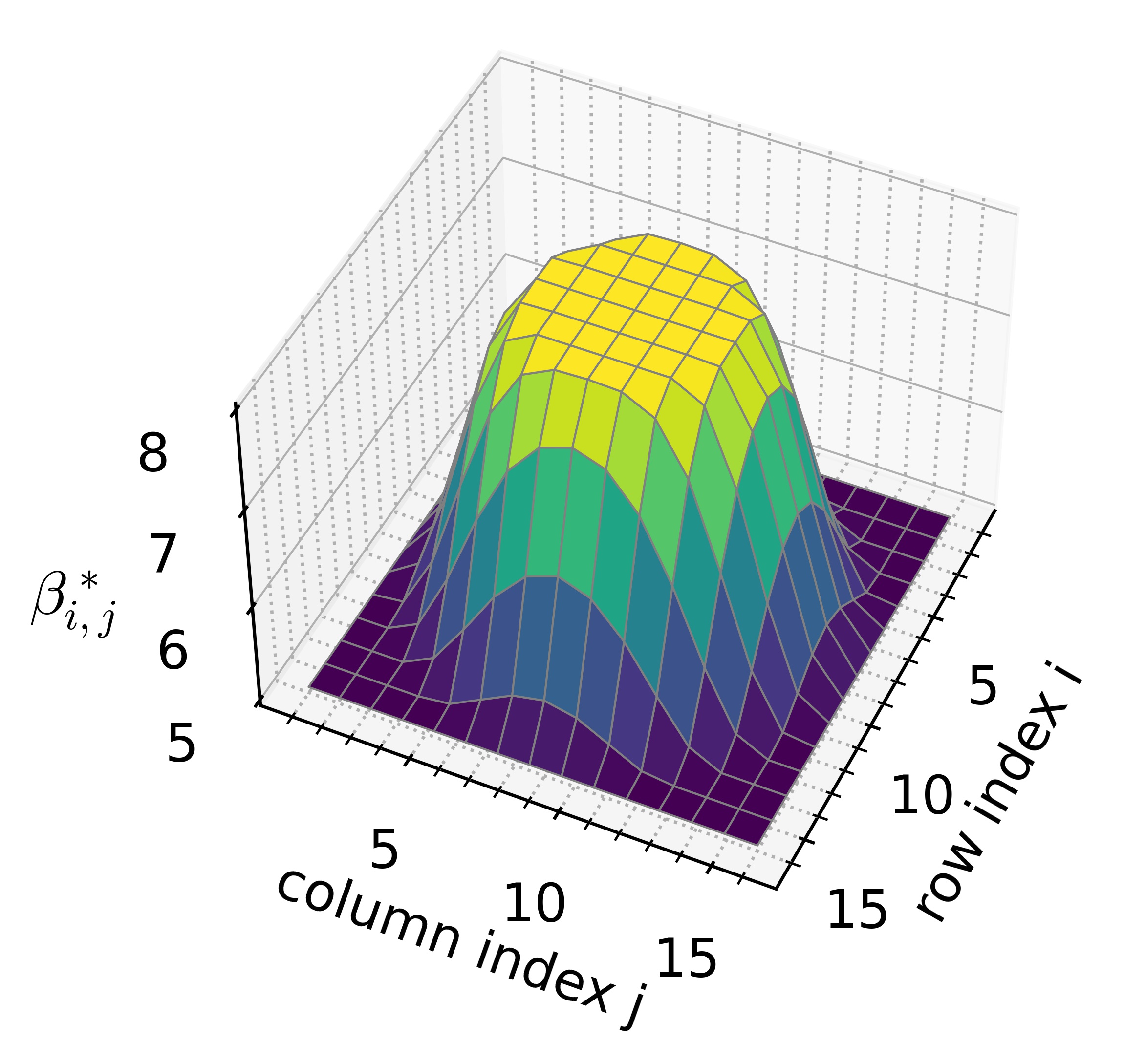}  
\end{subfigure}

\begin{subfigure}{0.31\textwidth}
    \centering
 \includegraphics[width=\textwidth]{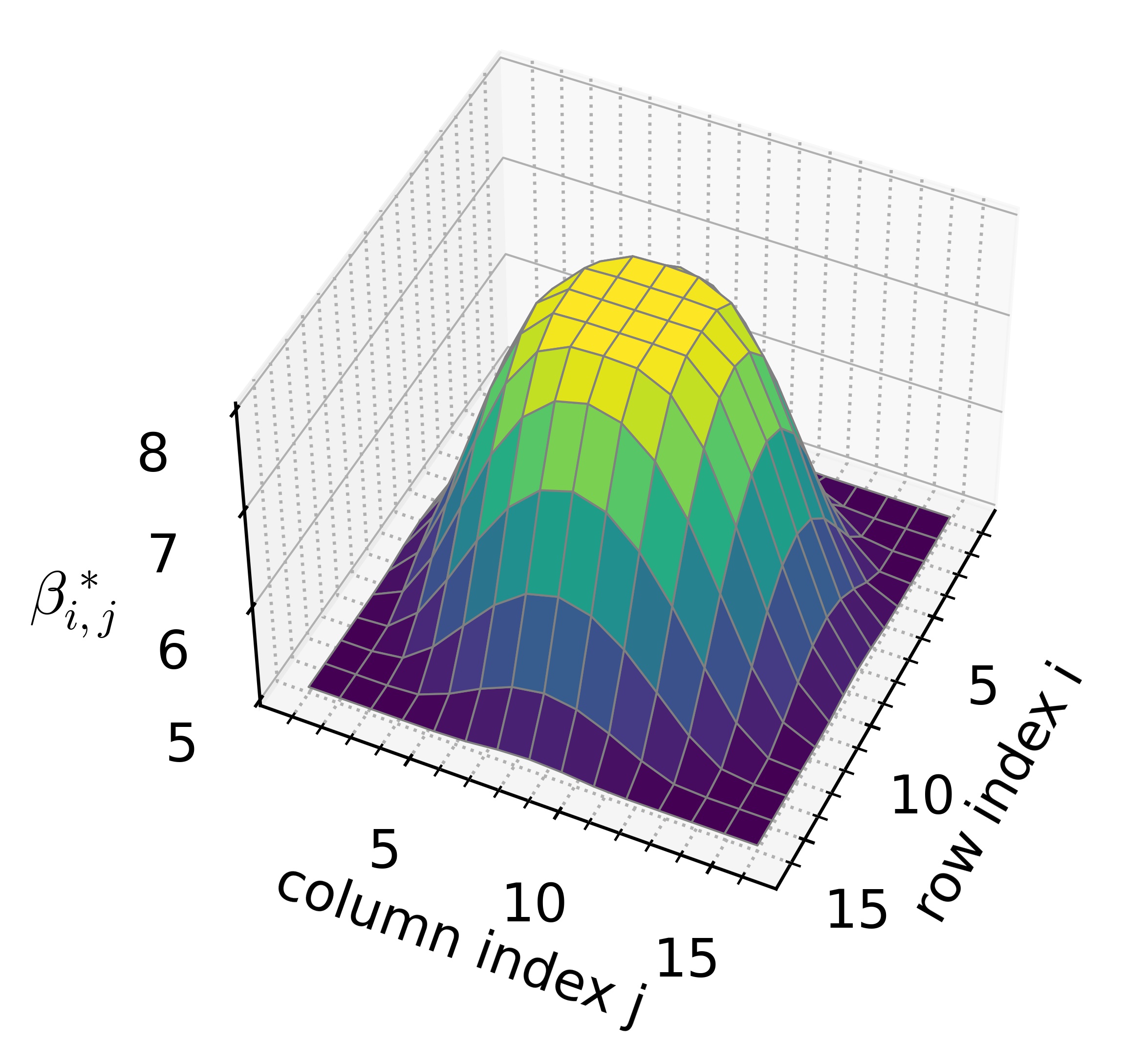}  
\end{subfigure}
    \hfill
\begin{subfigure}{0.31\textwidth}
    \centering
 \includegraphics[width=\textwidth]{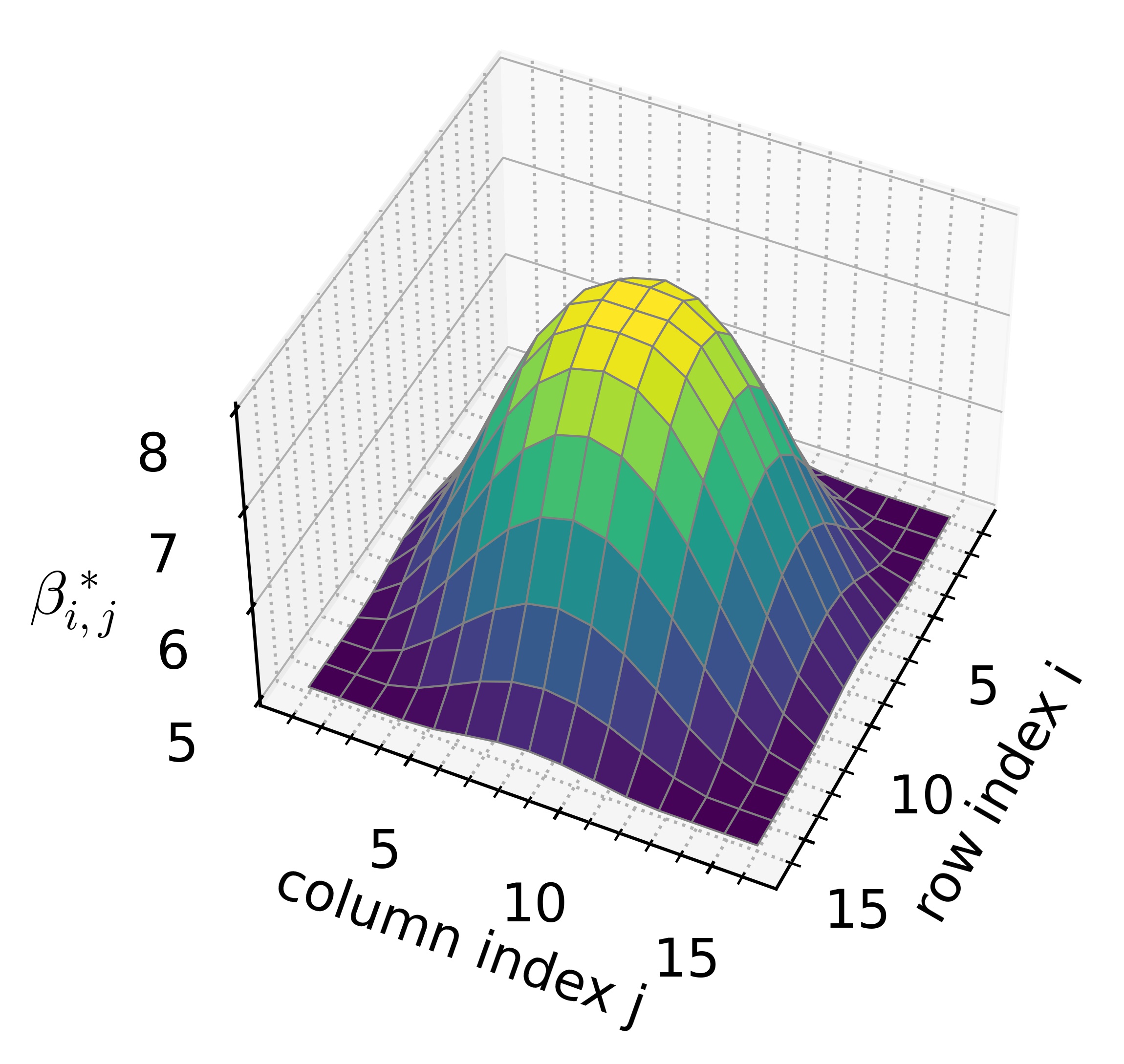}  
\end{subfigure}
    \hfill
\begin{subfigure}{0.31\textwidth}
    \centering
 \includegraphics[width=\textwidth]{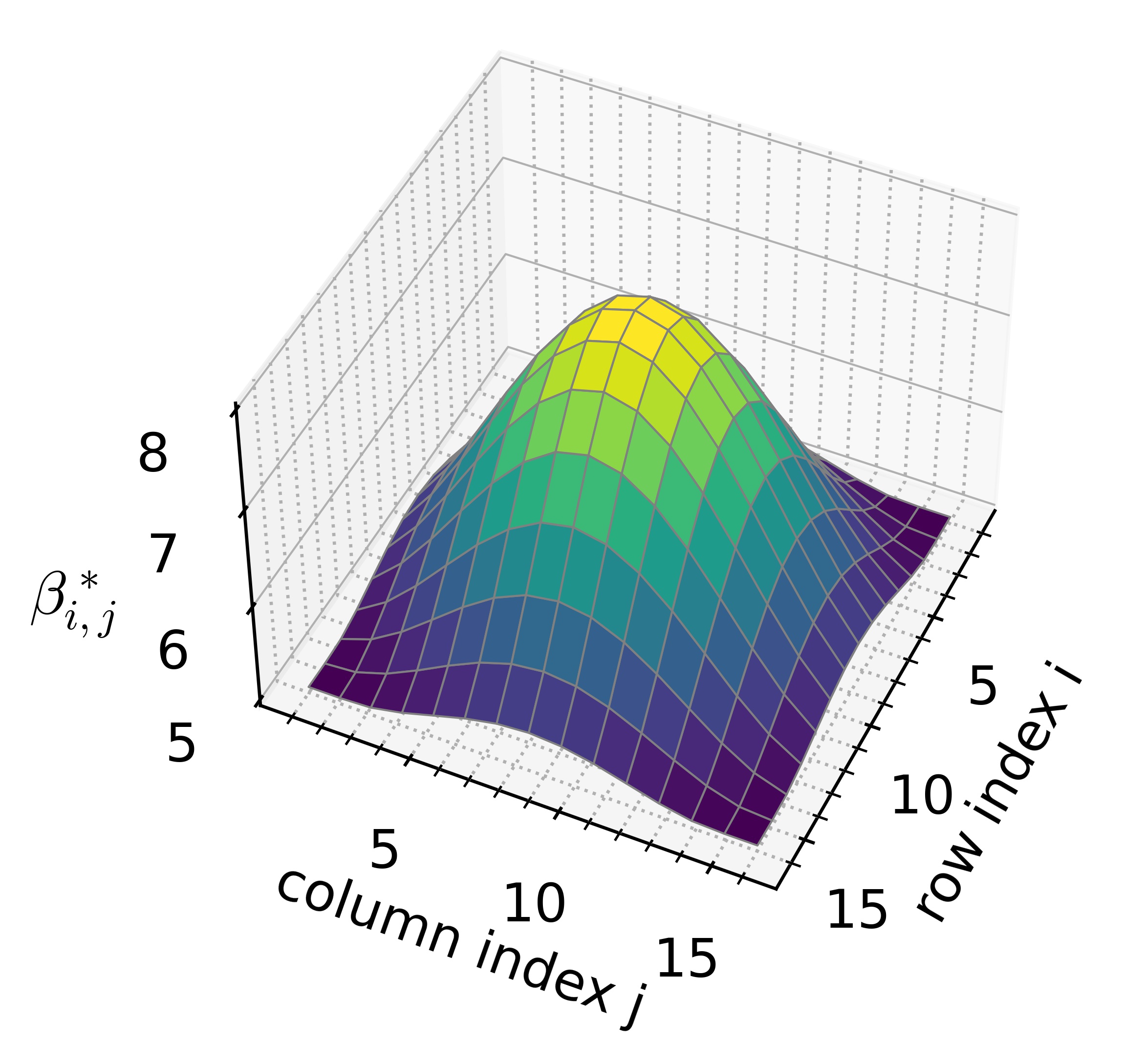}  
\end{subfigure}
    \caption{\textit{True signals defined on the 2D grid with $p = 15 \times 15$. The top left signal is piecewise constant and has the smallest $\|\Gamma\beta^*\|_0 = 28$ but the largest $\|\Gamma\beta^*\|_\infty = 3$. The bottom right signal is the smoothest with the largest $\|\Gamma\beta^*\|_0 = 412$ and the smallest $\|\Gamma\beta^*\|_\infty = 0.24$. All 6 signals have $\|\Gamma\beta^*\|_1$ between 84 and 120.}}
    \label{fig:2D_signals}
\end{figure}

Figure \ref{fig:adaptivitys} illustrates the performances of FL, SL and GEN in terms of estimation and prediction errors. When $\|\Gamma\beta^*\|_\infty$ is small, CV yields $\lambda_2$ values that are larger relative to $\lambda_1$, which is consistent with our theory in that $\lambda_2$ can be chosen up to $\frac{C\lambda_1}{\|\Gamma\beta^*\|_\infty}$ without incurring additional errors. As can be seen from Figure \ref{fig:adaptivitys}, our GEN penalty adapts well to true signals of various smoothness levels, thus demonstrating the importance of having both the $\lambda_1\|\Gamma\beta\|_1$ and $\lambda_2\|\Gamma\beta\|_2^2$ components in our penalty. From the performances of FL and SL, we can see that the $\lambda_1\|\Gamma\beta\|_1$ penalty ensures good performance when the true signal is piecewise constant, whereas the $\lambda_2\|\Gamma\beta\|_2^2$ penalty ensures good performance when the true signal is smooth over $G$.

Note again that FL and SL in this setting correspond to the GEN penalty with $\lambda_2$ or $\lambda_1$ set to zero, respectively. Therefore, GEN's superior performance over FL and SL in terms of prediction error is not surprising, given that the scorer used for CV $-\frac{1}{n}\|Y-X\hat{\beta}\|_2^2$ selects for hyperparameters with stronger prediction performance. However, GEN is also consistently better than FL or SL in terms of estimation error. This can be understood better by examining the signal estimates obtained from the three procedures. Figure \ref{fig:1D_slice} compare the estimated signals with the true signals defined on the 1D chain graph. FL recovers the constant regions well but struggles with the smoothly increasing region, whereas the SL estimate is better in the smoothly increasing region but cannot reproduce the constant regions of the true signal. GEN, on the other hand, is able to recover the true signal in all regions. We can also make the same observations when $G$ is the 2D grid graph (but they are harder to visualize).

\begin{figure}[H]
    \centering
\begin{subfigure}[b]{\textwidth}
    \centering
 \includegraphics[width=0.49\linewidth]{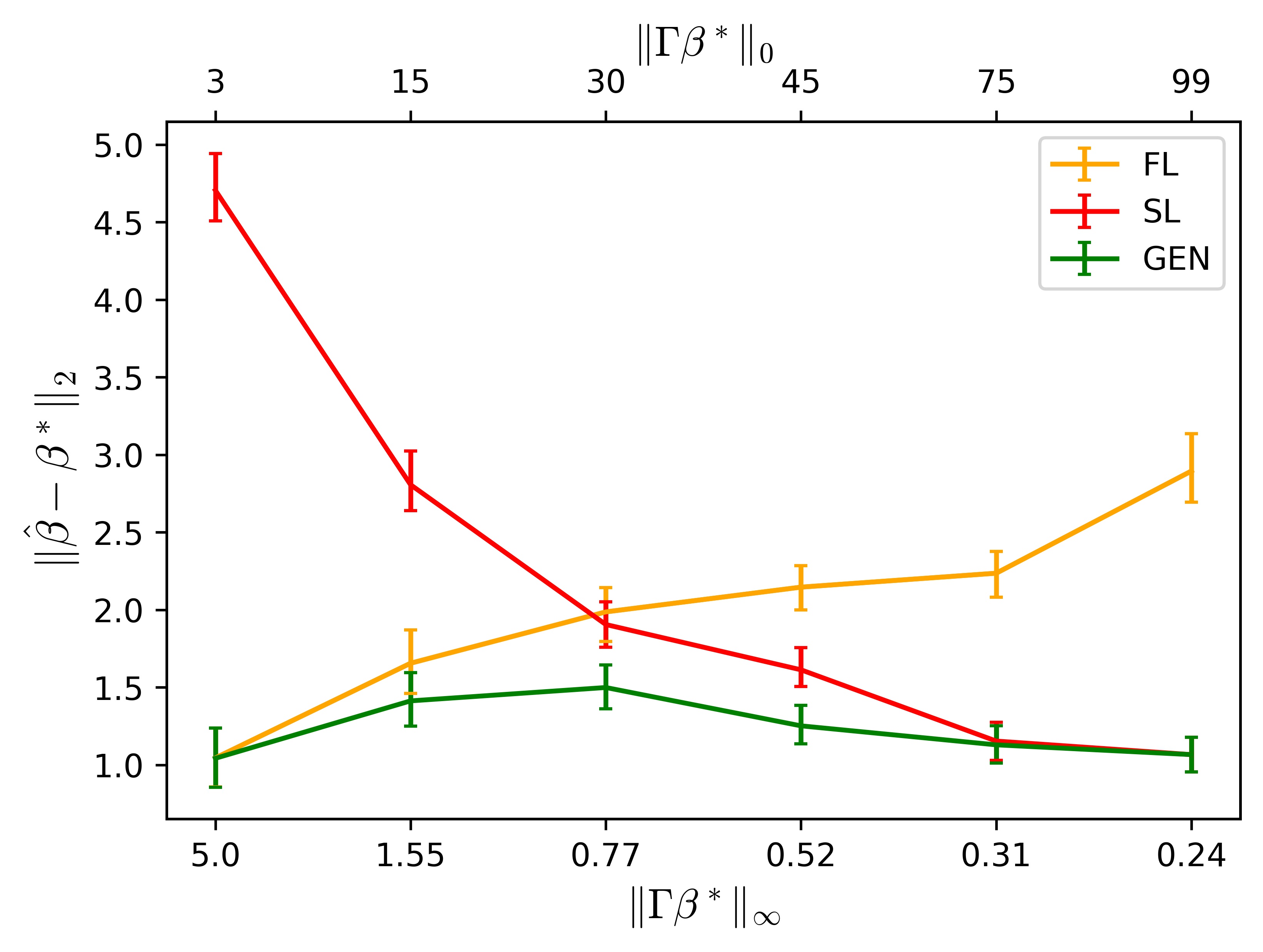}   
   \hfill
 \includegraphics[width=0.49\linewidth]{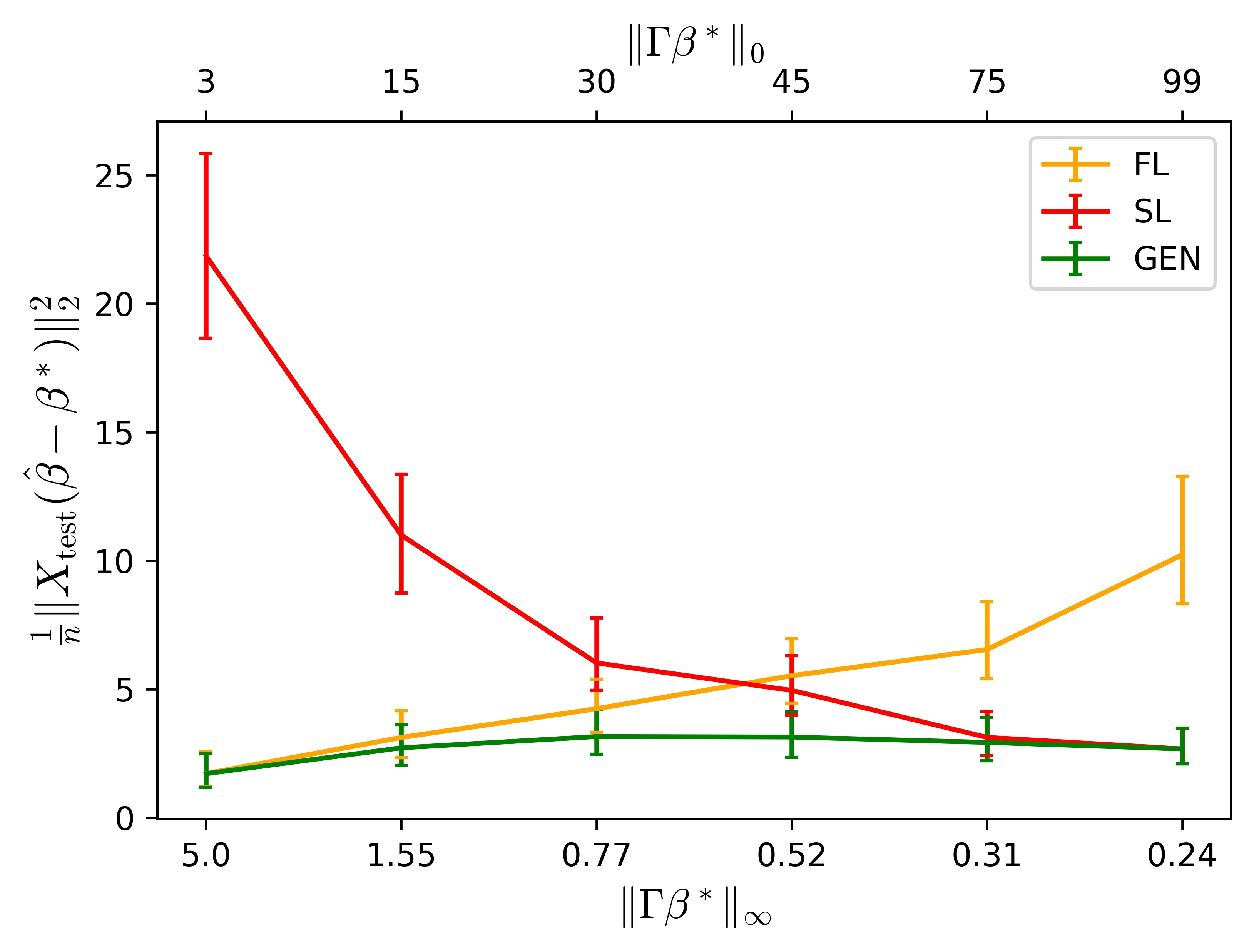} 
 \caption{1D chain}
\end{subfigure}

\begin{subfigure}[b]{\textwidth}
    \centering
\includegraphics[width = 0.49\linewidth]{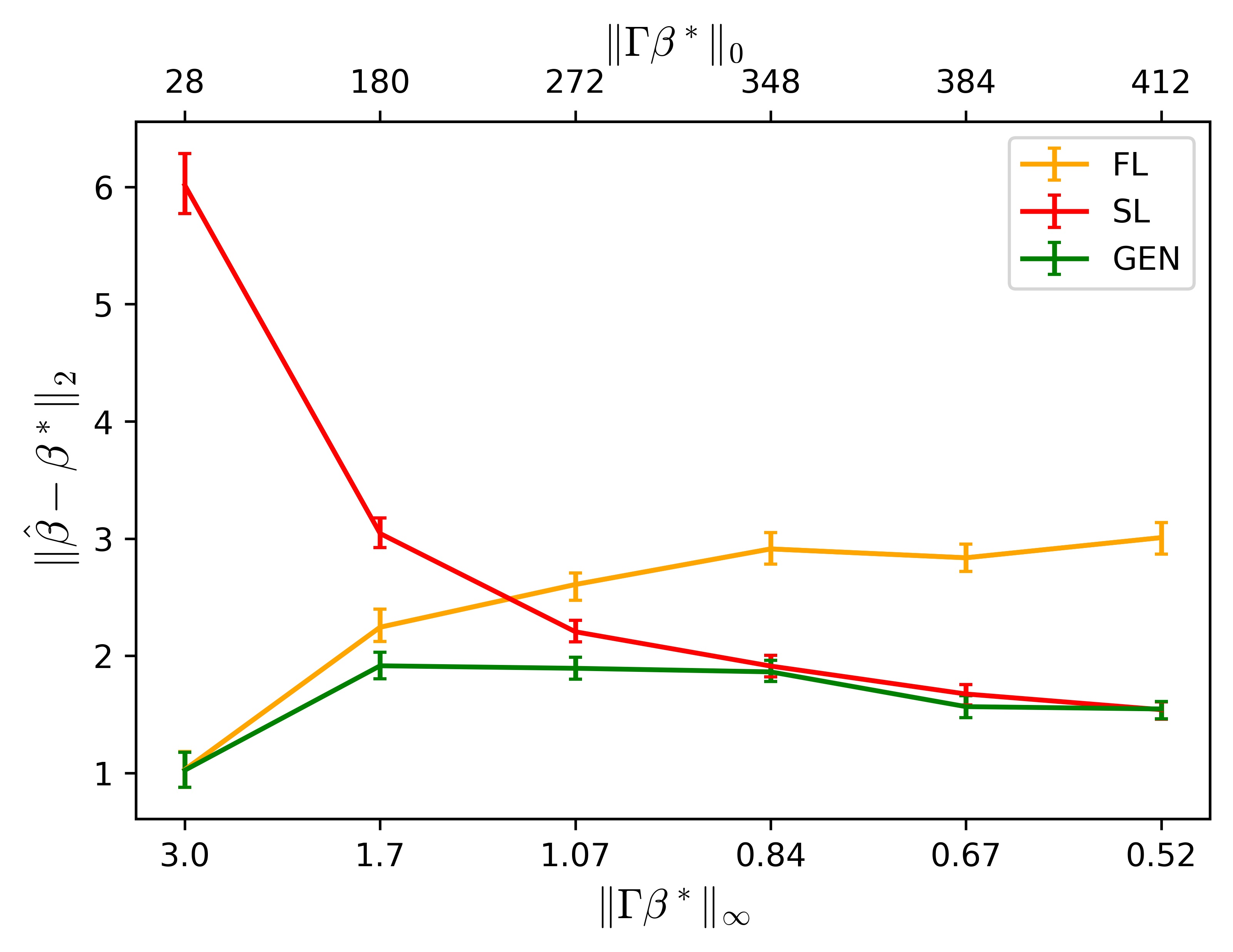}
    \hfill
\includegraphics[width = 0.49\linewidth]{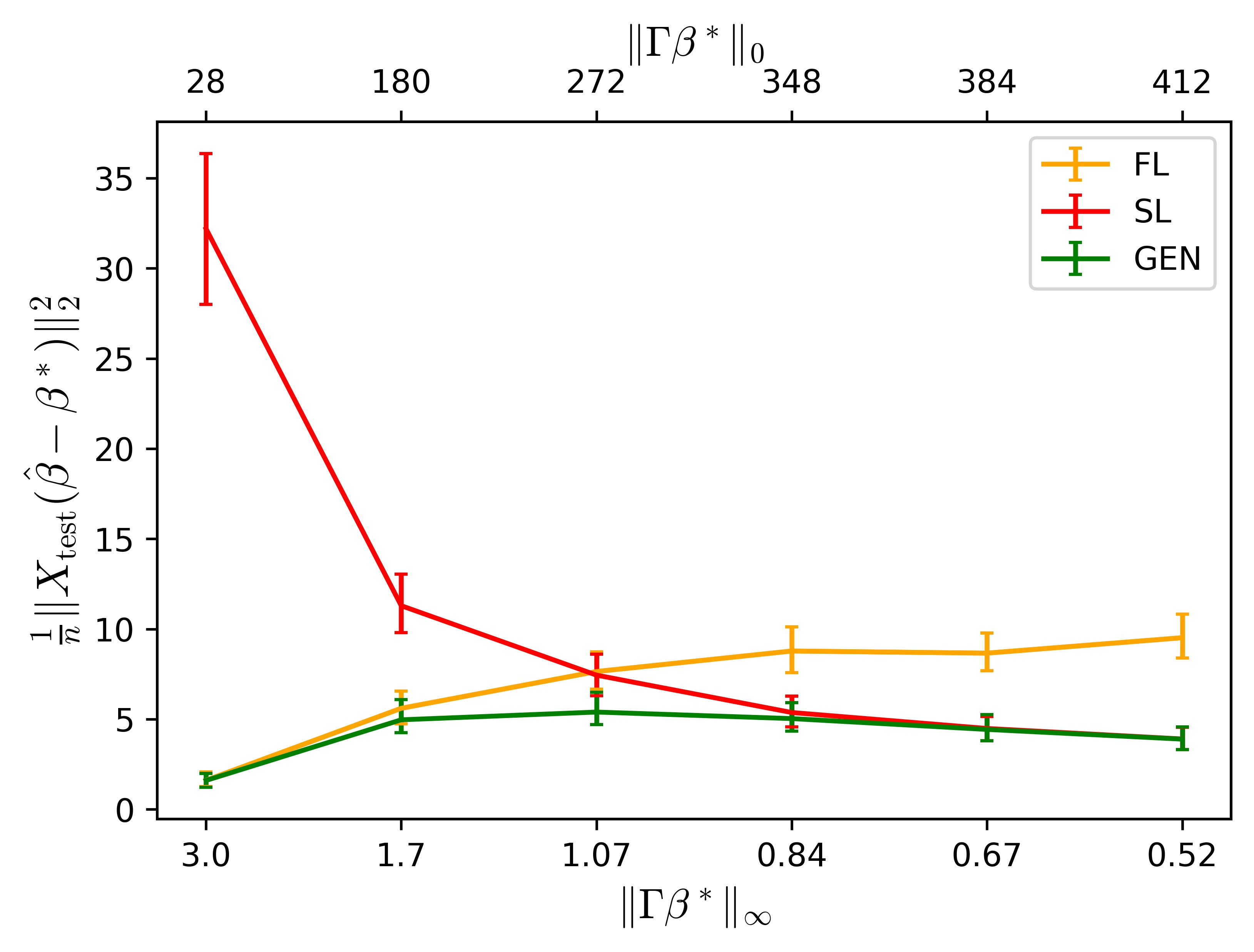}
\caption{2D grid}
\end{subfigure}

\begin{subfigure}[b]{\textwidth}
    \centering
\includegraphics[width = 0.49\linewidth]{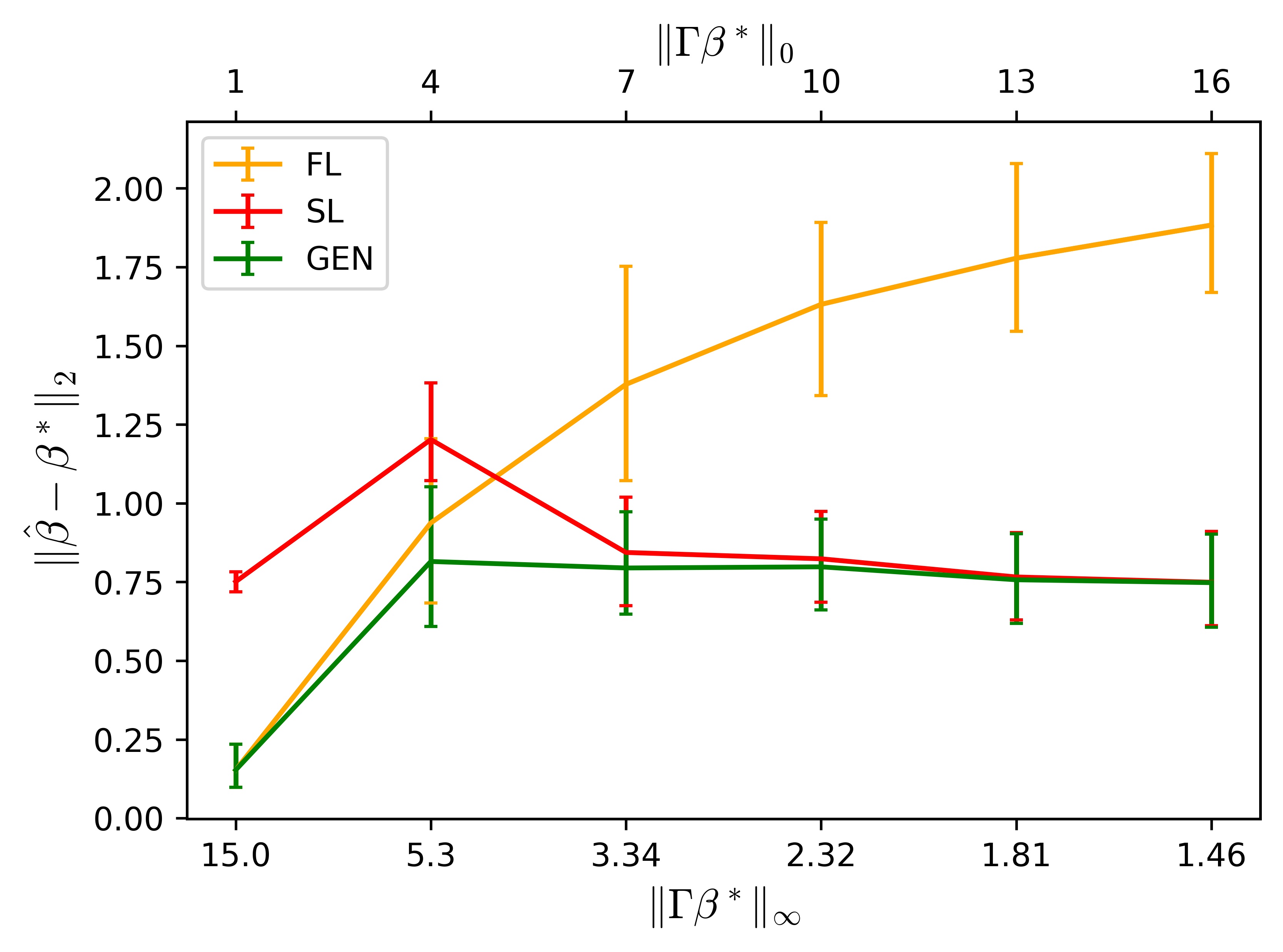}
    \hfill
\includegraphics[width = 0.49\linewidth]{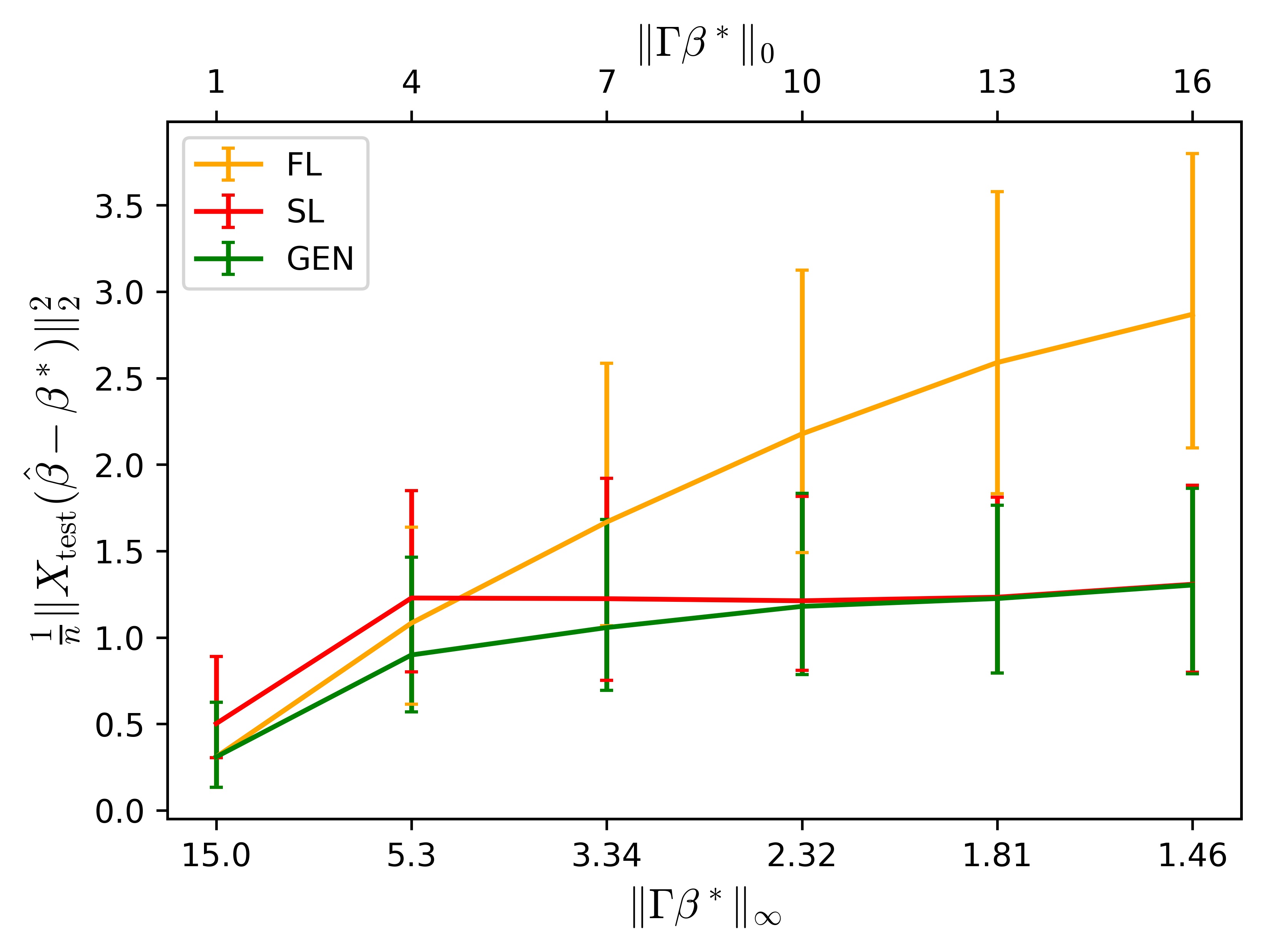}
\caption{Barbell}
\end{subfigure}

    \caption{\textit{Prediction and estimation errors for three graphs as $\|\Gamma\beta^*\|_\infty$ and $\|\Gamma\beta^*\|_0$ vary. Results are based on 500 resamplings. Vertical bars for each true signal connect the $25^\text{th}$ and $75^\text{th}$ percentiles. The lines labeled by FL, SL and GEN connect the medians of errors. 1D chain: $n = 70, p = 110$. 2D grid: $n = 150, p = 225$. Barbell graph: $n = \frac{2}{3}p$, and $p = 60, 63, 66, 69, 72, 75$.}}
    \label{fig:adaptivitys}
\end{figure}

\begin{figure}[H]
    \centering
\begin{subfigure}[b]{0.49\textwidth}
    \centering
 \includegraphics[width=\textwidth]{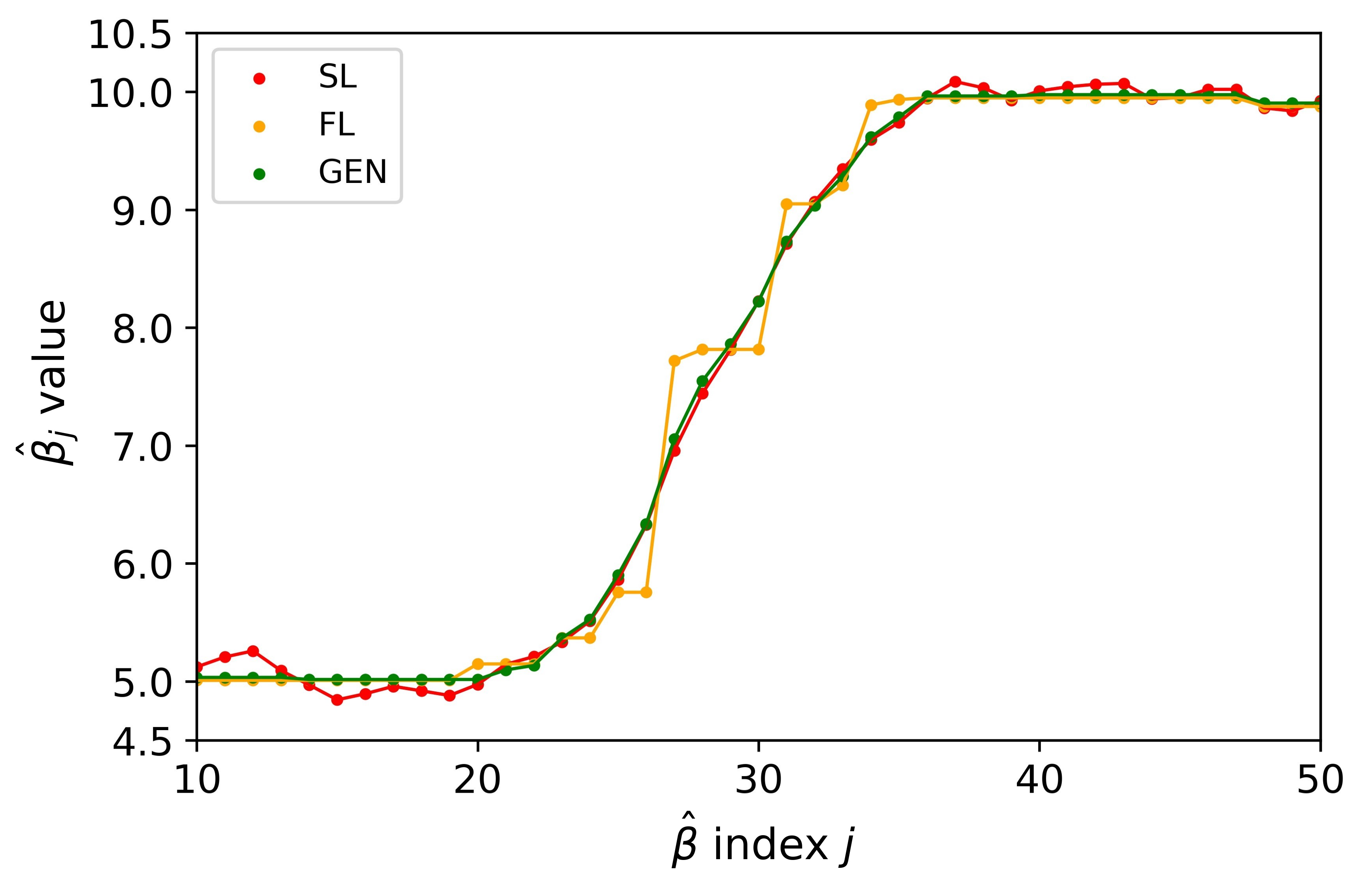}   
\end{subfigure}
   \hfill
\begin{subfigure}[b]{0.49\textwidth}
    \centering
 \includegraphics[width=\textwidth]{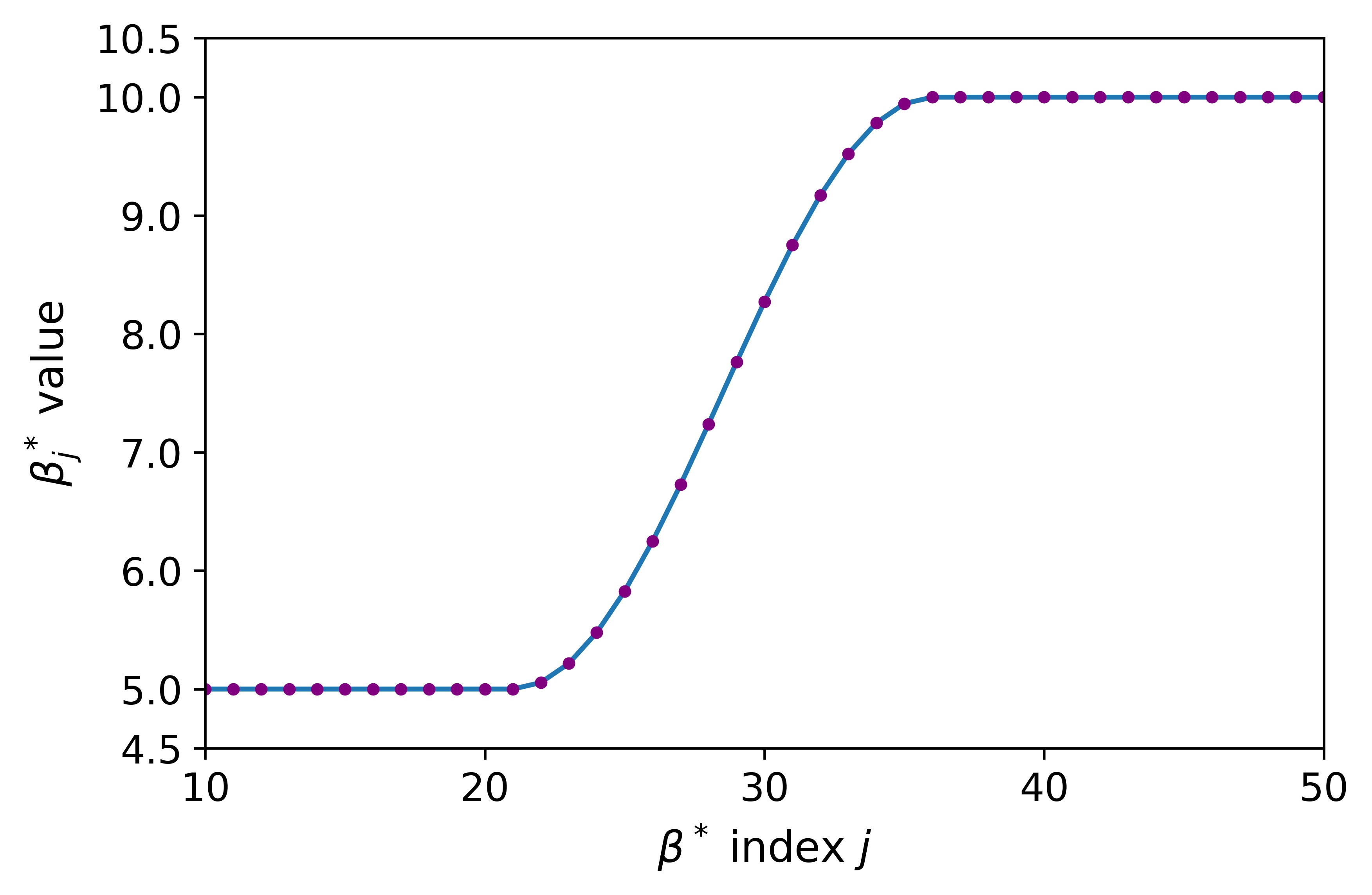}  
\end{subfigure}
    \caption{\textit{Left: estimated signals obtained from FL, SL and GEN. Right: true signal. GEN recovers the true signal well in both the constant and the smoothly increasing regions.}}
    \label{fig:1D_slice}
\end{figure}



We also examine the performances of FL, SL and GEN as features become more correlated and thus $\Sigma$ becomes more ill-conditioned. Figure \ref{fig:toeplitz_rho} shows the estimation errors for the chain graph when $\Sigma$ is the identity matrix (which is the limit of the Toeplitz covariance matrix as $\rho \to 0$) and when $\Sigma$ is the Toeplitz covariance matrix with $\rho = 0.95$. From our theoretical results, we expect that when the features are highly correlated, the performance of the standalone $\lambda_1\|\Gamma\beta\|_1$ penalty (FL) should be negatively affected by the ill-conditioned nature of $\Sigma$. However, the $\lambda_2\|\Gamma\beta\|_2^2$ penalty should improve the minimum eigenvalue term in the denominators of our error bounds, especially when $\|\Gamma\beta^*\|_\infty$ is small and $\lambda_2$ can be chosen to be larger. Such an improvement is not as noticeable when $\Sigma$ is the identity, which is already well-conditioned. 

\begin{figure}[H]
    \centering
\begin{subfigure}{0.49\textwidth}
    \centering
 \includegraphics[width=\textwidth]{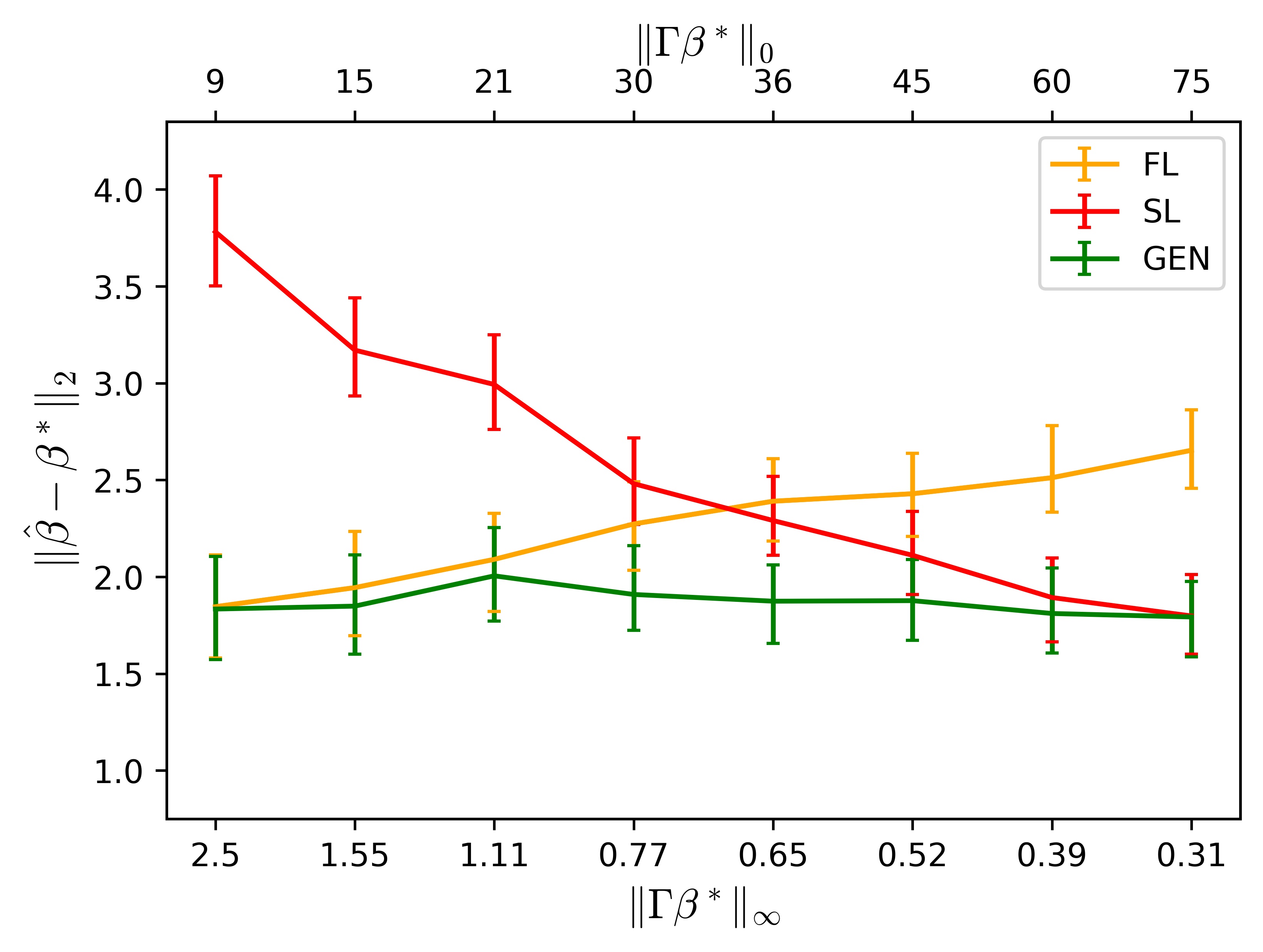}   
\end{subfigure}
   \hfill
\begin{subfigure}{0.49\textwidth}
    \centering
 \includegraphics[width=\textwidth]{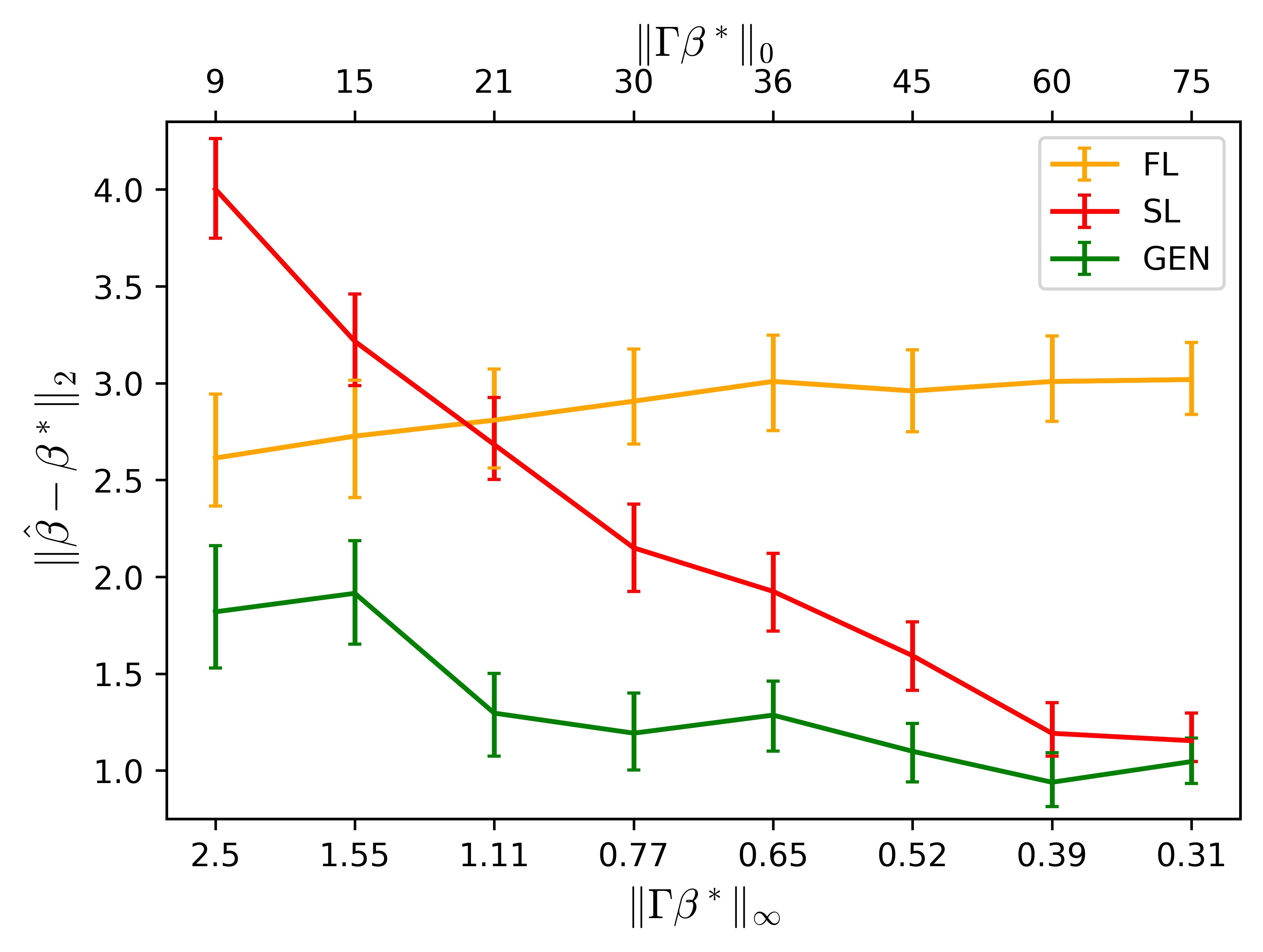}
 \end{subfigure}
    \caption{\textit{Side-by-side comparison of the estimation errors for the chain graph when $\Sigma$ is the identity matrix (left) and when $\Sigma$ has the Toeplitz structure with $\rho = 0.95$ (right). $\|\Gamma\beta^*\|_1$ is fixed at $15$. Note the greater divergence between the estimation errors of FL and GEN when there is higher correlation.}}
    \label{fig:toeplitz_rho}
\end{figure}   

\subsubsection{Performance comparisons as $n$ and $p$ vary} \label{sec:consistency}

This section examines the performance of GEN relative to all other methods as $n$ is fixed and $p$ increases, or as $p$ is fixed and $n$ increases. The covariance matrix $\Sigma$ is constructed as in Section \ref{sec:params_choice}, and the graphs we use are again the chain graph, the 2D grid and the barbell graph. The true signal $\beta^*$ is again not sparse, but contains a mix of piecewise constant regions and smoothly varying regions on the graph $G$ (similar to the true signals with intermediate values of $\|\Gamma\beta^*\|_\infty$ and $\|\Gamma\beta^*\|_0$ in Figure \ref{fig:1D_signals} and Figure \ref{fig:2D_signals}). Note that we do include the high-dimensional setting when $n$ is smaller than $p$. As described in Section \ref{sec: tuning}, hyperparameters for all methods are chosen based on the best prediction scores in cross-validation. We only report the estimation errors in Figure \ref{fig:consistency}, since the prediction errors for all methods show the same trends.  

\begin{figure}[H]
    \centering
\begin{subfigure}[b]{\textwidth}
    \centering
 \includegraphics[width=0.49\linewidth]{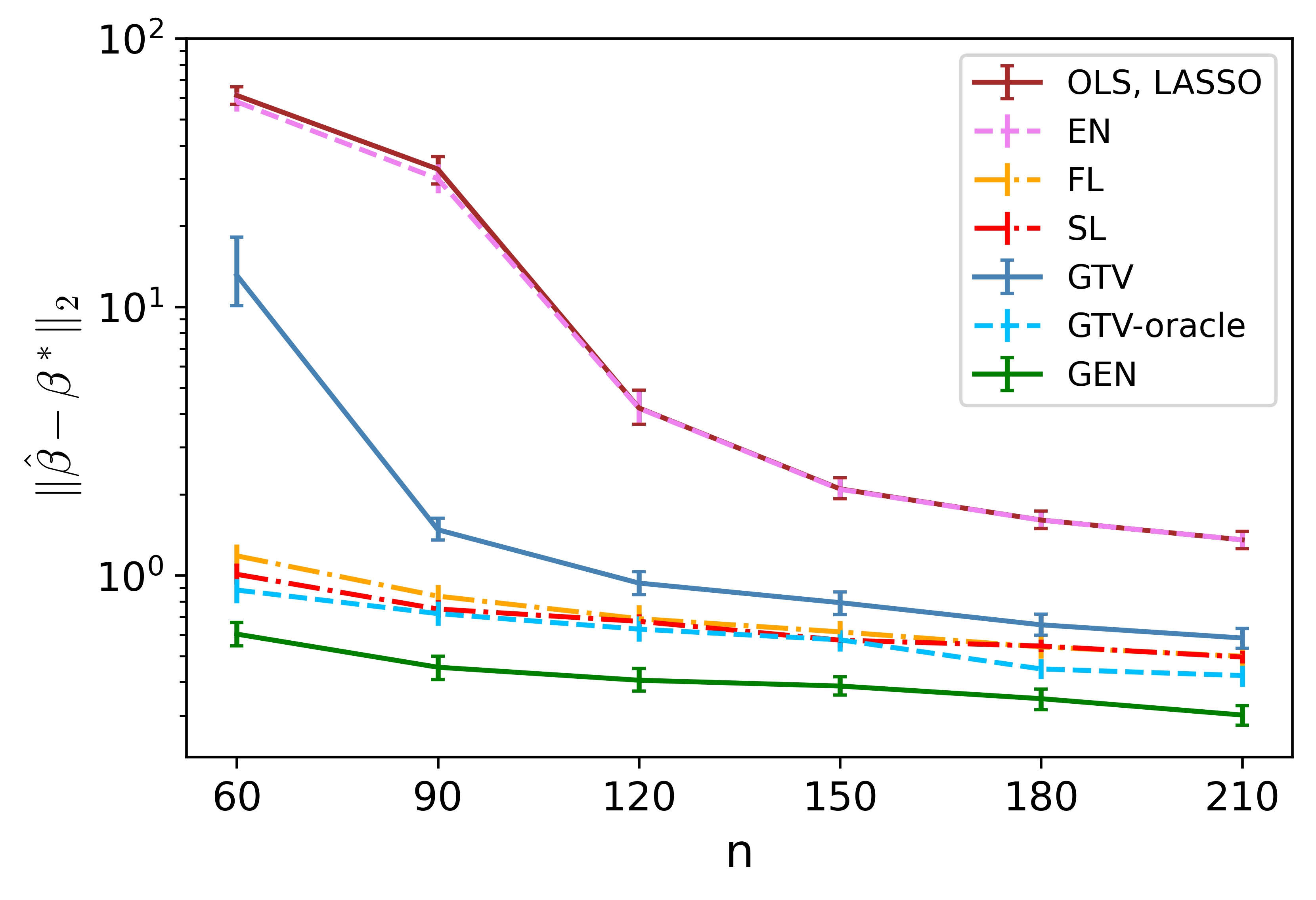}   
   \hfill
 \includegraphics[width=0.49\linewidth]{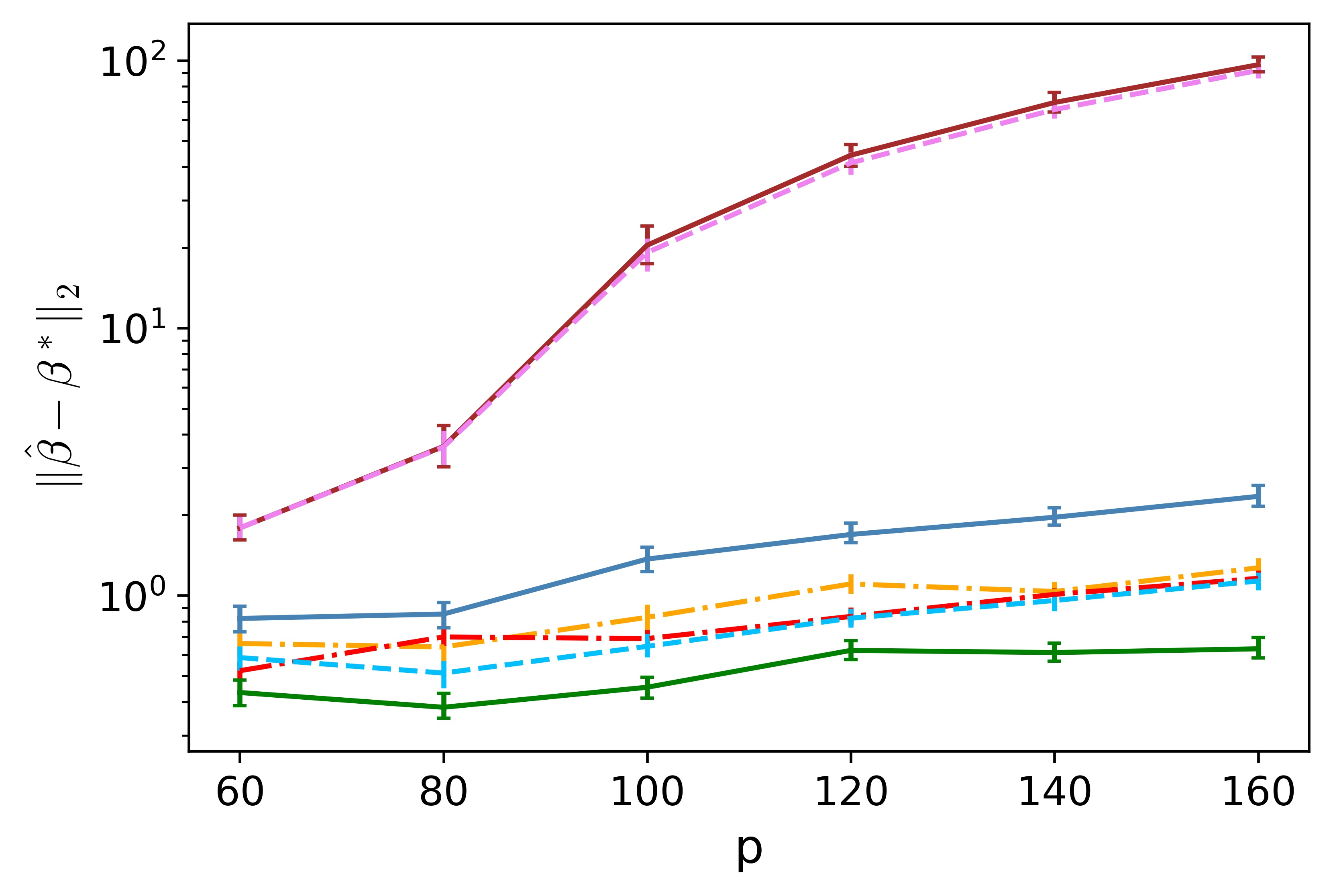}
 \caption{1D chain}
\end{subfigure}
\begin{subfigure}[b]{\textwidth}
    \centering
 \includegraphics[width=0.49\linewidth]{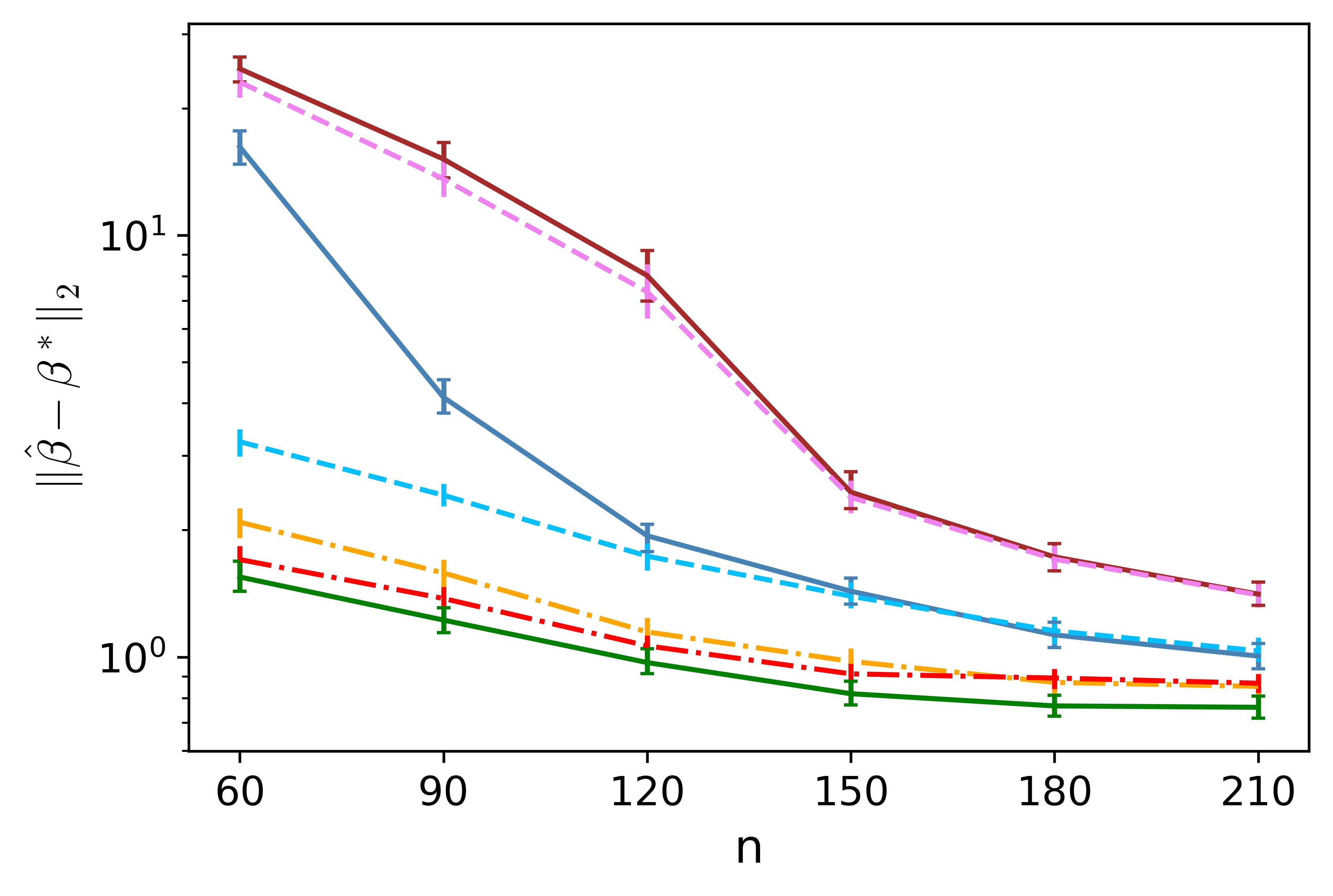}   
   \hfill
 \includegraphics[width=0.49\linewidth]{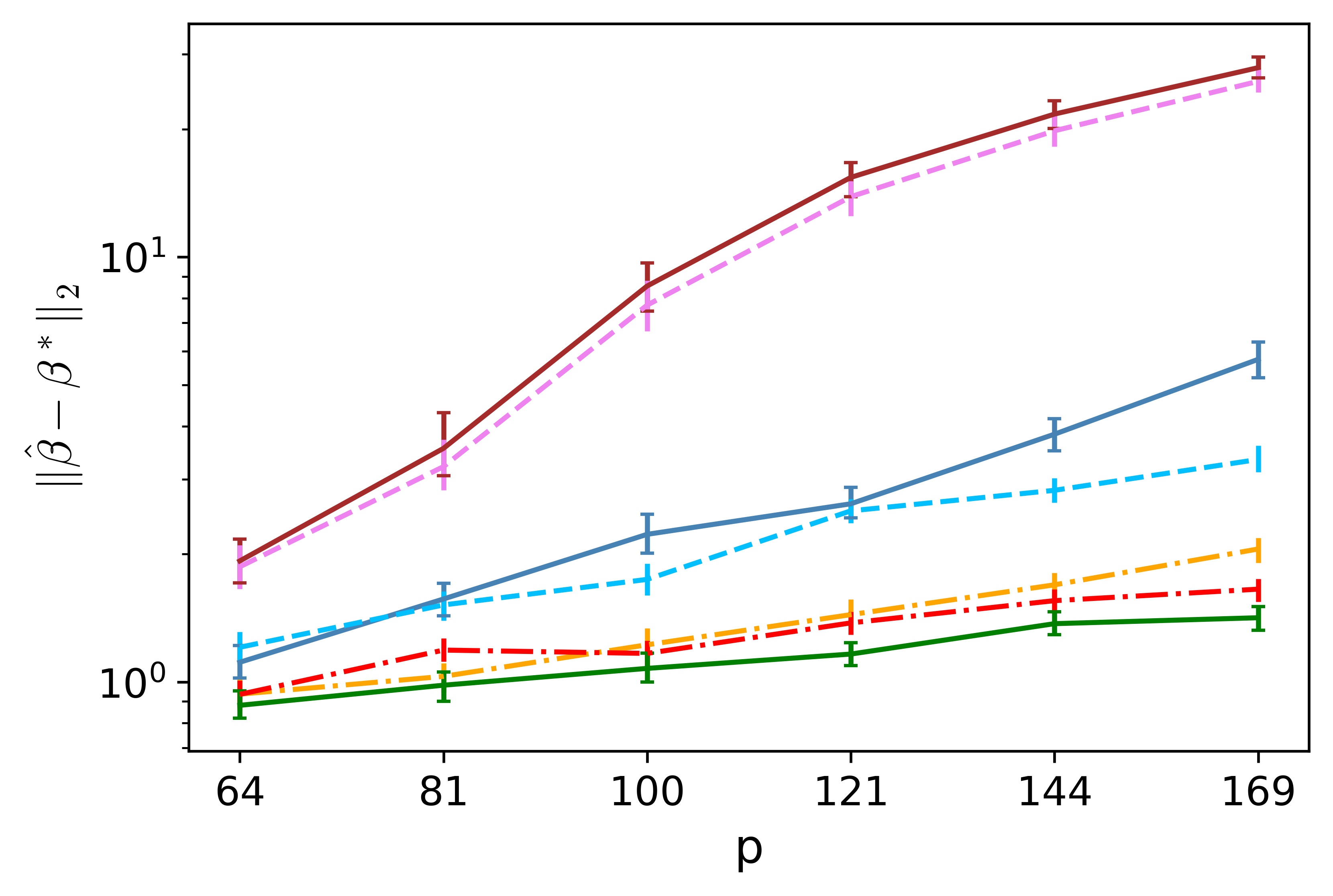}
 \caption{2D grid}
\end{subfigure}
\begin{subfigure}[b]{\textwidth}
    \centering
 \includegraphics[width=0.49\linewidth]{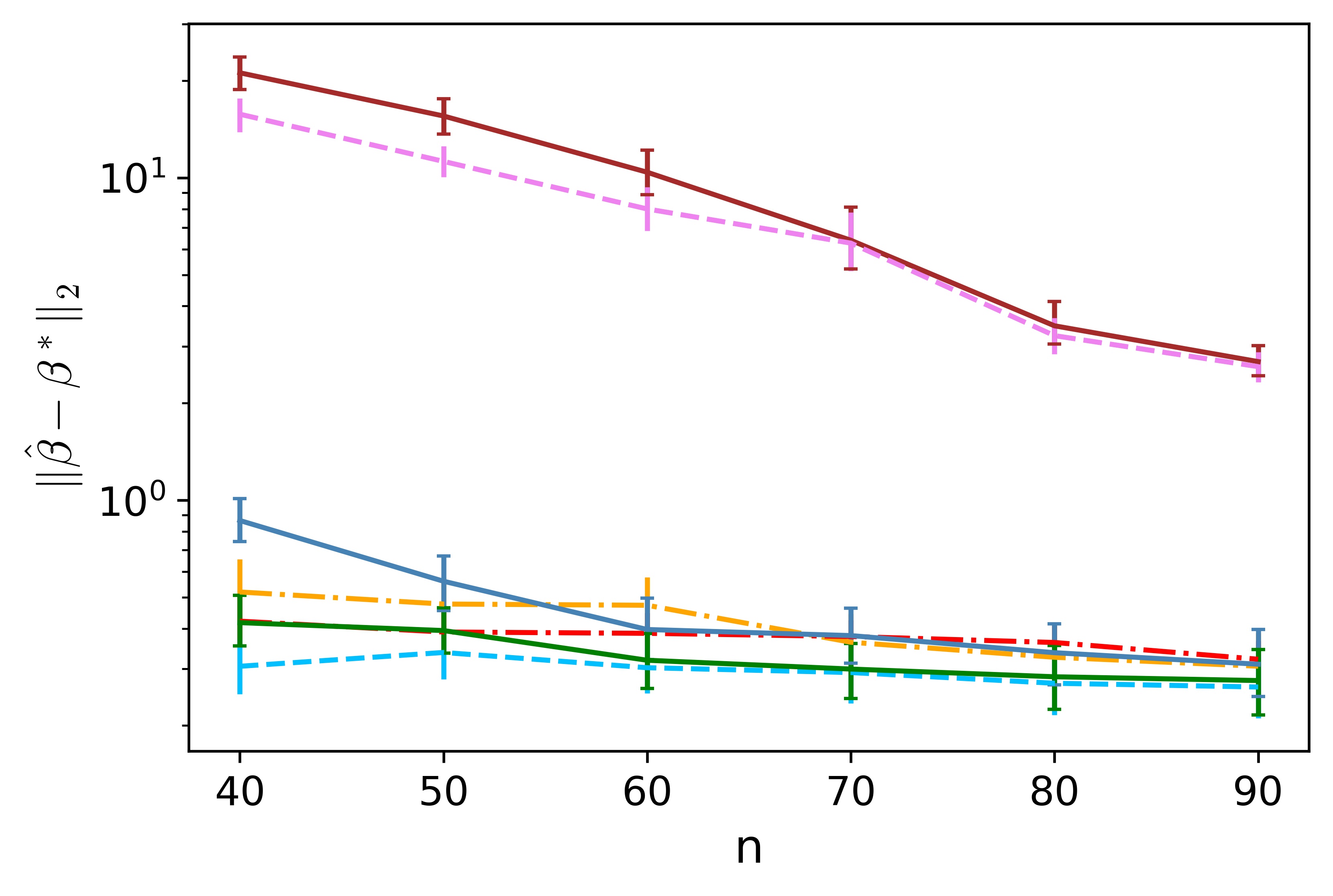}   
   \hfill
 \includegraphics[width=0.49\linewidth]{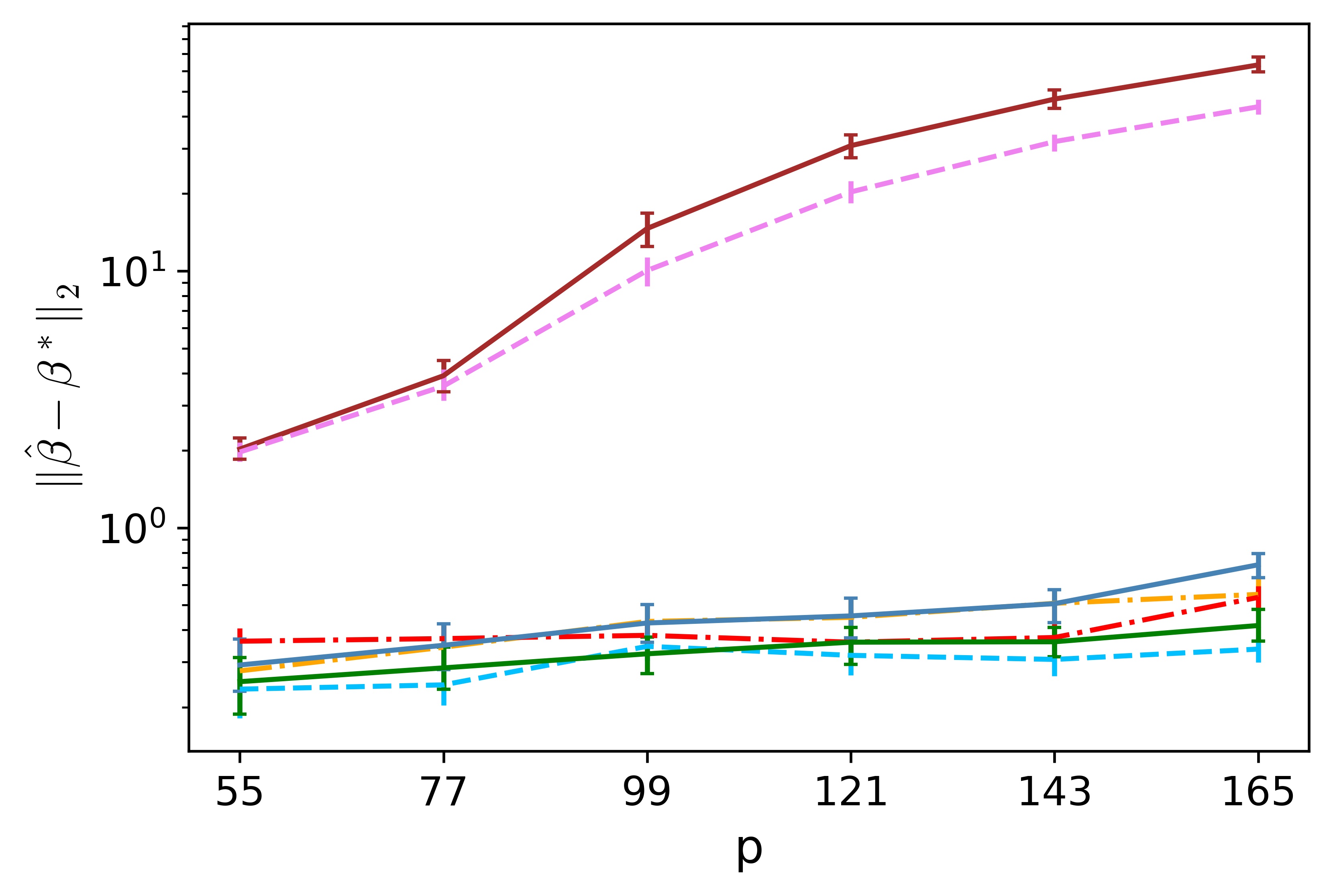} 
 \caption{Barbell}
\end{subfigure}
    \caption{\textit{
    Estimation errors (reported on the log scale) based on 500 resamplings for all estimators as $p$ is fixed ($p = 110$ for chain graph, $p=121$ for 2D grid, $p=66$ for barbell graph) but $n$ increases (left), and as $n=90$ is fixed but $p$ increases (right). $\sigma = 1$ is fixed, and in each plot $\|\Gamma\beta^*\|_\infty$ is kept roughly constant. CV yields $\lambda_L$ identically equal to zero for the Lasso estimator, and thus its performance coincides with that of OLS.}}
    \label{fig:consistency}
\end{figure}

As we can see from Figure \ref{fig:consistency}, GEN consistently has the best performance in terms of estimation errors (except for the barbell graph when GTV-oracle performs slightly better for some values of $(n,p)$). With regard to graph-independent methods, OLS and the Lasso clearly fail to perform well in our setting, and EN only provides limited improvements in terms of estimation errors. FL and SL, whose penalties do take into account the graph $G$, perform significantly better than the previous three methods but never better than GEN. The performance of GTV, whose penalty depends on the covariance estimate $\hat{\Sigma}$, is not consistent; it can be reasonable for the barbell graph but, for the other two graphs, is not very different from OLS, EN and the Lasso for certain values of $(n,p)$. Interestingly, the performance of GTV-oracle can surpass that of GEN for the barbell graph; this can be attributed to the fact that $\Sigma$ is constructed to reflect the graph structure and hence is a good estimate for the graph $G$ itself. The divergence between the estimation errors of GTV and GTV-oracle therefore suggests that the covariance estimation error is not negligible, especially when $p$ is small relative to $n$. Note that GTV requires much more time than other methods for hyperparameter selection and model training, as we have discussed in Section \ref{sec: tuning}.

\subsubsection{Performance comparisons when $\beta^*$ is both sparse and smooth over $G$}\label{sec:contiguous_zeros} In Section \ref{sec:adaptivity} and Section \ref{sec:consistency}, we have compared the performances of various estimators when $\beta^*$ is dense. We now consider the case when $G$ is the chain graph, and $\beta^*$ is sparse and has small variations in its successive entries, as illustrated in the left plot of Figure \ref{fig:zero_smooth}. Such a signal structure should be more favorable to either FL or SL, and we expect at least one of them to outperform GEN in this case. 

However, as can be seen in Table \ref{tab:zeros}, GEN still has the best performance compared to all other estimators. FL and SL perform better than the Lasso estimator, which in turn is better than OLS as expected. Certainly, such a strong performance relative to the other estimators may depend on the smoothness and sparsity levels of the true signal. Nonetheless, this example clearly demonstrates that effectively leveraging the true signal's smoothness over $G$ can be more important in reducing prediction and estimation errors than exploiting its sparsity structure. 

\begin{figure}[H]
\centering
\begin{subfigure}{0.49\textwidth}
\centering
\includegraphics[width = \textwidth]{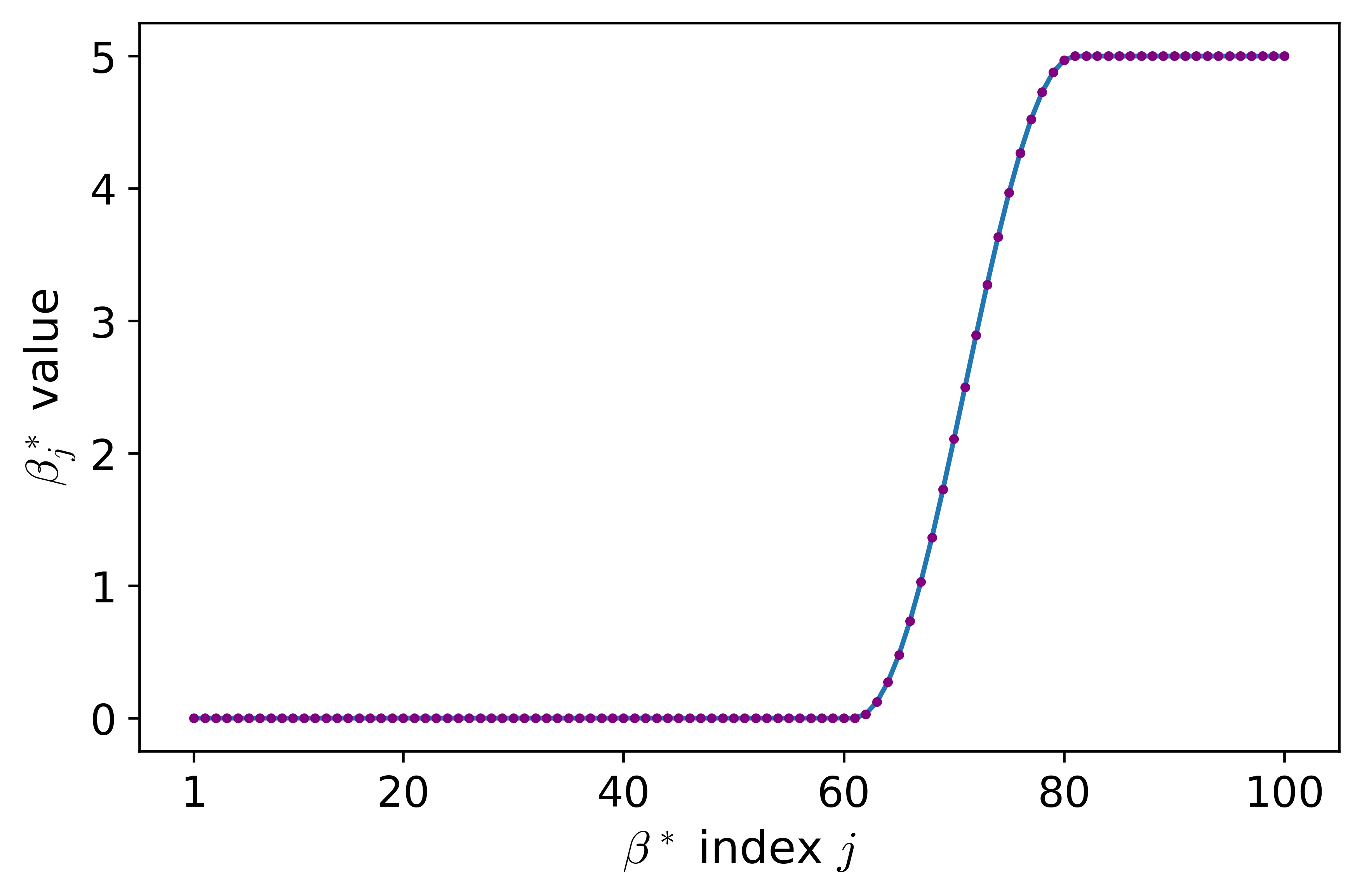}
\end{subfigure}
\hfill
\begin{subfigure}{0.49\textwidth}
\centering
\includegraphics[width = \textwidth]{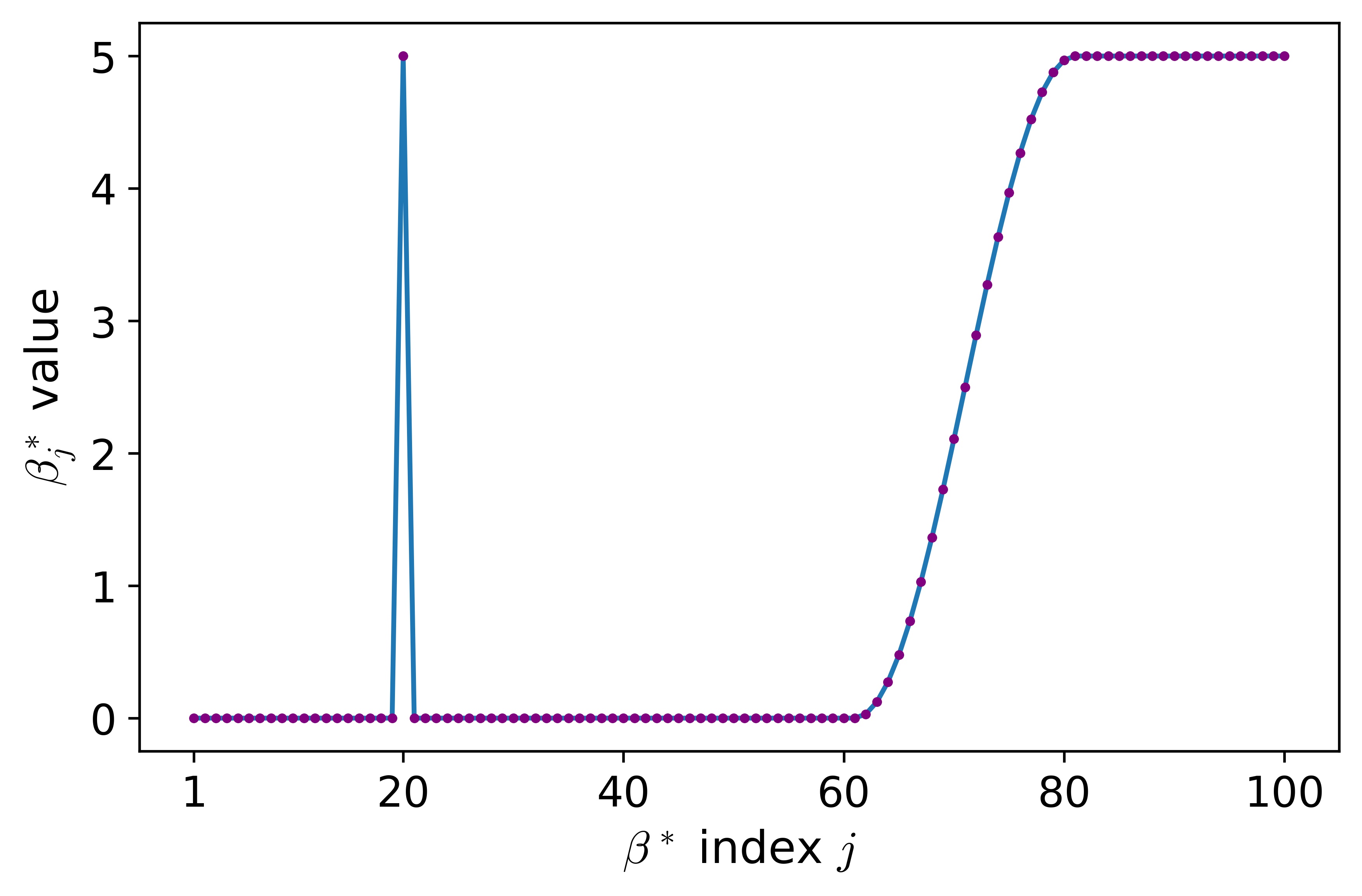}
\end{subfigure}
\caption{\textit{Left: Sparse and smooth signal with $p = 100$, $\|\beta^*\|_0 = 40$, $\|\Gamma\beta^*\|_\infty = 0.39$. Right: The left signal is modified to include a spike, so that $\|\Gamma\beta^*\|_\infty$ increases to 5. We use $\sigma = 1, n = 80$ and the Toeplitz covariance matrix with $\rho = 0.5$ for $\Sigma$ in this section.}}
\label{fig:zero_smooth}
\end{figure}

We also consider a slight modification to the previous example, so that we have a sharp spike in the zero region of the signal. Adding a single spike should not significantly change the radius $R_q$ of the $\ell_q$-ball to which $\Gamma\beta^*$ belongs. However, since $\|\Gamma\beta^*\|_\infty$ is now much larger, CV yields $\lambda_2$ identically equal to zero, and GEN degenerates into FL with $\lambda_L = 0$. As a result, FL performs better than GEN (and so does GTV-oracle), although the deterioration of GEN's performance is not drastic and GEN still performs better than EN, SL, GTV and the Lasso. It is therefore a question of interest for future research whether we can replace the $\ell_2$ component of GEN with another penalty that is more robust to signal spikes, while retaining the benefits of having the $\ell_2$ component as discussed in Section \ref{sec:adaptivity}.

\begin{table}[H]
    \caption{Prediction and estimation errors for the true signals in Figure \ref{fig:zero_smooth}; `L' and `R' denote errors for the left and right true signals respectively. The mean and standard deviation of the errors based on 500 resamplings are shown below. Errors better than GEN's errors are shown in orange.}
    \centering
        \begin{tabularx}{\textwidth}{|>{\centering\arraybackslash}X||>{\centering\arraybackslash}X|>{\centering\arraybackslash}X|>{\centering\arraybackslash}X|>{\centering\arraybackslash}X|>{\centering\arraybackslash}X|>{\centering\arraybackslash}X|>{\centering\arraybackslash}X|>{\centering\arraybackslash}X|}
\hline
 & \textbf{OLS} & \textbf{L} & \textbf{EN} & \textbf{FL} & \textbf{SL} & \textbf{GTV} & \textbf{GTV-oracle} & \textbf{GEN} \\
\hline \hline
L: Est. errors & $6.92 \pm 1.05$ & $1.98 \pm 0.36$ &$1.98 \pm 0.36$ & $0.56 \pm 0.08$ & $0.40 \pm 0.06$& $0.87 \pm 0.13$ & $0.38 \pm 0.06$ & $0.27 \pm 0.05$ \\
\hline
L: Pred. errors & $38.12 \pm 15.33$& $2.94 \pm 1.35$ & $2.94\pm 1.35$ & $0.28 \pm 0.12$ & $0.33 \pm 0.14$ & $0.72 \pm 0.26$ & $0.21\pm0.11 $& $0.15\pm 0.08$ \\
\hline \hline
R: Est. errors & $7.25 \pm 1.10$&$2.95\pm0.59$ &$2.95\pm0.59$ &\textcolor{orange}{$0.88\pm0.17$} &$1.75\pm0.30$ &$1.31\pm0.22$ &\textcolor{orange}{$0.75\pm 0.12$} &$0.97\pm0.18$ \\
\hline
R: Pred. errors & $41.75 \pm 15.95$ & $6.19 \pm 2.93$ & $6.19 \pm 2.93$ & \textcolor{orange}{$0.67\pm 0.26$} & $2.37 \pm 0.93$&$1.44 \pm 0.50$ &\textcolor{orange}{$0.49 \pm 0.19$} & $0.87 \pm 0.32$ \\
\hline
\end{tabularx}
    \label{tab:zeros}
\end{table}

\subsection{Empirical study of the quantity \texorpdfstring{$\gmin\left(\frac{1}{64}\Sigma+\lambda_2L\right)$}{gmin}} \label{sec:min_eig} In Section \ref{sec: theory}, if $\Sigma$ is ill-conditioned and we cannot assume $\gmin(\Sigma)$ is bounded away from zero, then we assume that $\gmin\left(\frac{1}{64}\Sigma+\lambda_2L\right)$ may be greater than $c\lambda_2$ or $c\sqrt{\lambda_2}$. These assumptions lead to the bounds \eqref{eq:lambda2} and \eqref{eq:lambda2-est}, which may allow for consistency in prediction and estimation respectively.  

\begin{figure}[H]
\centering
\begin{subfigure}{0.32\textwidth}
\centering
\includegraphics[width = \textwidth]{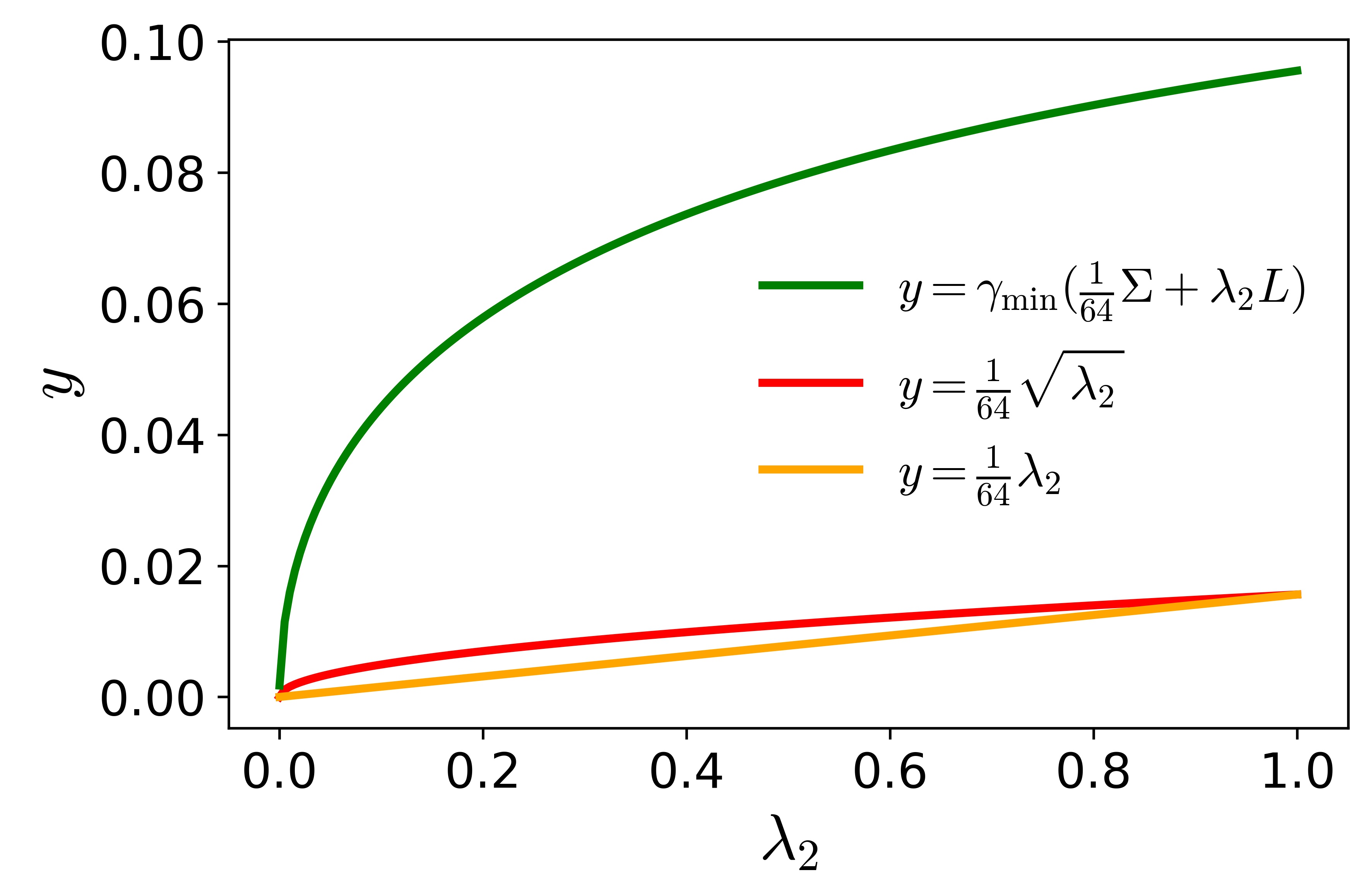}
\caption{Chain graph; $\rho = 0.8$}
\end{subfigure}
\hfill
\begin{subfigure}{0.32\textwidth}
\centering
\includegraphics[width = \textwidth]{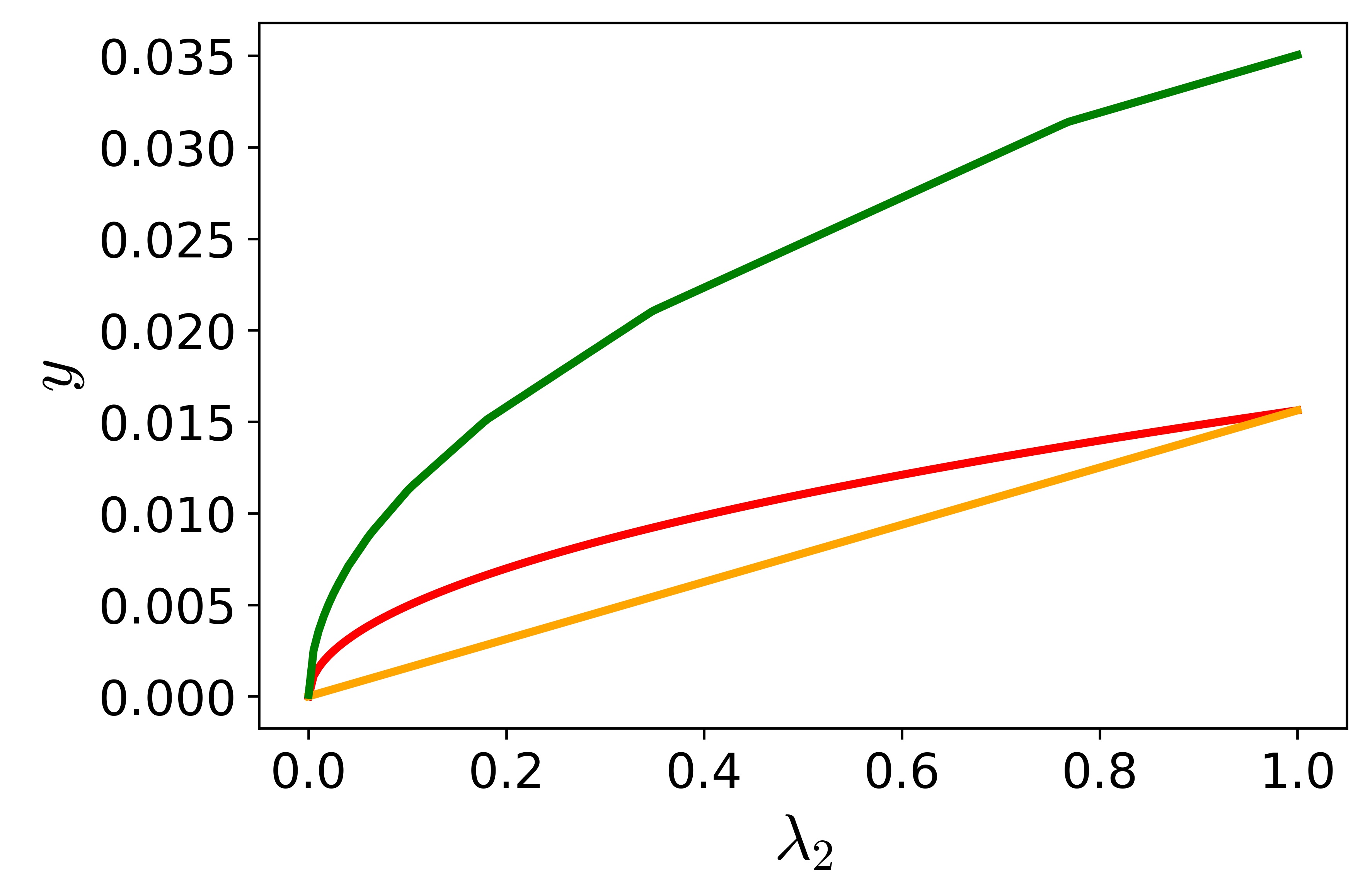}
\caption{Chain graph; $\rho = 0.99$}
\end{subfigure}
\hfill
\begin{subfigure}{0.32\textwidth}
\centering
\includegraphics[width = \textwidth]{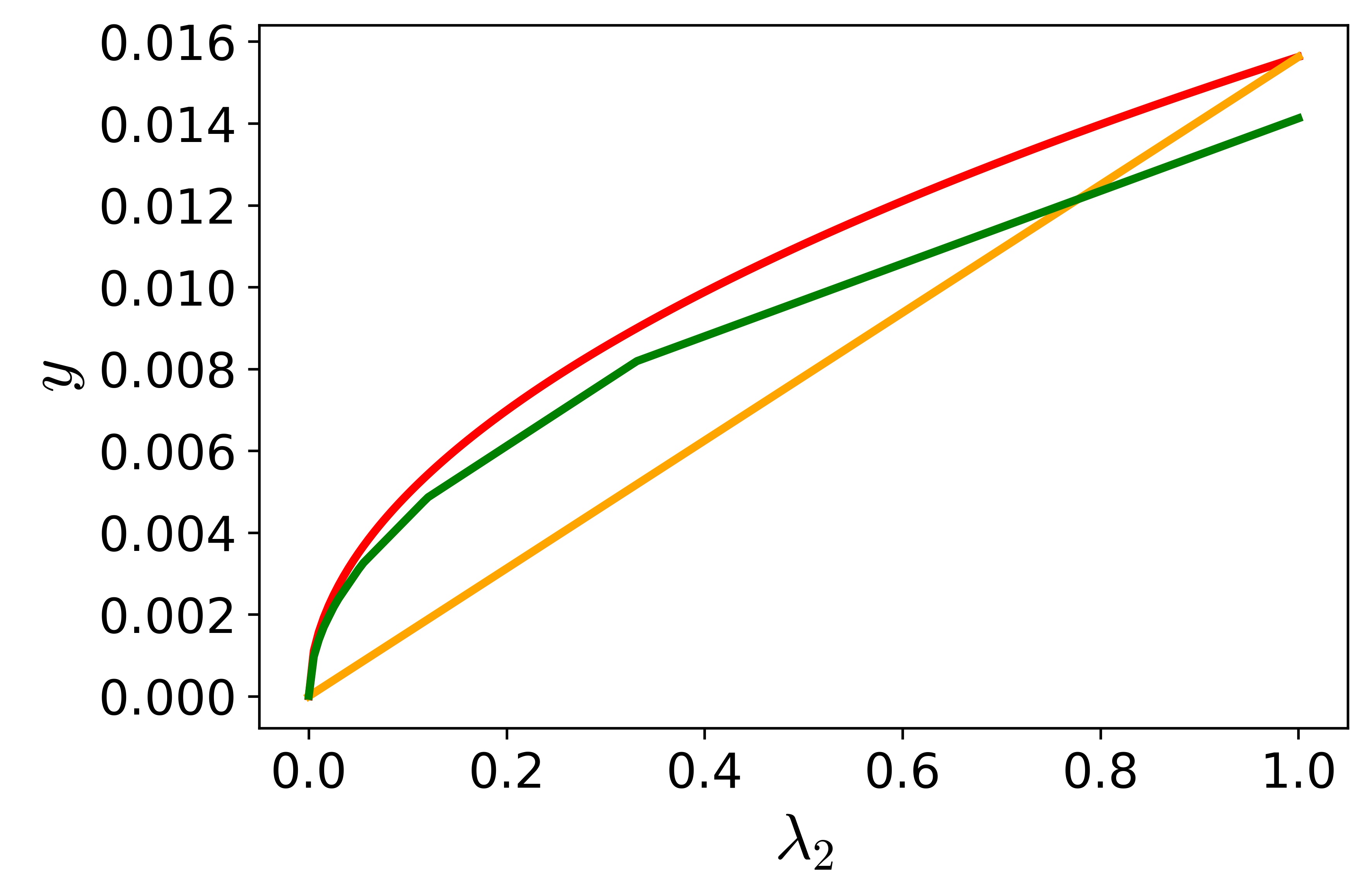}
\caption{Chain graph; $\rho = 0.9985$}
\end{subfigure}

\begin{subfigure}{0.32\textwidth}
\centering
\includegraphics[width = \textwidth]{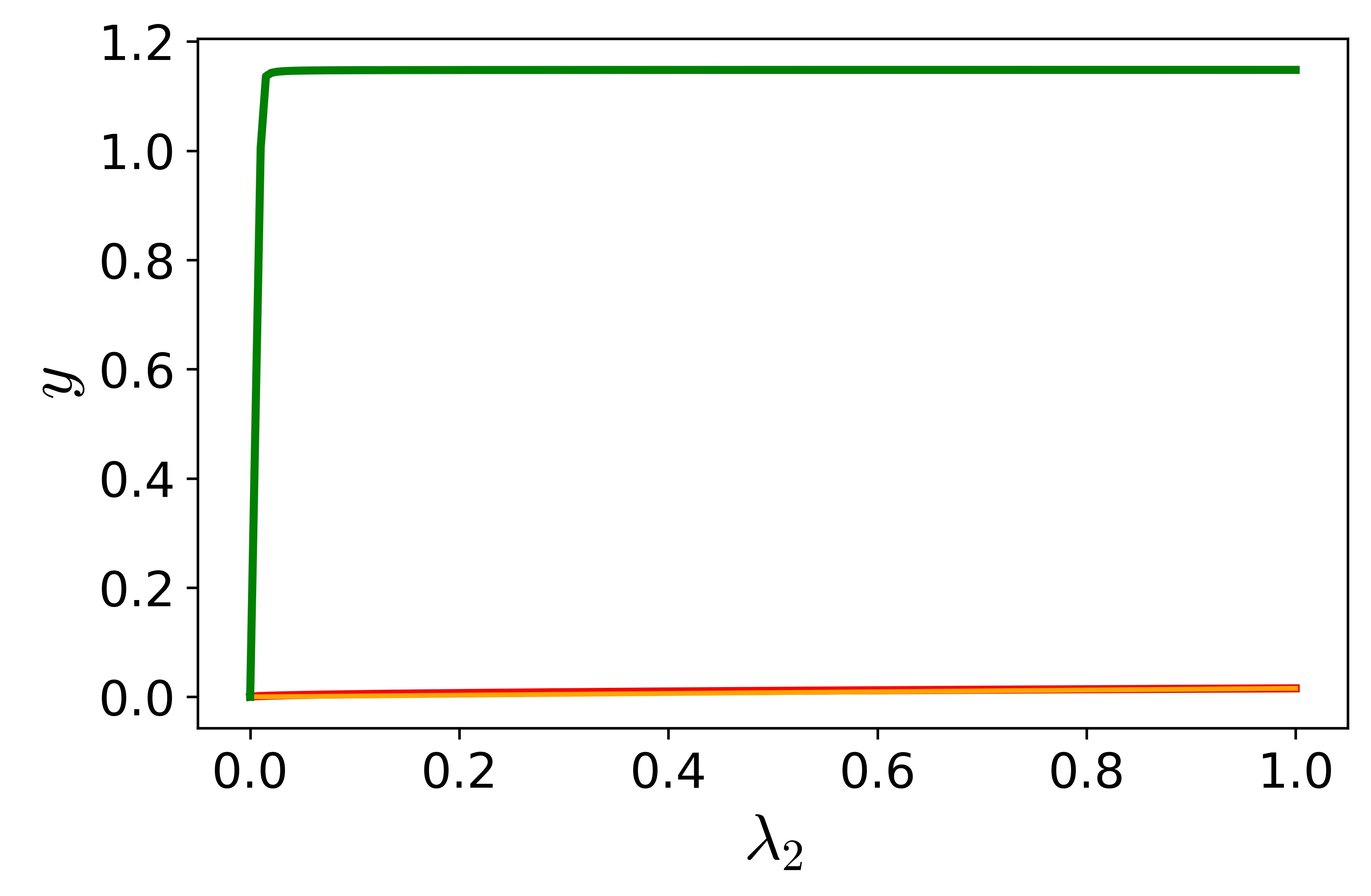}
\caption{Complete graph; $\rho = 0.99$}
\end{subfigure}
\hfill
\begin{subfigure}{0.32\textwidth}
\centering
\includegraphics[width = \textwidth]{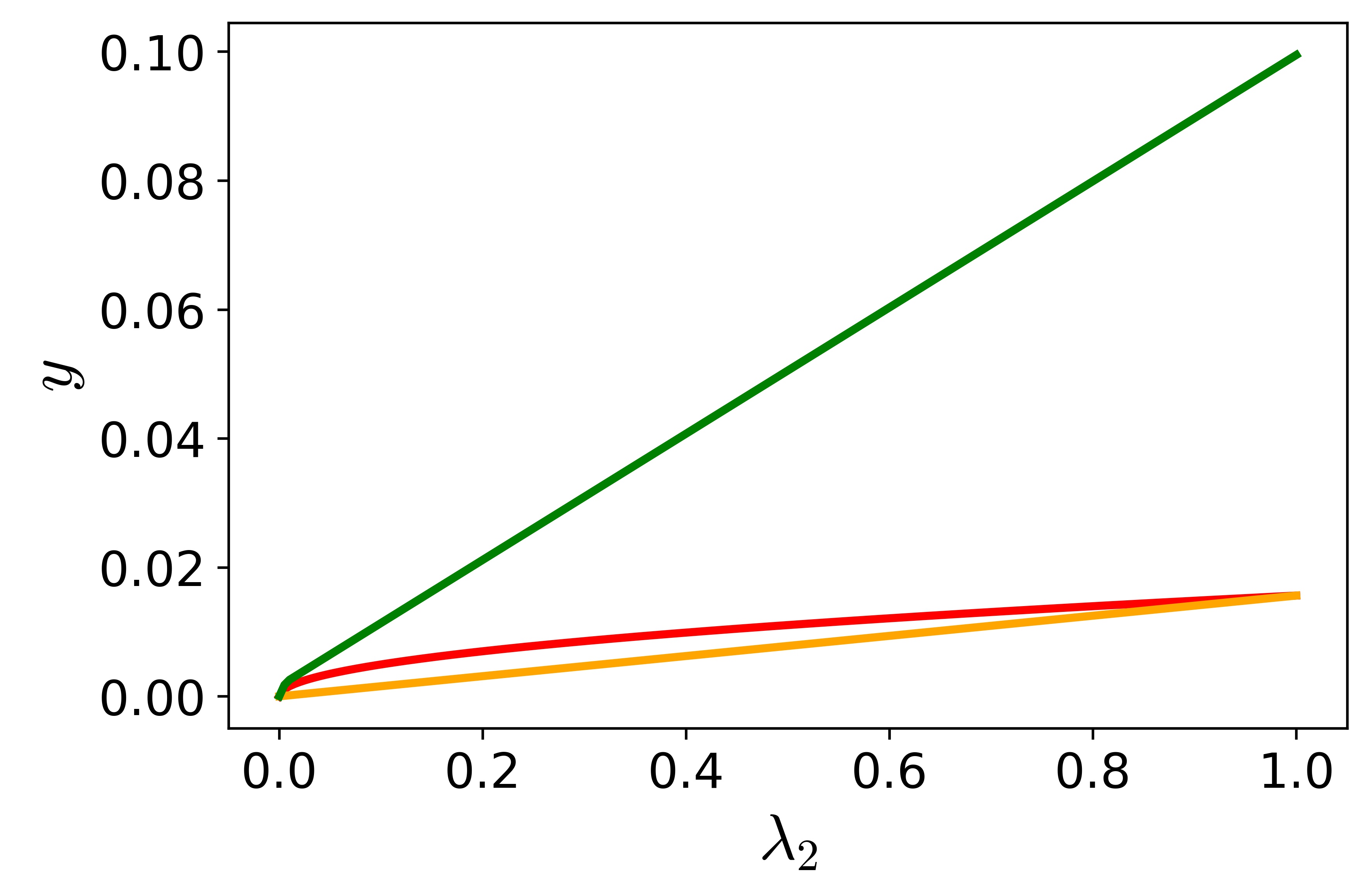}
\caption{2D grid; $\Sigma$ from $L + 10^{-4} I_p$}
\end{subfigure}
\hfill
\begin{subfigure}{0.32\textwidth}
\centering
\includegraphics[width = \textwidth]{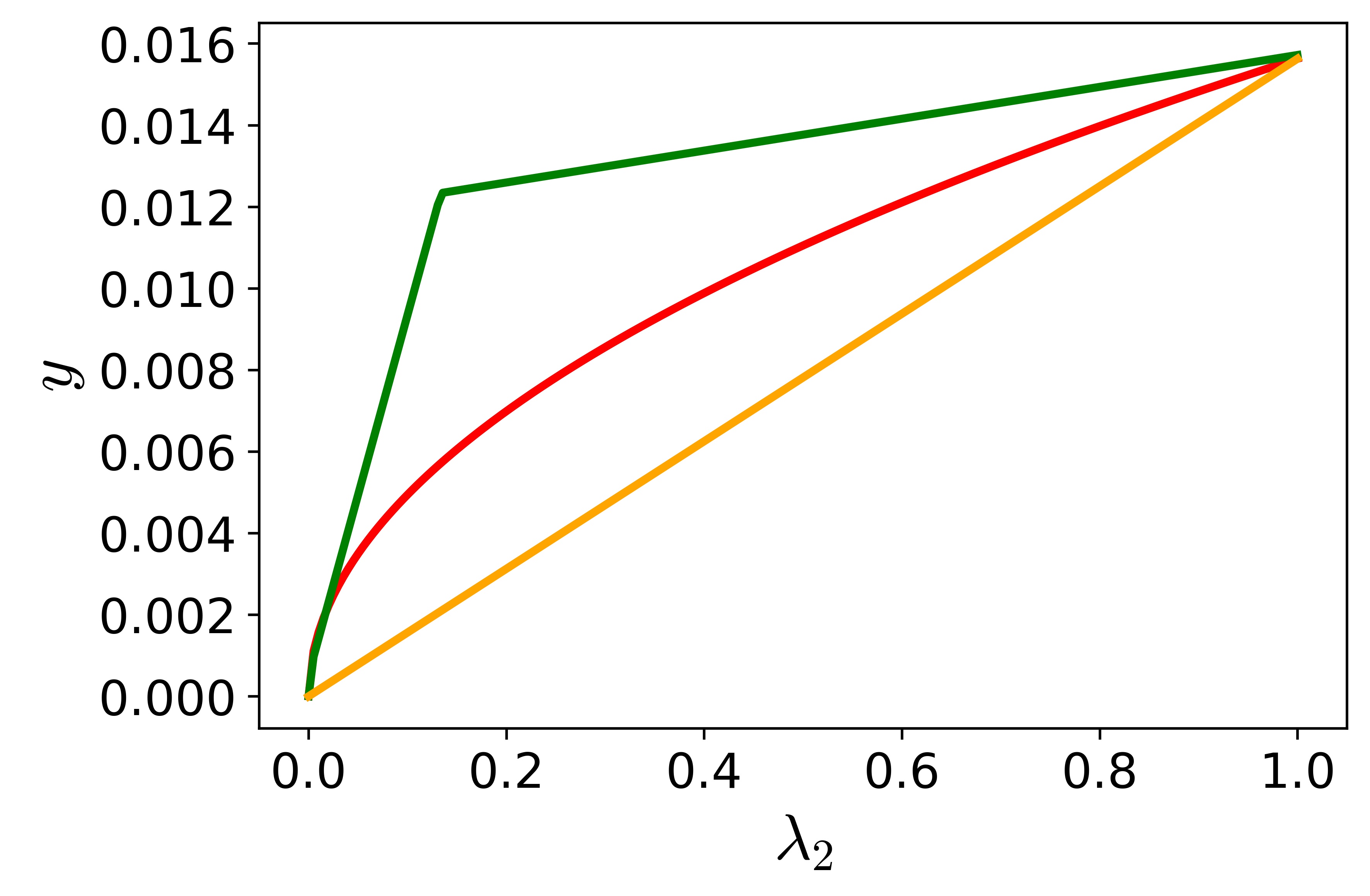}
\caption{Barbell; $\Sigma$ from $L + 3\times 10^{-4} I_p$}
\end{subfigure}
    \caption{\textit{Growth of $\gamma_{\min}(\frac{1}{64}\Sigma + \lambda_2 L)$ as a function of $\lambda_2$ for various choices of $\Sigma$ and $G$ ($p=100$ for all plots). In (a), (b) and (c), $G$ is the chain graph and $\Sigma$ has Toeplitz structure with varying $\rho$. In (d), $G$ is the complete graph and $\Sigma$ has Toeplitz structure. In (e) and (f), we use the 2D grid and barbell graph, with corresponding (and highly correlated) covariance structures as in Section \ref{sec:params_choice}.}} \label{fig:min_eig}
\end{figure}

We conjecture that $\gmin\left(\frac{1}{64}\Sigma+\lambda_2L\right) \geq \frac{1}{64}\lambda_2$ holds for all $\lambda_2 \in [0,1]$ under reasonable assumptions about $(\Sigma, L)$; this implies $\gmin\left(\frac{1}{64}\Sigma+\lambda_2L\right) \geq\frac{1}{64}\min(\lambda_2, 1)$. Figure \ref{fig:min_eig} shows the growth of the quantity $\gmin\left(\frac{1}{64}\Sigma+\lambda_2L\right)$ as a function of $\lambda_2$, for the various types of graphs and covariance matrices we have considered in Section \ref{sec: experiments}. When $G$ is the chain graph and $\Sigma$ has the Toeplitz structure, we generally have $\gmin\left(\frac{1}{64}\Sigma+\lambda_2L\right) \geq \frac{1}{64}\sqrt{\lambda_2}$ for all $\lambda_2\in [0,1]$ unless $\rho > 0.99$. When $G$ is the 2D grid or barbell graph and $\Sigma$ is constructed accordingly as in Section \ref{sec:params_choice}, we can also observe the same trends. Overall, when $\Sigma$ is ill-conditioned and $\lambda_2$ can be chosen to be sufficiently large, the $\ell_2$ component of the GEN penalty can significantly improve our error upper bounds.

\subsection{Real data analysis} In this section, we apply the GEN penalty to a number of real datasets. 

\subsubsection{COVID-19 trend prediction} 

We consider the problem of predicting the number of COVID-19 cases 14 days in advance for a given county in California, using a New York Times-curated \href{https://github.com/nytimes/covid-19-data}{COVID-19 dataset}. This problem may be of importance for hospitals and local authorities, as they may wish to anticipate potential spikes in COVID-19 cases based on current, local data. It is reasonable to assume that the number of cases $Y_{tc}$ on day $t$ in county $c$ is Poisson-distributed, as in \cite{agosto2020poisson}, \cite{bu2021likelihood} and \cite{cori2013new}. In this case, we can apply the variance-stabilizing Anscombe transform $x \mapsto 2 \sqrt{x + \frac{3}{8}}$ to form $\tilde{Y}_{tc}$. Following the modeling approach in \cite{cori2013new}, we may then consider the Gaussian model 
\begin{equation}\label{eq:covid-model1}
    \tilde{Y}_{tc} = \sum_{s=14}^{21}\alpha_s Y_{t-s,c} + \epsilon_{tc}
\end{equation}
where $\epsilon_{tc}\sim N(0, \sigma^2)$. In order to reduce temporal correlation between observations, the days $t$ are sampled such that consecutive time points are at least 7 days apart. We restrict our analysis (a) to the period from June 2020 to July 2021 to avoid non-stationary effects in the evolution of the pandemic due to the appearance of new virus strains, and (b) to the 25 densest counties in California where linear models are typically a better fit. As in \cite{ngonghala2022unraveling}, we use cross-validation to evaluate the accuracy of our model; 6/7 of our data is used for fitting and the remaining data is for performance evaluation (i.e. 2 months of data). Fitting an OLS model based on \eqref{eq:covid-model1} usually results in a satisfactory fit with an $R^2$ score above 0.8 (see the Appendix).

We hypothesize that for densely populated counties, rising cases in neighboring counties may further explain a significant fraction of the remaining variance in the data due to population movements between counties. To test this hypothesis, we consider a model that incorporates the number of cases from nearby counties within a two-hop radius of the given county $c$: 
\begin{equation}\label{eq:covid-model2}
    \tilde{Y}_{tc} = \sum_{k\in N_2(c)}\sum_{s=14}^{21}\alpha_{sk}Y_{t-s, k} + \epsilon_{tc}
\end{equation}

Fitting model \eqref{eq:covid-model2} is a high-dimensional problem, where the number of parameters $p$ can be up to 3 times the number of observations $n$, depending on the county. OLS therefore is not a suitable method for model \eqref{eq:covid-model2}. Consequently, in this experiment we fit the penalty-based methods (except GTV) based on \eqref{eq:covid-model2} and compare with the performance of OLS computed based on \eqref{eq:covid-model1}. The graph $G$ we consider here is such that two feature vectors are connected if they are indexed by the same day $t$ and by two adjacent counties, or if they are indexed by the same county $k$ and two consecutive time points.

We perform this prediction task for each of 25 most densely populated counties in California. For each county, we report the median root mean square error (RMSE) computed on the test set in Table \ref{tab:covid}. As expected, incorporating the numbers of cases in neighboring counties allows us to outperform OLS based on \eqref{eq:covid-model1} in 21 out of 25 counties. The graph-dependent methods FL, SL and GEN perform better than other methods in 19 out of 25 counties. Among these 19 counties, GEN has the best performance in 7 of them and is therefore a competitive candidate for this prediction task. The improvement it yields can be quite substantial; for the county Sutter in particular, GEN reduces the RMSE by 50\% compared to OLS and at least 25\% compared to FL and SL.

\begin{table}[ht]
\caption{Median RMSE achieved by various methods for 25 counties. OLS is fitted based on model \eqref{eq:covid-model1}, and all other methods are based on \eqref{eq:covid-model2}. The best performances for each county are highlighted in bold.}
\centering
\begin{tabular}{lcccccc}
  \hline
 {\bf County} & {\bf OLS} & {\bf L} &  {\bf EN} & {\bf FL} & {\bf SL} & {\bf GEN} \\ 
  \hline
    \hline
 Alameda & 1.08 & 1.18 & 1.14 & 0.96 & {\bf 0.85} & { 0.99} \\ 
 Butte & 3.20 & 1.81 & 1.83 & {\bf 1.46} & 1.88 & 2.10 \\ 
Contra Costa & {\bf 1.21} & 1.79 & 1.69 & 3.47 & 2.49 & 3.35 \\ 
Fresno & 8.22 & 7.06 & 9.77 & {\bf 5.25} & 7.71 & 5.95 \\
 Los Angeles & 4.92 & 6.92 & 7.34 & 5.28 & 6.06 & {\bf 4.56} \\ 
Marin & 6.11 & 3.92 & 5.49 & 5.47 & {\bf 3.38} & 4.18 \\ 
Merced & 7.94 & 9.75 & 9.03 & 9.26 & 9.68 & {\bf 6.08} \\
 Napa & 3.93 & 3.97 & 5.53 & 3.92 & 3.90 & {\bf 2.96} \\ 
Orange & {\bf  1.84} & 4.44 & 3.97 & 3.34 & 2.08 & 2.56 \\ 
 Placer & 2.20 & 1.35 & 1.89 & {\bf 1.27} & 1.55 & 1.53 \\ 
 Riverside & 3.46 & 3.32 & 3.62 & 3.21 & {\bf  2.79} & 3.73 \\ 
Sacramento & 2.23 & 3.11 & 2.53 & 3.61 & 2.31 & {\bf 1.66} \\ 
San Diego & 1.43 & 1.67 & 1.00 & 0.98 & {\bf 0.88} & 0.91 \\ 
San Francisco & 1.41 & 2.23 & {\bf 1.05} & 1.23 & 1.33 & 1.39 \\ 
San Joaquin & 3.64 &  {\bf 3.43} & {\bf 3.43} & 3.93 & 5.24 & 5.41 \\ 
San Mateo &  {\bf 1.44} & 2.34 & 2.45 & 1.68 & 1.56 & 1.75 \\ 
 Santa Barbara & 2.84 & 2.02 & 2.02 & 2.02 &{\bf  2.01} & 3.71 \\
 Santa Clara & 1.14 & 2.04 & 1.85 & 1.05 & 1.10 & {\bf 1.03} \\ 
 Santa Cruz & 6.56 & 3.86 & 4.62 & {\bf 3.55} & 4.17 & 4.59 \\ 
 Solano & 2.10 & 3.86 & 3.72 & {\bf 2.03} & 2.93 & 2.62 \\ 
 Sonoma & {\bf 1.74} & 3.47 & 3.60 & 2.78 & 2.62 & 2.67 \\ 
 Stanislaus & 9.29 & 6.41 & 9.17 & {\bf 4.55} & 4.76 & 4.88 \\
 Sutter & 4.07 & 7.51 & 7.51 & 4.33 & 2.58 & {\bf 1.94} \\
 Ventura & 2.02 & 1.22 & 1.20 & 1.23 & {\bf 1.16} & 1.36 \\ 
Yolo & 3.43 & 5.13 & 4.54 & 1.93 & 2.45 & {\bf 1.79} \\ 
   \hline
\end{tabular}
\label{tab:covid}
\end{table}

\subsubsection{Detection of Alzheimer's disease}

We test the GEN penalty's performance in detecting Alzheimer's disease, using an \href{https://www.kaggle.com/datasets/sachinkumar413/alzheimer-mri-dataset}{MRI dataset} available on Kaggle. The task is to classify whether the MRI images in the dataset show signs of dementia. Since the responses are binary, we need to consider the logistic extension \eqref{eq:6} of our method as well as that of all other methods. The original dataset has images labeled with moderate, mild, very mild and no dementia, but we exclude the moderate cases due to the small number of training samples. We also exclude the very mild cases since the images may be too similar to those with no dementia, thus leading to lower prediction accuracy for all methods. 

Since the features are 2D MRI images, it is natural to use the 2D grid graph as our graph $G$, which is of size $p = 32 \times 32 = 1024$ (we compress the original images to this size for computational convenience). We use the first 800 images with no dementia and 400 images with mild dementia in the original dataset. Out of these 1200 images, $n=480$ images are used as training data (note that $n < p$), 480 images are use for hyperparameter tuning, and the other 240 images constitute our testing data. For computation, we use ECOS for all methods. Since GTV requires at least a 3D grid search for hyperparameter tuning, it is too slow to be considered for this experiment. All other methods take at most 5 seconds of training time. 

The classification accuracies for all methods except GTV are reported in Table \ref{tab:real}. As expected, GEN shows better prediction performance than all other methods in consideration.  

\begin{table}[H]
    \caption{Prediction accuracies for classification of Alzhemer's disease status. Here, OLS is replaced by logistic regression (LR), and the logistic extensions of all penalty-based methods (except GTV) are used.}
    \centering
    \begin{tabularx}{\textwidth}{X X X X X X X}
    \hline
        & \textbf{LR} & \textbf{L} & \textbf{EN} & \textbf{FL}& \textbf{SL} & \textbf{GEN}\\[1mm]
    \hline \hline 
    \textbf{Accuracy} & 82.08\% & 90.0 \% & 92.50\% & 91.25\% & 92.08\% & 92.92\% \\[1mm]
    \hline
    \end{tabularx}
    \label{tab:real}
\end{table}

\subsubsection{Estimation of crime patterns in Chicago} 
Consider the task of uncovering crime trends over time across the 77 communities of Chicago (which we denote by the set $\mathcal{C}$). Statistics on the number of crimes per community between 2004 and 2022 are available on the city's \href{https://data.cityofchicago.org/}{data portal}. The monthly crime rates (which are defined here as the number of crimes per 100,000 inhabitants) vary over the years and across the communities, and they are also subject to significant seasonal effects. Additional details on the nature of the data and preprocessing are provided in the Appendix.

Let $Y_{my}^{(c)}$ denote the crime rate for community $c\in \mathcal{C}$, month $m$ and year $y$. Since we are working with count data, it is reasonable to pre-process the data by applying the Anscombe transform to $Y_{my}^{(c)}$ to form $\tilde{Y}_{my}^{(c)}$. We then consider the following additive Gaussian model 
\begin{equation}\label{eq:crime}
 \tilde{Y}_{my}^{(c)} = \sum_{i=1}^{12} \alpha_{i} \mathbf{1}[m=i]  +\sum_{j=2004}^{2022} \beta_{j}  \mathbf{1}[y=j]   +\sum_{c \in \mathcal{C}} \gamma_{k}\mathbf{1}[k=c] + \epsilon_{my}^{(c)}  
\end{equation}
where $\epsilon_{my}^{(c)} \sim N(0,\sigma^2)$. While our design matrix here is not equal to identity as in the trend filtering case, note that it contains ``one-hot'' encodings rather than i.i.d. rows from some distribution $\mathbb{P}$. The parameters $(\alpha, \beta)$ naturally exhibit \textit{temporal smoothness}, since we expect them to vary smoothly over time. The community offset parameter $\gamma$, on the other hand, should exhibit \textit{spatial smoothness}, as we expect neighboring communities to have similar offsets. To define our GEN penalty, we encode these prior beliefs in a regularizing graph $G$ with 3 disconnected components: one chain graph reinforcing the temporal smoothness of the month coefficients, another chain graph for that of the years, and a third component encoding neighborhood adjacency. We compare our method's performance with all other methods except GTV. 

Figure \ref{fig:rmse_chicago} compares the prediction performance (reported using RMSE computed on held-out data across 40 independent trials) for all methods. Here, performance is assessed for different data regimes: while the original dataset contains 17,094 observations, we use a fraction $\alpha \in \{0.5\%, 1\%, 2\%, 5\%\}$ of data for estimation of the $p=108$ parameters in our model ($\alpha = 0.5\%$ and $\alpha = 1\%$ correspond to $p > n$ and $p \approx n$ respectively). As shown in Figure \ref{fig:rmse_chicago}, GEN performs consistently better than all other methods, especially in the data-sparse regime. While in this example we are more interested in the estimation of crime patterns rather than prediction (note that the model \eqref{eq:crime} cannot be used to predict crime rates beyond 2022), Figure \ref{fig:rmse_chicago} provides evidence for GEN's superior performance and can be of interest if we are given a dataset with many missing values that require data imputation.

\begin{figure}[H]
    \centering
    \includegraphics[width=\textwidth]{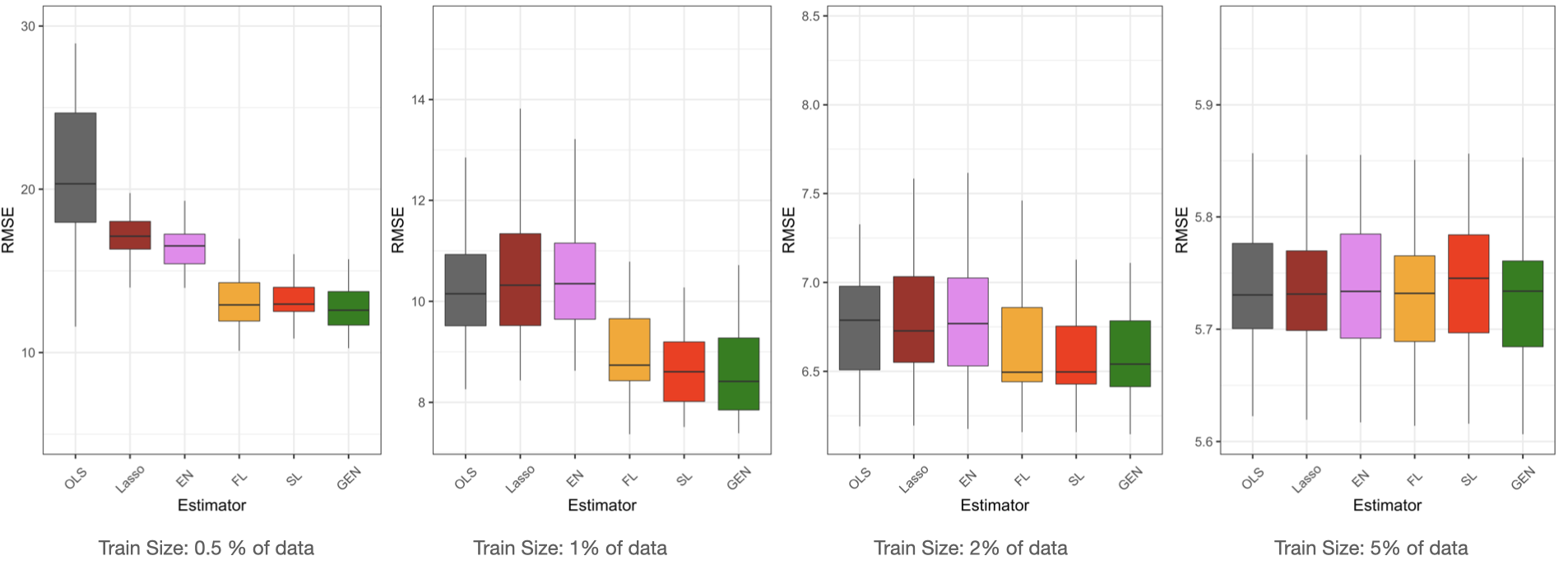}
    \caption{\textit{RMSE achieved by different estimators as the proportion $\alpha$ of data used for training varies.}}
    \label{fig:rmse_chicago}
\end{figure}

\begin{figure}[H]
            \includegraphics[width=\textwidth]{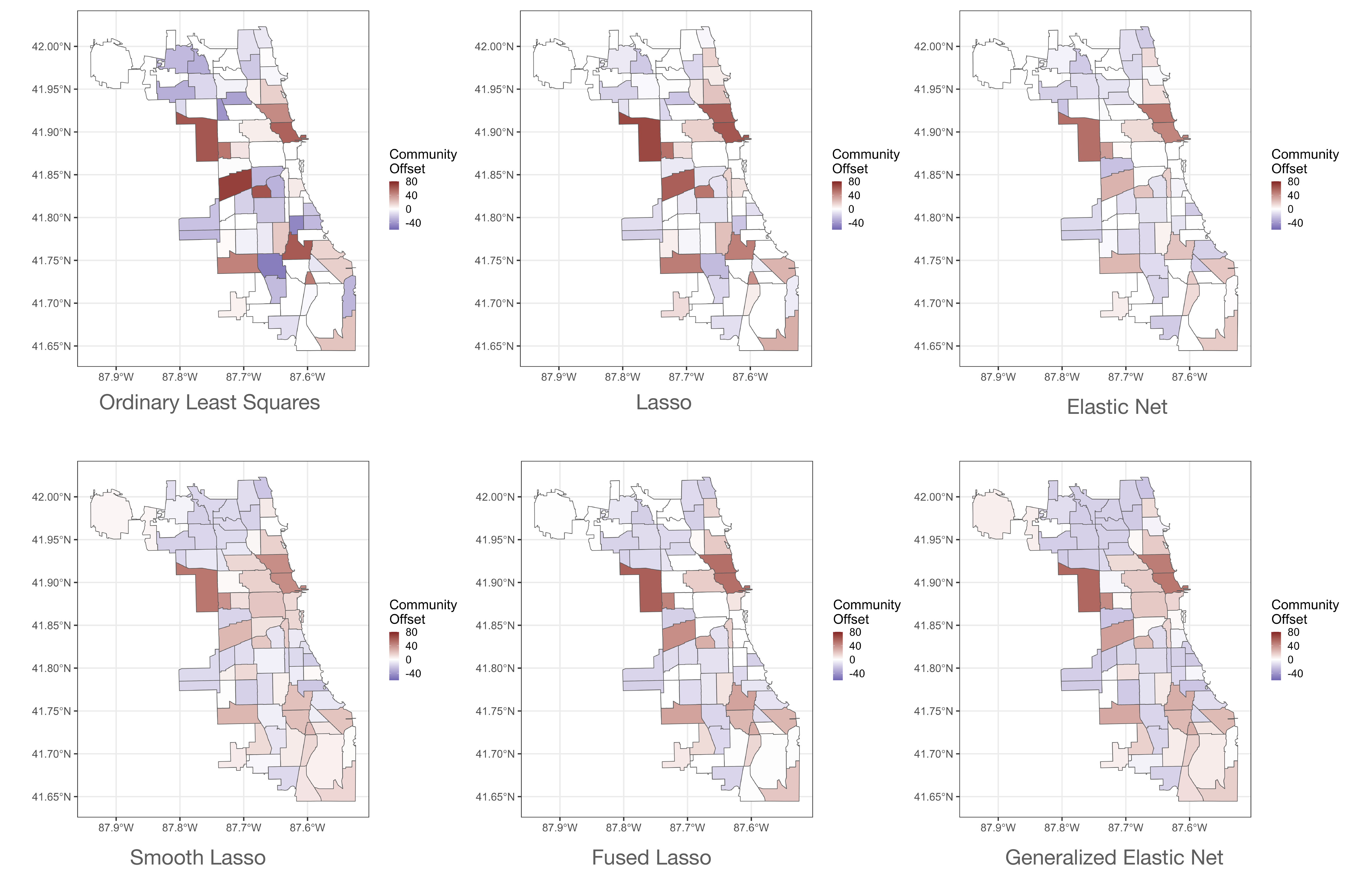}
        \caption{\textit{Visualization of the community offsets $\gamma$ produced by different estimators. Note that GEN produces smoother estimates with greater magnitudes.}}
        \label{fig:coeff_map}
\end{figure}

Figure \ref{fig:coeff_map} and Figure \ref{fig:temporal_component_chicagp} visualize the estimates of the community offsets $\gamma$ and the temporal parameters $(\alpha, \beta)$ respectively. Note that the estimate of $\gamma$ obtained from GEN contains fewer zero entries and is significantly smoother when compared with other methods. We can clearly see that GEN divides the communities into clusters with similar community offsets. Only Smooth Lasso provides an estimate of $\gamma$ that is close to GEN's, but interestingly the estimates obtained by GEN tend to be greater in magnitude. From Figure \ref{fig:temporal_component_chicagp}, we can see that GEN, FL and SL produce smooth estimates to show that crime rates tend to decrease in the colder months and that there is a general reduction in crime rates between 2002 and 2022.

\begin{figure}[H]
    \begin{subfigure}[r]{0.49\textwidth}
            \centering
          \includegraphics[width=\textwidth]{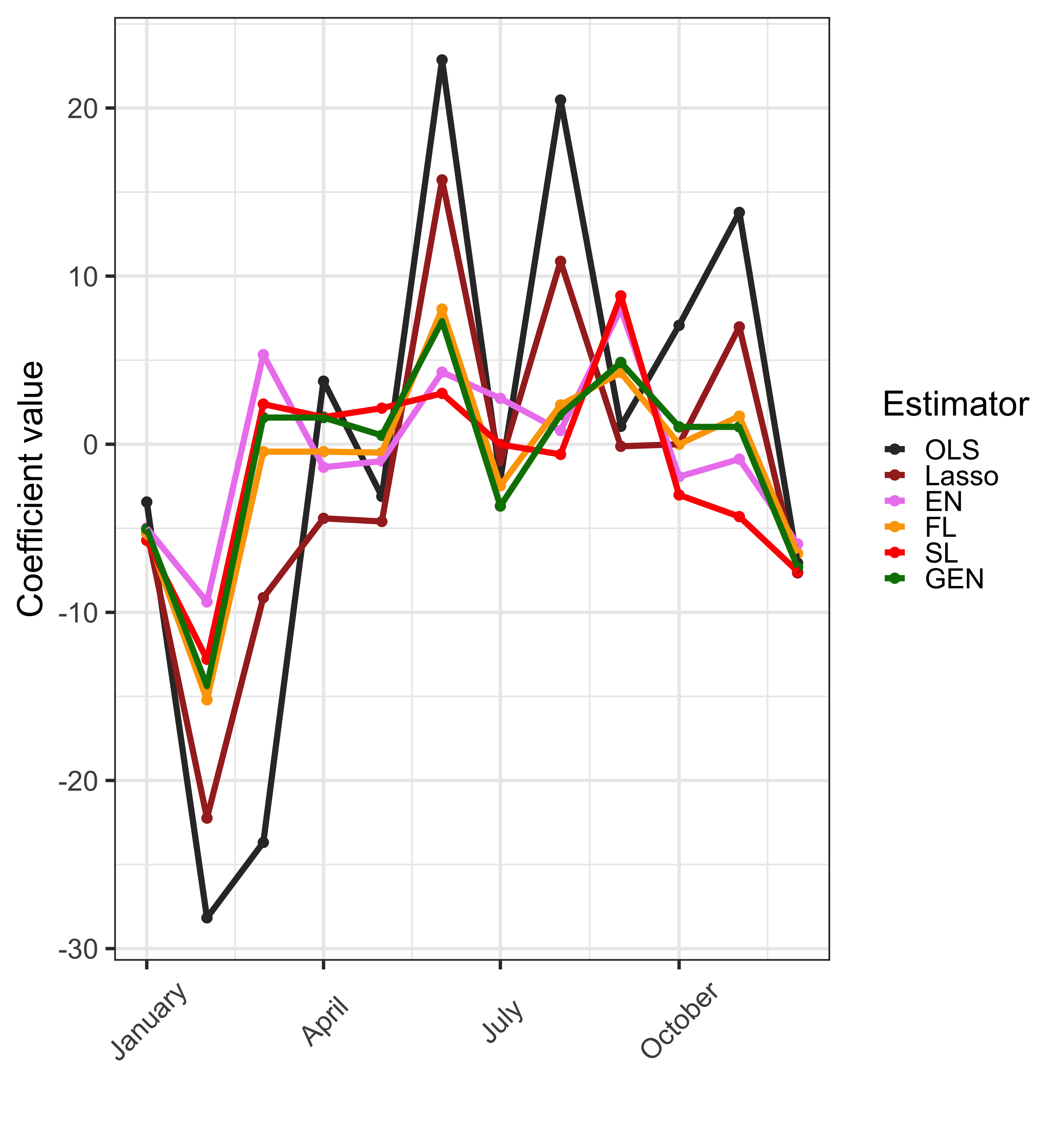}
    \caption{\textit{Estimates of $\alpha$}}      
    \end{subfigure}
    \begin{subfigure}[r]{0.49\textwidth}
            \centering
          \includegraphics[width=\textwidth]{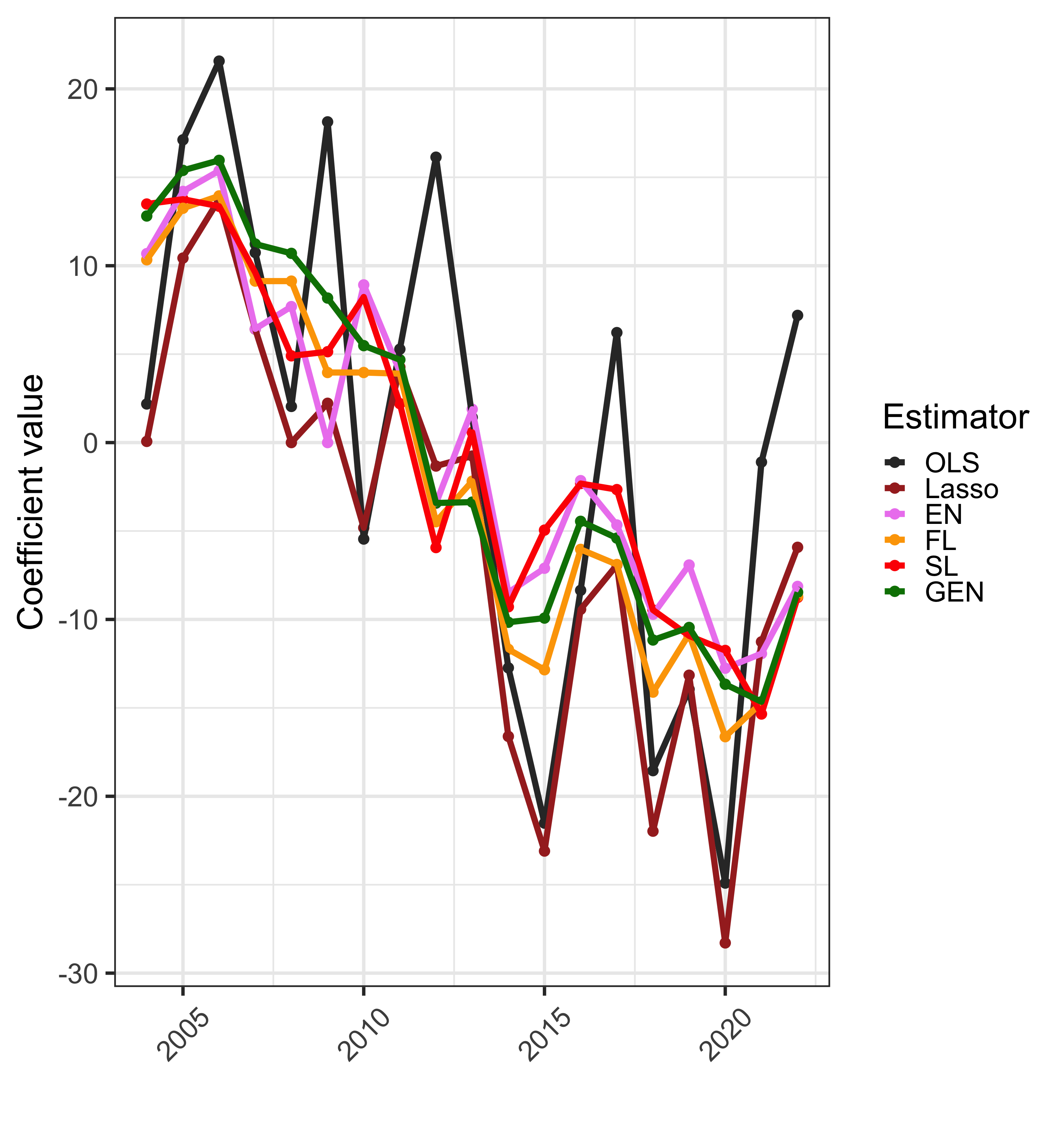}
          \caption{\textit{Estimates of $\beta$} }
    \end{subfigure}
        \caption{\textit{Visualization of the estimates of the monthly parameters $\alpha$ and the yearly parameters $\beta$'s produced by different methods. GEN, FL and SL produce smoother estimates relative to graph-independent methods.}}   
    \label{fig:temporal_component_chicagp}
\end{figure}

\section{Conclusion} \label{sec:conclusion}
In this paper, we have introduced an $\ell_1+\ell_2$ penalty in which both the $\ell_1$ and $\ell_2$ components are defined with respect to a given graph $G$. We have shown that the assumption that $\beta^*$ is piecewise constant or smooth with respect to $G$ can be sufficient for our estimator to be consistent in prediction and estimation, even in the high-dimensional setting. The $\ell_1$ component of our penalty improves estimation of the piecewise constant regions of the true signal, whereas the $\ell_2$ component improves estimation of the true signal's smooth regions and alleviates the negative effects of a highly correlated design. We also show that our estimator can be computed efficiently, even for large-scale problems where choosing hyperparameters via cross-validation may be challenging. 

\bibliographystyle{rss}
\bibliography{citation}       

\begin{thebibliography}{38}
\expandafter\ifx\csname natexlab\endcsname\relax\def\natexlab#1{#1}\fi
\expandafter\ifx\csname url\endcsname\relax
  \def\url#1{\texttt{#1}}\fi
\expandafter\ifx\csname urlprefix\endcsname\relax\def\urlprefix{URL: }\fi

\bibitem[{Agosto and Giudici(2020)}]{agosto2020poisson}
Agosto, A. and Giudici, P. (2020) A {Poisson} autoregressive model to
  understand {COVID-19} contagion dynamics.
\newblock \textit{Risks}, \textbf{8}, 77.

\bibitem[{Anderson(1955)}]{anderson1955integral}
Anderson, T.~W. (1955) The integral of a symmetric unimodal function over a
  symmetric convex set and some probability inequalities.
\newblock \textit{Proceedings of the American Mathematical Society},
  \textbf{6}, 170--176.

\bibitem[{Bickel and Levina(2008)}]{bickel2008covariance}
Bickel, P.~J. and Levina, E. (2008) Covariance regularization by thresholding.
\newblock \textit{Annals of Statistics}, \textbf{36}, 2577--2604.

\bibitem[{Boyd et~al.(2004)Boyd, Boyd and Vandenberghe}]{boyd2004convex}
Boyd, S., Boyd, S.~P. and Vandenberghe, L. (2004) \textit{Convex optimization}.
\newblock Cambridge University Press.

\bibitem[{Bu et~al.(2021)Bu, Aiello, Volfovsky and Xu}]{bu2021likelihood}
Bu, F., Aiello, A.~E., Volfovsky, A. and Xu, J. (2021) Likelihood-based
  inference for partially observed stochastic epidemics with individual
  heterogeneity.
\newblock \textit{arXiv preprint arXiv:2112.07892}.

\bibitem[{B{\"u}hlmann and van~de Geer(2011)}]{buhlmann2011statistics}
B{\"u}hlmann, P. and van~de Geer, S. (2011) \textit{Statistics for
  high-dimensional data: methods, theory and applications}.
\newblock Springer Science \& Business Media.

\bibitem[{Cai and Zhou(2012)}]{cai2012optimal}
Cai, T.~T. and Zhou, H.~H. (2012) Optimal rates of convergence for sparse
  covariance matrix estimation.
\newblock \textit{Annals of Statistics}, 2389--2420.

\bibitem[{Chung and Graham(1997)}]{chung1997spectral}
Chung, F.~R. and Graham, F.~C. (1997) \textit{Spectral graph theory}, vol.~92.
\newblock American Mathematical Society.

\bibitem[{Cori et~al.(2013)Cori, Ferguson, Fraser and Cauchemez}]{cori2013new}
Cori, A., Ferguson, N.~M., Fraser, C. and Cauchemez, S. (2013) A new framework
  and software to estimate time-varying reproduction numbers during epidemics.
\newblock \textit{American Journal of Epidemiology}, \textbf{178}, 1505--1512.

\bibitem[{Domahidi et~al.(2013)Domahidi, Chu and Boyd}]{domahidi2013ecos}
Domahidi, A., Chu, E. and Boyd, S. (2013) {ECOS}: An {SOCP} solver for embedded
  systems.
\newblock In \textit{2013 European Control Conference (ECC)}, 3071--3076. IEEE.

\bibitem[{Elad et~al.(2007)Elad, Milanfar and Rubinstein}]{elad2007analysis}
Elad, M., Milanfar, P. and Rubinstein, R. (2007) Analysis versus synthesis in
  signal priors.
\newblock \textit{Inverse problems}, \textbf{23}, 947.

\bibitem[{Guntuboyina et~al.(2020)Guntuboyina, Lieu, Chatterjee and
  Sen}]{guntuboyina2020adaptive}
Guntuboyina, A., Lieu, D., Chatterjee, S. and Sen, B. (2020) Adaptive risk
  bounds in univariate total variation denoising and trend filtering.
\newblock \textit{Annals of Statistics}, \textbf{48}, 205--229.

\bibitem[{Hebiri and van~de Geer(2011)}]{hebiri2011smooth}
Hebiri, M. and van~de Geer, S. (2011) The {Smooth-Lasso} and other
  $\ell_1+\ell_2$-penalized methods.
\newblock \textit{Electronic Journal of Statistics}, \textbf{5}, 1184--1226.

\bibitem[{H{\"u}tter and Rigollet(2016)}]{hutter2016optimal}
H{\"u}tter, J.-C. and Rigollet, P. (2016) Optimal rates for total variation
  denoising.
\newblock In \textit{Conference on Learning Theory}, 1115--1146. PMLR.

\bibitem[{Kim et~al.(2009)Kim, Koh, Boyd and Gorinevsky}]{kim2009ell_1}
Kim, S.-J., Koh, K., Boyd, S. and Gorinevsky, D. (2009) $\ell_1$ trend
  filtering.
\newblock \textit{SIAM review}, \textbf{51}, 339--360.

\bibitem[{Kolaczyk(2009)}]{kolaczyk2009statistical}
Kolaczyk, E. (2009) Statistical analysis of network data: Methods and models.
\newblock \textit{Springer Series In Statistics}, 386.

\bibitem[{Land and Friedman(1997)}]{land1997variable}
Land, S.~R. and Friedman, J.~H. (1997) Variable fusion: A new adaptive signal
  regression method.

\bibitem[{Li and Li(2008)}]{li2008network}
Li, C. and Li, H. (2008) Network-constrained regularization and variable
  selection for analysis of genomic data.
\newblock \textit{Bioinformatics}, \textbf{24}, 1175--1182.

\bibitem[{Li et~al.(2019)Li, Levina and Zhu}]{li2019prediction}
Li, T., Levina, E. and Zhu, J. (2019) Prediction models for network-linked
  data.
\newblock \textit{Annals of Applied Statistics}, \textbf{13}, 132--164.

\bibitem[{Li et~al.(2018)Li, Mark, Raskutti and Willett}]{li2018graph}
Li, Y., Mark, B., Raskutti, G. and Willett, R. (2018) Graph-based
  regularization for regression problems with highly-correlated designs.
\newblock In \textit{2018 IEEE Global Conference on Signal and Information
  Processing (GlobalSIP)}, 740--742. IEEE.

\bibitem[{Ngonghala et~al.(2022)Ngonghala, Taboe, Safdar and
  Gumel}]{ngonghala2022unraveling}
Ngonghala, C.~N., Taboe, H.~B., Safdar, S. and Gumel, A.~B. (2022) Unraveling
  the dynamics of the {Omicron} and {Delta} variants of the 2019 coronavirus in
  the presence of vaccination, mask usage, and antiviral treatment.
\newblock \textit{Applied Mathematical Modelling}.

\bibitem[{Ollier and Viallon(2017)}]{ollier2017regression}
Ollier, E. and Viallon, V. (2017) Regression modelling on stratified data with
  the lasso.
\newblock \textit{Biometrika}, \textbf{104}, 83--96.

\bibitem[{Ortelli and van~de Geer(2021)}]{ortelli2021prediction}
Ortelli, F. and van~de Geer, S. (2021) Prediction bounds for higher order total
  variation regularized least squares.
\newblock \textit{Annals of Statistics}, \textbf{49}, 2755--2773.

\bibitem[{Osgood(2000)}]{osgood2000poisson}
Osgood, D.~W. (2000) Poisson-based regression analysis of aggregate crime
  rates.
\newblock \textit{Journal of Quantitative Criminology}, \textbf{16}, 21--43.

\bibitem[{Raskutti et~al.(2010)Raskutti, Wainwright and
  Yu}]{raskutti2010restricted}
Raskutti, G., Wainwright, M.~J. and Yu, B. (2010) Restricted eigenvalue
  properties for correlated {Gaussian} designs.
\newblock \textit{Journal of Machine Learning Research}, \textbf{11},
  2241--2259.

\bibitem[{Raskutti et~al.(2011)Raskutti, Wainwright and
  Yu}]{raskutti2011minimax}
--- (2011) Minimax rates of estimation for high-dimensional linear regression
  over $\ell_q $-balls.
\newblock \textit{IEEE transactions on information theory}, \textbf{57},
  6976--6994.

\bibitem[{Rudin(1974)}]{rudin1974real}
Rudin, W. (1974) {Real and Complex Analysis}.
\newblock \textit{McGraw-Hill Inc.,}.

\bibitem[{Segal et~al.(2003)Segal, Dahlquist and Conklin}]{segal2003regression}
Segal, M.~R., Dahlquist, K.~D. and Conklin, B.~R. (2003) Regression approaches
  for microarray data analysis.
\newblock \textit{Journal of Computational Biology}, \textbf{10}, 961--980.

\bibitem[{She(2008)}]{she2008sparse}
She, Y. (2008) \textit{Sparse regression with exact clustering}.
\newblock Stanford University.

\bibitem[{Tibshirani(1996)}]{tibshirani1996regression}
Tibshirani, R. (1996) Regression shrinkage and selection via the lasso.
\newblock \textit{Journal of the Royal Statistical Society: Series B
  (Statistical Methodology)}, \textbf{58}, 267--288.

\bibitem[{Tibshirani et~al.(2005)Tibshirani, Saunders, Rosset, Zhu and
  Knight}]{tibshirani2005sparsity}
Tibshirani, R., Saunders, M., Rosset, S., Zhu, J. and Knight, K. (2005)
  Sparsity and smoothness via the fused lasso.
\newblock \textit{Journal of the Royal Statistical Society: Series B
  (Statistical Methodology)}, \textbf{67}, 91--108.

\bibitem[{Tibshirani and Taylor(2011)}]{tibshirani2011solution}
Tibshirani, R.~J. and Taylor, J. (2011) The solution path of the generalized
  lasso.
\newblock \textit{Annals of Statistics}, \textbf{39}, 1335--1371.

\bibitem[{Toharudin et~al.(2020)Toharudin, Caraka, Chen, Nugroho, Sueb, Jaya
  and Pontoh}]{toharudin2020bayesian}
Toharudin, T., Caraka, R.~E., Chen, R.~C., Nugroho, N.~T., Sueb, M., Jaya, I.
  and Pontoh, R. (2020) Bayesian {Poisson} model for {COVID-19} in {West Java,
  Indonesia}.
\newblock \textit{Sylwan}, \textbf{164}, 279--290.

\bibitem[{Tseng(2001)}]{tseng2001convergence}
Tseng, P. (2001) Convergence of a block coordinate descent method for
  nondifferentiable minimization.
\newblock \textit{Journal of Optimization Theory and Applications},
  \textbf{109}, 475--494.

\bibitem[{Wainwright(2019)}]{wainwright2019high}
Wainwright, M.~J. (2019) \textit{High-dimensional statistics: A non-asymptotic
  viewpoint}, vol.~48.
\newblock Cambridge University Press.

\bibitem[{Wang et~al.(2015)Wang, Sharpnack, Smola and
  Tibshirani}]{wang2015trend}
Wang, Y.-X., Sharpnack, J., Smola, A. and Tibshirani, R. (2015) Trend filtering
  on graphs.
\newblock In \textit{Artificial Intelligence and Statistics}, 1042--1050. PMLR.

\bibitem[{Xin et~al.(2014)Xin, Kawahara, Wang and Gao}]{xin2014efficient}
Xin, B., Kawahara, Y., Wang, Y. and Gao, W. (2014) Efficient generalized fused
  lasso and its application to the diagnosis of {Alzheimer’s} disease.
\newblock In \textit{Proceedings of the AAAI Conference on Artificial
  Intelligence}, vol.~28.

\bibitem[{Zou and Hastie(2005)}]{zou2005regularization}
Zou, H. and Hastie, T. (2005) Regularization and variable selection via the
  elastic net.
\newblock \textit{Journal of the Royal Statistical Society: Series B
  (Statistical Methodology)}, \textbf{67}, 301--320.

\end{thebibliography}
\newpage
\begin{appendix}
\section{Proofs of theoretical results}
We restate our theorems in this appendix for convenience. 
\begin{theorem}[Theorem \ref{theorem:1}]\label{theorem1}
Fix $\delta > 0$ and choose $\lambda_1 = 32\sigma \rho(\Gamma)\sqrt{\frac{\gmax(\Sigma)\log p}{n}}$, $\lambda_2 \leq \frac{\lambda_1}{8\|\Gamma\beta^*\|_\infty}$. Given any set $S$ satisfying both
\begin{equation}\label{eq:appendix-regularity}
\frac{144\gmax(\Sigma)(\sqrt{n_c}+\delta)^2}{n} +\frac{36\lambda_1^2|S|k_S^{-2}}{\sigma^2} \leq \frac{1}{2}\gmin\left(\frac{1}{64}\Sigma + \lambda_2 L\right) 
\end{equation}
and 
\begin{equation}\label{eq:appendix-regularity2}
    \lambda_1\|(\Gamma\beta^*)_{-S}\|_1 \leq \frac{\sigma^2}{18}
\end{equation}
with probability at least $1-c_1\exp(-nc_2) - \frac{2}{m} - e^{-\delta^2/2}$ we have 
\begin{equation*}
\|\Sigma^{1/2}(\hat{\beta}-\beta^*)\|_2^2 \lesssim \frac{
\sigma^2\gmax(\Sigma)\frac{n_c+\delta^2}{n} + \lambda_1^2|S|k_S^{-2}
}{\gmin\left(\frac{1}{64}\Sigma+\lambda_2L\right)} + \lambda_1\|(\Gamma\beta^*)_{-S}\|_1
\end{equation*}
and
\begin{equation*}
\|\hat{\beta}-\beta^*\|_2^2 \lesssim \frac{
\sigma^2\gmax(\Sigma)\frac{n_c+\delta^2}{n} + \lambda_1^2|S|k_S^{-2}
}{\gmin^2\left(\frac{1}{64}\Sigma+\lambda_2L\right)} + \frac{\lambda_1\|(\Gamma\beta^*)_{-S}\|_1}{\gmin\left(\frac{1}{64}\Sigma+\lambda_2L\right)}
\end{equation*}
\end{theorem}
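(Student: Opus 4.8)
The plan is to reduce to a Generalized Lasso analysis through the augmented data $(\tilde Y,\tilde X,\tilde\epsilon)$ and then run a basic-inequality argument. Writing $\hat v:=\hat\beta-\beta^*$ and using $\tilde Y=\tilde X\beta^*+\tilde\epsilon$, optimality of $\hat\beta$ gives
\[
\tfrac1n\|\tilde X\hat v\|_2^2\le \tfrac2n\langle\tilde\epsilon,\tilde X\hat v\rangle+\lambda_1\|\Gamma\beta^*\|_1-\lambda_1\|\Gamma\hat\beta\|_1 .
\]
Since $\|\tilde X\hat v\|_2^2=\|X\hat v\|_2^2+\lambda_2 n\|\Gamma\hat v\|_2^2$ and $\langle\tilde\epsilon,\tilde X\hat v\rangle=\langle\epsilon,X\hat v\rangle-\lambda_2 n\langle\Gamma\beta^*,\Gamma\hat v\rangle$, this is an upper bound on $\tfrac1n\|X\hat v\|_2^2+\lambda_2\|\Gamma\hat v\|_2^2$ involving three terms to control: the empirical process $\tfrac2n\langle\epsilon,X\hat v\rangle$, the $\ell_2$ cross term $-2\lambda_2\langle\Gamma\beta^*,\Gamma\hat v\rangle$, and the $\ell_1$ difference.

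\textbf{Controlling the noise and deriving a cone condition.} First I would split $\hat v=\Pi\hat v+\Gamma^\dagger\Gamma\hat v$ and bound $\langle\epsilon,X\hat v\rangle\le\|\Pi X^T\epsilon\|_2\,\|\hat v\|_2+\|(X\Gamma^\dagger)^T\epsilon\|_\infty\,\|\Gamma\hat v\|_1$. The first factor concentrates like $\sigma\sqrt{\gmax(\Sigma)n}\,(\sqrt{n_c}+\delta)$ via a $\chi^2$-type bound on the $n_c$-dimensional projection (contributing the $e^{-\delta^2/2}$ failure probability), while the entries $\langle Xs_j,\epsilon\rangle$ of the second are Gaussian given $X$ with variance controlled by $\gmax(\Sigma)\rho(\Gamma)^2$, so a union bound over the $2m$ columns yields $\tfrac2n\|(X\Gamma^\dagger)^T\epsilon\|_\infty\le\tfrac{\lambda_1}{2}$ with probability at least $1-2/m$; this is exactly what calibrates the stated $\lambda_1$. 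The $\ell_2$ cross term is at most $2\lambda_2\|\Gamma\beta^*\|_\infty\|\Gamma\hat v\|_1\le\tfrac{\lambda_1}{4}\|\Gamma\hat v\|_1$ by the choice $\lambda_2\le\lambda_1/(8\|\Gamma\beta^*\|_\infty)$, and the $\ell_1$ difference splits over $S$ and $S^c$ as $\|(\Gamma\hat v)_S\|_1-\|(\Gamma\hat v)_{S^c}\|_1+2\|(\Gamma\beta^*)_{-S}\|_1$. Collecting terms produces both a master inequality and, by nonnegativity of the left-hand side, a cone condition bounding $\|(\Gamma\hat v)_{S^c}\|_1$ (hence $\|\Gamma\hat v\|_1$) by $\|(\Gamma\hat v)_S\|_1$ plus slack.

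\textbf{Restricted eigenvalue and self-bounding.} Next I would invoke Lemma~\ref{lem:re}. Rearranging and squaring its conclusion (with a short case split so the inequality survives when the linear lower bound is negative) gives $\tfrac1n\|X\hat v\|_2^2\ge\tfrac1{64}\|\Sigma^{1/2}\hat v\|_2^2-c\,\tfrac{\gmax(\Sigma)n_c}{n}\|\hat v\|_2^2-c'\rho(\Gamma)^2\tfrac{\gmax(\Sigma)\log p}{n}\|\Gamma\hat v\|_1^2$; adding $\lambda_2\|\Gamma\hat v\|_2^2=\lambda_2\hat v^T L\hat v$ assembles the quadratic form $Q:=\hat v^T\!\left(\tfrac1{64}\Sigma+\lambda_2 L\right)\!\hat v\ge\gmin\!\left(\tfrac1{64}\Sigma+\lambda_2 L\right)\|\hat v\|_2^2$. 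Applying the cone condition together with the compatibility bound $\|(\Gamma\hat v)_S\|_1\le\sqrt{|S|}\,k_S^{-1}\|\hat v\|_2$ converts $\|\Gamma\hat v\|_1^2$ into $|S|k_S^{-2}\|\hat v\|_2^2$ plus slack, and since $\rho(\Gamma)^2\gmax(\Sigma)\log p/n\asymp\lambda_1^2/\sigma^2$ the resulting $\|\hat v\|_2^2$ remainder is exactly $\tfrac{\lambda_1^2|S|k_S^{-2}}{\sigma^2}$; condition \eqref{eq:regularity} then lets me absorb this and the $\tfrac{\gmax(\Sigma)n_c}{n}$ remainder into $\tfrac12\gmin\!\left(\tfrac1{64}\Sigma+\lambda_2 L\right)\|\hat v\|_2^2$. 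The surviving linear-in-$\|\hat v\|_2$ terms are dispatched by Young's inequality against $Q$ (writing $\|\hat v\|_2\le\sqrt{Q/\gmin}$), which self-bounds $Q$ and yields $Q\lesssim\tfrac{\sigma^2\gmax(\Sigma)(n_c+\delta^2)}{n\,\gmin}+\tfrac{\lambda_1^2|S|k_S^{-2}}{\gmin}+\lambda_1\|(\Gamma\beta^*)_{-S}\|_1$, with condition \eqref{eq:regularity2} keeping the last term benign. The prediction bound then follows from $\|\Sigma^{1/2}\hat v\|_2^2\le 64\,Q$ and the estimation bound from $\|\hat v\|_2^2\le Q/\gmin\!\left(\tfrac1{64}\Sigma+\lambda_2 L\right)$, which explains the single extra power of $\gmin$ separating \eqref{eq:est-error-main} from \eqref{eq:pred-error-main}.

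\textbf{Main obstacle.} The delicate step is the quadratic remainder $\|\Gamma\hat v\|_1^2$ created by squaring the restricted-eigenvalue inequality: it cannot be matched against the linear penalty terms directly, so it must first be tamed through the cone condition and the compatibility factor before the calibration in \eqref{eq:regularity} can fold it into the minimum-eigenvalue term. Making the constants line up there, and correctly tracking the several high-probability events whose failure probabilities sum to $c_1\exp(-nc_2)+2/m+e^{-\delta^2/2}$, is where the real work lies; the remainder is bookkeeping.
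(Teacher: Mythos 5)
Your proposal is correct and follows essentially the same route as the paper's proof: the same basic inequality from the KKT/optimality conditions, the same $\Pi + \Gamma^\dagger\Gamma$ decomposition of the noise term with the same two concentration bounds calibrating $\lambda_1$ and contributing the $2/m$ and $e^{-\delta^2/2}$ failure probabilities, the same use of $\lambda_2\le\lambda_1/(8\|\Gamma\beta^*\|_\infty)$ on the cross term, the same compatibility-factor/cone mechanism to tame the squared $\|\Gamma\hat v\|_1^2$ remainder from the restricted-eigenvalue lemma before absorbing it via condition \eqref{eq:regularity}, and the same final self-bounding quadratic inequality. No gaps.
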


\begin{proof}
By definition, 
$$\hat{\beta} = \arg\min_{\beta} \frac{1}{n}\|Y-X\beta\|_2^2 + \lambda_1 \|\Gamma\beta\|_1 + \lambda_2 \|\Gamma\beta\|_2^2 $$

We can also rewrite our estimator as:

$$\hat{\beta} = \arg\min_{\beta} \frac{1}{n}\|\tilde{Y} - \tilde{X}\beta\|_2^2 + \lambda_1\|\Gamma\beta\|_1 $$

Using subdifferential calculus, we can see that $\hat{\beta}$ must satisfy
$$\frac{2\tilde{X}^T(\tilde{Y}-\tilde{X}\hat{\beta})}{n} = \lambda_1 \Gamma^T\text{sign}(\Gamma\hat{\beta}) $$
where 
$$
[\sign(x)]_i = \left\{
\begin{array}{rl}
1 & \text{if}\ x_i > 0,\\
 \text{any value in }\left[-1 , 1\right] & \text{if}\ x_i = 0,\\
-1 & \text{if}\ x_i < 0\,.
\end{array}
\right.
$$

Hence, we obtain 
$$\frac{2}{n}\hat{\beta}^T\tilde{X}^T(\tilde{Y} - \tilde{X}\hat{\beta}) = \lambda_1\hat{\beta}^T\Gamma^T\sign(\Gamma\hat{\beta}) = \lambda_1 \|\Gamma\hat{\beta}\|_1 $$
and for any $\beta\in \mathbb{R}^p$,
$$ \frac{2}{n}\beta^T\tilde{X}^T(\tilde{Y}-\tilde{X}\hat{\beta}) = \lambda_1\beta^T\Gamma^T\sign(\Gamma\hat{\beta}) \leq \lambda_1\|\Gamma\beta\|_1$$

By subtracting the previous equality from the inequality above, for any $\beta \in \mathbb{R}^p$ we have
$$\frac{2}{n}(\beta-\hat{\beta})^T\tilde{X}^T(\tilde{Y}-\tilde{X}\hat{\beta}) \leq \lambda_1(\|\Gamma\beta\|_1 - \|\Gamma\hat{\beta}\|_1) $$

Since $\tilde{Y} = \tilde{X}\beta^* + \tilde{\epsilon}$, 
\begin{align}
&\frac{2}{n}(\hat{\beta}-\beta)^T\tilde{X}^T\tilde{X}(\hat{\beta}-\beta^*)\nonumber \\
&\leq \frac{2}{n}\tilde{\epsilon}^T\tilde{X}(\hat{\beta}-\beta) + \lambda_1(\|\Gamma\beta\|_1 - \|\Gamma\hat{\beta}\|_1)\nonumber\\
&= \frac{2}{n}\epsilon^TX(\hat{\beta}-\beta) - 2\lambda_2(\beta^*)^T\Gamma^T\Gamma(\hat{\beta}-\beta) + \lambda_1(\|\Gamma\beta\|_1-\|\Gamma\hat{\beta}\|_1)\nonumber \\ 
&\leq \frac{2}{n}\epsilon^TX(\hat{\beta}-\beta) + 2\lambda_2\|\Gamma\beta^*\|_\infty \|\Gamma(\hat{\beta}-\beta^*)\|_1 + \lambda_1(\|\Gamma\beta\|_1-\|\Gamma\hat{\beta}\|_1) \nonumber\\ 
&\leq \frac{2}{n}\epsilon^TX(\hat{\beta}-\beta) + \frac{\lambda_1}{4}\|\Gamma(\hat{\beta}-\beta)\|_1 + \lambda_1(\|\Gamma\beta\|_1-\|\Gamma\hat{\beta}\|_1)\nonumber
\end{align}
where the last inequality follows if we choose $\lambda_2 \leq \frac{\lambda_1}{8\|\Gamma\beta^*\|_\infty}$.

We wish to bound $\frac{2}{n}\epsilon^TX(\hat{\beta}-\beta)$. As $\Pi \in \mathbb{R}^{p\times p}$ denotes the projection matrix onto the kernel of $\Gamma$,  we have $I_p = \Pi + \Gamma^\dagger \Gamma$. Hence,
\begin{align}
\frac{2}{n}\epsilon^TX(\hat{\beta}-\beta) &= \frac{2}{n}\epsilon^TX\Pi (\hat{\beta}-\beta) + \frac{2}{n}\epsilon^TX\Gamma^\dagger\Gamma (\hat{\beta}-\beta) \label{eq:stopping-point}\\ 
&\leq \frac{2}{n}\|\Pi X^T\epsilon\|_2\|\hat{\beta}-\beta\|_2 + \frac{2}{n}\|(\Gamma^\dagger)^TX^T\epsilon\|_\infty \|\Gamma(\hat{\beta}-\beta)\|_1 \nonumber\\ 
&\leq \frac{2}{n}\|\Pi X^T\epsilon\|_2\|\hat{\beta}-\beta\|_2 + \frac{\lambda_1}{4} \|\Gamma(\hat{\beta}-\beta)\|_1\nonumber
\end{align}
where the last inequality follows if we choose $\lambda_1 \geq \frac{8}{n}\|(\Gamma^\dagger)^TX^T\epsilon\|_\infty$ (with high probability). 

We obtain the bound: 
\begin{equation}\label{eq:bound-obtained}
\begin{split}
\frac{2}{n}(\hat{\beta}-\beta)^T\tilde{X}^T\tilde{X}(\hat{\beta}-\beta^*) \leq  &\frac{2}{n}\|\Pi X^T\epsilon\|_2\|\hat{\beta}-\beta\|_2 +\frac{\lambda_1}{2}\|\Gamma(\hat{\beta}-\beta)\|_1\\
&+\lambda_1\|\Gamma\beta\|_1-\lambda_1\|\Gamma\hat{\beta}\|_1
\end{split}
\end{equation}
$$\frac{2}{n}(\hat{\beta}-\beta)^T\tilde{X}^T\tilde{X}(\hat{\beta}-\beta^*) \leq  \frac{2}{n}\|\Pi X^T\epsilon\|_2\|\hat{\beta}-\beta\|_2 + \frac{\lambda_1}{2}\|\Gamma(\hat{\beta}-\beta)\|_1+\lambda_1\|\Gamma\beta\|_1-\lambda_1\|\Gamma\hat{\beta}\|_1$$

For any $S \subseteq [m]$: 
\begin{align*}
&\frac{\lambda_1}{2}\|\Gamma(\hat{\beta}-\beta)\|_1+\lambda_1\|\Gamma\beta\|_1-\lambda_1\|\Gamma\hat{\beta}\|_1 \\
&\leq \frac{\lambda_1}{2}\|(\Gamma\hat{\beta}-\Gamma\beta)_S\|_1 + \frac{\lambda_1}{2}\|(\Gamma\hat{\beta})_{-S}\|_1 + \frac{\lambda_1}{2}\|(\Gamma\beta)_{-S}\|_1 + \lambda_1\|\Gamma\beta\|_1 - \lambda_1 \|\Gamma\hat{\beta}\|_1 \\ 
&\leq \frac{3\lambda_1}{2}\|(\Gamma\hat{\beta}-\Gamma\beta)_S\|_1 + \frac{3\lambda_1}{2}\|(\Gamma\beta)_{-S}\|_1 - \frac{\lambda_1}{2}\|(\Gamma\hat{\beta})_{-S}\|_1 \\ 
&\leq \frac{3\lambda_1}{2}\|(\Gamma\hat{\beta}-\Gamma\beta)_S\|_1 + 2\lambda_1\|(\Gamma\beta)_{-S}\|_1 - \frac{\lambda_1}{2}\|(\Gamma\hat{\beta}-\Gamma\beta)_{-S}\|_1\\ 
&\leq 2\lambda_1\|(\Gamma\hat{\beta}-\Gamma\beta)_S\|_1 + 2\lambda_1\|(\Gamma\beta)_{-S}\|_1 - \frac{\lambda_1}{2}\|\Gamma\hat{\beta}-\Gamma\beta\|_1
\end{align*}
and so we have 
\begin{align*}
&\frac{2}{n}(\hat{\beta}-\beta)^T\tilde{X}^T\tilde{X}(\hat{\beta}-\beta^*) + \frac{\lambda_1}{2}\|\Gamma\hat{\beta}-\Gamma\beta\|_1\\
&\leq  \frac{2}{n}\|\Pi X^T\epsilon\|_2\|\hat{\beta}-\beta\|_2 + 2\lambda_1\|(\Gamma\hat{\beta}-\Gamma\beta)_S\|_1 + 2\lambda_1\|(\Gamma\beta)_{-S}\|_1 \\ 
&\leq 2\left(\frac{1}{n}\|\Pi X^T\epsilon\|_2 + \frac{\lambda_1\sqrt{|S|}}{k_S}\right)\|\hat{\beta}-\beta\|_2 + 2\lambda_1\|(\Gamma\beta)_{-S}\|_1 \\ 
&\leq 2\left(\sqrt{2\sigma^2\gmax(\Sigma)}\frac{\sqrt{n_c}+\delta}{\sqrt{n}}+\frac{\lambda_1\sqrt{|S|}}{k_S}\right)\|\hat{\beta}-\beta\|_2 + 2\lambda_1\|(\Gamma\beta)_{-S}\|_1
\end{align*}
with high probability, where we used the definition of $k_S$ and Lemma \ref{lem:projection}. If we set $\beta = \beta^*$, we obtain
\begin{equation}\label{eq:geer2.2}
\begin{split}
    \frac{1}{n}\|\tilde{X}(\hat{\beta}-\beta^*)\|_2^2 + \frac{\lambda_1}{4}\|\Gamma\hat{\beta}-\Gamma\beta^*\|_1 \leq &\left(\sqrt{2\sigma^2\gmax(\Sigma)}\frac{\sqrt{n_c}+\delta}{\sqrt{n}}+\frac{\lambda_1\sqrt{|S|}}{k_S}\right)\|\hat{\beta}-\beta^*\|_2\\ 
    &+\lambda_1\|(\Gamma\beta^*)_{-S}\|_1
\end{split}
\end{equation}
which implies 
$$\lambda_1\|\Gamma\hat{\beta}-\Gamma\beta^*\|_1 \leq 4\left(\sqrt{2\sigma^2\gmax(\Sigma)}\frac{\sqrt{n_c}+\delta}{\sqrt{n}}+\frac{\lambda_1\sqrt{|S|}}{k_S}\right)\|\hat{\beta}-\beta^*\|_2 + 4\lambda_1\|(\Gamma\beta^*)_{-S}\|_1 $$
or that
\begin{align}
&576\rho^2(\Gamma)\frac{\gmax(\Sigma)\log p}{n}\|\Gamma\hat{\beta}-\Gamma\beta^*\|_1^2 = \frac{576}{1024}\frac{\lambda_1^2}{\sigma^2}\|\Gamma\hat{\beta}-\Gamma\beta^*\|_1^2\nonumber\\
&\leq 18\frac{\lambda_1^2}{\sigma^2}\left(\sqrt{2\sigma^2\gmax(\Sigma)}\frac{\sqrt{n_c}+\delta}{\lambda_1\sqrt{n}}+\frac{\sqrt{|S|}}{k_S}\right)^2\|\hat{\beta}-\beta^*\|_2^2+ 18\frac{\lambda_1^2}{\sigma^2}\|(\Gamma\beta^*)_{-S}\|_1^2\nonumber\\
&\leq \left(72\gmax(\Sigma)\frac{(\sqrt{n_c}+\delta)^2}{n} + 36\frac{\lambda_1^2|S|k_S^{-2}}{\sigma^2}\right)\|\hat{\beta}-\beta^*\|_2^2 + 18\frac{\lambda_1^2}{\sigma^2}\|(\Gamma\beta^*)_{-S}\|_1^2\nonumber\\
&\leq \left(72\gmax(\Sigma)\frac{(\sqrt{n_c}+\delta)^2}{n} + 36\frac{\lambda_1^2|S|k_S^{-2}}{\sigma^2}\right)\|\hat{\beta}-\beta^*\|_2^2 + \lambda_1\|(\Gamma\beta^*)_{-S}\|_1\label{eq:cone}
\end{align}
where we used the condition \eqref{eq:appendix-regularity2}. Now if we apply Corollary \ref{cor:re} to \eqref{eq:geer2.2}, we have 
\begin{equation*}
\begin{split}
&(\hat{\beta}-\beta^*)^T\left(\frac{1}{64}\Sigma+\lambda_2L\right)(\hat{\beta}-\beta^*)\leq \left(\sqrt{2\sigma^2\gmax(\Sigma)}\frac{\sqrt{n_c}+\delta}{\sqrt{n}}+\frac{\lambda_1\sqrt{|S|}}{k_S}\right)\|\hat{\beta}-\beta^*\|_2 \\
&+ \lambda_1\|(\Gamma\beta^*)_{-S}\|_1+\frac{72\gmax(\Sigma)n_c}{n}\|\hat{\beta}-\beta^*\|_2^2 + 576\rho^2(\Gamma)\frac{\gmax(\Sigma)\log p}{n}\|\Gamma\hat{\beta}-\Gamma\beta^*\|_1^2
\end{split}
\end{equation*}
which, by \eqref{eq:cone} and the inequality $\frac{n_c}{n}\leq \frac{(\sqrt{n_c}+\delta)^2}{n}$, implies
\begin{equation*}
\begin{split}
&(\hat{\beta}-\beta^*)^T\left(\frac{1}{64}\Sigma+\lambda_2L\right)(\hat{\beta}-\beta^*)\leq \left(\sqrt{2\sigma^2\gmax(\Sigma)}\frac{\sqrt{n_c}+\delta}{\sqrt{n}}+\frac{\lambda_1\sqrt{|S|}}{k_S}\right)\|\hat{\beta}-\beta^*\|_2 \\
&+ 2\lambda_1\|(\Gamma\beta^*)_{-S}\|_1+\left(\frac{144\gmax(\Sigma)(\sqrt{n_c}+\delta)^2}{n}+\frac{36\lambda_1^2|S|k_S^{-2}}{\sigma^2}\right)\|\hat{\beta}-\beta^*\|_2^2
\end{split}
\end{equation*}
If we now apply the condition \eqref{eq:appendix-regularity}, we obtain 
\begin{equation*}
\begin{split}
(\hat{\beta}-\beta^*)^T\left(\frac{1}{64}\Sigma+\lambda_2L\right)&(\hat{\beta}-\beta^*)\leq \left(\sqrt{2\sigma^2\gmax(\Sigma)}\frac{\sqrt{n_c}+\delta}{\sqrt{n}}+\frac{\lambda_1\sqrt{|S|}}{k_S}\right)\|\hat{\beta}-\beta^*\|_2 \\
&+ 2\lambda_1\|(\Gamma\beta^*)_{-S}\|_1+\frac{1}{2}\gmin\left(\frac{1}{64}\Sigma+\lambda_2L\right)\|\hat{\beta}-\beta^*\|_2^2
\end{split}
\end{equation*}
which, by using $\gmin\left(\frac{1}{64}\Sigma+\lambda_2L\right)\|\hat{\beta}-\beta^*\|_2^2 \leq (\hat{\beta}-\beta^*)^T\left(\frac{1}{64}\Sigma+\lambda_2L\right)(\hat{\beta}-\beta^*)$, implies both
\begin{equation}\label{eq:quadratic1}
\begin{split}
    \gmin\left(\frac{1}{64}\Sigma+\lambda_2L\right)\|\hat{\beta}-\beta^*\|_2^2 \leq &2\left(\sqrt{2\sigma^2\gmax(\Sigma)}\frac{\sqrt{n_c}+\delta}{\sqrt{n}}+\frac{\lambda_1\sqrt{|S|}}{k_S}\right)\|\hat{\beta}-\beta^*\|_2 \\
    &+4\lambda_1\|(\Gamma\beta^*)_{-S}\|_1
\end{split}
\end{equation}
and 
\begin{equation}\label{eq:quadratic2}
\begin{split}
    (\hat{\beta}-\beta^*)^T&\left(\frac{1}{64}\Sigma+\lambda_2L\right)(\hat{\beta}-\beta^*) \leq 4\lambda_1\|(\Gamma\beta^*)_{-S}\|_1 \\
    &+ 2\frac{\sqrt{2\sigma^2\gmax(\Sigma)}\frac{\sqrt{n_c}+\delta}{\sqrt{n}}+\frac{\lambda_1\sqrt{|S|}}{k_S}}{\sqrt{\gmin\left(\frac{1}{64}\Sigma+\lambda_2L\right)}}\sqrt{(\hat{\beta}-\beta^*)^T\left(\frac{1}{64}\Sigma+\lambda_2L\right)(\hat{\beta}-\beta^*)}
\end{split}
\end{equation}
The error bounds follow from \eqref{eq:quadratic1} and \eqref{eq:quadratic2} if we note that $x^2 - bx - c \leq 0$ implies $x^2 \leq 4\max(b^2, c) \leq 4(b^2+c)$, for $b, c>0$.
\end{proof}
\begin{theorem}[Theorem \ref{theorem:chain}]
Let $\Gamma$ be the incidence matrix of the $p$-vertex chain graph, and fix $\delta > 0$. With an appropriate choice of $\lambda_1$ and $\lambda_2 \leq \frac{\lambda_1}{8\|\Gamma\beta^*\|_\infty}$, with high probability we have 
\begin{equation}\label{eq:appendix-chain-pred}
    \|\Sigma^{1/2}(\hat{\beta}-\beta^*)\|_2^2 \lesssim \frac{\sigma^2\gmax(\Sigma)}{\gmin(\frac{1}{64}\Sigma+\lambda_2L)}\frac{1+\delta^2}{n} + \frac{(\sigma^2\gmax(\Sigma)\|\Gamma\beta^*\|_1)^{2/3}}{\gmin^{1/3}\left(\frac{1}{64}\Sigma+\lambda_2L\right)}\frac{(p\log p)^{1/3}}{n^{2/3}}
\end{equation}
\begin{equation}
    \|\hat{\beta}-\beta^*\|_2^2 \lesssim \frac{\sigma^2\gmax(\Sigma)}{\gmin^2(\frac{1}{64}\Sigma+\lambda_2L)}\frac{1+\delta^2}{n} + \frac{(\sigma^2\gmax(\Sigma)\|\Gamma\beta^*\|_1)^{2/3}}{\gmin^{4/3}\left(\frac{1}{64}\Sigma+\lambda_2L\right)}\frac{(p\log p)^{1/3}}{n^{2/3}}
\end{equation}
provided that the RHS of \eqref{eq:appendix-chain-pred} is smaller than $C\sigma^2$.
\end{theorem}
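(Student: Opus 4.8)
The plan is to follow the proof of Theorem~\ref{theorem1} verbatim up to the point \eqref{eq:stopping-point} where the noise term $\frac{2}{n}\epsilon^T X(\hat\beta - \beta^*)$ is split, and to replace only the way this term is controlled. In the general proof the split $I_p = \Pi + \Gamma^\dagger\Gamma$ is used, and the component along $\Gamma^\dagger\Gamma$ is bounded by $\frac{2}{n}\|(\Gamma^\dagger)^T X^T\epsilon\|_\infty\|\Gamma(\hat\beta-\beta^*)\|_1$; this forces $\lambda_1\propto\rho(\Gamma)$, which for the chain is $\rho(\Gamma)=\sqrt p$ and destroys the bound. Following the idea of Theorem~6 of \cite{wang2015trend}, I would instead introduce a free integer parameter $B$ and decompose $v := \hat\beta-\beta^*$ through the projection $\Pi_B$ onto the space of functions that are constant on each of $B$ contiguous blocks of the chain (so that $\Pi_B$ contains the constant vector and hence subsumes $\Pi$). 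Writing $v = \Pi_B v + (I-\Pi_B)v$, the first piece is ``low-dimensional'' ($B$ degrees of freedom) and the second is a within-block fluctuation that is reconstructed from the within-block differences, i.e.\ a strict subvector of $\Gamma v$.

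The two pieces are then bounded separately. For the block-constant part, $\frac{2}{n}\langle X^T\epsilon,\Pi_B v\rangle \le \frac{2}{n}\|\Pi_B X^T\epsilon\|_2\|v\|_2$, and a chi-square--type concentration argument (using $\mathrm{tr}(\Pi_B X^TX)\lesssim B\,n\,\gmax(\Sigma)$ with high probability) gives $\frac1n\|\Pi_B X^T\epsilon\|_2\lesssim \sigma\sqrt{\gmax(\Sigma)B/n}$. For the within-block part, $\frac2n\langle X^T\epsilon,(I-\Pi_B)v\rangle \le \frac2n\|(\Gamma_{\mathrm{in}}^\dagger)^T X^T\epsilon\|_\infty\,\|\Gamma v\|_1$, where the columns of $\Gamma_{\mathrm{in}}^\dagger$ are within-block partial sums of length at most $p/B$ and hence have $\ell_2$-norm $\lesssim\sqrt{p/B}$. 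Re-running the Gaussian maximal inequality behind Lemma~\ref{lem:re} with these columns replaces the offending factor $\rho(\Gamma)=\sqrt p$ by $\sqrt{p/B}$, yielding $\frac1n\|(\Gamma_{\mathrm{in}}^\dagger)^T X^T\epsilon\|_\infty\lesssim \sigma\sqrt{\gmax(\Sigma)(p/B)\log p/n}$. Since $B$ is a deterministic quantity (depending only on $p,n,\sigma,\gmax(\Sigma),\gmin$ and $\|\Gamma\beta^*\|_1$), no union bound over $B$ is needed.

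Combining these bounds with Corollary~\ref{cor:re} to pass from $\frac1n\|\tilde X v\|_2^2$ to the quadratic form in $\frac1{64}\Sigma+\lambda_2L$, exactly as in \eqref{eq:quadratic1}--\eqref{eq:quadratic2}, leads to an inequality of the schematic form $\gmin(\tfrac1{64}\Sigma+\lambda_2 L)\|v\|_2^2 \lesssim \sigma\sqrt{\gmax(\Sigma)B/n}\,\|v\|_2 + \sigma\sqrt{\gmax(\Sigma)(p/B)\log p/n}\,\|\Gamma\beta^*\|_1$. Solving this quadratic for $\|v\|_2$ and then minimizing the resulting bound over $B$ --- the optimum is at $B\asymp\big(\gmin\sqrt{p\log p}\,\|\Gamma\beta^*\|_1/(\sigma\sqrt{\gmax(\Sigma)/n})\big)^{2/3}$ --- produces the exponent $2/3$ and the factor $(p\log p)^{1/3}n^{-2/3}$ in \eqref{eq:appendix-chain-pred}. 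The estimation bound carries one extra power of $\gmin$ in the denominator relative to the prediction bound, which again follows from using $\gmin(\tfrac1{64}\Sigma+\lambda_2L)\|v\|_2^2\le v^T(\tfrac1{64}\Sigma+\lambda_2L)v$ together with $\|\Sigma^{1/2}v\|_2^2 \le 64\,v^T(\tfrac1{64}\Sigma+\lambda_2L)v$, precisely as in the final step of Theorem~\ref{theorem1}.

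The main obstacle I anticipate is reconciling the block-based noise control with the restricted-eigenvalue conversion: Corollary~\ref{cor:re} still contributes an error term proportional to $\rho^2(\Gamma)\frac{\gmax(\Sigma)\log p}{n}\|\Gamma v\|_1^2$ carrying the \emph{global} $\rho(\Gamma)=\sqrt p$, whereas in the proof of Theorem~\ref{theorem1} this is absorbed only because the choice $\lambda_1\propto\rho(\Gamma)$ makes it cancel against $\lambda_1^2$. With the smaller $\lambda_1\asymp\sigma\sqrt{\gmax(\Sigma)(p/B)\log p/n}$ dictated by the block analysis this cancellation is lost, so one must instead control $\|\Gamma v\|_1$ directly --- bounding $\|\Gamma\hat\beta\|_1\lesssim\|\Gamma\beta^*\|_1$ from the basic inequality so that $\|\Gamma v\|_1\lesssim\|\Gamma\beta^*\|_1$ --- and then argue that the residual term $\sqrt{p}\sqrt{\gmax(\Sigma)\log p/n}\,\|\Gamma\beta^*\|_1$ is of lower order in the regime $n\gg\sqrt{p\log p}$ where the stated bound is meaningful. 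Making this absorption rigorous, possibly via a block-adapted version of Lemma~\ref{lem:re}, is the delicate part of the argument.
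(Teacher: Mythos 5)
Your proposal is correct and lands on the same architecture as the paper's proof --- retain the argument of Theorem \ref{theorem1} up to \eqref{eq:stopping-point}, replace the single split $I_p = \Pi + \Gamma^\dagger\Gamma$ by a finer decomposition into a low-rank projection (handled as extra parametric degrees of freedom at rate $\sqrt{k/n}$) plus a remainder whose effective inverse scaling factor drops from $\sqrt{p}$ to $\sqrt{p/k}$, and then optimize the rank --- but the projection you choose is genuinely different. The paper uses the spectral projection $P_{[k]} = V_{[k]}V_{[k]}^T$ onto the span of the top-$k$ right singular vectors of $\Gamma$, exploiting the incoherence $\|u_i\|_\infty\le\sqrt{2/p}$ of the left singular vectors and the explicit singular values $\xi_i^2 = 4\sin^2(\pi i/2p)$ of the chain to get $\max_j\|s_j'\|_2^2\le 4p/(\pi^2 k)$. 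You instead use the projection $\Pi_B$ onto vectors constant on $B$ contiguous blocks, for which the bound $\max_j\|(\Gamma_{\mathrm{in}}^\dagger)_{\cdot j}\|_2^2\lesssim p/B$ follows from elementary facts about chains of length $p/B$. Your route avoids the DCT eigenstructure entirely and has the pleasant feature that $\Pi_B + \Gamma_{\mathrm{in}}^\dagger\Gamma_{\mathrm{in}} = I_p$ exactly (so it subsumes the kernel projection $\Pi$ as well); the paper's spectral route generalizes more readily to other graphs with incoherent eigenvectors. Your optimization over $B$ reproduces the paper's choice $k\asymp\left(p\log p\,\|\Gamma\beta^*\|_1^2\,\gmin^2\left(\frac{1}{64}\Sigma+\lambda_2L\right)n/(\sigma^2\gmax(\Sigma))\right)^{1/3}$ and the stated rate.

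On the ``delicate part'' you flag: you are right that Corollary \ref{cor:re} as stated carries the global $\rho^2(\Gamma)\asymp p$, and that the cancellation against $\lambda_1^2$ in \eqref{eq:cone} is lost once $\lambda_1\asymp\sigma\sqrt{(p/B)\gmax(\Sigma)\log p/n}$. Of the two fixes you sketch, only the second is viable: rerun the proof of Lemma \ref{lem:re} with $I_p = \Pi_B + \Gamma_{\mathrm{in}}^\dagger\Gamma_{\mathrm{in}}$ in place of $I_p = \Pi+\Gamma^\dagger\Gamma$, which replaces $n_c$ by $B$ and $\rho(\Gamma)^2$ by $p/B$ in Corollary \ref{cor:re}; the offending term is then again exactly of order $\lambda_1^2\|\Gamma v\|_1^2/\sigma^2$ and the absorption via \eqref{eq:cone} goes through verbatim. (This block-adapted restricted eigenvalue bound is also what the paper's ``the rest of the proof is identical'' implicitly requires, with $P_{[k]}$ in place of $\Pi_B$.) Your first suggestion --- bounding $\|\Gamma v\|_1\lesssim\|\Gamma\beta^*\|_1$ from the basic inequality and arguing that the residual $p\,\frac{\gmax(\Sigma)\log p}{n}\|\Gamma\beta^*\|_1^2$ is of lower order --- would require a sample-size condition far stronger than $n\gg\sqrt{p\log p}$ and should be discarded.
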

\begin{proof}
The proof is identical to that of Theorem \ref{theorem1} up to \eqref{eq:stopping-point}. However, we need to bound $\frac{2}{n}\epsilon^TX\Gamma^\dagger\Gamma(\hat{\beta}-\beta)$ differently. 

Let $\Gamma = U\Xi V^T$ be the singular value decomposition of $\Gamma$, and let $\xi_1, \dots, \xi_{p-1}$ be the nonzero singular values of $\Gamma$. Let $u_1, \dots, u_m$ and $v_1, \dots, v_p$ denote the columns of $U$ and $V$. Denote by  $V_{[k]}\in \mathbb{R}^{p\times k}$ the matrix containing the first $k$ columns of $V$ ($k$ is to be specified later) and $V_{-[k]}\in \mathbb{R}^{p\times (p-k)}$ the matrix containing the other $p-k$ columns of $V$. Define the projection matrix $P_{[k]}:= V_{[k]}V_{[k]}^T\in \mathbb{R}^{p\times p}$.

Noting that $\Gamma^\dagger \Gamma$ is a projection matrix, we have:
\begin{align}
&\frac{2}{n}\epsilon^TX\Gamma^\dagger\Gamma(\hat{\beta}-\beta) \nonumber\\
&= \frac{2}{n}\epsilon^T XP_{[k]}\Gamma^\dagger\Gamma(\hat{\beta}-\beta)+ \frac{2}{n}\epsilon^T X(I_p-P_{[k]})\Gamma^\dagger\Gamma(\hat{\beta}-\beta)\nonumber\\
&\leq \frac{2}{n}\|P_{[k]}X^T\epsilon\|_2\|\Gamma^\dagger\Gamma(\hat{\beta}-\beta)\|_2 + \frac{2}{n}\|(\Gamma^\dagger)^T(I_p-P_{[k]})X^T\epsilon\|_\infty \|\Gamma(\hat{\beta}-\beta)\|_1\nonumber\\
&\leq \frac{2}{n}\|P_{[k]}X^T\epsilon\|_2\|\hat{\beta}-\beta\|_2 + \frac{\lambda_1}{2}\|\Gamma(\hat{\beta}-\beta)\|_1 \label{eq:chain2}
\end{align}
if we choose $\lambda_1 \geq \frac{8}{n}\|(\Gamma^\dagger)^T(I_p-P_{[k]})X^T\epsilon\|_\infty$ with high probability. 

In order to choose $k$, we need to bound $\frac{8}{n}\|(\Gamma^\dagger)^T(I_p-P_{[k]})X^T\epsilon\|_\infty$. Let $s_1', \dots, s_m'$ be the columns of $(I_p - P_{[k]})\Gamma^\dagger$. Let $e_j$, $j\in [m]$, denote the $j^\text{th}$ canonical basis element. As in the proof of Theorem 6 of \cite{wang2015trend}, we have: 
\begin{align*}
    \|s_j'\|_2^2 &= \|(I_p - V_{[k]}V_{[k]}^T)V\Xi^\dagger U^Te_j\|_2^2\\ 
    &= \|\begin{bmatrix}0 & V_{-[k]}\end{bmatrix}\Xi^\dagger U^Te_j\|_2^2 = \left\|\sum_{i=k+1}^{p-1}\xi_i^{-1}\langle u_i, e_j\rangle v_i\right\|_2^2 \\
    &= \sum_{i=k+1}^{p-1}\xi_i^{-2}\langle u_i, e_j\rangle^2 \leq \frac{2}{p}\sum_{i=k+1}^{p-1}\xi_i^{-2}
\end{align*}
where we made use of the fact that the left singular vectors $\{u_i\}_{i=1}^m$ of $\Gamma$, when $\Gamma$ is the incidence matrix of the chain graph with $p$ vertices, satisfy $\forall i \in [m]: \|u_i\|_\infty \leq \sqrt{\frac{2}{p}}$. 

For the chain graph, the nonzero singular values $\xi_i$ are such that
$$\xi_i^2 = 4\sin^2\left(\frac{\pi i}{2p}\right) = 2 - 2\cos\left(\frac{\pi i}{p}\right), \text{ for }i = 1, \dots, p-1$$
Hence, as in \cite{wang2015trend},
\begin{align*}
    \max_{j\in [m]}\|s_j'\|_2^2\leq\frac{2}{p}\sum_{i=k+1}^{p-1} \xi_i^{-2} &= \frac{1}{2p} \sum_{i=k+1}^{p-1}\sin^{-2}\left(\frac{\pi i}{2p}\right)\\
    &\leq \frac{1}{2p}\int_k^p\sin^{-2}\left(\frac{\pi x}{2p}\right)dx
    = \frac{\cos\left(\frac{\pi k}{2p}\right)}{\pi\sin\left(\frac{\pi k}{2p}\right)} \leq \frac{4p}{\pi^2k}
\end{align*}
where we used $\sin(x) \geq x/2$ and $\cos(x) \leq 1$ for $x \in [0,\pi/2]$.

Similar to Lemma \ref{lem:lambda1}, we can then select $\lambda_1 = \frac{64}{\pi}\sigma\sqrt{p/k}\sqrt{\frac{\gmax(\Sigma)\log p}{n}}$. We also have $\frac{2}{n}\|P_{[k]}X^T\epsilon\|_2 \leq 4\sqrt{\frac{2\sigma^2\gmax(\Sigma)k}{n}}$ with probability at least $1-e^{-n/8}-e^{-k^2/2}$, as in Lemma \ref{lem:projection}. The rest of the proof is again identical to that of Theorem \ref{theorem:1}, and we obtain for any $S$ that 
\begin{equation*}
    \|\Sigma^{1/2}(\hat{\beta}-\beta^*)\|_2^2 \lesssim \frac{\sigma^2\gmax(\Sigma)\left[\frac{1+\delta^2}{n}+\frac{k}{n}\right]+\lambda_1^2|S|k_{S}^{-2}}{\gmin\left(\frac{1}{64}\Sigma+\lambda_2L\right)} + \lambda_1\|(\Gamma\beta^*)_{-S}\|_1
\end{equation*}
with high probability. By setting $S = \emptyset$ and choosing $k$ such that 
$$ \frac{\sigma^2\gmax(\Sigma)k}{\gmin\left(\frac{1}{64}\Sigma+\lambda_2L\right) n} \asymp \lambda_1 \|\Gamma\beta^*\|_1 \asymp  \sigma\sqrt{\frac{p\gmax(\Sigma)\log p}{kn}}\|\Gamma\beta^*\|_1$$
we obtain 
$$k \asymp \left(\frac{p \log p \|\Gamma\beta^*\|_1^2\gmin^2\left(\frac{1}{64}\Sigma+\lambda_2L\right)n}{\sigma^2\gmax(\Sigma)}\right)^{1/3} $$
and with this choice of $k$
$$\frac{\sigma^2\gmax(\Sigma)k}{\gmin\left(\frac{1}{64}\Sigma+\lambda_2L\right) n} \asymp \frac{(\sigma^2\gmax(\Sigma)\|\Gamma\beta^*\|_1)^{2/3}}{\gmin^{1/3}\left(\frac{1}{64}\Sigma+\lambda_2L\right)}\frac{(p\log p)^{1/3}}{n^{2/3}} $$
\end{proof}

We will often use the following lemma to compare probabilities involving two Gaussian vectors.

\begin{lemma}[Anderson's Gaussian comparison inequality \cite{anderson1955integral}]\label{lem:anderson} Let $X$ and $Y$ be two zero-mean Gaussian vectors with covariance $\Sigma_X$ and $\Sigma_Y$ respectively. If $\Sigma_Y-\Sigma_X$ is positive semi-definite, then for any convex set $C$ satisfying $C = -C$, 
$$\bP(X\in C) \geq \bP(Y\in C) $$
\end{lemma}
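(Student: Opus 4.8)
The plan is to reduce the statement to the classical geometric fact that a symmetric, log-concave function on $\mathbb{R}^p$ attains its maximum at the origin. First I would exploit the hypothesis $\Sigma_Y - \Sigma_X \succeq 0$ to construct an explicit coupling. Writing $\Delta := \Sigma_Y - \Sigma_X$, let $Z \sim N(0, \Delta)$ be independent of $X$; then $X + Z \sim N(0, \Sigma_X + \Delta) = N(0, \Sigma_Y)$, so $X + Z$ has the same law as $Y$. Since the statement involves only the marginal distributions of $X$ and $Y$, I may assume this representation without loss of generality. (If $\Sigma_X$ is singular, so that $X$ has no density on $\mathbb{R}^p$, I would first regularize by replacing $\Sigma_X$ with $\Sigma_X + \varepsilon I_p$ and let $\varepsilon \downarrow 0$, using that $\mathbb{P}(\,\cdot \in C)$ is continuous in the covariance because the boundary of a convex set is Gaussian-null.)

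Next, conditioning on $Z$ gives
\begin{equation*}
\mathbb{P}(Y \in C) = \mathbb{E}_Z\big[\mathbb{P}(X \in C - Z \mid Z)\big] = \mathbb{E}_Z[g(Z)], \qquad g(z) := \mathbb{P}(X \in C - z),
\end{equation*}
so it suffices to prove $g(z) \le g(0) = \mathbb{P}(X \in C)$ for every fixed $z$, after which taking expectations over $Z$ closes the argument. Here $g(z) = \int_C f_X(w - z)\, dw = (f_X * \mathbf{1}_C)(z)$, where $f_X$ is the zero-mean Gaussian density of $X$. I would first record that $g$ is even, using both the symmetry $f_X(-x) = f_X(x)$ of the Gaussian density and the hypothesis $C = -C$.

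The heart of the proof is to establish that $g$ is maximized at the origin. I would argue that $g$ is log-concave: the Gaussian density $f_X$ is log-concave and $\mathbf{1}_C$ is log-concave (being the indicator of a convex set), and by the Prékopa–Leindler inequality the convolution of two log-concave functions is again log-concave. An even log-concave function $g$ then satisfies $\log g(0) = \log g\big(\tfrac{1}{2}z + \tfrac{1}{2}(-z)\big) \ge \tfrac{1}{2}\log g(z) + \tfrac{1}{2}\log g(-z) = \log g(z)$, hence $g(z) \le g(0)$ for all $z$. This is exactly the content of the classical Anderson integral inequality, so an alternative would be to invoke it directly rather than reprove it through log-concavity.

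The step I expect to be the main obstacle is precisely this maximization claim — equivalently, Anderson's geometric lemma that $\int_C f_X(\cdot - z)\, dw$ is centrally peaked at $z = 0$. The coupling and conditioning steps are routine, but the monotonicity of $g$ away from the origin genuinely requires the convexity machinery (Prékopa–Leindler, or equivalently Brunn–Minkowski via the layer-cake decomposition $f_X = \int_0^\infty \mathbf{1}[f_X \ge t]\, dt$ into symmetric convex superlevel sets). This is where all the geometric content of the inequality resides, and one must be careful to use both the symmetry of $C$ and the symmetry of the Gaussian density, since the conclusion is false without the assumption $C = -C$.
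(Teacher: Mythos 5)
The paper does not prove this lemma at all: it is imported verbatim from \cite{anderson1955integral} as a known result, so there is no in-paper argument to compare against. Your proposal is the standard proof of Anderson's inequality and is essentially correct: the coupling $Y \stackrel{d}{=} X+Z$ with $Z\sim N(0,\Sigma_Y-\Sigma_X)$ independent of $X$, the conditioning $\bP(Y\in C)=\bE_Z[g(Z)]$ with $g=f_X*\mathbf{1}_C$, and the reduction to ``$g$ even and log-concave implies $g(z)\le g(0)$'' (via Pr\'ekopa--Leindler, or directly via Anderson's integral lemma) is exactly how this comparison is usually derived. Two small technical points deserve care. First, your regularization replaces only $\Sigma_X$ by $\Sigma_X+\varepsilon I_p$, which destroys the hypothesis $\Sigma_Y-\Sigma_X\succeq 0$; you should perturb both covariances (e.g.\ $\Sigma_X+\varepsilon I_p$ and $\Sigma_Y+\varepsilon I_p$), which preserves the positive semi-definite gap and still makes $X$ nondegenerate. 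Second, the passage to the limit $\varepsilon\downarrow 0$ via ``the boundary of a convex set is Gaussian-null'' is valid for nondegenerate limits but can fail when the limiting law of $X$ or $Y$ is supported on a lower-dimensional affine subspace that meets $\partial C$ with positive mass; a clean fix is to apply the whole argument within the support subspace of $X$ (intersecting $C$ with it, which is again symmetric and convex), or to note that for closed $C$ the portmanteau inequalities still deliver $\bP(X\in C)\ge \bP(Y\in \mathrm{int}\,C)$ and handle the boundary separately. Neither issue affects the applications in the paper, where the comparison is used only for norms of Gaussian vectors, i.e.\ for sets of the form $\{\|v\|\le t\}$.
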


\begin{lemma}[Lemma \ref{lem:re}]
\label{lem:re-app}
If $X\in \mathbb{R}^{n\times p}$ has i.i.d. $N(0, \Sigma)$ rows and $m \geq 2$, $n \geq 10$, then the event 
$$\bigg\{\forall v \in \mathbb{R}^p: \frac{\|Xv\|_2}{\sqrt{n}} \geq \frac{1}{4} \|\Sigma^{1/2}v\|_2 - 3\sqrt{\frac{\gmax(\Sigma)n_c}{n}}\|v\|_2 - 6\sqrt{2}\rho(\Gamma)\sqrt{\frac{\gmax(\Sigma)\log p}{n}}\|\Gamma v\|_1\bigg\}$$
holds with probability at least $1 - c_1\exp(-nc_2)$, for some universal constants $c_1, c_2 > 0$.
\end{lemma}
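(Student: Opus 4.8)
The plan is to follow the template of the restricted-eigenvalue bound for correlated Gaussian designs in \cite{raskutti2010restricted}, adapting the Gaussian-width computation to the pair $(\Sigma,\Gamma)$. First I would observe that the claimed inequality is positively homogeneous of degree one in $v$, and that both sides are harmless on $\ker(\Sigma^{1/2})$ (there $Xv=0$ and the right-hand side is nonpositive), so it suffices to prove it uniformly over the sphere $\mathcal S := \{v : \|\Sigma^{1/2}v\|_2 = 1\}$. Writing $X = W\Sigma^{1/2}$ with $W$ having i.i.d.\ $N(0,1)$ entries, and abbreviating the slack as
$$g(v) := 3\sqrt{\tfrac{\gmax(\Sigma)n_c}{n}}\,\|v\|_2 + 6\sqrt2\,\rho(\Gamma)\sqrt{\tfrac{\gmax(\Sigma)\log p}{n}}\,\|\Gamma v\|_1,$$
the goal becomes $\inf_{v\in\mathcal S}\big(\tfrac{\|Xv\|_2}{\sqrt n} + g(v)\big) \ge \tfrac14$. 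Only $v$ with $g(v) < \tfrac14$ can threaten the bound, since otherwise $\tfrac14 - g(v) < 0 \le \|Xv\|_2/\sqrt n$, so it is enough to control the infimum over the bounded shells $\{v\in\mathcal S : g(v)\le \tfrac14\}$.

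The crucial new ingredient is the Gaussian width of the constraint set. Using $I_p = \Pi + \Gamma^\dagger\Gamma$, I would decompose any $v = \Pi v + \sum_{j=1}^m (\Gamma v)_j s_j$, where $s_j$ are the columns of $\Gamma^\dagger$, and bound, for a standard Gaussian $g\in\mathbb{R}^p$,
$$\langle g, \Sigma^{1/2}v\rangle \le \|\Pi\Sigma^{1/2}g\|_2\,\|v\|_2 + \|\Gamma v\|_1 \max_{j\in[m]} |\langle g, \Sigma^{1/2}s_j\rangle|.$$
For the kernel term, $\mathbb{E}\|\Pi\Sigma^{1/2}g\|_2 \le \sqrt{\operatorname{tr}(\Pi\Sigma\Pi)} \le \sqrt{\gmax(\Sigma)\,n_c}$ since $\Pi$ has rank $n_c$; for the dictionary term, each $\langle g,\Sigma^{1/2}s_j\rangle$ is centered Gaussian with variance $\|\Sigma^{1/2}s_j\|_2^2 \le \gmax(\Sigma)\rho(\Gamma)^2$, so a standard maximal inequality over the $m\le p^2$ variables gives $\mathbb{E}\max_j|\langle g,\Sigma^{1/2}s_j\rangle| \le \rho(\Gamma)\sqrt{\gmax(\Sigma)}\sqrt{2\log(2m)} \lesssim \rho(\Gamma)\sqrt{\gmax(\Sigma)\log p}$, the bound $m\le p^2$ converting $\log m$ into $\log p$ and being absorbed into the constant. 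Dividing by $\sqrt n$, this reproduces the two terms of $g(v)$ with room to spare in the constants.

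With the width in hand I would run the standard two-step probabilistic argument. Gordon's Gaussian comparison inequality gives, for a fixed shell $T = \{v\in\mathcal S : g(v)\le r\}$,
$$\mathbb{E}\Big[\inf_{v\in T}\tfrac{\|Xv\|_2}{\sqrt n}\Big] \ge \tfrac{\mathbb{E}\|h\|_2}{\sqrt n} - \tfrac1{\sqrt n}\,\mathbb{E}\sup_{v\in T}\langle g,\Sigma^{1/2}v\rangle,$$
with $h\in\mathbb{R}^n$ standard Gaussian, and $n\ge 10$ ensuring $\mathbb{E}\|h\|_2/\sqrt n$ is close to $1$; the map $W\mapsto \inf_{v\in T}\|W\Sigma^{1/2}v\|_2$ is $1$-Lipschitz in Frobenius norm on $\mathcal S$, so Gaussian concentration controls the fluctuation of the infimum at scale $n^{-1/2}$. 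I would then \emph{peel} over dyadic scales of $g(v)$ up to $\tfrac14$: on each scale a union bound against an event of probability $c_1 e^{-c_2 n}$ suffices, and since there are only $O(\log n)$ relevant scales the total failure probability remains of the form $c_1\exp(-c_2 n)$.

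The main obstacle I anticipate is precisely the width computation and its interface with the peeling: unlike in \cite{raskutti2010restricted}, the effective ``dictionary'' is heterogeneous, with the $n_c$-dimensional $\ker(\Gamma)$ measured in $\ell_2$ (contributing the $\sqrt{n_c/n}\,\|v\|_2$ term) and the atoms $\{s_j\}$ measured in $\ell_1$ with the inverse-scaling factor $\rho(\Gamma)$. Tracking both pieces through Gordon's inequality and the peeling while keeping the leading constant at $\tfrac14$ — and handling the degeneracy of $\Sigma$ so that the reduction to $\mathcal S$ is legitimate — is where the bookkeeping will be most delicate.
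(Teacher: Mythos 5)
Your proposal follows essentially the same route as the paper's proof: reduction to the sphere $\|\Sigma^{1/2}v\|_2=1$, Gordon's inequality, the decomposition $I_p=\Pi+\Gamma^\dagger\Gamma$ to split the Gaussian width into a $\sqrt{\gmax(\Sigma)n_c}$ kernel term and a $\rho(\Gamma)\sqrt{\gmax(\Sigma)\log p}$ maximal-inequality term (using $m\le p^2$), followed by Lipschitz concentration and peeling. The only differences are cosmetic — you bound $\mathbb{E}\|\Pi\Sigma^{1/2}g\|_2$ by Jensen and a trace estimate where the paper invokes Anderson's comparison, and your peeling is described over the combined slack $g(v)$ rather than the paper's two-index dyadic grid with a constant floor, but both yield the stated $1-c_1\exp(-c_2 n)$ guarantee.
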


\begin{proof}
We follow the proof outline of \cite{raskutti2010restricted}. First note that we can restrict our attention to $v \in \mathbb{R}^p$ satisfying $\|\Sigma^{1/2} v\|_2 = 1$, as the inequality that defines the event above is invariant to scaling of $v$. Define 
$$V(r,s) := \{v \in \mathbb{R}^p: \|\Sigma^{1/2} v\|_2 = 1, \|\Gamma v\|_1 \leq r, \|v\|_2 \leq s\} 
$$
and 
$$M(r,s,X) := \sup_{v \in V(r,s)}\left(1-\frac{\|Xv\|_2}{\sqrt{n}}\right) $$
\textit{Bounding the expectation $\bE(M(r,s,X))$}: By an application of Gordon's inequality (see Section 4.2 of \cite{raskutti2010restricted} for the details), 
\begin{align*}
    \bE\left(\sup_{v\in V(r,s)}(-\|Xv\|_2)\right) = \bE\left(\sup_{v\in V(r,s)}\inf_{u\in S^{n-1}}u^TXv\right) &\leq \bE\left(\sup_{v\in V(r,s)}\inf_{u\in S^{n-1}}g^Tu + h^T\Sigma^{1/2}v\right)\\
    &= -\bE\|g\|_2 + \bE\left(\sup_{v\in V(r,s)}h^T\Sigma^{1/2}v\right)
\end{align*}
where $g \sim N(0,I_n)$ independent of $h \sim N(0,I_p)$. We know that $\bE\|g\|_2 \geq \frac{3}{4}\sqrt{n}$ when $n \geq 10$, so we just need to upper bound $\bE\left(\sup_{v\in V(r,s)}h^T\Sigma^{1/2}v\right)$.\\ 

Since $\Pi + \Gamma^\dagger \Gamma = I_p$, 
$$h^T\Sigma^{1/2}v = h^T\Sigma^{1/2}(\Pi + \Gamma^\dagger \Gamma)v \leq \|\Pi\Sigma^{1/2}h\|_2 \|v\|_2 + \|(\Gamma^\dagger)^T\Sigma^{1/2}h\|_\infty \|\Gamma v\|_1 $$
and by definition of $V(r,s)$ we have $\|v\|_2 \leq s$ and $\|\Gamma v\|_1\leq r$ for all $v \in V(r,s)$, so we obtain
$$ \bE\left(\sup_{v\in V(r,s)}h^T\Sigma^{1/2}v\right)\leq s\bE\|\Pi\Sigma^{1/2}h\|_2 + r \bE \|(\Gamma^\dagger)^T\Sigma^{1/2}h\|_\infty$$

Note that the spectral decomposition of $\Pi = U\Lambda U^T$, where $U$ is an orthogonal matrix, is such that $\Lambda$ is a diagonal matrix with $n_c$ ones and $p-n_c$ zeros on the diagonal. Since $\gmax(\Sigma)\Pi - \Pi\Sigma\Pi$ is positive semi-definite, by Lemma \ref{lem:anderson} we know that $\|\sqrt{\gmax(\Sigma)}\Pi h\|_2$ stochastically dominates $\|\Pi\Sigma^{1/2}h\|_2$, and hence
\begin{align*}
    \bE\|\Pi\Sigma^{1/2}h\|_2 &\leq \sqrt{\gmax(\Sigma)}\bE\|\Pi h\|_2 \\
    &= \sqrt{\gmax(\Sigma)}\bE\|U\Lambda U^T h\|_2 \\
    &= \sqrt{\gmax(\Sigma)}\bE\|\Lambda h\|_2 \\
    &= \sqrt{\gmax(\Sigma)} \bE\sqrt{h_1^2 + \cdots + h_{n_c}^2}\\
    &\leq \sqrt{\gmax(\Sigma)n_c}
\end{align*}
where we have used Jensen's inequality and the rotational invariance of the standard Gaussian distribution in the above derivations.\\

By Exercise 2.12 b) of \cite{wainwright2019high}, we also have for $m \geq 2$:
\begin{align*}
    \bE\|(\Gamma^\dagger)^T\Sigma^{1/2}h\|_\infty &= \bE \max_{j \in [m]}|\langle s_j, \Sigma^{1/2}h\rangle| \\
    &\leq 2\sqrt{\gmax(\Sigma)}\rho(\Gamma) \sqrt{\log m} \leq 2\sqrt{2}\sqrt{\gmax(\Sigma)}\rho(\Gamma)\sqrt{\log p}
\end{align*}
since \{$\langle s_j, \Sigma^{1/2}h\rangle: j=1,\cdots, m\}$ is a collection of $m$ zero-mean Gaussian variables with variance at most $\gmax(\Sigma)\max_{j\in [m]}\|s_j\|_2^2 = \gmax(\Sigma)\rho(\Gamma)^2$ (and in the last inequality we used $m \leq p^2$).\\

We can therefore conclude
$$\bE\left(-\inf_{v\in V(r,s)}\|Xv\|_2\right) \leq -\frac{3}{4}\sqrt{n} + s\sqrt{\gmax(\Sigma)n_c} + 2\sqrt{2}r\sqrt{\gmax(\Sigma)}\rho(\Gamma)\sqrt{\log p} $$

Dividing by $\sqrt{n}$ and adding 1 on both sides, we obtain 
$$\bE\left(M(r,s,X)\right)\leq \frac{1}{4} + s\sqrt{\frac{\gmax(\Sigma)n_c}{n}} + 2\sqrt{2}r\sqrt{\gmax(\Sigma)}\rho(\Gamma)\sqrt{\frac{\log p}{n}}$$
\textit{Concentration around the mean for $M(r,s,X)$}: As $M(r,s,X)$ is a Lipschitz function of a Gaussian vector (see Section 4.3 of \cite{raskutti2010restricted} for details), for all $t>0$ we have:
$$\bP(|M(r,s,X) - \bE M(r,s,X)| \geq t/2) \leq 2\exp(-nt^2/8)$$
Substituting $t = t(r,s) := \frac{1}{4} + s\sqrt{\frac{\gmax(\Sigma)n_c}{n}} + 2\sqrt{2}r\sqrt{\gmax(\Sigma)}\rho(\Gamma)\sqrt{\frac{\log p}{n}}$, we obtain
$$\bP\left(M(r,s,X) \geq \frac{3t(r,s)}{2}\right) \leq 2\exp(-nt(r,s)^2/8) $$
\textit{Peeling}: This part is adapted from Section 4.4 of \cite{raskutti2010restricted}. We have shown that
\begin{align*}
   &\bP\left(\sup_{\substack{\|v\|_2\leq s, \|\Gamma v\|_1 \leq r \\ \|\Sigma^{1/2}v\|_2=1}}\left(1-\frac{\|Xv\|_2}{\sqrt{n}}\right) \geq \frac{3}{8} + \frac{3}{2}\sqrt{\frac{\gmax(\Sigma)n_c}{n}}s + 3\sqrt{2}\rho(\Gamma)\sqrt{\frac{\gmax(\Sigma)\log p}{n}}r\right) \\
   &\leq 2\exp\left(-\frac{n}{18}\left(\frac{3}{8} + \frac{3}{2}s\sqrt{\frac{\gmax(\Sigma)n_c}{n}}+ 3\sqrt{2}\rho(\Gamma)r\sqrt{\frac{\gmax(\Sigma)\log p}{n}}\right)^2\right) 
\end{align*}

Let $g_1(s) := \frac{3}{16} + \frac{3}{2}\sqrt{\frac{\gmax(\Sigma)n_c}{n}}s$ and $g_2(r):= \frac{3}{16} + 3\sqrt{2}\rho(\Gamma)\sqrt{\frac{\gmax(\Sigma)\log p}{n}}r$. We can rewrite the above as 
$$\bP\left(\sup_{\substack{\|v\|_2\leq s, \|\Gamma v\|_1 \leq r \\ \|\Sigma^{1/2}v\|_2=1}}\left(1-\frac{\|Xv\|_2}{\sqrt{n}}\right)\geq g_1(s) + g_2(r)\right) \leq 2\exp\left(-\frac{n}{18}[g_1(s) + g_2(r)]^2\right)$$

Note that $g_1 \geq \mu$ and $g_2 \geq \mu$ where $\mu := \frac{3}{16}$. For $i= 1, 2, \cdots, $ and $j = 1, 2, \cdots$, we define the sets
$$A_{ij} := \{v\in \mathbb{R}^p: \|\Sigma^{1/2}v\|_2=1, 2^{i-1}\mu \leq g_1(\|v\|_2) < 2^i\mu, 2^{j-1}\mu \leq g_2(\|\Gamma v\|_1) < 2^j\mu\} $$

Also, we define the events 
$$ \mathcal{E}_{ij} := \bigg\{\exists v \in A_{ij}: 1-\frac{\|Xv\|_2}{\sqrt{n}} \geq 2[g_1(\|v\|_2) + g_2(\|\Gamma v\|_1)]\bigg\}$$
as well as the event 
$$\mathcal{E} := \bigg\{\exists v \in \mathbb{R}^p: \|\Sigma^{1/2}v\|_2=1 \text{ and }1-\frac{\|Xv\|_2}{\sqrt{n}}\geq 2\left[g_1(\|v\|_2) + g_2(\|\Gamma v\|_1)\right]\bigg\}$$

Note that $\mathcal{E} = \cup_{i=1}^\infty \cup_{j=1}^\infty \mathcal{E}_{ij}$. Our goal is to prove that $\bP(\mathcal{E}) \leq c_1\exp(-nc_2)$, from which the lemma follows.\\

If we have $v \in A_{ij}$ such that $1-\frac{\|Xv\|_2}{\sqrt{n}} \geq 2[g_1(\|v\|_2) + g_2(\|\Gamma v\|_1)]$ holds, then by definition of $A_{ij}$,
$$1-\frac{\|Xv\|_2}{\sqrt{n}} \geq 2(2^{i-1}\mu + 2^{j-1}\mu) = 2^i\mu + 2^j \mu = g_1(g_1^{-1}(2^i\mu)) + g_2(g_2^{-1}(2^j \mu)) $$
Again by definition of $A_{ij}$, $g_{1}(\|v\|_2)\leq 2^i \mu$ and $g_2(\|\Gamma v\|_1) \leq 2^j \mu$, and so 
$$\|v\|_2 \leq g_1^{-1}(2^i\mu) \qquad \text{and}\qquad \|\Gamma v\|_1\leq g_2^{-1}(2^j\mu)$$

Therefore, we must have 
\begin{align*}
    \bP(\mathcal{E}_{ij}) &\leq 2\exp\left(-\frac{n}{18}[g_1(g_1^{-1}(2^i\mu)) + g_2(g_2^{-1}(2^j\mu))]^2\right)\\
    &=2\exp\left(-\frac{n}{18}(2^i+2^j)^2\mu^2\right)\\
    &\leq 2\exp\left(-\frac{n}{18}2^{2i}\mu^2\right)\exp\left(-\frac{n}{18}2^{2j}\mu^2\right)
\end{align*}
Hence, 
\begin{align*}
    \bP(\mathcal{E}) &\leq 2\sum_{i=1}^\infty \sum_{j=1}^\infty \exp\left(-\frac{n}{18}2^{2i}\mu^2\right)\exp\left(-\frac{n}{18}2^{2j}\mu^2\right) \\ 
    &= 2\left(\sum_{i=1}^\infty \exp\left(-\frac{n}{18}2^{2i}\mu^2\right)\right)^2\\
    &\leq 2\left(\sum_{i=1}^\infty\exp\left(-\frac{ni}{18}\mu^2\right)\right)^2 \\
    &= 2\left(\frac{\exp\left(-\frac{n}{18}\mu^2\right)}{1-\exp\left(-\frac{n}{18}\mu^2\right)}\right)^2\leq c_1\exp(-nc_2)
\end{align*}
\end{proof}

\begin{corollary}
\label{cor:re}
Under the settings of Lemma \ref{lem:re-app}, 
$$\frac{\|\tilde{X}v\|_2^2}{n} \geq v^T\left(\frac{1}{64}\Sigma+\lambda_2L\right)v - \frac{72\gmax(\Sigma)n_c}{n}\|v\|_2^2 - 576\rho(\Gamma)^2\frac{\gmax(\Sigma)\log p}{n}\|\Gamma v\|_1^2$$
holds for all $v\in \mathbb{R}^p$ with probability at least $1-c_1\exp(-nc_2)$
\end{corollary}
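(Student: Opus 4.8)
The plan is to deduce the corollary directly from Lemma~\ref{lem:re-app} by exploiting the block structure of $\tilde{X}$. Writing $\tilde{X} = \bigl(\begin{smallmatrix} X \\ \sqrt{\lambda_2 n}\,\Gamma\end{smallmatrix}\bigr)$, I would first record the exact identity
\[
\frac{\|\tilde{X}v\|_2^2}{n} = \frac{\|Xv\|_2^2}{n} + \lambda_2\|\Gamma v\|_2^2 = \frac{\|Xv\|_2^2}{n} + \lambda_2\, v^T L v,
\]
using $L = \Gamma^T\Gamma$. Thus it suffices to lower-bound $\|Xv\|_2^2/n$, since the term $\lambda_2\, v^T L v$ will simply carry over and combine with $\tfrac{1}{64}v^T\Sigma v$ to form the quadratic $v^T\!\left(\tfrac{1}{64}\Sigma + \lambda_2 L\right)v$ appearing on the right-hand side of the claim.

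Next I would invoke Lemma~\ref{lem:re-app}, which holds uniformly over $v \in \mathbb{R}^p$ on an event of probability at least $1 - c_1\exp(-nc_2)$. Abbreviating $a := \tfrac14\|\Sigma^{1/2}v\|_2$, $b_1 := 3\sqrt{\gmax(\Sigma)n_c/n}\,\|v\|_2$, and $b_2 := 6\sqrt{2}\,\rho(\Gamma)\sqrt{\gmax(\Sigma)\log p/n}\,\|\Gamma v\|_1$, the lemma states $\|Xv\|_2/\sqrt{n} \geq a - b_1 - b_2$. Since trivially $\|Xv\|_2/\sqrt{n} \geq 0$ as well, I obtain $\|Xv\|_2/\sqrt{n} \geq \max(a - b_1 - b_2,\,0)$ and hence $\|Xv\|_2^2/n \geq \max(a - b_1 - b_2,\,0)^2$.

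The one step requiring care — and the main (indeed only) obstacle — is squaring this bound into an \emph{additive} lower bound, since the bracketed quantity $a - b_1 - b_2$ may well be negative. Here I would use the elementary inequality $\max(a-b,0)^2 \geq \tfrac14 a^2 - 4b^2$, valid for all $a,b \geq 0$: on $\{a \geq b\}$ the difference is the quadratic $\tfrac34 a^2 - 2ab + 5b^2$, which has discriminant $-11b^2 \leq 0$ and is therefore nonnegative, while on $\{a < b\}$ the left side vanishes and the right side is already negative (as $a^2 < b^2 \leq 16b^2$). Applying this with $b = b_1 + b_2$ and then $(b_1+b_2)^2 \leq 2b_1^2 + 2b_2^2$ gives
\[
\frac{\|Xv\|_2^2}{n} \;\geq\; \frac14 a^2 - 4(b_1+b_2)^2 \;\geq\; \frac14 a^2 - 8b_1^2 - 8b_2^2 .
\]

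Finally I would unwind the definitions, which track the constants exactly: $\tfrac14 a^2 = \tfrac{1}{64}v^T\Sigma v$, $\;8b_1^2 = 72\,\gmax(\Sigma)n_c\|v\|_2^2/n$, and $\;8b_2^2 = 576\,\rho(\Gamma)^2\gmax(\Sigma)\log p\,\|\Gamma v\|_1^2/n$. Adding $\lambda_2\, v^T L v$ back to both sides via the first display then yields precisely
\[
\frac{\|\tilde{X}v\|_2^2}{n} \geq v^T\!\left(\tfrac{1}{64}\Sigma+\lambda_2 L\right)v - \frac{72\gmax(\Sigma)n_c}{n}\|v\|_2^2 - 576\,\rho(\Gamma)^2\frac{\gmax(\Sigma)\log p}{n}\|\Gamma v\|_1^2 .
\]
Everything after the squaring inequality is a routine substitution, and the probability bound $1 - c_1\exp(-nc_2)$ is inherited verbatim from Lemma~\ref{lem:re-app}, so I would only need to double-check that the numerical constants $\tfrac14$ and $4$ propagate correctly into the stated $\tfrac{1}{64}$, $72$, and $576$.
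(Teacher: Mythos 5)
Your proof is correct and follows essentially the same route as the paper's: both decompose $\|\tilde{X}v\|_2^2/n = \|Xv\|_2^2/n + \lambda_2 v^T L v$, apply Lemma \ref{lem:re-app}, and convert the linear lower bound into a quadratic one via the inequality $\max(a-b,0)^2 \geq \tfrac14 a^2 - 4b^2$ (the paper states this with a general $\delta \in (0,1)$ and then sets $\delta = \tfrac12$). The constants $\tfrac{1}{64}$, $72$, and $576$ propagate exactly as you computed.
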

\begin{proof}
We argue in a manner similar to the proof of Theorem 7.16 in \cite{wainwright2019high}. For any real numbers $a, b, c$ such that $c \geq \max(a-b, 0)$, we claim that $c^2 \geq (1-\delta)^2a^2 - \frac{b^2}{\delta^2}$ for any $\delta \in (0,1)$. This is because if $b \geq a\delta$, then $(1-\delta)^2a^2-\frac{b^2}{\delta^2}\leq a^2[(1-\delta)^2-1] \leq 0 \leq c$, and if $b < a \delta$, then since $c\geq a-b$, we have $c\geq a-a\delta = (1-\delta)a$.\\

Letting $c = \frac{\|Xv\|_2}{\sqrt{n}}$, $a = \frac{1}{4}\|\Sigma^{1/2}v\|_2$, $b = 3\sqrt{\frac{\gmax(\Sigma)n_c}{n}}\|v\|_2 + 6\sqrt{2}\rho(\Gamma)\sqrt{\frac{\gmax(\Sigma)\log p}{n}}\|\Gamma v\|_1$ and $\delta = \frac{1}{2}$, we obtain for all $v \in \mathbb{R}^p$ with probability at least $1 - c_1\exp(-nc_2)$:

\begin{align*}
 \frac{\|Xv\|_2^2}{n} &\geq \frac{1}{64}\|\Sigma^{1/2}v\|_2^2 - 36\left(\sqrt{\frac{\gmax(\Sigma)n_c}{n}}\|v\|_2 + 2\sqrt{2}\rho(\Gamma)\sqrt{\frac{\gmax(\Sigma)\log p}{n}}\|\Gamma v\|_1\right)^2   \\
 &\geq \frac{1}{64}\|\Sigma^{1/2}v\|_2^2 - 72\frac{\gmax(\Sigma)n_c}{n}\|v\|_2^2 - 576\rho(\Gamma)^2\frac{\gmax(\Sigma)\log p}{n}\|\Gamma v\|_1^2
\end{align*}

By adding $\lambda_2 v^T L v$ to both sides, we obtain what we need to prove.
\end{proof}

\begin{lemma}[High-probability bound on $\|\Pi X^T\epsilon\|_2$]
\label{lem:projection}

For any $\delta > 0$, 
$$\|\Pi X^T\epsilon\|_2 \leq \sqrt{2\sigma^2n\gmax(\Sigma)}(\sqrt{n_c}+\delta) $$
with probability at least $1-e^{-n/8}-e^{-\delta^2/2}$.
\end{lemma}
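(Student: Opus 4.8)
The plan is to decouple the two sources of Gaussian randomness --- the design $X$ and the noise $\epsilon$ --- by conditioning on $\epsilon$. First I would observe that since the rows $x_i$ of $X$ are i.i.d. $N(0,\Sigma)$ and independent of $\epsilon$, we can write $X^T\epsilon = \sum_{i=1}^n \epsilon_i x_i$, which conditionally on $\epsilon$ is a sum of independent Gaussian vectors and hence $X^T\epsilon \mid \epsilon \sim N(0, \|\epsilon\|_2^2\,\Sigma)$. Consequently $\Pi X^T\epsilon \mid \epsilon \sim N(0, \|\epsilon\|_2^2\,\Pi\Sigma\Pi)$. This reduces the problem to controlling the norm of a centered Gaussian vector whose covariance is $\|\epsilon\|_2^2\,\Pi\Sigma\Pi$.

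Next I would strip off the anisotropy using Anderson's inequality (Lemma \ref{lem:anderson}). Using $\Pi = \Pi^T$ and $\Pi^2 = \Pi$, for any $v$ we have $v^T\Pi\Sigma\Pi v = \|\Sigma^{1/2}\Pi v\|_2^2 \le \gmax(\Sigma)\|\Pi v\|_2^2 = \gmax(\Sigma)\, v^T\Pi v$, so $\gmax(\Sigma)\Pi - \Pi\Sigma\Pi \succeq 0$. Introducing $h\sim N(0,I_p)$ independent of $\epsilon$, the vector $\sqrt{\gmax(\Sigma)}\,\|\epsilon\|_2\,\Pi h$ has covariance $\gmax(\Sigma)\|\epsilon\|_2^2\,\Pi$, which dominates (in the PSD order) the conditional covariance of $\Pi X^T\epsilon$. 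Applying Lemma \ref{lem:anderson} conditionally with the symmetric convex balls $\{\|\cdot\|_2 \le t\}$ then gives, for every fixed $\epsilon$ and every $t$, the bound $\bP(\|\Pi X^T\epsilon\|_2 > t \mid \epsilon) \le \bP\!\big(\sqrt{\gmax(\Sigma)}\,\|\epsilon\|_2\,\|\Pi h\|_2 > t\big)$.

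With the anisotropy removed, the two remaining pieces are standard concentration facts. Since $\Pi$ is a rank-$n_c$ orthogonal projection, the map $v \mapsto \|\Pi v\|_2$ is $1$-Lipschitz and $\bE\|\Pi h\|_2 \le (\bE\|\Pi h\|_2^2)^{1/2} = \sqrt{n_c}$, so Gaussian Lipschitz concentration yields $\|\Pi h\|_2 \le \sqrt{n_c} + \delta$ with probability at least $1 - e^{-\delta^2/2}$. Taking $t = \sqrt{\gmax(\Sigma)}\,\|\epsilon\|_2(\sqrt{n_c}+\delta)$ above and integrating out $\epsilon$ (the bound on $\|\Pi h\|_2$ being uniform in $\epsilon$) gives $\|\Pi X^T\epsilon\|_2 \le \sqrt{\gmax(\Sigma)}\,\|\epsilon\|_2(\sqrt{n_c}+\delta)$ with probability at least $1 - e^{-\delta^2/2}$. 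Separately, $\|\epsilon\|_2^2/\sigma^2 \sim \chi_n^2$, and a Laurent--Massart tail bound (with slack $t = n/8$, using $\tfrac1{\sqrt2}+\tfrac14 \le 1$) gives $\|\epsilon\|_2 \le \sqrt{2n}\,\sigma$ with probability at least $1 - e^{-n/8}$. Combining these two events by a union bound and substituting the bound on $\|\epsilon\|_2$ into the first produces $\|\Pi X^T\epsilon\|_2 \le \sqrt{2\sigma^2 n\,\gmax(\Sigma)}(\sqrt{n_c}+\delta)$, as claimed.

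I expect the only genuine subtlety to be the bookkeeping around the conditional application of Anderson's inequality: one must verify that the conditional domination bound, being uniform in $\epsilon$, survives integration, and that the target failure event is correctly contained in the union of the two high-probability events (when $\|\epsilon\|_2 \le \sqrt{2n}\,\sigma$ holds we invoke the design event, and otherwise the noise event). The explicit constants ($\sqrt{2}$ and $n/8$) require applying the $\chi^2$ tail with exactly the right slack, but this is routine.
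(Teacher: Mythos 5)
Your proposal is correct and follows essentially the same route as the paper: the paper also reduces $\|\Pi X^T\epsilon\|_2$ to $\|\epsilon\|_2\,\|\tilde x\|_2$ with $\tilde x\sim N(0,\Pi\Sigma\Pi)$ independent of $\epsilon$ (your conditioning step in slightly different form), applies Anderson's inequality against $\sqrt{\gmax(\Sigma)}\,\Pi h$, and finishes with the same Lipschitz concentration for $\|\Pi h\|_2$ and the same $\chi^2_n$ tail bound $\|\epsilon\|_2\le\sigma\sqrt{2n}$ at level $e^{-n/8}$. The constants and the union-bound bookkeeping match the paper's argument exactly.
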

\begin{proof}
We make use of the fact that $X$ and $\epsilon$ are independent. Note that $X\Pi$ has i.i.d. $N(0, \Pi\Sigma\Pi)$ rows, which we denote by $\tilde{x}_1, \cdots, \tilde{x}_n$. Then 
$$\|\Pi X^T\epsilon\|_2 = \left\|\sum_{i=1}^n\epsilon_i \tilde{x}_i\right\|_2 = \|\epsilon\|_2\left\|\frac{1}{\|\epsilon\|_2}\sum_{i=1}^n\epsilon_i \tilde{x}_i\right\|_2 $$
which has the same distribution as $\|\epsilon\|_2\|\tilde{x}\|_2$, where $\tilde{x} \sim N(0, \Pi \Sigma\Pi)$ is independent of $\epsilon$. Since $\gmax(\Sigma) \Pi - \Pi\Sigma\Pi$ is positive semi-definite, by Lemma \ref{lem:anderson}, $\|\tilde{x}\|_2$ is stochastically dominated by $\sqrt{\gmax(\Sigma)}\|\Pi h\|_2$ (where $h \sim N(0, I_p)$), which in turn has the same distribution as $\sqrt{\gmax(\Sigma)}\|h'\|_2$ where $h' \sim N(0, I_{n_c})$.\\

By an application of a concentration inequality for Lipschitz functions of Gaussian vectors, we have for any $\delta > 0$ (see Example 2.28 of \cite{wainwright2019high}): 
$$ \bP(\|h'\|_2 \geq \sqrt{n_c} + \delta) \leq e^{-\delta^2/2}$$
and we also have $\|\epsilon\|_2 \leq \sigma\sqrt{2n}$ with probability at least $1-e^{-n/8}$ (see Example 2.11 of \cite{wainwright2019high}). Combining all the pieces yields the result. 
\end{proof}

\begin{lemma}[Choice of $\lambda_1$]\label{lem:lambda1}
With probability at least $1-\frac{2}{m}-e^{-n/8}$, we have
$$\|(\Gamma^\dagger)^TX^T\epsilon\|_\infty\leq 4\sigma\rho(\Gamma)\sqrt{\gmax(\Sigma)n\log p} $$
and hence $\lambda_1$ should be chosen such that $\lambda_1 \geq 32\sigma\rho(\Gamma)\sqrt{\frac{\gmax(\Sigma)\log p}{n}}$
\end{lemma}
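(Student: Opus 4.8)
The plan is to bound $\|(\Gamma^\dagger)^TX^T\epsilon\|_\infty = \max_{j\in[m]}|\langle s_j, X^T\epsilon\rangle|$ by conditioning on $\epsilon$ and treating each coordinate as a centered Gaussian whose variance is controlled through $\rho(\Gamma)$ and $\gmax(\Sigma)$. First I would observe that, conditional on $\epsilon$, the vector $X^T\epsilon = \sum_{i=1}^n \epsilon_i x_i$ is a fixed linear combination of the i.i.d. rows $x_i \sim N(0,\Sigma)$, so that $X^T\epsilon \mid \epsilon \sim N(0, \|\epsilon\|_2^2\,\Sigma)$. Consequently each $\langle s_j, X^T\epsilon\rangle$ is, conditional on $\epsilon$, a zero-mean Gaussian with variance $\|\epsilon\|_2^2\, s_j^T\Sigma s_j \le \|\epsilon\|_2^2\,\gmax(\Sigma)\,\rho(\Gamma)^2$, where I have used $s_j^T\Sigma s_j \le \gmax(\Sigma)\|s_j\|_2^2$ and $\|s_j\|_2 \le \rho(\Gamma)$.

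Next I would apply a union bound over the $m$ coordinates together with the standard Gaussian tail inequality $\bP(|Z|>t)\le 2\exp(-t^2/(2\tau^2))$ for $Z\sim N(0,\tau^2)$ (the tail counterpart of the maximal bound of Exercise 2.12 of \cite{wainwright2019high} already used in Lemma \ref{lem:re-app}). Writing $\tau^2 := \|\epsilon\|_2^2\,\gmax(\Sigma)\,\rho(\Gamma)^2$ for the worst-case conditional variance, this yields $\bP\big(\max_j|\langle s_j, X^T\epsilon\rangle|>t \mid \epsilon\big) \le 2m\exp(-t^2/(2\tau^2))$ for any $t>0$. Choosing $t = 2\tau\sqrt{\log m}$ makes $t^2/(2\tau^2)=2\log m$, so the right-hand side is at most $2m\cdot m^{-2}=2/m$; hence, conditionally on $\epsilon$, we have $\max_j|\langle s_j,X^T\epsilon\rangle| \le 2\|\epsilon\|_2\,\rho(\Gamma)\sqrt{\gmax(\Sigma)\log m}$ with probability at least $1-2/m$.

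To remove the dependence on $\|\epsilon\|_2$, I would invoke the concentration bound $\|\epsilon\|_2 \le \sigma\sqrt{2n}$, which holds with probability at least $1-e^{-n/8}$ (Example 2.11 of \cite{wainwright2019high}, as in Lemma \ref{lem:projection}), together with the crude inequality $\log m \le 2\log p$ coming from $m\le p^2$. On the event $\|\epsilon\|_2 \le \sigma\sqrt{2n}$ I then bound $2\|\epsilon\|_2\,\rho(\Gamma)\sqrt{\gmax(\Sigma)\log m} \le 2\cdot\sigma\sqrt{2n}\cdot\rho(\Gamma)\cdot\sqrt{\gmax(\Sigma)}\cdot\sqrt{2\log p} = 4\sigma\rho(\Gamma)\sqrt{\gmax(\Sigma)n\log p}$, which is precisely the claimed bound, and a union bound over the two failure events gives the stated probability $1-\tfrac{2}{m}-e^{-n/8}$. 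The final assertion on $\lambda_1$ is then immediate: the proof of Theorem \ref{theorem1} requires $\lambda_1 \ge \frac{8}{n}\|(\Gamma^\dagger)^T X^T\epsilon\|_\infty$, and $\frac{8}{n}\cdot 4\sigma\rho(\Gamma)\sqrt{\gmax(\Sigma)n\log p} = 32\sigma\rho(\Gamma)\sqrt{\frac{\gmax(\Sigma)\log p}{n}}$.

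The only delicate point — and the step I would treat most carefully — is the conditioning argument, since the conditional Gaussian bound holds for every fixed $\epsilon$ but with a variance proxy depending on $\|\epsilon\|_2$. I would therefore phrase the combination as $\bP(B) \le \bP\big(B\cap\{\|\epsilon\|_2\le\sigma\sqrt{2n}\}\big) + \bP\big(\|\epsilon\|_2>\sigma\sqrt{2n}\big)$, where $B$ is the event that the maximum exceeds the claimed threshold, and bound the first term by $\bE\big[\mathbf{1}\{\|\epsilon\|_2\le\sigma\sqrt{2n}\}\,\bP(B\mid\epsilon)\big] \le 2/m$, using that on this event the conditional variance is at most $\tau^2$ with $t$ chosen as above. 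Everything else is routine; there is no structural obstacle, only careful bookkeeping of the two tail events and the constants so that they collapse to the clean prefactor $4$.
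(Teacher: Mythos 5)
Your proposal is correct and follows essentially the same route as the paper: condition on (or factor out) $\epsilon$, reduce to a maximum of $m$ centered Gaussians with variance at most $\|\epsilon\|_2^2\gmax(\Sigma)\rho(\Gamma)^2$, apply a union bound, and control $\|\epsilon\|_2$ by $\sigma\sqrt{2n}$ on a separate event, with $m\le p^2$ converting $\log m$ to $2\log p$. The only cosmetic difference is that you bound the conditional variance directly via $s_j^T\Sigma s_j\le\gmax(\Sigma)\|s_j\|_2^2$, whereas the paper passes through Anderson's Gaussian comparison inequality to replace $N(0,\Sigma)$ by $\sqrt{\gmax(\Sigma)}\,N(0,I_p)$ — both yield the same constants and the same final bound.
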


\begin{proof}
Recall that the columns of $\Gamma^\dagger \in \mathbb{R}^{p\times m}$ are denoted as $s_1, \cdots, s_m$, and let the rows of $X$ be $x_1, \cdots, x_n$, which by assumption are i.i.d. $N(0,\Sigma)$ vectors.  \\

For any $t>0$:
\begin{align*}
    &\bP(\|(\Gamma^\dagger)^TX^T\epsilon\|_\infty \geq t) \\
    &= \bP\left(\max_{j\in [m]} \bigg|\bigg\langle s_j, \sum_{i=1}^n\epsilon_i x_i \bigg\rangle\bigg| \geq t\right)\\
    &= \bP\left(\|\epsilon\|_2\max_{j\in [m]} \bigg|\bigg\langle s_j, \frac{1}{\|\epsilon\|_2}\sum_{i=1}^n\epsilon_i x_i \bigg\rangle\bigg|\geq t\right) \\
    &\leq \bP\left(\sqrt{2n}\sigma\max_{j\in [m]} \bigg|\bigg\langle s_j, \frac{1}{\|\epsilon\|_2}\sum_{i=1}^n\epsilon_i x_i \bigg\rangle\bigg|\geq t\right) + \bP(\|\epsilon\|_2 > \sigma\sqrt{2n})
\end{align*}

Using the same trick as in Lemma \ref{lem:projection}, $x:= \frac{1}{\|\epsilon\|_2}\sum_{i=1}^n\epsilon_i x_i \sim N(0,\Sigma)$ independent of $\epsilon$. Also, we note again that $\bP(\|\epsilon\|_2 > \sigma\sqrt{2n}) \leq e^{-n/8}$. Hence, $\bP(\|(\Gamma^\dagger)^TX^T\epsilon\|_\infty \geq t)$ is bounded above by 
$$
    \bP\left(\sqrt{2n}\sigma\max_{j\in [m]} |\langle s_j, x\rangle|\geq t\right) + e^{-n/8}
    \leq \bP\left(\sqrt{2n\gmax(\Sigma)}\sigma \max_{j\in [m]}|\langle s_j, g\rangle|\geq t\right) + e^{-n/8}
$$
where $g \sim N(0,I_p)$ and we used Lemma \ref{lem:anderson} in the last inequality. Since $\{\langle s_j, g\rangle: j\in [m]\}$ are Normal variables with variance at most $\rho(\Gamma)^2$, by applying the union bound, the expression above can be bounded above by

$$2\exp\left(-\frac{t^2}{4\gmax(\Sigma)n\sigma^2\rho(\Gamma)^2} + \log m\right) + e^{-n/8} $$

If $t$ is chosen such that $t^2 = 8\log (m) \gmax(\Sigma)n\sigma^2\rho(\Gamma)^2$, we can conclude that 
$$\|(\Gamma^\dagger)^TX^T\epsilon\|_\infty \leq 2\sqrt{2}\sigma\rho(\Gamma)\sqrt{\gmax(\Sigma)n\log m} \leq 4\sigma\rho(\Gamma)\sqrt{\gmax(\Sigma)n\log p}$$
with probability at least $1-\frac{2}{m}-e^{-n/8}$, where we used $m \leq p^2$.
\end{proof}

\begin{lemma}[Lemma 3 of \cite{hutter2016optimal}]\label{lem:rigollet} If $\Gamma$ is the incidence matrix of a graph $G = (V, E)$ with maximum degree $d$ and $\emptyset \neq S \subseteq E$, then 
$$k_S^{-2} \leq 4\min(d, |S|) $$
\end{lemma}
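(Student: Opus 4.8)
The plan is to reduce everything to a single Rayleigh-type supremum and then produce two independent upper bounds on it. Since $S \neq \emptyset$, there is some $\beta$ with $\|(\Gamma\beta)_S\|_1 > 0$, so inverting the definition \eqref{eq:compatibility-factor} gives
$$k_S^{-2} = \sup_{\beta \neq 0}\frac{\|(\Gamma\beta)_S\|_1^2}{|S|\,\|\beta\|_2^2}.$$
It therefore suffices to bound the numerator $\|(\Gamma\beta)_S\|_1^2$ for an arbitrary $\beta \in \mathbb{R}^p$, once by a quantity proportional to $|S|\,\|\beta\|_2^2$ and once by a quantity proportional to $d\,|S|\,\|\beta\|_2^2$; taking the smaller of the two bounds then yields the $\min(d,|S|)$ dependence.

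First I would establish the bound involving $|S|$ by a crude coordinatewise estimate. For every edge $e=(i,j)\in S$, the corresponding entry of $\Gamma\beta$ equals $\pm(\beta_i-\beta_j)$, and since $G$ has no self-loops we have $(\beta_i-\beta_j)^2 \le 2(\beta_i^2+\beta_j^2) \le 2\|\beta\|_2^2$, so $|(\Gamma\beta)_e| \le \sqrt{2}\,\|\beta\|_2$. Summing the $|S|$ entries gives $\|(\Gamma\beta)_S\|_1 \le \sqrt{2}\,|S|\,\|\beta\|_2$, hence $k_S^{-2} \le 2|S| \le 4|S|$. Next I would establish the bound involving $d$ by passing from the $\ell_1$- to the $\ell_2$-norm via Cauchy--Schwarz, $\|(\Gamma\beta)_S\|_1 \le \sqrt{|S|}\,\|(\Gamma\beta)_S\|_2$, and then controlling $\|(\Gamma\beta)_S\|_2^2 = \sum_{(i,j)\in S}(\beta_i-\beta_j)^2$. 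Expanding $(\beta_i-\beta_j)^2 \le 2(\beta_i^2+\beta_j^2)$ and collecting terms, the coefficient multiplying each $\beta_v^2$ is twice the number of edges of $S$ incident to $v$, which is at most $2d$; thus $\|(\Gamma\beta)_S\|_2^2 \le 2d\,\|\beta\|_2^2$, giving $k_S^{-2} \le 2d \le 4d$. Combining the two bounds yields $k_S^{-2} \le 2\min(d,|S|) \le 4\min(d,|S|)$, which is even slightly sharper than the stated constant.

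There is no genuine obstacle here; the only step requiring a moment's care is the degree-counting in the $d$-bound, where one must observe that each vertex is an endpoint of at most $d$ edges of $S$ (because $S \subseteq E$ and $G$ has maximum degree $d$), so that the weight accumulated on any single $\beta_v^2$ never exceeds $2d$. Everything else is elementary norm manipulation, and the fact that both estimates hold for \emph{all} $\beta$ is exactly what lets us take the supremum and conclude.
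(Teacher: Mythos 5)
Your proof is correct. The paper itself does not prove this lemma---it simply imports it as Lemma 3 of H\"utter and Rigollet (2016)---and your argument is essentially the standard one from that reference: write $k_S^{-2}$ as a supremum, bound $\|(\Gamma\beta)_S\|_1$ once entrywise via $|\beta_i-\beta_j|\le\sqrt{2}\,\|\beta\|_2$ to get the $|S|$-bound, and once via Cauchy--Schwarz plus the observation that each vertex meets at most $d$ edges of $S$ to get the $d$-bound; your version even yields the slightly sharper constant $2\min(d,|S|)$.
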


\begin{lemma}[Lower bound in Lemma \ref{lem:2-dim-rho}]
If $\Gamma$ is the incidence matrix of the 2D grid, then $\rho(\Gamma) \gtrsim 1$.
\end{lemma}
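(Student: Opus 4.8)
The plan is to obtain the lower bound by an averaging argument over the columns of $\Gamma^\dagger$, rather than by exhibiting a single large column. Recall that $\rho(\Gamma)^2 = \max_{j\in[m]}\|s_j\|_2^2$, where $s_1,\dots,s_m$ are the columns of $\Gamma^\dagger$. Since the maximum dominates the average,
$$\rho(\Gamma)^2 \geq \frac{1}{m}\sum_{j=1}^m \|s_j\|_2^2 = \frac{1}{m}\|\Gamma^\dagger\|_F^2.$$
The Frobenius norm is then evaluated spectrally: the nonzero singular values of $\Gamma^\dagger$ are the reciprocals of the nonzero singular values of $\Gamma$, and the squared singular values of $\Gamma$ are exactly the nonzero eigenvalues of $L=\Gamma^T\Gamma$. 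Writing $\mu_1,\dots,\mu_{p-n_c}$ for these nonzero eigenvalues, this gives
$$\|\Gamma^\dagger\|_F^2 = \sum_{i=1}^{p-n_c}\frac{1}{\mu_i} = \operatorname{tr}(L^\dagger).$$

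Next I would lower-bound $\operatorname{tr}(L^\dagger)$ using a crude but uniform upper bound on the Laplacian spectrum. For any graph the largest Laplacian eigenvalue satisfies $\gmax(L)\leq 2d$, and for the 2D grid the maximum degree is $d=4$, so every nonzero eigenvalue obeys $\mu_i\leq 8$. Hence $\mu_i^{-1}\geq 1/8$ for each of the $p-n_c$ nonzero eigenvalues, and therefore $\operatorname{tr}(L^\dagger) \geq (p-n_c)/8$. Combining this with the edge count of the grid finishes the argument: the 2D grid is connected, so $n_c=1$, and it has $m = 2p - 2\sqrt{p} \leq 2p$ edges, whence
$$\rho(\Gamma)^2 \geq \frac{1}{m}\operatorname{tr}(L^\dagger) \geq \frac{p-1}{8m} \geq \frac{p-1}{16p} \geq \frac{1}{32}$$
for all $p\geq 2$, which yields $\rho(\Gamma)\gtrsim 1$.

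I do not expect a serious obstacle here: the only inputs are the max-versus-average inequality, the spectral identity $\|\Gamma^\dagger\|_F^2 = \operatorname{tr}(L^\dagger)$, and the degree bound on $\gmax(L)$. If anything, the mild subtlety is bookkeeping the nonzero spectrum correctly (matching the $p-n_c$ nonzero eigenvalues of $L$ with the squared nonzero singular values of $\Gamma$). It is worth remarking that the argument never uses the grid structure beyond $d=\mathcal{O}(1)$ and $m\asymp p$, so in fact $\rho(\Gamma)\gtrsim 1$ holds for any connected bounded-degree graph with $m\asymp p$; the genuinely grid-specific content lies entirely in the matching upper bound $\rho(\Gamma)\lesssim\sqrt{\log p}$ imported from Proposition 4 of \cite{hutter2016optimal}.
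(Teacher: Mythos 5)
Your proof is correct, and it takes a genuinely different route from the paper. The paper exhibits a single explicit large column of $\Gamma^\dagger$ (the one indexed by a corner edge, $s_{1,1}^{(1)}$) and lower-bounds its norm directly from the eigendecomposition of the grid Laplacian as a tensor product of chain-graph Laplacians; this requires roughly a page of trigonometric estimates ($2-2\cos x\le x^2$, $\sin x\ge x/2$) and integral comparisons over the two-dimensional spectrum. Your averaging argument replaces all of that with three generic facts: $\max_j\|s_j\|_2^2\ge \frac{1}{m}\|\Gamma^\dagger\|_F^2$, the spectral identity $\|\Gamma^\dagger\|_F^2=\operatorname{tr}(L^\dagger)=\sum_i \mu_i^{-1}$ over the $p-n_c$ nonzero Laplacian eigenvalues (which follows from $\Gamma^\dagger(\Gamma^\dagger)^T=(\Gamma^T\Gamma)^\dagger$), and the uniform bound $\gmax(L)\le 2d=8$. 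Each step checks out, and the final numerical chain $\rho(\Gamma)^2\ge (p-1)/(8m)\ge 1/32$ for $p\ge 2$ is arithmetic. What your approach buys is brevity and generality — as you note, it yields $\rho(\Gamma)\gtrsim 1$ for any connected bounded-degree graph with $m\asymp p$, which in particular also covers the $r$-dimensional grids and so subsumes the paper's separate lower-bound computation for $r\ge 3$. What the paper's computation buys, by contrast, is localized information about \emph{which} column attains a large norm, but that extra information is not used anywhere in the paper, so for the purpose of this lemma your argument is strictly preferable.
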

\begin{proof}
Let $N := \sqrt{p}$. In the proof of Proposition 4 of \cite{hutter2016optimal}, it was shown that $\Gamma^\dagger$ has $2N(N-1)$ columns $((s_{i,j}^{(1)})_{\substack{i\in [N-1]\\ j\in [N]}}, ((s_{i,j}^{(2)})_{\substack{j\in [N-1]\\ i\in [N]}})$, each of which has column norm such that 
$$\|s_{i,j}^{(\diamond)}\|_2^2 = \sum_{k=0}^{N-1}\sum_{l=1}^{N-1}\frac{1}{(\lambda_k+\lambda_l)^2}\langle v_l, d_i\rangle^2\langle v_k, e_j \rangle^2 $$
where $\diamond \in \{1,2\}$, $\lambda_k = 2-2\cos \frac{k\pi}{N}$ for $0\leq k \leq N-1$ (these are the eigenvalues of the Laplacian of the one-dimensional chain graph with $N$ vertices), $d_i$ is the $i^\text{th}$ column of $D_1^T$ where $D_1$ is the incidence matrix of the chain graph with $N$ vertices, $e_1, \dots, e_n$ are the canonical basis vectors of $\mathbb{R}^N$, and $v_k\in \mathbb{R}^N$ ($0\leq k \leq N-1$) are the orthonormal eigenvectors of the Laplacian of the one-dimensional chain graph with $N$ vertices: 
$$(v_0)_j = \frac{1}{\sqrt{N}} $$
$$(v_k)_j = \sqrt{\frac{2}{N}}\cos\left(\frac{(j+1/2)k\pi}{N}\right) \text{ for } 0 \leq j \leq N-1, 1 \leq k \leq N-1 $$

Since $\rho(\Gamma)$ is defined as the maximum column norm of $\Gamma^\dagger$, we can bound it below by the column norm of $s_{i,j}^{(\diamond)}$ where $i=j=\diamond = 1$. We have:
$$\|s_{1,1}^{(1)}\|_2^2 = \sum_{k=0}^{N-1}\sum_{l=1}^{N-1}\frac{1}{\left(4-2\cos\frac{k\pi}{N}-2\cos\frac{l\pi}{N}\right)^2}\langle v_l, d_1\rangle^2\langle v_k, e_1\rangle^2$$

Using the inequality $2-2\cos(x) \leq x^2$, we have $4-2\cos \frac{k\pi}{N}-2\cos \frac{l\pi}{N}\leq \frac{\pi^2}{N^2}(k^2+l^2)$. Furthermore, note that 
\begin{align*}
    \langle v_l, d_1\rangle^2 &= \frac{2}{N}\left(\cos \frac{(5/2)l\pi}{N}-\cos \frac{(3/2)l\pi}{N}\right)^2 \\
    &= \frac{2l^2\pi^2}{N^3}\sin^2(x')
\end{align*}
for some $x' \in \left[\frac{(3/2)l\pi}{N}, \frac{(5/2)l\pi}{N}\right]$, by the mean value theorem. Given the inequality $\sin(x)\geq x/2$ for $x\in [0, \pi/2]$, we can conclude that $\sin^2(x') \geq (x')^2/4 \geq \frac{9}{16}\frac{l^2\pi^2}{N^2}$ if we assume $l \leq \frac{N}{5}$, and so 

$$\langle v_l, d_1\rangle^2 \geq \frac{9\pi^4}{8}\frac{l^4}{N^5} $$
if $l\leq N/5$. Moreover, $\langle v_k, e_1 \rangle^2 = \frac{1}{N}$ if $k = 0$, and if we assume $ k \leq \frac{2}{3\pi}N$, then $1-\frac{9}{8}\frac{k^2\pi^2}{N^2} \geq \frac{1}{2} $ and an application of $1-\cos(x)\leq \frac{x^2}{2}$ gives 
\begin{align*}
    \langle v_k, e_1 \rangle^2 = \frac{2}{N}\left[\cos\left(\frac{3k\pi}{2N}\right)\right]^2 &= \frac{2}{N}\left[1-\left(1-\cos \frac{3k\pi}{2N}\right)\right]^2\\ 
    &\geq \frac{2}{N}\left[1-\frac{9k^2\pi^2}{8N^2}\right]^2 \geq \frac{1}{2N}
\end{align*}

Hence, if $k \leq \frac{2}{3\pi}N$, we have $\langle v_k, e_1\rangle^2 \geq \frac{1}{2N}$. Let $c = \min\left(\frac{2}{3\pi}, \frac{1}{5}\right) = \frac{1}{5}$. Now, 
\begin{align*}
    \|s_{1,1}^{(1)}\|_2^2 &\geq \sum_{k=0}^{\lfloor cN\rfloor}\sum_{l=1}^{\lfloor cN\rfloor}\frac{1}{\left(4-2\cos\frac{k\pi}{N}-2\cos\frac{l\pi}{N}\right)^2}\langle v_l, d_1\rangle^2\langle v_k, e_1\rangle^2 \\ 
    &\geq \sum_{k=0}^{\lfloor cN\rfloor}\sum_{l=1}^{\lfloor cN\rfloor} \left(\frac{N^4}{\pi^4(k^2+l^2)^2}\right)\left(\frac{9\pi^4}{8}\frac{l^4}{N^5}\right)\left(\frac{1}{2N}\right)\\
    &= \frac{9}{16N^2}\sum_{l=1}^{\lfloor cN\rfloor}\sum_{k=0}^{\lfloor cN\rfloor} \frac{l^4}{(k^2+l^2)^2}
\end{align*}

Since $\frac{1}{(k^2+l^2)^2}$ is a decreasing function of $k$, 
\begin{align*}
     \|s_{1,1}^{(1)}\|_2^2 &\geq \frac{9}{16N^2}\sum_{l=1}^{\lfloor cN\rfloor}l^4\int_0^{ cN} \frac{1}{(x^2+l^2)^2}dx \\ 
     &= \frac{9}{16N^2}\sum_{l=1}^{\lfloor cN\rfloor}l^4\frac{(l^2+c^2N^2)\arctan(cN/l) + cNl}{2l^3(l^2+c^2N^2)}\\
     &= \frac{9}{32N^2}\sum_{l=1}^{\lfloor cN\rfloor}l\frac{(l^2+c^2N^2)\arctan(cN/l)+cNl}{l^2+c^2N^2} \\ 
     &\geq \frac{9}{32N^2}\sum_{l=1}^{\lfloor cN\rfloor}l\arctan(c) \\ 
     &= \frac{9\arctan(c)}{32N^2}\frac{\lfloor cN\rfloor (\lfloor cN\rfloor+1)}{2} \gtrsim 1
\end{align*}
\end{proof}

\begin{lemma}[Lower bound in Lemma \ref{lem:r-dim-rho}]
If $\Gamma$ is the incidence matrix of the r-dimensional grid for $r\geq 3$, then $\rho(\Gamma) \geq c(r)$, where $c(r)$ is a constant depending only on $r$.
\end{lemma}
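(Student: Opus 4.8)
The plan is to mimic the lower-bound argument just carried out for the 2D grid, exploiting the tensor-product structure of the $r$-dimensional grid Laplacian. Write $N := p^{1/r}$, so the grid has $N$ vertices along each of its $r$ axes. As in Proposition 6 of \cite{hutter2016optimal}, the eigenpairs of $L$ are $\left(\sum_{t=1}^r \lambda_{k_t},\ \bigotimes_{t=1}^r v_{k_t}\right)$, where $\lambda_k = 2-2\cos(k\pi/N)$ and the $v_k$ are the eigenvalues and orthonormal eigenvectors of the $1$D chain Laplacian on $N$ vertices. Consequently, the column $s$ of $\Gamma^\dagger$ corresponding to a corner edge pointing along the first axis admits the spectral expansion
$$\|s\|_2^2 = \sum_{k_1=1}^{N-1}\sum_{k_2=0}^{N-1}\cdots\sum_{k_r=0}^{N-1} \frac{\langle v_{k_1}, d_1\rangle^2 \prod_{t=2}^r \langle v_{k_t}, e_1\rangle^2}{\left(\sum_{t=1}^r \lambda_{k_t}\right)^2},$$
where $d_1$ is the edge-difference vector of the first chain edge and $e_1$ is the first canonical basis vector. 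Since $\rho(\Gamma)$ is the maximal column norm of $\Gamma^\dagger$, it suffices to bound this quantity below by a constant depending only on $r$ (the matching upper bound $\rho(\Gamma)\leq C(r)$ is already supplied by Proposition 6 of \cite{hutter2016optimal}).

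First I would reuse verbatim the three elementary estimates established in the 2D proof: from $2-2\cos x \leq x^2$ we get $\sum_t \lambda_{k_t} \leq \frac{\pi^2}{N^2}\sum_t k_t^2$; the mean value theorem together with $\sin x \geq x/2$ gives $\langle v_{k_1}, d_1\rangle^2 \gtrsim k_1^4/N^5$ whenever $k_1 \leq N/5$; and $1-\cos x \leq x^2/2$ gives $\langle v_{k_t}, e_1\rangle^2 \gtrsim 1/N$ whenever $k_t \leq N/5$. Substituting these into the display and discarding all but the corner block of indices $\{k_1 \in [N/10, N/5]\}\times\{k_t \in [1, N/5]\}_{t=2}^r$, the $N$-powers in the prefactors combine as $N^4\cdot N^{-5}\cdot N^{-(r-1)} = N^{-r}$, so that
$$\|s\|_2^2 \gtrsim \frac{1}{N^r}\sum_{\text{corner block}} \frac{k_1^4}{\left(\sum_{t=1}^r k_t^2\right)^2},$$
with an implied constant depending only on $r$.

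The key step is then a simple counting argument in place of the one-dimensional integral comparison used for the 2D grid. On the corner block one has $k_1^4 \geq (N/10)^4$ and $\sum_t k_t^2 \leq r(N/5)^2$, so every summand is bounded below by $\frac{(N/10)^4}{(r(N/5)^2)^2} = \frac{1}{16r^2}$, a positive constant depending only on $r$. Since the corner block contains $\asymp N^r$ lattice points (a range of length $\asymp N$ in each of the $r$ coordinates), the sum is at least of order $N^r$, and hence $\|s\|_2^2 \gtrsim 1/r^2$. This yields $\rho(\Gamma)\geq c(r)$ for a constant $c(r)$ depending only on $r$.

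I anticipate the only real obstacle to be the first step, namely writing down the explicit $r$-fold spectral expansion of the corner column norm and confirming that the corner edge can be chosen so that all $r-1$ location factors simultaneously reduce to $\langle v_{k_t}, e_1\rangle^2$; this is precisely the $r$-dimensional analogue of the formula cited from Proposition 4 of \cite{hutter2016optimal} in the 2D case, and follows from the same Kronecker-sum diagonalization underlying Proposition 6. Once that formula is in place, the per-coordinate estimates transfer unchanged from the 2D proof, and the delicate Riemann-sum manipulation there is replaced by the cruder but sufficient observation that the summand stays above an $r$-dependent constant on a block of $\asymp N^r$ indices. This is exactly the cancellation between the $N^{-r}$ prefactor and the $\asymp N^r$ retained terms that makes $\rho(\Gamma)$ of constant order as soon as $r\geq 2$, in contrast to the chain graph where $\rho(\Gamma)=\sqrt{p}$.
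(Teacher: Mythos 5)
Your proof is correct, and it shares the paper's setup: the same corner-column spectral expansion of $\|s\|_2^2$ (which the paper likewise asserts as the $r$-fold analogue of the 2D formula from \cite{hutter2016optimal}), and the same three elementary estimates on the denominator $\bigl(\sum_t \lambda_{k_t}\bigr)^2$, on $\langle v_{k_1},d_1\rangle^2$, and on $\langle v_{k_t},e_1\rangle^2$. Where you genuinely diverge is the final summation step. The paper compares the retained sum to an $(r-1)$-dimensional integral, passes to polar coordinates, evaluates the radial antiderivative $\int_a^b R(l^2+R^2)^{-2}\,dR$ explicitly, performs one further integral comparison in $l$, and has to check at the end that $151-1515\arctan(1/10)>0$; it also invokes $r\geq 3$ explicitly in order to trade $R^{r-2}$ for $R\cdot N^{r-3}$ on the outer annulus. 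You instead restrict to a corner block of $\asymp_r N^r$ lattice points on which every summand is pointwise at least $1/(16r^2)$, so the $N^{-r}$ prefactor cancels the block's cardinality outright. Your route is shorter, avoids all the calculus, and never actually uses $r\geq 3$ (it works verbatim for $r=2$, consistent with the paper's separate 2D lower bound); since only a constant-order lower bound is claimed, nothing is lost by the cruder pointwise estimate. The one step you flag as a potential obstacle --- writing the $r$-fold expansion for a corner edge so that all $r-1$ location factors reduce to $\langle v_{k_t},e_1\rangle^2$ --- is exactly the step the paper also takes for granted from the Kronecker-sum diagonalization of the grid Laplacian, so you are on equal footing there.
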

\begin{proof}
    Note that the $\gtrsim$ sign is used in this proof to omit constant multipliers that may depend on $r$. Similarly to the previous lemma, it is sufficient to lower bound
    $$\|s_{\boldsymbol{1}}^{(1)}\|_2^2 = \sum_{l=1}^{N-1}\sum_{k_1=0}^{N-1}\dots \sum_{k_{r-1}=0}^{N-1}\frac{\langle v_l, d_1\rangle^2\prod_{j=1}^{r-1}\langle v_{k_j}, e_1\rangle^2}{(\lambda_l + \sum_{j=1}^{r-1}\lambda_{k_j})^2} $$
where $d_1, e_1$ as well as $\lambda_0, \dots \lambda_{N-1}$ and $v_0, \dots, v_{N-1}$ are defined in relation to the chain graph with $N$ vertices as in the previous lemma. 

By applying the inequality $2-2\cos(x) \leq x^2$, we have 
$$\left(\lambda_l + \sum_{j=1}^{r-1}\lambda_{k_j}\right)^2 \leq \left(2-2\cos \frac{l\pi}{N} + \sum_{j=1}^{r-1}\left(2-2\cos\frac{k_j\pi}{N}\right)\right)^2 \leq \frac{\pi^4}{N^4}\left(l^2+\sum_{j=1}^{r-1}k_j^2\right)^2 $$

Also, $k\leq \frac{2}{3\pi}N$ implies $\langle v_k, e_1\rangle^2 \geq \frac{1}{2N}$, and $l\leq N/5$ implies $\langle v_l, d_1\rangle^2 \geq \frac{9\pi^4}{8}\frac{l^4}{N^5}$. Hence, if we define $c =\min(\frac{2}{3\pi},\frac{1}{5})= \frac{1}{5}$ as in the previous lemma,
\begin{align*}
    \|s_{\boldsymbol{1}}^{(1)}\|_2^2 &\gtrsim \frac{1}{N^r}\sum_{l=1}^{\lfloor cN\rfloor}\sum_{k_1=0}^{\lfloor cN\rfloor}\dots \sum_{k_{r-1}=0}^{\lfloor cN\rfloor}\frac{l^4}{(l^2+\sum_{j=1}^{r-1}k_j^2)^2} \\ 
    &\geq \frac{1}{N^r}\sum_{l=1}^{\lfloor cN\rfloor}\int_{0\leq x_j \leq cN, j=1,\dots,r-1 } \frac{l^4}{(l^2+\|x\|_2^2)^2}dx \\
    &\geq \frac{1}{N^r}\sum_{l=1}^{\lfloor cN\rfloor}\int_{\|x\|_2\leq cN} \frac{l^4}{(l^2+\|x\|_2^2)^2}dx\\
    &= \frac{1}{N^r}\sum_{l=1}^{\lfloor cN\rfloor}\int_0^{cN}\int_{S_{r-2}}\frac{l^4R^{r-2}}{(l^2+R^2)^2}d\sigma_{r-2}(u)dR
\end{align*}
where we changed to polar coordinates in the last equality; here, $S_{r-2}$ is the unit sphere in $\mathbb{R}^{r-1}$, and $\sigma_{r-2}$ is a measure on $S_{r-2}$ such that, if $A\subseteq S_{r-2}$ is a Borel set and $\tilde{A}$ is the set of all points $ru$ with $0<r<1$ and $u\in A$, then $\sigma_{r-2}(A) = (r-1)m_{r-1}(\tilde{A})$, where $m_{r-1}$ is the Lebesgue measure on $\mathbb{R}^{r-1}$ (see Exercise 6, Chapter 8 of \cite{rudin1974real}). We continue: 
\begin{align*}
    \|s_{\boldsymbol{1}}^{(1)}\|_2^2 &\gtrsim \frac{1}{N^r}\sum_{l=1}^{\lfloor cN\rfloor}\int_0^{cN}\frac{l^4R^{r-2}}{(l^2+R^2)^2}dR \\ 
    &\geq \frac{1}{N^r}\sum_{l=1}^{\lfloor cN\rfloor}\int_{cN/2}^{cN}\frac{l^4R^{r-2}}{(l^2+R^2)^2}dR \\ 
    &\gtrsim \frac{1}{N^3}\sum_{l=1}^{\lfloor cN\rfloor}\int_{cN/2}^{cN}\frac{l^4R}{(l^2+R^2)^2}dR
\end{align*}
where we used the fact that $r\geq 3$. Note that $\int_a^b\frac{R}{(l^2+R^2)^2}dR = \frac{b^2-a^2}{2(b^2+l^2)(a^2+l^2)}$ and hence 
\begin{align*}
     \|s_{\boldsymbol{1}}^{(1)}\|_2^2 &\gtrsim\frac{1}{N^3}\sum_{l=1}^{\lfloor cN\rfloor}\frac{l^4N^2}{(l^2+N^2)^2} \geq \frac{1}{N}\int_{0}^{N/10}\frac{l^2}{(l^2+N^2)^2}dl
\end{align*}
where we used the fact that $\frac{l^2}{(l^2+N^2)^2}$ is increasing in $l$. Since
$$\int_{0}^{N/10}\frac{l^2}{(l^2+N^2)^2}dl = \frac{151-1515\arctan(1/10)}{1010}N $$
and $151-1515\arctan(1/10) > 0$, the proof is complete. 
\end{proof}
\section{The interior point method on the dual objective}
For the special case where the design matrix is the identity and $\lambda_2 = 0$, \cite{kim2009ell_1} applies the interior point method on the dual objective. Similarly, we can apply interior point method to solve our more general dual objective
\begin{equation}\label{eq:A29}
    \hat{\beta}:= \arg\min_{\beta\in\mathbb{R}^p}\frac{1}{2}\|\tilde{Y}-\tilde{X}\beta\|_2^2 + \lambda_1\|\Gamma\beta\|_1
\end{equation}

We will specify the update directions, update step size, and measure of suboptimality. Let $\bm{1}_m$ denote the vector in $\mathbb{R}^m$ with all entries equal to 1. The dual problem only has inequality constraints $f_1(u) = u - \lambda_1 \bm{1}_m \leq 0$ and $f_2(u) = -u - \lambda_1 \bm{1}_m \leq 0$. Let $\mu_1, \mu_2$ be the dual variables corresponding to $f_1, f_2$. We apply the standard Newton's updates on the perturbed KKT conditions (by parameter $t$) of this dual problem. That is, the directions of the updates $(\Delta u, \Delta \mu_1, \Delta \mu_2)$ are solutions of the following linear system: 

The residuals are:
\begin{equation}
\begin{split}
    r_t(u, \mu_1,\mu_2) = 
    \begin{bmatrix}
    \check{\Gamma}\check{\Gamma}^Tu -    \check{\Gamma}\check{Y} + \mu_1 -\mu_2\\
    -\mathrm{diag}(\mu_1)f_1(u) - \frac{1}{t}\bm{1}_m\\
    -\mathrm{diag}(\mu_2)f_2(u) - \frac{1}{t}\bm{1}_m
    \end{bmatrix}
\end{split}
\end{equation}

By Newton's method, we need to solve:
\begin{equation}
    \nabla r_t(u, \mu_1,\mu_2) \begin{bmatrix}
    \Delta u\\
    \Delta \mu_1\\
    \Delta \mu_2
    \end{bmatrix} =  -r_t(u, \mu_1,\mu_2) 
\end{equation}

That simplifies to 3 linear equations below, where divisions between vectors are element-wise:
\begin{equation} \label{eq:IP_updates_u}
 \left[ \check{\Gamma}\check{\Gamma}^T - \mathrm{diag}(\mu_1 /f_1(u)) - \mathrm{diag}(\mu_2 /f_2(u)) \right] \Delta u  = -\left[\check{\Gamma}\check{\Gamma}^Tu -    \check{\Gamma}\check{Y} - \frac{\bm{1}_m}{tf_1(u) } + \frac{\bm{1}_m}{tf_2(u) }\right]
 \end{equation}
 
\begin{equation} \label{eq:IP_updates_mu1}
 \Delta \mu_1  = - \left[\mathrm{diag}(\mu_1 /f_1(u))\Delta u + \mu_1 + \frac{\bm{1}_m}{tf_1(u) }\right] 
\end{equation}

\begin{equation} \label{eq:IP_updates_mu2}
 \Delta \mu_2  = - \left[-\mathrm{diag}(\mu_2 /f_2(u))\Delta u  + \mu_2 + \frac{\bm{1}_m}{tf_2(u) }\right]
\end{equation}
The step size for each update are computed in a standard way as Section 11.7.3 of \cite{boyd2004convex}. We apply standard backtracking line search to find the step size $s$ for the updates. Choose parameters $\alpha, \gamma \in (0,1)$ for backtracking. Denote the updates as  $(u^+, \mu_1^+,\mu_2^+)$. For example, $u^+ = u + s\Delta u$. To ensure the updates to be feasible, we first make sure that $\mu_1^+, \mu_2^+ \geq 0$. That is, we set $s_{\max} = \min\{1, \min\{-\mu_{1i}/\Delta\mu_{1i} \:|\: \Delta\mu_{1i} < 0\}, \min\{-\mu_{2i}/\Delta\mu_{2i} \:|\: \Delta\mu_{2i} < 0\}\} $.
Next, continuously set $s = \gamma s$ until $f_1(u^+), f_2(u^+) < 0$. Finally, set $s = \gamma s$ until $ \|r_t(u^+, \mu_1^+,\mu_2^+)\|_2\leq (1 - \alpha s)\|r_t(u, \mu_1,\mu_2)\|_2$.

As a standard measure of suboptimality, the surrogate duality gap (see Section 11.7.2 of \cite{boyd2004convex} for details) at the $k^\text{th}$ iteration is: 
\begin{equation}\label{eq:surrogate_gap}
\eta^{(k)} = -f_1(u^{(k)})^T\mu^{(k)}_1 -f_2(u^{(k)})^T\mu^{(k)}_2
\end{equation}
And the residual at the $k^\text{th}$ iteration is:
\begin{equation}
r^{(k)} = r_t(u^{(k)}, \mu_1^{(k)},\mu_2^{(k)})
\end{equation}

Our interior point algorithm is presented below.\\
\begin{algorithm}[H]\label{alg:2}
\DontPrintSemicolon
  \KwInput{$\lambda_1, \lambda_2, \Gamma, Y, X$, tolerance $\epsilon$}
  \KwOutput{$\hat{\beta}$ as defined in \eqref{eq:A29}}
  Initialize $u^{(0)} = 0$, $\mu_1^{(0)}, \mu_2^{(0)} >0$, $\tau > 1$ \\
  \While{$r^{(k)} > \epsilon$ or $\eta^{(k)} > \epsilon$}
   {
   		Set $t = 2\tau m/ \eta^{(k)}$\\
        Compute update direction $(\Delta u, \Delta \mu_1, \Delta \mu_2)$ as in \eqref{eq:IP_updates_u},  \eqref{eq:IP_updates_mu1},  \eqref{eq:IP_updates_mu2}\\
        Determine step size $s$ by using $\alpha, \gamma$ backtracking line search\\
        Update:
        \begin{align*}
            u^{(k+1)} &= u^{(k)} + s\Delta u\\
            \mu_1^{(k+1)} &=\mu_1^{(k)} + s\Delta\mu_1\\
            \mu_2^{(k+1)} &=\mu_2^{(k)} + s \Delta \mu_2
        \end{align*}
   }
   Compute $\hat{\beta} \leftarrow \tilde{X}^\dagger(\check{Y} - \check{\Gamma}^T u)$\\
   Return $\hat{\beta}$
   
\caption{Interior point method on the dual objective}
\end{algorithm}

\section{Additional details on data processing}\label{app:data}
\xhdr{Chicago Crime Data} 
As per the main text, statistics on the number of crimes per community between 2001 and 2022 are available on the city's data portal. For the purpose of our analysis, we consider the data between 2004 and 2022, since by preliminary inspection of the data, the first years of collection seem to have more missing data (see Figure~\ref{fig:seasonality1}). We define the monthly crime rates as the number of crimes per 100,000 inhabitants. The latter are computed from the raw crime data by aggregating crime counts over neighborhoods and dividing by neighborhood population estimates found at the following \hyperlink{https://en.wikipedia.org/wiki/Community_areas_in_Chicago}{ link}. These crime rates are usually modeled by Poisson distributions \cite{osgood2000poisson}, which we transform here into a normal distribution through the use of an Anscombe transform. Examples of the resulting estimates are displayed on Figure~\ref{fig:seasonality2}.  We note that the crime rates vary substantially over the years and across the communities, and are also subject to significant seasonal effects.

\begin{figure}[h!]
     \centering
     \begin{subfigure}[t]{0.49\textwidth}
     \centering
    \includegraphics[width=\textwidth]{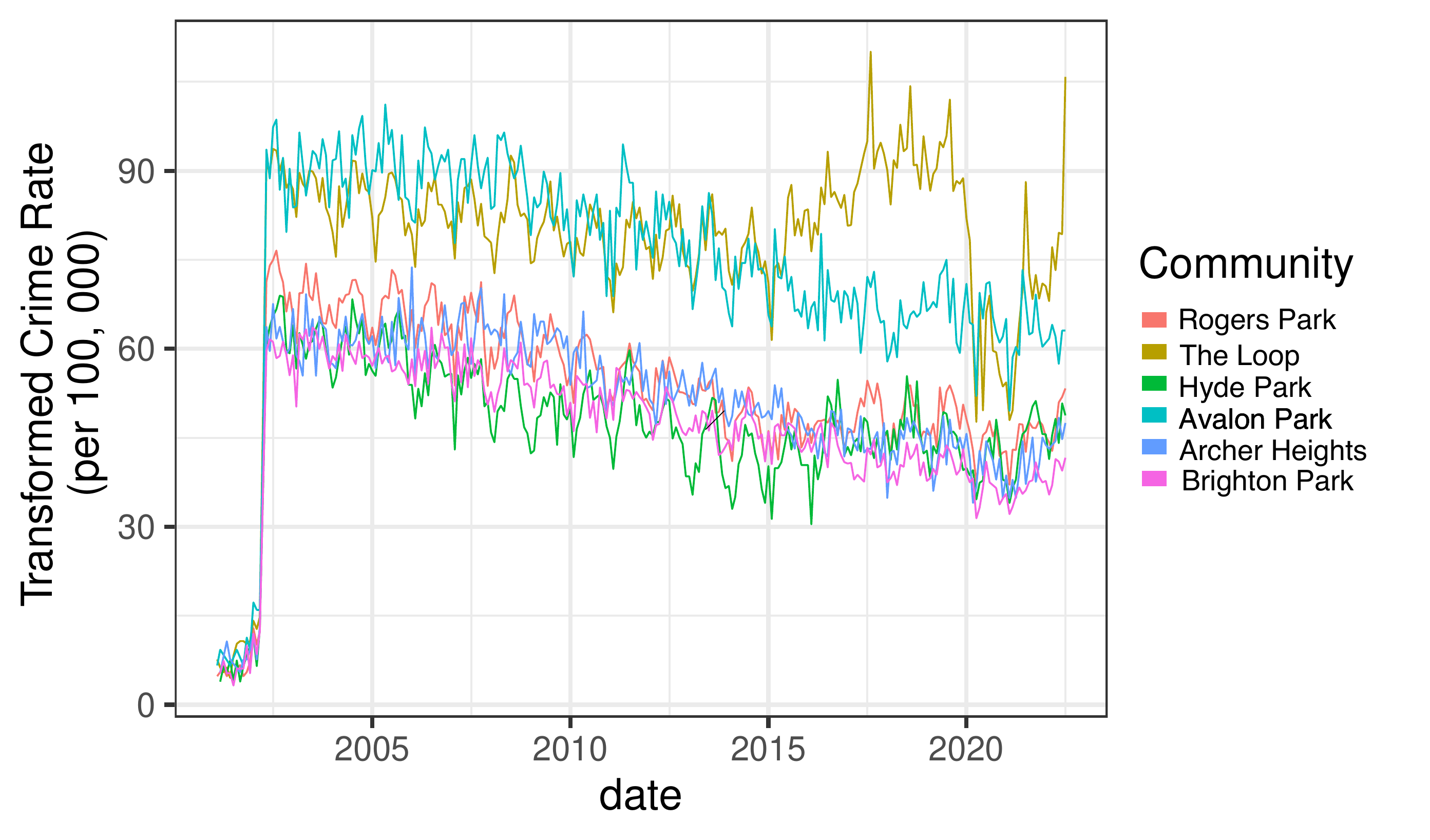}
    \caption{Temporal evolution of the Anscombe transformed crime rate (per 100,000) as a function of time across 6 specific neighborhoods.}
    \label{fig:seasonality1}
     \end{subfigure}
     \hfill
     \begin{subfigure}[t]{0.5\textwidth}
     \centering
    \includegraphics[width=\textwidth]{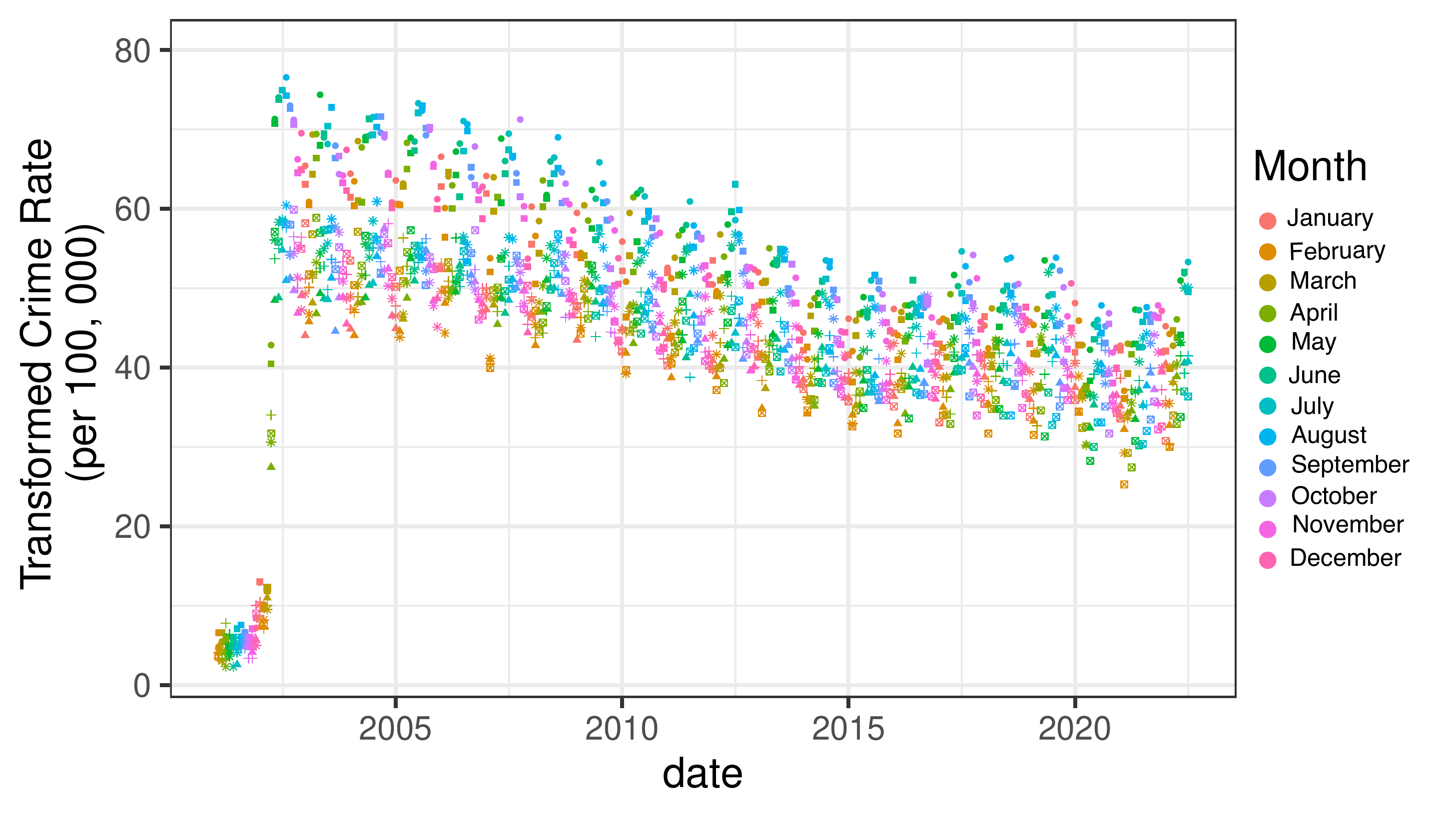}
        \caption{Crime rate across neighborhoods, coloured by month of the year to highlight seasonal effects. }
    \label{fig:seasonality2}
     \end{subfigure}
    \caption{Anscombe transform of the number of crimes per month per 100,000 inhabitants in a few neighborhoods of Chicago. Note the seasonal effect in the crime rate and the  consistent drop across neighborhoods during the coldest months of the year.}
\end{figure}
\vspace{0.2cm}


\xhdr{COVID Data} We consider the problem of predicting the number of COVID-19 cases 14 days in advance for a given county in California. As decribed in the main text, this could be an interesting use case for local public health decisions, such as for instance, trying to plan 2 weeks in advance appropriate resources at a local clinic. To this end, we used the New York Times-curated \href{https://github.com/nytimes/covid-19-data}{COVID database}. The NYT COVID data provides a description of the total number of cases across all US counties, from January 2020 to October 2022 (time of writing).  For the purpose of our analysis, we focus more specifically on analyzing new cases in the 25 densest California counties using data from June $1^{\text{st}}$, 2020 to July $1^{\text{st}}$, 2021. This time window was selected to provide more consistency in the epidemics dynamics: by June 2020, all counties in California had non zero daily incidence data. On the other hand, restricting the analysis to before July 2021 allows selecting a more cohesive window of time where the epidemic propagation was not dominated by (other unobserved) covariates, such as the advent of new contagious strains of the virus (Delta in Summer 2021, and subsequently Omicron in Winter 2022). We pre-process the data and make it amenable to data analysis through the following steps:
\begin{enumerate}
    \item {\bf Conversion of cumulative case counts to incidence data}
    \item {\bf  Correction of aberrations and smoothing}: we fix data aberrations (e.g. negative incidences, due to small errors in the reporting) by imposing the lower bound on the number of new cases to be 0. We further transform the incidence data using a seven-day rolling average so as to get rid of known spurious phenomena (e.g. the ``weekend effect'', by which the number of new cases is lower over the weekend but typically followed by a spike on the following Monday).
    \item {\bf Anscombe transform.} We apply a variance stabilizing transform to transform incidence data (here modeled as a Poisson process, as per \cite{agosto2020poisson, bu2021likelihood,cori2013new,toharudin2020bayesian}): $\tilde{x} \leftarrow 2 \sqrt{x + \frac{3}{8}}$.
\end{enumerate}

\begin{figure}
    \centering
    \includegraphics[width=0.8\textwidth]{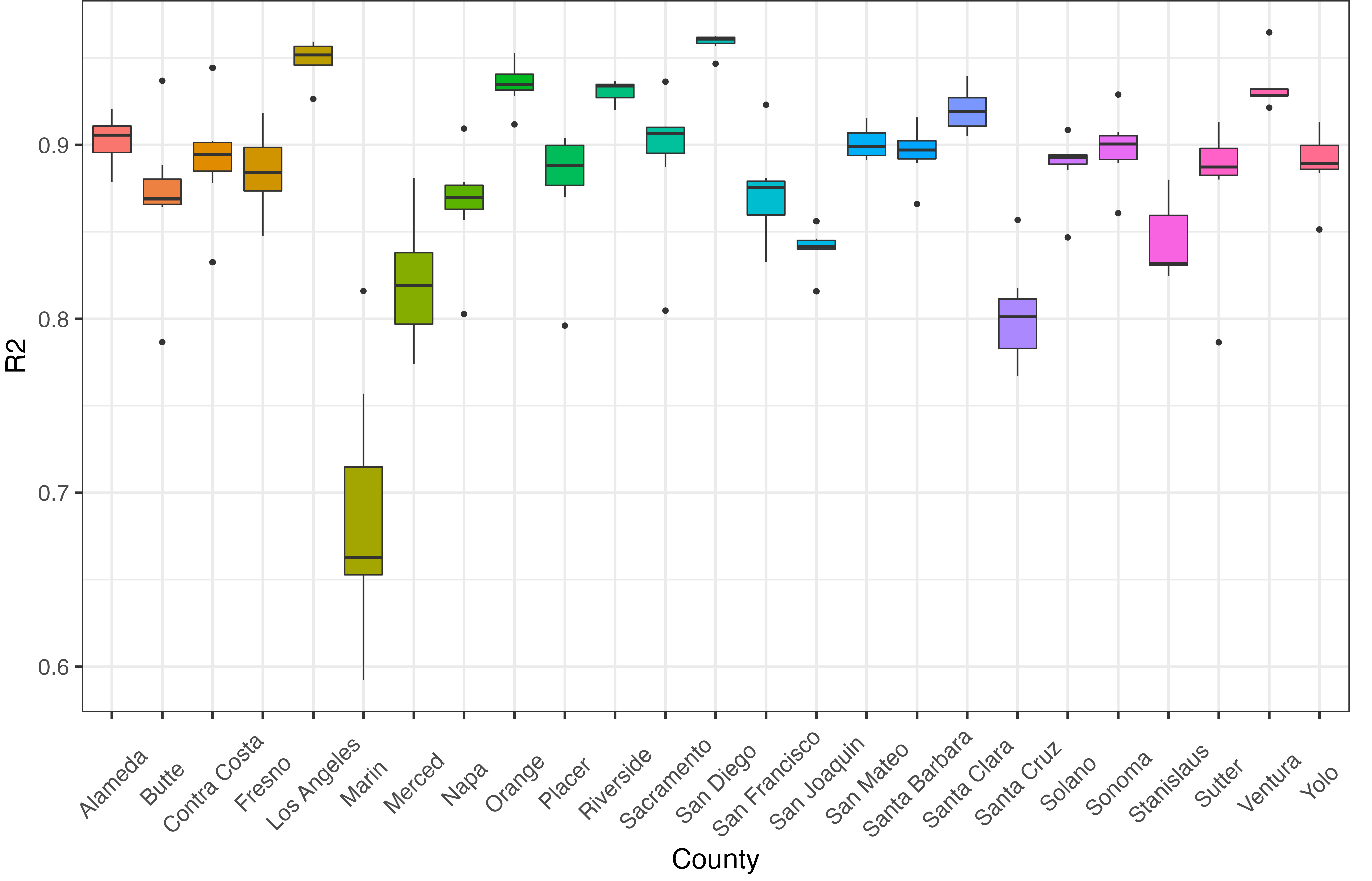}
    \caption{$R^2$ for the simple autoregressive model of Equation \ref{eq:covid-model1} on the seven different folds (see main text). Note that most models have $R^2$ of over 0.8, thus indicating the validity of the model.}
    \label{fig:my_label}
\end{figure}

\end{appendix}


\end{document}